\newcommand{\newextmathcommand}[2]{%
    \newcommand{#1}{\ensuremath{#2}\xspace}
}
\newcommand{\MyPi}{\mathrm{\Pi}}
\newcommand{\MySigma}{\mathrm{\Sigma}}
\newextmathcommand{\A}{\mathcal{A}}
\newextmathcommand{\B}{\mathcal{B}}
\newextmathcommand{\CC}{\mathcal{C}}
\newextmathcommand{\AlB}{\MySigma} %
\newextmathcommand{\tran}{\delta} %
\newextmathcommand{\inc}{+1}
\newextmathcommand{\dec}{-1}
\newextmathcommand{\internal}{0}
\newextmathcommand{\myskip}{\internal}
\newcommand{\zerotest}{\mathrm z}
\newcommand{\langop}[2][OPT]{\ifthenelse{\equal{#1}{OPT}}{\mathcal{L}}{\mathcal{L}_{(#1)}}(#2)}
\newextmathcommand{\Language}{\mathcal{L}}
\newcommand{\sizeop}[1]{|#1|}
\newextmathcommand{\sizeA}{\sizeop{\A}}
\newextmathcommand{\qinit}{q_{\mathsf{init}}}
\newextmathcommand{\qfinal}{q_{\mathsf{final}}}
\newcommand{\C}{\mathcal{C}}
\newextmathcommand{\N}{\mathbb{N}}
\newextmathcommand{\Z}{\mathbb{Z}}
\newcommand{\Powerset}[1]{\mathcal{P}(#1)}
\newtheorem{theorem}{Theorem}
\newtheorem{lemma}[theorem]{Lemma}
\newtheorem{corollary}[theorem]{Corollary}
\newtheorem{claim}[theorem]{Claim}
\theoremstyle{definition}
\newtheorem{definition}[theorem]{Definition}
\newtheorem{remark}[theorem]{Remark}
\newcommand{\lra}[1][]{\stackrel{#1}{\longrightarrow}}
\newcommand{\biwalk}{split run\xspace}
\newcommand{\biwalks}{split runs\xspace}
\newcommand{\bi}[2]{(#1,#2)}
\newextmathcommand{\biw}{\bi\rho\sigma}
\newcommand{\initcounter}[1]{\mathsf{init.counter(#1)}}
\newcommand{\initstate}[1]{\mathsf{init.state}(#1)}
\newcommand{\finalcounter}[1]{\mathsf{final.counter}(#1)}
\newcommand{\finalstate}[1]{\mathsf{final.state}(#1)}
\newcommand{\low}[1]{\mathsf{low}(#1)}
\newcommand{\high}[1]{\mathsf{high}(#1)}
\newcommand{\effect}[1]{\mathsf{effect}(#1)}
\newcommand{\length}[1]{|#1|}
\newcommand{\drop}[1]{\mathsf{drop}(#1)}
\newcommand{\height}[1]{\mathsf{height}(#1)}
\newcommand{\df}[1]{\emph{#1}}
\newcommand{\poly}{\mathrm{poly}}
\newcommand{\card}[1]{\##1}
\newcommand{\sset}{\subseteq}
\newextmathcommand{\NFA}{\mathcal N}
\newcommand{\defeq}{\stackrel{\text{def}}{=}}
\newcommand{\eqdef}{\defeq}
\newcommand{\avail}[1]{\mathrm{avail}(#1)}
\newextmathcommand{\eps}{\varepsilon}
\newcommand{\infini}[1]{\|#1\|_{\infty}}
\newcommand{\infnorm}[1]{\infini{#1}}
\newcommand{\norm}[1]{\|#1\|_1}
\newcommand{\introheading}[1]{\textbf{#1}}
\newcommand{\introbullet}[1]{\smallskip\noindent\hbox to \parindent{\hfil$\bullet$\hfil}}
\newcommand{\dr}[2]{\langle#1,#2\rangle}
\newextmathcommand{\direction}{d}
\newcommand{\dimension}{{|\AlB|}}
\newextmathcommand{\parikhnoarg}{\psi}
\newcommand{\parikh}[1]{\parikhnoarg(#1)}
\newcommand{\Lin}[2]{\mathrm{Lin}(#1; #2)}
\newcommand{\cclass}[1]{\mathrm{#1}}
\newcommand{\newcclass}[2]{\newextmathcommand{#1}{\cclass{#2}}}
\newcclass{\Ptime}{P}
\newcclass{\NL}{NL}
\newcclass{\NP}{NP}
\newcclass{\coNP}{coNP}
\newcclass{\DP}{DP}
\newcclass{\BPP}{BPP}
\newcclass{\PSPACE}{PSPACE}
\newcclass{\SharpP}{\#P}
\newcclass{\SigmaTwoP}{\MySigma_2 P}
\newcclass{\PiTwoP}{\MyPi_2 P}
\newcclass{\NExptime}{NEXP}
\newcclass{\TwoExp}{2EXP}
\newcommand{\polyi}{n^3}
\newcommand{\polyiii}{\polyi}
\newcommand{\polyiiii}{2(\polyi)}
\newcommand{\polyv}{(2n^2+3)(\polyi)} %
\newcommand{\polyvi}{n \polyv + 1} %
\newcommand{\polyvii}{\mathrm{poly_1}} %
\newcommand{\polyMinUnpLen}{n^2 (\polyv)^3}%
\newcommand{\HardAutomaton}[1]{\mathcal{H}_{#1}}
\newcommand{\Nat}{\ensuremath{\mathbb{N}}}
\def\by#1{\mathop{{\hbox{\setbox0=\hbox{$\scriptstyle{#1\quad}$}{$ \mathrel{\mathop{\setbox1=\hbox to \wd0{\rightarrowfill}\ht1=3pt\dp1=-2pt\box1}\limits^{#1}} $}}}}}
\newcommand{\By}[1]{\ensuremath{\xRightarrow{#1}}}
\newcommand{\moves}[1]{\ensuremath{\by{{#1}}}}
\newcommand{\smoves}[1]{\ensuremath{\by{{#1}}_S}}
\newcommand{\Moves}[1]{\ensuremath{\By{{#1}}}}
\newcommand{\subword}{\ensuremath{\preceq}}
\newcommand{\ua}{\ensuremath{\!\!\uparrow}}
\newcommand{\da}{\ensuremath{\!\!\downarrow}}
\newcommand{\SigmaE}{\ensuremath{\Sigma_\varepsilon}}
\newcommand{\nop}{\ensuremath{i}}
\newcommand{\lang}{\mathcal{L}}
\newcommand{\push}{\ensuremath{push}}
\newcommand{\pop}{\ensuremath{pop}}
\newcommand{\etest}{\ensuremath{\bot?}}
\newcommand{\Parikh}{\ensuremath{\parikhnoarg}}
\newcommand{\downop}[1]{{{#1}\da}}
\newcommand{\upop}[1]{{{#1}\ua}}
\newextmathcommand{\downlang}{\downop{\Language}}
\newextmathcommand{\uplang}{\upop{\Language}}
\newcommand{\Asp}{\ensuremath{\mathcal{B}}}
\newcommand{\Null}[1]{{}}
\newcommand{\Au}[1]{\ensuremath{\A_{#1}}}
\newcommand{\iu}[1]{\ensuremath{i_{#1}}}
\newcommand{\Fu}[1]{\ensuremath{F_{#1}}}
\begin{document}

\setlength{\pdfpageheight}{\paperheight}
\setlength{\pdfpagewidth}{\paperwidth}

\conferenceinfo{CONF 'yy}{Month d--d, 20yy, City, ST, Country} 
\copyrightyear{20yy} 
\copyrightdata{978-1-nnnn-nnnn-n/yy/mm} 
\doi{nnnnnnn.nnnnnnn}

\title{Complexity of regular abstractions of one-counter languages}

\authorinfo%
{Mohamed Faouzi Atig%
 \thanks{Supported by DST-VR Project P-02/2014, the Swedish Research Council (VR).}}
{Uppsala University, Sweden}
{mohamed\_faouzi.atig@it.uu.se}

\authorinfo%
{Dmitry Chistikov%
 \thanks{Supported in part by the ERC Synergy award ImPACT.}}
{Max Planck Institute for Software Systems (MPI-SWS), Germany}
{dch@mpi-sws.org}

\authorinfo%
{Piotr Hofman%
 \thanks{Supported by Labex Digicosme, Univ. Paris-Saclay,
         project VERICONISS and by Polish National Science Centre
         grant 2013/09/B/ST6/01575.}}
{LSV, CNRS \& ENS Cachan, Universit\'{e} Paris-Saclay, France}
{piotr.hofman@lsv.ens-cachan.fr}

\authorinfo%
{K Narayan Kumar%
 \thanks{Supported by the DST-VR Project P-02/2014, Infosys Foundation.}}
{Chennai Mathematical Institute, India}
{kumar@cmi.ac.in}

\authorinfo%
{Prakash Saivasan%
 \thanks{Supported by the DST-VR Project P-02/2014, TCS Fellowship.}}
{Chennai Mathematical Institute, India}
{saivasan@cmi.ac.in}

\authorinfo%
{Georg Zetzsche%
 \thanks{Supported by a fellowship within the Postdoc-Program
         of the German Academic Exchange Service (DAAD).}}
{LSV, CNRS \& ENS Cachan, Universit\'{e} Paris-Saclay, France}
{zetzsche@lsv.fr}

\maketitle

\makeatletter
\def \@evenfoot {\scriptsize
               \rlap{\textit{\@preprintfooter}}\hfil
               \thepage \hfil
               \llap{\textit{}}}%
\let \@oddfoot = \@evenfoot
\makeatother

\begin{abstract}
We study the computational and descriptional complexity
of the following transformation: Given a one-counter automaton (OCA) \A,
construct a nondeterministic finite automaton (NFA) \B
that recognizes an abstraction of the language $\langop \A$:
its (1) downward closure, (2) upward closure, or (3) Parikh image.
For the Parikh image over a fixed alphabet and for the upward and
downward closures, we find polynomial-time algorithms that compute
such an NFA.  For the Parikh image with the alphabet as part of
the input, we find a quasi-polynomial time algorithm and prove
a completeness result: we construct a sequence of OCA that admits
a polynomial-time algorithm iff there is one for all OCA.
For all three abstractions, it was previously unknown
if appropriate NFA of sub-exponential size exist.

\end{abstract}

\section{Introduction}

The family of \introheading{one-counter languages} is an %
intermediate class between context-free and regular languages:
it is strictly less expressive than the former and
      strictly more expressive than the latter.
For example,
the language $\{a^m b^m \mid m \ge 0\}$ is one-counter,
but not regular,
and
the set of palindromes over the alphabet $\{a, b\}$
is context-free, but not one-counter.
From the verification perspective,
the corresponding class of automata,
\emph{one-counter automata} \textup{(}OCA\textup{)},
can model some infinite-state phenomena
with its ability to keep track of a non-negative integer counter,
see, e.g.,~%
\cite{ChiticR04},
\cite[Section~5.1]{LafourcadeLT05}, and~%
\cite[Section~5.2]{AlurC11}.

Reasoning about OCA, however, is hardly an easy task.
For example, checking whether two OCA accept some word in common
is undecidable even in the deterministic case;
for nondeterministic OCA even language universality,
as well as language equivalence, is undecidable.
For deterministic OCA, equivalence is \NL-complete;
the proof of the membership in \NL took 40~years%
~\cite{ValiantP75,BohmGJ13}.

This lack of tractability suggests the study of finite-state abstractions for OCA.
Such a transition is a recurrent theme in formal methods:
features of programs beyond finite state
are modeled with infinite-state systems (such as pushdown automata, counter systems,
Petri nets, etc.),
and then finite-state abstractions of these systems come as an important
tool for analysis
(see, e.g.,~\cite{AtigBT08concur,abs-1111-1011,AtigBKS14,LongCMM12,SenV06,AtigBT08fsttcs,EsparzaGP14}).
In our work, we focus on the following three \introheading{regular abstractions},
each capturing a specific feature of a language $\Language \sset \AlB^*$:

\introbullet{$\downop \Language$}
The \df{downward closure} of $\Language$,
denoted $\downop \Language$,
is the set of all subwords (subsequences) of all words $w \in \Language$,
i.e., the set of all words that can be obtained from words in $\Language$
by removing some letters.
The downward closure is always a superset of the original language,
$\Language \sset \downop \Language$, and, moreover, a regular one,
no matter what $\Language$ is, by Higman's lemma~\cite{Hig52}.

\introbullet{$\upop \Language$}
The \df{upward closure} of $\Language$,
denoted $\upop \Language$,
is the set of all superwords (supersequences) of all words $w \in \Language$,
i.e., the set of all words that can be obtained from words in $\Language$
by inserting some letters.
Similarly to $\downop \Language$, the language $\upop \Language$ satisfies
$\Language \sset \upop \Language$ and is always regular.

\introbullet{$\parikh \Language$}
The \df{Parikh image} of $\Language$,
denoted $\parikh \Language$, is the set of all vectors $v \in \N^{|\AlB|}$,
that count the number of occurrences of letters of \AlB in words from $\Language$.
That is, suppose $\AlB = \{a_1, \ldots, a_k\}$, then
every word $w \in \Language$ corresponds to a vector $\parikh w = (v_1, \ldots, v_k)$ such that
$v_i$ is the number of occurrences of $a_i$ in~$w$.
The set $\parikh \Language$ is always a regular subset of $\N^{|\AlB|}$ if \Language is context-free,
by the Parikh theorem~\cite{Parikh66}.

\smallskip
It has long been known that
all three abstractions can be \emph{effectively computed}
for context-free languages (CFL),
by the results of van Leeuwen~\cite{VanLeeuwen78}
and Parikh~\cite{Parikh66}.
Algorithms performing these tasks,
as well as finite automata recognizing these abstractions,
are now widely used as building blocks
in the language-theoretic approach to verification.
Specifically, computing upward and downward closures
occurs as an ingredient in the analysis of systems communicating
via shared memory, see, e.g.,~%
\cite{AtigBT08concur,abs-1111-1011,AtigBKS14,LongCMM12}.
As the recent paper~\cite{TorreMW15} shows, for parameterized networks
of such systems the decidability hinges
on the ability to compute downward closures.
The Parikh image as an abstraction in the verification of infinite-state
systems has been used extensively; see, e.g.,
\cite{%
      DBLP:conf/atva/AbdullaAC13,
      KT10,
      DBLP:conf/cav/HagueL12,
      Esparza:2011:CPV:1926385.1926443,
      SenV06,
      AtigBT08fsttcs,
      EsparzaGP14,
      Ganty:2012:AVA:2160910.2160915,
      AbdullaAMS15%
     }.
For pushdown systems, it is possible to construct a linear-sized
existential Presburger formula that captures the Parikh image~\cite{DBLP:conf/cade/VermaSS05},
which leads, for a variety of problems
(see, e.g.,
\cite{%
      DBLP:conf/atva/AbdullaAC13,
      KT10,
      DBLP:conf/cav/HagueL12,
      Esparza:2011:CPV:1926385.1926443%
     }),
to algorithms that rely on deciding satisfiability
for such formulas (which is in \NP).
Finite automata for Parikh images
are used as intermediate representations, for example,
in the analysis of multi-threaded programs~%
\cite{SenV06,AtigBT08fsttcs,EsparzaGP14}
and in recent work on so-called availability languages~\cite{AbdullaAMS15}.

Extending the scope of these three abstractions
from CFL to other classes of languages has been a natural topic of interest.
Effective constructions for the downward closure have been
developed for Petri nets~\cite{HabermehlMW10}
and stacked counter automata~\cite{Zetzsche15stacs}.
The paper~\cite{Zetzsche15icalp} gives a sufficient condition for
a class of languages to have effective downward closures;
this condition has since been applied to
higher-order pushdown automata~\cite{HagueKO16}.
The effective regularity of the Parikh image is known for
linear indexed languages~\cite{DuskeP84},
phase-bounded and scope-bounded
multi-stack visibly pushdown languages~\cite{TorreMP07,TorreNP14}, and
availability languages~\cite{AbdullaAMS15}.
However, there are also negative results:
for example, it is not possible to effectively
compute the downward closure of languages recognized by lossy channels automata---%
this is a corollary of the fact that,
for the set of reachable configurations of a lossy channel system,
boundedness is undecidable~\cite{Mayr03}.

\subsection*{Our contribution}

We study the construction of nondeterministic finite automata
(NFA) for $\downop \Language$, $\upop \Language$, and $\parikh \Language$,
if $\Language$ is given as an OCA \A with $n$~states: $\Language = \langop \A$.
It turns out that for one-counter languages---a proper subclass of CFL---%
all three abstractions
can be computed much more efficiently than for the entire class of CFL.

\subparagraph*{Upward and downward closures:}
We show, for OCA,
how to construct NFA accepting \uplang and \downlang in polynomial time
(Theorems \ref{thm:upwardclosure} and \ref{thm:DownwardClosure}).
The construction for \uplang
is straightforward, but the one for \downlang is
involved and uses pumping-like techniques from automata theory.

These results are in contrast with the exponential lower bounds
known for both closures in the case of CFL~\cite{VanLeeuwen78}:
Several constructions for \uplang and \downlang have been
proposed in the literature
(see, e.g.,~\cite{VanLeeuwen78,GruberHK07,Courcelle91,BachmeierLS15}),
and the best in terms of the size of NFA
are exponential, due to van~Leeuwen~\cite{VanLeeuwen78}
and Bachmeier, Luttenberger, and Schlund~\cite{BachmeierLS15}, respectively.

\subparagraph*{Parikh image:}
For OCA, the problem of constructing NFA
for the Parikh image turns out to be quite tricky.
While we were unable to solve the problem completely,
we make significant progress towards its solution:

\introbullet{}
For any fixed alphabet $\AlB$ we provide a complete
solution: We find a polynomial-time algorithm that
computes an NFA for $\parikh{\langop{\A}}$ that has size $O(|\A|^{\poly(|\Sigma|)})$
(Theorem~\ref{conf:th:parikh-bounded}).
Two key ingredients of this construction are a sophisticated version of a pumping lemma
(Lemma~\ref{conf:l:unpump-summary};
 cf.\ a standard pumping lemma for one-counter languages,
 due to Latteux~\cite{Latteux83}) and
the classic Carath\'eodory theorem for cones in a multi-dimensional space.

\introbullet{}
We provide a quasi-polynomial solution to this problem
in the general case: We find an algorithm that constructs a suitable
NFA of size $O(|\AlB| \cdot |\A|^{O(\log(|\A|))})$ (Theorem~\ref{thm:parikh-unbounded}).
This construction has two steps,
both of which are of interest. In the first step we show, using a
combination of local and global transformations on runs
(Lemmas~\ref{lem:LMGood} and~\ref{lem:boundrevNew}), that we may focus our
attention on runs with at most polynomially many reversals. In the second step, which also
works for pushdown automata, we turn the bound on reversals, using an argument
with a flavour of Strahler numbers~\cite{EsparzaLS14},
into a logarithmic bound on the stack size
of a pushdown system (Lemma~\ref{lem:rbpdas}).

\introbullet{}
We prove a lower-bound type result
(Theorem~\ref{completeness:result}):
We find a sequence of OCA $(\HardAutomaton{n})_{n \ge 1}$,
where $n$ denotes the number of states, over alphabets of
growing size, that admits
a polynomial-time algorithm for computing an NFA for the Parikh image
if and only if there is such an algorithm for all OCA.
Thus, the problem of transforming an arbitrary OCA~\A
into an NFA for $\parikh{\langop{\A}}$
is reduced to performing this transformation on $\HardAutomaton{n}$,
which enables us to call $\HardAutomaton{n}$ \emph{complete}.
This result also has a counterpart
referring to just the existence of NFA of polynomial size.

\smallskip
For the Parikh image of CFL, a number of constructions
can be found in the literature as well;
we refer the reader to the paper
by Esparza et~al.~\cite{EsparzaGKL11} for a survey
and state-of-the-art results:
exponential upper and lower bounds of the form $2^{\Theta(n)}$
on the size of NFA for $\parikh\Language$.

\subsection*{Applications}

Our results show that for OCA, unlike for pushdown systems,
NFA representations of downward and upward closures and Parikh image (for fixed
alphabet size) have efficient polynomial constructions.
This suggests a possible way around
standard \NP procedures that handle existential Presburger formulas.
This insight also leads to
significant gains when abstractions are used in a nested manner,
as illustrated by the following examples.

Consider a \emph{network of pushdown systems} communicating via a shared memory.  
The reachability problem is undecidable in this setting. In~\cite{AtigBKS14}
a restriction called stage-boundedness, generalizing context-boundedness,
is explored. During a stage, the memory can be written to only by one system.
Reachability along runs with at most $k$~stages is decidable
when all but one pushdown in the network are
counters. The procedure in~\cite{AtigBKS14} uses NFA that accept upward and downward
closures of one-counter languages;
the polynomial-time algorithms developed in the present paper bring the
complexity from \NExptime down to \NP for any network with a fixed number of
components.

\emph{Availability expressions}~\cite{HoenickeMO10} extend regular expressions by an additional
counting operator to express quantitative properties of behaviours.
It uses a feature called \emph{occurrence constraint} to impose a set of linear constraints
on the number of occurrences of alphabet symbols in sub-expressions.
As the paper~\cite{AbdullaAMS15} shows, the emptiness problem for
availability expressions is decidable, and the algorithm
involves nested calls to Parikh-image computation for OCA.
Our quasi-polynomial time algorithm for the Parikh image brings the complexity
from non-elementary down to \TwoExp.

\section{Preliminaries}
\label{s:prelim}

\subsection{One-counter automata}
\label{s:prelim:oca}

A \emph{one-counter automaton \textup{(}OCA\textup{)}}~$\A$
is a 5-tuple $(Q, \AlB, \tran, q_0, F)$
where $Q$ a a finite set of states,
$q_0\in Q$ is an initial state, and $F\subseteq Q$ is a set of final states.
$\AlB$ is a finite alphabet and
$\tran \subseteq
    Q \times
    (\AlB \cup \{\eps\}) \times
    \{\dec, \internal, \inc, \zerotest\} \times
    Q$
is a set of transitions.
Transitions $(p_1, a, s, p_2) \in \tran$ are classified as
\emph{incrementing} ($s = \inc$),
\emph{decrementing} ($s = \dec$),
\emph{internal} ($s = \internal$), or
\emph{tests for zero} ($s = \zerotest$).
The \emph{size} of $\A$, denoted $\sizeA$, is its number
of states, $\sizeop{Q}$.

A \df{configuration} of an OCA is a pair that consists of a state and
a (non-negative) counter value, i.e., $(q,n)\in Q\times \N$.
A pair $(p_1,c_1) \in Q \times \Z$ may evolve to a pair $(p_2,c_2) \in Q \times \Z$ via
a transition $t = (p_1, a, s, p_2) \in \tran$ iff
either $s \in \{\dec, \internal, \inc\}$ and $c_1 + s = c_2$,
or $s = \zerotest$ and $c_1=c_2=0$.
We denote this by $(p_1, c_1) \moves{t} (p_2, c_2)$.

Consider a sequence of the form
$\pi = (p_0,c_0)$, $t_1$, $(p_1,c_1)$, $t_2$, \ldots, $t_m$, $(p_m, c_m)$
where $(p_i, c_i) \in Q \times \Z$ for $0 \le i \le m$
and, whenever $i > 0$, it also holds that
$t_i \in \tran$ and $(p_{i-1},c_{i-1}) \moves{t_i} (p_i,c_i)$.
We say that $\pi$ \df{induces} a word~$w = a_1 a_2 \ldots a_m \in \AlB^*$
where $a_i\in \AlB \cup \{\eps\}$  and $t_i=(p_{i-1}, a_i, s, p_i)$;
we also say that the word $w$ can be \df{read} or \df{observed} along
the sequence $\pi$.
We call the sequence $\pi$:
\begin{itemize}
\item a \df{quasi-run}, denoted $\pi = (p_0, c_0) \Moves{w}_{\A} (p_m, c_m)$,
      if none of $t_i$ is a test for zero;
\item a \df{run}, denoted $\pi = (p_0, c_0) \moves{w}_{\A} (p_m, c_m)$,
      if all $(p_i, c_i) \in Q \times \N$.
\end{itemize}
We abuse notation and write $\Moves{w}$ (resp.\ $\moves{w}$) to mean
$\Moves{w}_{\A}$ (resp.\ $\moves{w}_{\A}$) when it is clear from context.
For $m = 0$, we also use this notation with $w = \eps$.
In addition, for any quasi-run $\pi$ as above,
the sequence of transitions $t_1, \ldots, t_m$
is called a \df{walk} from the state $p_0$ to the state $p_m$.

We will \df{concatenate} runs, quasi-runs, and walks, using
the notation $\pi_1 \cdot \pi_2$ and sometimes dropping the dot.
If $\pi_2$ is a walk and $\pi_1$ is a run, then $\pi_1 \cdot \pi_2$
will also denote a run.
In this and other cases, we will often assume that the counter values in $\pi_2$
are picked or adjusted automatically to match the last configuration of $\pi_1$.

The number $m$ is the \df{length} of $\pi$,
denoted $\length{\pi}$; for a walk, its length is equal to
the length of the sequence.
All concepts and attributes naturally carry over from runs to walks
and vice versa.
Quasi-runs are not used until further sections;
the semantics of OCA is defined just using runs.

A run $(p_0, c_0) \moves{w} (p_m, c_m)$ is called \df{accepting} in \A
if $(p_0, c_0) = (q_0, 0)$ where $q_0$ is the initial state of \A
and $p_m$ is a final state of \A, i.e., $p_m \in F$.
In such a case the word $w$ is \df{accepted} by \A;
the set of all accepted words is called the \df{language} of \A,
denoted $\langop\A$.

\subsection{Regular abstractions}
\label{s:prelim:abstractions}

A \df{nondeterministic finite automaton with $\eps$-transitions
      \textup{(}NFA\textup{)}} is
a one-counter automaton
where all transitions are tests for zero.
Languages of the form $\langop\NFA$, where \NFA is an NFA,
are \df{regular}.
If \A is an OCA,
then $\langop\A$ ---a one-counter language--- is not
necessarily regular.
In what follows, we consider three \df{regular abstractions}
of (one-counter) languages:
downward closures, upward closures, and Parikh-equivalent regular languages.

Let $w, w' \in \Sigma^*$.
We say that the word $w$ is a \df{subword} of the
word $w'$  if $w = a_1 \ldots a_n$ and there are  $x_i \in \Sigma^*$,
$1 \leq i \leq n+1$, such that $w' = x_1 a_1 x_2 a_2 \ldots x_n a_n x_{n+1}$.
We write $w \subword w'$
to indicate this.
For any language $\Language \sset \Sigma^*$,
the \df{upward} and \df{downward closures} of \Language are
the languages
\begin{align*}
\Language\ua & = \{ w' \mid \exists w \in \Language.\ w \subword w'\} \text{\quad and} \\
\Language\da & = \{ w \mid \exists w' \in \Language.\ w \subword w'\},\text{ respectively}.
\end{align*}
Any $w \in \Sigma^*$ defines
a function $\Parikh(w) \colon \Sigma \rightarrow \Nat$,
called the \df{Parikh image} of $w$ (i.e., $\Parikh(w) \in \Nat^\Sigma$
for all $w \in \Sigma^*$).  The value
$\Parikh(w)(a)$ is the number of occurrences of $a$
in $w$. The \df{Parikh image} of a language \Language
is the following subset of $\Nat^\Sigma$:
\begin{equation*}
\Parikh(\Language) = \{ \Parikh(w) \mid w \in \Language \}.
\end{equation*}
In the sequel, we usually identify $\Nat^\Sigma$
and $\Nat^{|\Sigma|}$.

It follows from Higman's lemma~\cite{Hig52} that, for any $\Language \sset \Sigma^*$,
the languages $\Language\ua$ and $\Language\da$ are regular;
since they abstract away some specifics of \Language,
they are regular abstractions of \Language.
For Parikh images, the situation is different:
for example, unary languages $\Language \sset \{ a \}^*$
are essentially unaffected by the Parikh mapping $\Parikh$,
but it is easy to find unary languages that are not even decidable,
let alone regular. However, Parikh's theorem~\cite{Parikh66} states that
if $\Language \sset \Sigma^*$ is a context-free language, then
there exists a regular language $\mathcal R \sset \Sigma^*$ that is
\df{Parikh-equivalent} to \Language, i.e., such that $\Parikh(\Language) = \Parikh(\mathcal R)$.
Hence, such languages $\mathcal R$ are also regular
abstractions of \Language; since all one-counter languages are context-free,
every OCA \A has at least one
regular language that is Parikh-equivalent to $\langop\A$.

\subsection{Convention on OCA}
\label{s:prelim:simple}
To simplify the presentation,
everywhere below we focus our attention on a sublcass of OCA
that we call \df{simple one-counter automata} (\df{simple OCA}).
A simple OCA is defined analogously to OCA and is different
in the following aspects:
(1) there are no zero tests,
(2) there is a unique final state, $F = \{\qfinal\}$,
(3) only runs that start from the configuration $(\qinit, 0)$ and end
    at the configuration $(\qfinal, 0)$ are considered \df{accepting}.
The language of a simple OCA \A, also denoted $\langop\A$,
is the set of words induced by accepting runs.
We now show that this restriction is without loss of generality.

For an OCA $\A=(Q, \AlB, q_0, \tran, F)$ and any $p,q \in Q$,
define a simple OCA $\A^{p,q} \eqdef (Q, \AlB, p, 
\tran^+, \{q\})$ where $\tran^+ \sset \tran$ is the set of all transitions in
$\tran$ that are not tests for zero.

For any simple OCA \A, define a sequence of \df{approximants} $\langop[n]{\A}$, $n \ge 0$:
the language $\langop[n]{\A}$ is the set of all words observed along runs of $\A$ from
$(q_0,n)$ to $(\qfinal,n)$.

\begin{lemma}\label{lem:simple-approx}
Let $K\in\N$ and $\diamondsuit\in \{\upop{},\downop{}, \parikhnoarg\}$ be an abstraction.
Assume that there is a polynomial $g_{\diamondsuit}$ such that
for any OCA \A the following holds: for every $\A^{p,q}$ there is an NFA
$\B_{p,q, \diamondsuit}$
such that $\diamondsuit{\langop{\A^{p,q}}}\subseteq\langop{\B_{p,q, \diamondsuit}}\subseteq\diamondsuit(\langop[K]{\A^{p,q}})$
and $\sizeop{\B_{p,q,\diamondsuit}} \le g_{\diamondsuit}(\sizeA)$.
Then there is a polynomial $f$ such that for any $\A$ there is an
NFA $\B^{\diamondsuit}$ of size at most $f(\sizeA, K)$ with
$\langop{\B^{\diamondsuit}} = \diamondsuit({\langop{\A}})$.
\end{lemma}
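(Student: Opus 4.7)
My plan is to reduce the construction of $\B^{\diamondsuit}$ to the hypothesis via a normalization of $\A$, a zero-test decomposition, and a splicing construction. First, normalize $\A$ to an OCA $\A^{\star}$ with a unique final state $\qfinal^{\star}$ accepting only at counter~$0$: add $\qfinal^{\star}$, an $\eps$-decrement self-loop on it, and an $\eps$-internal transition $(q, \eps, \internal, \qfinal^{\star})$ for every $q \in F$. A run of $\A$ ending at $(q, c)$ extends to an accepting run of $\A^{\star}$ via the internal move and $c$ self-loops, so $\langop{\A^{\star}} = \langop{\A}$ and $|\A^\star| = |\A| + 1$. Every accepting run of $\A^{\star}$ then factors as a finite alternation of zero-test transitions $t_i = (s_i, a_i, \zerotest, e_i)$ with zero-test-free segments that start and end at counter~$0$ (by the semantics of zero tests), yielding
\begin{equation*}
\langop{\A^{\star}} = \bigcup_{(t_1, \ldots, t_k)} \langop{(\A^{\star})^{q_0, s_1}} \{a_1\} \cdots \{a_k\} \langop{(\A^{\star})^{e_k, \qfinal^{\star}}}.
\end{equation*}
Since each of $\downop{}$, $\upop{}$, and $\parikhnoarg$ distributes over unions and concatenations, the same formula with $\diamondsuit$ applied factor by factor computes $\diamondsuit(\langop{\A})$.

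I would realize this decomposition as an NFA $\B^{\diamondsuit}$ by taking one skeleton state per state of $\A^{\star}$, declaring $q_0$ initial and $\qfinal^{\star}$ final, splicing the NFA $\B_{p,q,\diamondsuit}$ given by the hypothesis between skeleton states $p$ and $q$ via $\eps$-transitions, and, for every zero-test transition $(p, a, \zerotest, q)$, attaching a small gadget realising the atomic abstraction $\diamondsuit(\{a\})$: two parallel edges labeled $a$ and $\eps$ for $\downop{}$, an $O(|\AlB|)$-sized loop gadget for $\upop{}$, and a single $a$-edge for $\parikhnoarg$. The overall size is polynomial in $|\A|$ and $K$, matching the claimed $f(|\A|, K)$ bound.

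The inclusion $\diamondsuit(\langop{\A}) \subseteq \langop{\B^{\diamondsuit}}$ is immediate from the lower bound of the sandwich plugged into the decomposition. The reverse inclusion is the main obstacle: the sandwich's upper bound $\diamondsuit(\langop[K]{\A^{p,q}})$ refers to runs starting and ending at counter~$K$, so combining overapproximated segments might a priori produce words that do not come from any accepting run of $\A$. To control this, I would apply the hypothesis not to $\A^{\star}$ but to a padded variant $\hat\A$ obtained by inserting a chain of $K$ $\eps$-increments after the initial state, a chain of $K$ $\eps$-decrements before $\qfinal^{\star}$, and, around every zero test, a $K$-decrement chain on one side and a $K$-increment chain on the other. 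These $\eps$-chains preserve the language, but they shift the counter by~$K$ inside each zero-test-free segment, so that the $K$-approximation in the sandwich's upper bound corresponds to the actual segments of $\hat\A$; the sandwich then becomes tight at the level of the combined NFA, yielding $\langop{\B^{\diamondsuit}} = \diamondsuit(\langop{\A})$. Verifying this alignment — checking that the padded OCA's zero-test-free subruns are captured precisely by $\langop[K]{\hat\A^{p,q}}$ and that no spurious words enter the combination — is the technical crux of the proof.
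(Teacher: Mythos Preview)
Your zero-test decomposition and the lower-bound inclusion are fine; the gap is the padding step. The claim that the $\eps$-chains ``preserve the language'' is false. Once you insert a $K$-increment chain after a zero-test target $e$ and a $K$-decrement chain before the next zero-test source $s$, the zero-test-free segment of $\hat\A$ between the new endpoints runs from counter~$0$ up to $K$ via the chain, then through the original transitions from $(e,K)$ to $(s,K)$ with the counter only constrained to stay $\ge 0$ (not $\ge K$), then back down to~$0$. Hence $\langop{\hat\A^{e_z,s_z}} = \langop[K]{\A^{e,s}} \supsetneq \langop{\A^{e,s}}$ in general, and concatenating along the skeleton gives $\langop{\hat\A}\supsetneq\langop{\A}$. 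A concrete witness: take $\A$ with states $p$ (initial), $q$, $r$ (final) and transitions $(p,b,-1,q)$, $(q,a,+1,r)$; then $\langop{\A}=\emptyset$, but after padding the word $ba$ is accepted since from $(p,K)$ one may decrement and re-increment. So your $\B^{\diamondsuit}$ computes the abstraction of the wrong language, and applying the hypothesis to $\hat\A$ only shifts the sandwich to $\diamondsuit(\langop[K]{\A^{e,s}})\subseteq\cdots\subseteq\diamondsuit(\langop[2K]{\A^{e,s}})$ rather than closing it.

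The paper's construction avoids this by decomposing not at zero tests but at counter level~$K$: the skeleton NFA has states $Q\times\{0,\dots,K\}$ and simulates $\A$ exactly (zero tests included, since they fire only at level~$0$) while the counter stays in $[0,K]$; the sub-NFA $\B_{p,q,\diamondsuit}$ is spliced between $(p,K)$ and $(q,K)$. The point is that then \emph{both} endpoints of the sandwich recover $\langop{\A}$ under substitution: replacing each segment letter by $\langop{\A^{p,q}}$ gives $\langop{\A}$ (any level-$K$-to-level-$K$ excursion actually occurring in a run of $\A$ stays $\ge K$ and shifts down into $\langop{\A^{p,q}}$), and replacing it by $\langop[K]{\A^{p,q}}$ also gives $\langop{\A}$ (any zero-test-free run from $(p,K)$ to $(q,K)$ splices back into a valid run of $\A$). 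Any $\B_{p,q,\diamondsuit}$ sandwiched between these two therefore yields exactly $\diamondsuit(\langop{\A})$, with no auxiliary padded automaton.
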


\begin{proof}[Proof (sketch)]
We use the following two ideas:
(\textit{i})
to compute the abstraction of $\{w\}$ where $w=w_1\cdot w_2$,
it suffices to concatenate
abstractions of $\{w_1\}$ and $\{w_2\}$;
(\textit{ii})
for any $K \in \N$,
every run of $\A$ can
be described as interleaving of runs below $K$ and
above $K$.
The NFA $\B^{\diamondsuit}$ is constructed as follows:
first encode counter values below $K$ using states,
and then insert NFA $\B_{p,q,\diamondsuit}$ in between $(p,n)$ and $(q,n)$.
\end{proof}

Restriction to simple OCA
is now a consequence of Lemma~\ref{lem:simple-approx} for $K=0$.
\newcommand{\onlyappendix}[1]{}
\newcommand{\onlyappendixpara}[1]{}
\newcommand{\onlymainpaper}[1]{#1}
\newcommand{\onlymainpaperpara}[1]{#1}
\newcommand{\Omittable}[1]{}

\section{Upward and Downward Closures}
\label{app:upward}

\onlyappendix{
We begin by showing that for any OCA $\A$  we can 
effectively construct, in \Ptime, a NFA  that accepts $\lang(\A)\ua$.
This easy construction follows the argument traditionally used
to bound the length of the shortest accepting run in a pushdown automaton.
 We give the details as the ideas used here recur elsewhere.
We use the following notation in what follows: for a run $\rho$ and an 
integer $D$ we write $\rho[D]$ to refer to the quasi-run $\rho'$ obtained
from $\rho$ by replacing the counter value $v$ by $v+D$ in every configuration
along the run.

\begin{restatable}{lemma}{ucbound}
\label{lem:uc-bound}
Let $\A=(Q,\Sigma,\delta,s,F)$ be a OCA and let $w$ be a
word accepted by $\A$. Then there is a word $y \subword w$ in $\lang(\A)$ such
that $y$ is accepted by a run where the value of the counter never exceeds
$|Q|^2+1$.
\end{restatable}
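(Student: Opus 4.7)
The plan is to prove the lemma by iterative pumping: starting from an accepting run $\rho$ of $w$ whose peak counter value $M$ exceeds $|Q|^2+1$, I will extract a subword $y' \subword w$ accepted by a run with peak strictly smaller than $M$, and then iterate until the peak drops below the required bound. The key combinatorial tool is a mountain-style decomposition around a peak configuration, together with a pigeonhole argument on pairs of states flanking the peak on the way up and on the way down.

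Concretely, I would fix a time $t^*$ along $\rho$ at which the counter attains its maximum value $M$. For each $h \in \{0, \ldots, M-1\}$, let $i_h$ be the last time strictly before $t^*$ at which the counter equals $h$, let $j_h$ be the first time strictly after $t^*$ at which it equals $h$, and let $P_h$, $Q_h$ be the states at those two times. Since the counter changes by at most $1$ per step, both $i_h$ and $j_h$ are well defined (invoking the paper's simple-OCA convention so that the run ends at counter $0$), and the extremality in the definition of $i_h$ and $j_h$ forces the counter to be strictly greater than $h$ everywhere in the open interval $(i_h, j_h)$. This gives $M$ pairs $(P_h, Q_h) \in Q \times Q$; since $M > |Q|^2$, pigeonhole yields heights $h_1 < h_2$ with $(P_{h_1}, Q_{h_1}) = (P_{h_2}, Q_{h_2})$. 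Then I would decompose $\rho = \rho_1 \,\alpha\, \rho_2 \,\beta\, \rho_3$ so that $\rho_1$ ends at $i_{h_1}$, $\alpha$ runs from $i_{h_1}$ to $i_{h_2}$, $\rho_2$ from $i_{h_2}$ to $j_{h_2}$, $\beta$ from $j_{h_2}$ to $j_{h_1}$, and $\rho_3$ from $j_{h_1}$ to the end of $\rho$. Deleting $\alpha$ and $\beta$ and shifting the counter along $\rho_2$ down by $D = h_2 - h_1$ yields a candidate run $\rho' = \rho_1 \cdot \rho_2[-D] \cdot \rho_3$ over a subword of $w$, and the peak of $\rho'$ is at most $M-1$; so the iteration terminates with peak at most $|Q|^2+1$.

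The hard part, although entirely local, will be checking that $\rho'$ is a \emph{bona fide} run. Splice compatibility follows directly from the pigeonhole choice: $\rho_1$ ends in state $P_{h_1} = P_{h_2}$ at counter $h_1$, which matches the starting configuration of $\rho_2[-D]$, and symmetrically at the other splice point. Non-negativity of the counter along $\rho_2[-D]$ follows because the original counter on $\rho_2$ was always at least $h_2$ (strictly greater in the interior, by extremality), so after subtracting $D$ it is at least $h_1 \ge 0$. Under the paper's convention of working with simple OCA, zero tests are absent from $\A$ by definition and cause no trouble; in the unrestricted setting the same extremality argument shows that $\rho_2$ contains no zero test (the counter there is at least $h_2 \ge 1$), so shifting $\rho_2$ is still safe, while the untouched portions $\rho_1$ and $\rho_3$ are carried over verbatim and keep all their transitions valid.
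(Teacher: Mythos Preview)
Your approach is essentially identical to the paper's: both fix a time at which the counter is maximal, define for each height $h$ the last pre-peak time and first post-peak time at that counter value, pigeonhole on the resulting pairs of states, and excise the two matched segments. The one substantive difference is in the termination argument, and there you have a gap.

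You assert that the peak of $\rho'$ is at most $M-1$, but this need not hold: the untouched suffix $\rho_3$ (starting at $j_{h_1}$) can itself contain further occurrences of the value $M$, since nothing prevents the counter from climbing back to $M$ after time $j_{h_1}$. Take a run whose counter profile is $0,\ldots,M,\ldots,0,\ldots,M,\ldots,0$: with $t^*$ at the first peak, all your $j_h$ lie in the first descent, and the second occurrence of $M$ sits entirely inside $\rho_3$, untouched by your surgery. The paper handles exactly this by a double induction on the pair (maximum counter value, number of times that maximum is attained): one step of the construction removes at least the occurrence of $M$ at $t^*$, so this lexicographic pair strictly decreases. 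An even simpler fix, compatible with your iterative phrasing, is to iterate on the \emph{length} of the run rather than on its peak: since $|\alpha|,|\beta|\ge h_2-h_1\ge 1$, each surgery strictly shortens the run, so the iteration terminates; and whenever it terminates the peak must be at most $|Q|^2+1$ by the contrapositive of your unpumping step.
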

\begin{proof}
We show that for any accepting run $\rho$ reading a word $w$, there
is an accepting run $\rho'$,  reading  a  word $y \subword w$, in which the maximum value of the counter does not exceed $|Q|^2+1$.  We prove this by double
induction on the maximum value of the counter and the number of times this
value is attained during the run $\rho$.

If the maximum value is below $|Q|^2+1$ there is nothing to prove.  Otherwise
let the maximum value $m > |Q|^2+1$ be attained $c$ times along $\rho$.   
We break the run up as a concatenation of subruns 
$\rho = \rho_{0}\rho_{1}\rho_2\rho_3\ldots \rho_{m}\rho'_{m-1}\ldots\rho'_2\rho'_1\rho'_{0}$  where
\begin{enumerate}
\item  $\rho_{0}\rho_1\rho_2\ldots \rho_{m}$ is the shortest prefix of $\rho$ 
after which the counter reaches the value $m$.
\item  $\rho_{0}\rho_1\rho_2\ldots \rho_i$ is the longest prefix
of $\rho_0\rho_1\rho_2\ldots \rho_{m}$ after which the counter value is $i$, $1 \leq i \leq m-1$.
\item  $\rho_{0}\rho_1\rho_2\ldots \rho_{m}.\rho'_{m-1}\ldots \rho'_{i}$, is the shortest prefix of $\rho$ with $\rho_{0}\rho_1\rho_2\ldots \rho_{m}$ as a prefix and after which the counter value is $i$, $0 \leq i \leq m-1$.
\end{enumerate}
Let the configuration reached after the prefix $\rho_{0}\ldots\rho_{i}$  be $(p_i,i)$, for $1 \leq i \leq m$. 
Similarly let the configuration reached after the prefix 
$\rho_{0}\rho_1\rho_2\ldots \rho_{m}.\rho'_{m-1}\ldots\rho'_{i}$ be $(q_i,i)$, for $0 \leq i \leq m-1$.

Now we make two observations: firstly, the value of the counter never falls 
below $i$ during the segment of the run $\rho_{i+1} \ldots
\rho'_{i}$ --- this is by the definition of the $\rho_i$s and $\rho'_i$s. 
 Secondly, there are $i < j$ such that $p_i = p_j$ and
$q_i = q_j$ --- this is because $m \geq |Q|^2 + 1$. 
Together this means that we may shorten the run by deleting the
sequence of transitions corresponding to the segment $\rho_{i+1} \ldots \rho_j$ leading from $(p_i,i)$ to $(p_j,j)$ and the sequence corresponding to the 
segment $\rho'_{j-1} \ldots \rho'_i$ from $(q_j,j)$ to $(q_i,i)$ and still
obtain a valid run of the system. 
That is, $\rho_0\rho_1\ldots\rho_{i}\rho_{j+1}[-d]\rho_{j+2}[-d]\ldots\rho'_{j}[-d]\rho'_{i-1}\ldots\rho'_{0}$
is a valid run,
where $d = j - i$.
Clearly the word accepted by such a run is a subword of $w$, and
further this run has at least one fewer occurrence of the maximal counter value $m$.
The Lemma follows by an application of the induction hypothesis to this
run and using the transitivity of the subword relation.

\end{proof}

The set of words in $\lang(\A)$ accepted along 
runs where the value of the counter does not exceed $|Q|^2+1$ is accepted
by an NFA with $|Q|.(|Q|^2+1)$ states (it keeps the counter values as 
part of the state).  Combining this with the standard construction for
upward closure for NFAs we get
}

\onlymainpaper{
The standard argument used to bound the length of accepting runs
of PDAs (or OCAs) can be adapted easily to show
\begin{restatable}{theorem}{upwardclosure}
\label{thm:upwardclosure}
There is a polynomial-time algorithm that takes as input an OCA $\A
=(Q,\Sigma,\delta,s,F)$ and computes an NFA with $O(\sizeA^3)$ states accepting
$\lang(\A)\ua$.
\end{restatable}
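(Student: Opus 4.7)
The plan is to combine a standard pumping argument for one-counter automata with the trivial NFA construction for upward closure. The key observation, which I would isolate as a lemma, is that every word $w \in \lang(\A)$ has a subword $y \subword w$ with $y \in \lang(\A)$ that is accepted by a run whose counter never exceeds $|Q|^2 + 1$. Since $\lang(\A)\ua = \lang(\A')\ua$ for any $\A'$ with $\lang(\A) \subword \lang(\A') \subword \lang(\A)$, this lets me replace $\A$ by an equivalent (up to upward closure) device with a bounded counter, which is essentially an NFA.

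The pumping lemma itself is proved by induction on (the maximum counter value attained by the run, the number of times it is attained). Consider an accepting run $\rho$ whose counter reaches some maximum value $m > |Q|^2 + 1$. Decompose $\rho$ along the first ascent to the peak: for each $i \in \{1, \ldots, m\}$, let $(p_i, i)$ be the configuration at the last moment on this ascent where the counter equals~$i$, and symmetrically let $(q_i, i)$ be the configuration on the subsequent descent from~$m$ the first time the counter drops back to~$i$. Since there are $m > |Q|^2$ values of~$i$ on the way up and down, by pigeonhole there exist indices $i < j$ with $p_i = p_j$ and $q_i = q_j$. Between these matched pairs the counter never drops below~$i$, so the segments from $(p_i, i)$ to $(p_j, j)$ and from $(q_j, j)$ to $(q_i, i)$ can be excised and the remainder glued back together (with counter values on the detour shifted down by $j - i$) to produce a valid accepting run for a subword of~$w$. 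This run either attains the maximum value fewer times or has a smaller maximum; inductively we drive the counter below $|Q|^2 + 1$.

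Given the lemma, I build the NFA in two steps. First, I construct the product automaton $\NFA_0$ whose states are pairs $(q, c)$ with $q \in Q$ and $0 \le c \le |Q|^2 + 1$ and whose transitions simulate the transitions of $\A$ within this counter range. This has $O(\sizeA^3)$ states and accepts precisely the set of words in $\lang(\A)$ read along bounded-counter runs. Second, I form $\NFA_0\ua$ by adding, at every state, an $a$-self-loop for each $a \in \Sigma$; this is the standard upward-closure construction for NFA and preserves the state count. By the lemma, $\lang(\NFA_0\ua) = \lang(\A)\ua$, and the whole construction is clearly polynomial-time.

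The only real obstacle is the pumping argument, and in particular choosing the decomposition of $\rho$ so that the two excised pieces can be patched together to form a legal (i.e., nonnegative) run; taking the last ascent and first descent indices at each level ensures that the counter stays $\ge i$ throughout the excised middle, which is precisely what is needed for the shifted remainder to remain nonnegative. Everything else is bookkeeping.
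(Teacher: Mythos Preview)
Your proposal is correct and follows essentially the same approach as the paper: the same key lemma (every accepted word has an accepted subword with a run bounded by $|Q|^2+1$), the same double induction on (maximum counter value, number of times attained), the same ascent/descent decomposition with pigeonhole on pairs $(p_i,q_i)\in Q\times Q$, and the same NFA construction by encoding bounded counter values into states and then adding self-loops for the upward closure. The only differences are cosmetic (e.g., your sandwich sentence about $\lang(\A')$ is garbled, but the intended point is clear).
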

}
\onlyappendix{
\upwardclosure*
}
\onlyappendix{
The construction can be extended to general OCA without any change in the
complexity}

Next we show a polynomial time procedure that constructs an NFA accepting 
the downward closure of the language of any simple OCA. 
  For pushdown automata the construction involves a necessary
exponential blow-up. 
We sketch some observations that lead to our polynomial time construction.

Let $\A = (Q,\Sigma,\delta,s,F) $ be a simple OCA and let $K = |Q|$.
Consider any run $\rho$ of  $\A$ from a configuration
$(p,i)$ to a configuration $(q,j)$.  If the value of the counter increases  
(resp. decreases) by at least $K$ in $\rho$ 
then, it contains a segment that can be \emph{pumped} (or iterated) to 
increase (resp. decrease) the value of the counter. 
\onlyappendix{ If the increase in the value of the
counter in this iterable segment is $k$ then by choosing an appropriate
number of iterations we may increase the value of the counter at the end
of the run by any multiple of this $k$.} Quite clearly, 
the word read along this iterated run will be a superword of word read 
along $\rho$. The following lemmas\onlyappendix{,whose proof is a simplified version
of that of Lemma \ref{lem:uc-bound},} formalize this.

\onlymainpaper{
\begin{restatable}{lemma}{PumpUp}
\label{lem:PumpUp}
Let $(p,i) \moves{x} (q,j)$ with $j-i>K$.  Then, there is an integer
$k > 0$ such that for each $N \geq 0$ there is a run
$(p,i) \moves{w = y_1.(y_2)^{N+1}.y_3} (p',j + N.k)$ with $x = y_1y_2y_3$.
\end{restatable}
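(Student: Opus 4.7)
The plan is to exhibit a pumpable factor of the given run $\rho \colon (p,i) \moves{x} (q,j)$ by a pigeonhole argument on counter levels. Write the configurations of $\rho$ as $(p_0, c_0), \ldots, (p_m, c_m)$ with $(p_0,c_0) = (p,i)$ and $(p_m,c_m) = (q,j)$. For each integer $v$ with $i \le v \le j$, define $t_v$ to be the \emph{largest} index $t$ such that $c_t \le v$. Because a simple OCA has only increment, decrement, and internal transitions, consecutive counter values differ by at most one; hence $c_{t_v}$ must equal $v$ exactly, for if $c_{t_v} < v$ then $c_{t_v+1} \le v$ as well, contradicting the maximality of $t_v$ (note $t_v < m$ since $c_m = j \ge v$). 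Setting $s_v = p_{t_v} \in Q$ yields $j - i + 1 > K = |Q|$ states $s_i, \ldots, s_j$, so by pigeonhole there exist $a < b$ in $\{i, \ldots, j\}$ with $s_a = s_b$.

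Next, I would decompose $\rho = \rho_1 \cdot \sigma \cdot \rho_2$, where $\rho_1$ is the prefix of $\rho$ ending at position $t_a$ (reading $y_1$ and reaching $(s_a, a)$), $\sigma$ is the factor from position $t_a$ to position $t_b$ (reading $y_2$ and ending at $(s_b, b) = (s_a, b)$), and $\rho_2$ is the remaining suffix (reading $y_3$ and ending at $(q, j)$); in particular $x = y_1 y_2 y_3$. By the maximality in the choice of $t_a$, every configuration strictly after position $t_a$ has counter value $> a$. In particular, along $\sigma$ the counter equals $a$ only at the very first configuration and is $\ge a+1$ thereafter.

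Set $k = b - a > 0$. I claim that $\rho_1 \cdot \sigma^{N+1} \cdot \rho_2$ is a valid run of $\A$ reaching $(q, j + Nk)$ for every $N \ge 0$. Inductively, the $r$-th iteration of $\sigma$ begins at state $s_a$ with counter $a + (r-1)k$; its counter trajectory is the original trajectory of $\sigma$ shifted upward by $(r-1)k \ge 0$, so all counters remain non-negative. After $N+1$ iterations of $\sigma$ the configuration is $(s_a, a + (N+1)k) = (s_a, b + Nk)$. Appending $\rho_2$ shifted upward by $Nk$ is again a valid run (adding a constant to all counter values preserves non-negativity) and lands at $(q, j + Nk)$, as required. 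For $N = 0$ we recover $\rho$ itself, which is consistent with $p' = q$.

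The main obstacle is picking the correct notion of "visiting counter level $v$". Using the \emph{last} index at which the counter is at most $v$ is essential: it pins $\sigma$ down so that the counter along $\sigma$ never dips below $a$ after its first step. A first-visit or first-reach-from-below variant would leave open the possibility that $\sigma$ dips to $0$ in the middle, which would break validity after pumping since the iterations do not change the relative trajectory. With the chosen definition, validity is automatic and the lemma follows.
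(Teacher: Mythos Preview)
Your argument is correct and is essentially the paper's: both select one configuration at each counter level in $\{i,\ldots,j\}$, apply pigeonhole on the control states, and iterate the resulting positive-effect loop. The only difference is that the paper picks the \emph{first} visit to each level (the shortest prefix after which the counter attains value $r$), whereas you pick the \emph{last} index at which the counter is $\le v$.

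Your closing paragraph is mistaken, though. The first-visit variant works just as well for this lemma: the segment $\sigma$ lies inside a valid run, so its counter trajectory is already $\ge 0$; every further iterate of $\sigma$ is that same trajectory shifted up by a positive multiple of $k$, hence still $\ge 0$, and the same goes for the shifted suffix $\rho_2$. Pumping \emph{upward} never threatens nonnegativity, regardless of how low $\sigma$ dips. The extra property your last-visit choice delivers---that the counter stays strictly above $a$ after the first step of $\sigma$---is true but unused here; it would only matter in an argument that needs to shift the segment \emph{down}.
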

}
\onlyappendix{
\PumpUp*
\begin{proof}
Consider the run $(p,i) \moves{x} (q,j)$ and break it up as
\begin{align*}
(p,i) = & (p_i,i) \moves{x_1} (p_{i+1},i+1) \moves{x_2} (p_{i+2},i+2) \ldots \\
&\ldots \moves{x_j} (p_j,j) \moves{x'} (q,j)
\end{align*}
 where the run $(p_i,i)\moves{x_1} \ldots
\moves{x_r} (p_r,r)$ is the shortest prefix after which the value of the counter
 attains the value $r$. Since $j - i > K$ it follows that there are $r,r'$
with $i \leq r < r' \leq j$ such that $p_r = p_{r'}$.
 Clearly one may
iterate the segment of the run from $(p_r,r)$ to $(p_r,r')$ any number
of times, say $N \geq 0$, to get a run $(p,i) \moves{w}  (q,j + (r'-r)N)$.
where $w = x_1\ldots x_r (x_{r+1}\ldots x_{r'})^{N+1} x_{r+1} \ldots x_k$.
Setting $k = r' - r$ yields the lemma.
\end{proof}

An analogous argument shows that if the value of the counter decreases by
at least $K$ in $\rho$ then we may iterate a suitable segment to reduce
the value of the counter by any multiple of $k'$ (where the $k'$ is the
net decrease in the value of the counter along this segment) while reading
a superword. This is formalized as 
}

\onlymainpaper{
\begin{restatable}{lemma}{PumpDown}
\label{lem:PumpDown}
Let $(q',j') \moves{z} (p',i')$ with  $j' - i' > K$.  Then, there is an integer $k' > 0$ such that for every $N \geq 0$ there is
a run $(q',j'+N.k') \moves{w = y_1(y_2)^{N+1}y_3} (p',i')$ with
$z = y_1y_2y_3$. 
\end{restatable}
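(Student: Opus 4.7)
The plan is to mirror the proof of Lemma~\ref{lem:PumpUp}, exploiting the symmetry between ascending and descending counter segments. Given the descending run $(q',j') \moves{z} (p',i')$ with $j'-i' > K$, I would decompose it according to the \emph{last} occurrence of each intermediate counter value: for every $r$ with $i' \le r \le j'$, let $(q_r, r)$ be the last configuration along the run whose counter equals $r$. Since the counter ends at $i' < r$, the value $r$ is never revisited after this configuration, so the decomposition is well-defined and partitions $z$ into blocks $z = z_{j'} z_{j'-1} \cdots z_{i'+1}$, where $z_r$ is read along the segment from $(q_r, r)$ to $(q_{r-1}, r-1)$.

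Because $j' - i' > K = |Q|$, among the $j' - i' + 1$ marked states $q_{j'}, \ldots, q_{i'}$ two must coincide by the pigeonhole principle: $q_r = q_{r'}$ with $i' \le r' < r \le j'$. Set $k' \defeq r - r' > 0$ and split $z = y_1 y_2 y_3$, where $y_1$ is read along the prefix up to $(q_r, r)$, $y_2$ along the segment from $(q_r, r)$ to $(q_{r'}, r')$, and $y_3$ along the remainder. The walk $y_2$ starts and ends in the same state and has net counter effect $-k'$.

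To construct the claimed run from $(q', j' + Nk')$, I would first execute $y_1$: the counter values along this portion are uniformly shifted up by $Nk'$ compared with the original run, so they remain non-negative and we reach $(q_r, r + Nk')$. Next, I iterate $y_2$ a total of $N+1$ times; each iteration decreases the counter by $k'$, and throughout every iteration the transitions remain enabled because the counter values are pointwise shifted up by a non-negative multiple of $k'$ relative to their original range in $[r', r]$. After the $(N{+}1)$-st iteration we arrive at $(q_{r'}, r')$. Executing $y_3$ then reaches $(p', i')$, and the observed word is $y_1 (y_2)^{N+1} y_3$, a superword of $z$, as required.

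The only real subtlety is counter non-negativity during the repeated iterations of $y_2$; by choosing $y_2$ from a descending context where the counter was guaranteed to stay in $[r', r]$ and only running it at shifted heights above this range, this is automatic. An alternative, more abstract derivation would apply Lemma~\ref{lem:PumpUp} to the reversal of $\A$, but the direct argument above is more transparent and delivers the same constant $k'$.
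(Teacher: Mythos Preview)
Your argument is correct and follows the same strategy as the paper, with one cosmetic difference: the paper marks the \emph{first} occurrence of each counter value (shortest prefix reaching~$t$), whereas you mark the \emph{last}. Both choices work and are dual to each other.

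One small slip worth flagging: with your last-occurrence markers, the counter during $y_2$ is bounded above by~$r$ but is \emph{not} guaranteed to stay in $[r',r]$---between the last visit to~$r$ and the last visit to~$r'$ the counter may dip well below~$r'$ before returning. (Dually, the paper's first-occurrence choice gives a lower bound~$t'$ but no upper bound on the repeated segment.) This does not damage your proof, because your actual justification---``pointwise shifted up by a non-negative multiple of~$k'$''---already suffices: each of the $N{+}1$ copies of $y_2$ is the original segment shifted up by some $\ell k' \ge 0$, and the original segment was itself a valid run with counter $\ge 0$. Just drop the stray claim about the range~$[r',r]$.
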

}
\onlyappendix{
\PumpDown*
\begin{proof}
We break the run into segments as:
\begin{align*}
(q',j') = & (q_{j'},j') \moves{z_{j'-1}} (q_{j'-1},j'-1) \moves{z_{j'-2}} (q_{j'-2},j'-2) \ldots \\
& \ldots \moves{z_{i'}} (q_{i'},i') \moves{z'} (p',i')
\end{align*}
where $(q_{j'},j') \moves{z_{j'-1}} (q_{j'-1},j'-1) \moves{z_{j'-2}}  (q_{j'-2},j'-2) \ldots  $ 
$\moves{z_{t}} (q_{t},t)$ is the shortest prefix after which the value of counter
is $t$. Since $j'-i' > K$ it follows that there are $t,t'$,$j' \geq t > t' \geq
i'$ such that $q_t = q_{t'}$. Then, starting at any configuration 
$(q_{t}, t + (t-t')N)$, $N \in \Nat$ we may iterate the transitions in the run
$(q_t,t) \moves{} (q_{t'},t')$  an additional $N$ times. This yields a run
$(q_t,t+(t-t')N) \moves{z''} (q_{t'},t')$ where
$z'' = (z_{t-1} \ldots z_{t'})^{N+1}$. Observe that $z_{t-1}\ldots z_t'$ is
a subword of $z''$.
Finally, notice that this also means that
$(q',j' + N.(t-t')) \moves{z_1\ldots z_t} (q_t,t+N.(t-t'))\moves{z''}(q_{t'},t') \moves{z_{t'+1}\ldots z_i}(p',i')$. Taking $k'=(t-t')$ completes the
proof.
\end{proof}
}

A consequence of these somewhat innocous lemmas is the following interesting
fact: we can turn a triple consisting of two runs, where the first one 
increases the counter by at least $K$ and the second one decreases the counter
by at least $K$, and a quasi-run that connects them, into a real run provided
we are content to read a superword along the way. 

\onlymainpaper{
\begin{restatable}{lemma}{FreeToOrd}
\label{lem:FreeToOrd}
Let $(p,i) \moves{x} (q,j) \Moves{y} (q',j') \moves{z} (p',i')$, with $j - i > K$ and $j' - i' > K$.  Then, there is a run $(p,i) \moves{w} (p',i')$ such
that $xyz \subword w$.
\end{restatable}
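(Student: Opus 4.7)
The plan is to lift the quasi-run in the middle to a real run by inflating the counter with pumping on both sides. Since $(q,j) \Moves{y} (q',j')$ is only a quasi-run, its counter profile may dip below zero, but only by a bounded amount: if $m$ denotes the minimum counter value along the quasi-run, then $m \ge j - |y|$, so shifting every configuration of the quasi-run up by some $M \ge \max(0,-m)$ produces a genuine run $(q, j+M) \moves{y} (q', j'+M)$.

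To arrange that both sides can be glued to this shifted middle, I would apply Lemma~\ref{lem:PumpUp} to the first run (legitimate since $j-i>K$) to obtain an integer $k>0$ such that for every $N \ge 0$ there is a run $(p,i) \moves{w_1} (q, j+Nk)$ with $x \subword w_1$. Symmetrically, Lemma~\ref{lem:PumpDown} applied to the third run (legitimate since $j'-i'>K$) yields an integer $k'>0$ such that for every $N' \ge 0$ there is a run $(q', j'+N'k') \moves{w_3} (p', i')$ with $z \subword w_3$.

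Now I would pick any integer $T \ge 0$ with $kk'T \ge -m$ and set $N \defeq k' T$, $N' \defeq k T$, so that $Nk = N'k' = kk'T \eqdef M$. Concatenating the three pieces gives
\begin{equation*}
(p,i) \moves{w_1} (q, j+M) \moves{y} (q', j'+M) \moves{w_3} (p', i'),
\end{equation*}
where the middle step is a valid run because of the choice of $M$, and the two outer steps are valid runs by the pumping lemmas. Setting $w \defeq w_1 \cdot y \cdot w_3$ yields $(p,i) \moves{w} (p',i')$, and since $x \subword w_1$ and $z \subword w_3$ (while $y$ appears literally in $w$), the subword relation is preserved componentwise, giving $xyz \subword w$.

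The main conceptual point is that the two pumping lemmas produce counter increments in multiples of fixed periods $k$ and $k'$, so the only nontrivial step is the simultaneous alignment $Nk = N'k'$ with the common value exceeding the dip of the quasi-run; this is handled uniformly by taking the period $kk'$ and scaling it up. No additional automaton-theoretic argument is needed once the two pumping lemmas are in hand.
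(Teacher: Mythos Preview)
Your proof is correct and follows essentially the same approach as the paper: both apply Lemma~\ref{lem:PumpUp} and Lemma~\ref{lem:PumpDown} to obtain periods $k,k'>0$, then choose the common shift to be a multiple of $kk'$ large enough to lift the middle quasi-run above zero. Your presentation is slightly cleaner in that you handle the case $m\ge 0$ uniformly (via $T=0$) rather than separating it out, but the argument is the same.
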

}
\onlyappendix{
\FreeToOrd*
\begin{proof}
Let the lowest value of counter in the entire run be $m$. If $m \geq 0$ then
the given quasi-run is by itself a run and hence there is is nothing to prove.
Let us assume that $m$ is negative.

First we use Lemma \ref{lem:PumpUp}, to get a $k$ and an $x'$ for any $N > 1$ and a run $(p,i) \moves{x'}  (q,j + k.N)$  with $x \subword x'$.
We can then extend this to a run $(p,i) \moves{x'} (q,j+k.N) \Moves{y} (q',j'+k.N)$, by simply choosing $N$ so that $k.N > m$. Then,  we have that the value of the
counter is $\geq 0$ in every configuration of this quasi-run.
 Thus $(p,i) \moves{x'}(q,j+k.N) \moves{y} (q',j'+k.N)$ for any such $N$.
Now, we apply Lemma \ref{lem:PumpDown} to the run $(q',j') \moves{z} (p',i')$ to obtain the $k'$.   We now set our $N$ to be a value divisible by $k'$,
say $k'.I$. Thus,
$(p,i) \moves{x'}(q,j+k.k'.I) \moves{y} (q',j'+k.k'.I)$ and now we may
again use Lemma \ref{lem:PumpDown} to conclude that  $(q',j'+k.k'.I) \moves{z''} (p',i')$ with $x \subword x'$ and $z \subword z''$. This completes
the proof.
\end{proof}
}

Interesting as this may be, this lemma still relies on 
the counter value being  recorded exactly in all the three segments in its
antecedent  and \onlymainpaper{we weaken this next.}\onlyappendix{this is not sufficient.  In the next step, we
weaken this requirement (while imposing
the condition that $q = q'$ and $j = j'$) by releasing the (quasi) middle segment
from this obligation.}

\onlymainpaper{
\begin{restatable}{lemma}{FreeLoop}
\label{lem:FreeLoop}
Let $(p,i) \moves{x} (q,j)$,$(q,j) \moves{z} (p',i')$, with $j - i > K$ and $j' - i' > K$. Let there be a walk from $q$ to $q$ that reads $y$.  Then, there is a run $(p,i) \moves{w} (p',i')$ such that $xyz \subword w$.
\end{restatable}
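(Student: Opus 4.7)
The plan is to upgrade the walk $y$ into a genuine run by executing it at sufficient altitude: pump the counter up via $x$ to a height at which $y$ stays nonnegative throughout, then pump $z$ down to land back at $(p',i')$. The principal obstacle is that the walk $y$ shifts the counter by an arbitrary integer $e$, yet Lemma~\ref{lem:PumpUp} and Lemma~\ref{lem:PumpDown} deliver counter values only along arithmetic progressions with steps $k$ and $k'$ respectively; these two progressions need not be compatible after accounting for $e$. My strategy for closing the gap is to traverse $y$ exactly $k'$ times, so that the composite walk $y^{k'}$ has net effect $k'e$, which is automatically a multiple of $k'$.

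To begin, I would invoke Lemma~\ref{lem:PumpUp} on $(p,i)\moves{x}(q,j)$ to obtain an integer $k>0$ such that, for every $N\geq 0$, there is a run $(p,i)\moves{x_N}(q,j+kN)$ with $x\subword x_N$; symmetrically, Lemma~\ref{lem:PumpDown} applied to $(q,j)\moves{z}(p',i')$ yields $k'>0$ such that, for every $M\geq 0$, there is a run $(q,j+k'M)\moves{z_M}(p',i')$ with $z\subword z_M$. Let $e$ denote the net counter effect of the walk $y$ and $h$ denote its maximum drop below its starting counter.

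The iterated walk $y^{k'}$ from $q$ to $q$ has net effect $k'e$ and maximum drop bounded by $h+(k'-1)|e|$. I would choose $L$ large enough that $kk'L$ exceeds this maximum drop and $kL+e\geq 0$, and set $N=k'L$ and $M=kL+e$. Then
\[
(p,i)\moves{x_N}(q,j+kk'L)\moves{y^{k'}}(q,j+k'M)\moves{z_M}(p',i')
\]
is a valid run: the outer two segments are runs supplied by the pumping lemmas; the middle segment is a run because its entering counter $j+kk'L$ exceeds the maximum drop of $y^{k'}$; and its exit counter $j+kk'L+k'e$ equals $j+k'M$, matching the entry of the pump-down run. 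The induced word is $x_N\,y^{k'}\,z_M$, and since $x\subword x_N$, $y\subword y^{k'}$ (as $k'\geq 1$), and $z\subword z_M$, concatenation yields $xyz\subword x_N\,y^{k'}\,z_M$, as required.
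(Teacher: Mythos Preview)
Your proof is correct and follows essentially the same approach as the paper's: both arguments iterate the walk $y$ a multiple of $k'$ times so that the resulting net effect is divisible by $k'$, then use Lemmas~\ref{lem:PumpUp} and~\ref{lem:PumpDown} to stitch the three pieces together with enough altitude to keep the middle segment nonnegative. Your version is slightly more streamlined---you fix the iteration count at exactly $k'$ and select the single parameter $L$ to handle both the nonnegativity of the middle segment and of $M$ at once, whereas the paper leaves the iteration count $m$ as a free multiple of $k'$ and performs a two-stage adjustment (first choosing $m,N$, then correcting with an additional shift if the counter dips below zero)---but the underlying idea is the same.
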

}
\onlyappendix{\FreeLoop* }
\begin{proof}
Let the given walk result in the quasi-run $(q,j) \Moves{y} (q,j+d)$
(where $d$ is the net effect of the walk on the counter, which may be positive
or negative). Iterating this quasi-run $m$ times yields a quasi-run $(q,j) \Moves{y^m} (q,j + m.d)$, for any $m \geq 0$.
Next, we use Lemma \ref{lem:PumpUp} to find a $k > 0$ such that for each $N > 0$
we have a run $(p,i) \moves{x_N} (q,j + N.k)$ with $x \subword x_N$. Similarly,
we use Lemma \ref{lem:PumpDown} to find a $k' > 0$ such that for each $N' >0$
we have a run $(q,j+N'.k') \moves{y_{N'}} (p',i')$ with $y \subword y_{N'}$.

Now, we pick $m$ and $N$ to be multiples of $k'$ in such a way that 
$N.k + m.d > 0$. This can always be done since $k$ is positive. Thus,
$N.k + m.d = N'.k'$ with $N' > 0$. Now we try and combine the (quasi) runs
$(p,i) \moves{x_N} (q,j+N.k)$, $(q,j+N.k) \Moves{y^m} (q,j+N.k+m.d)$ 
and $(q,j+N'.k') \moves{y_{N'}} (p',i')$ to form a run. We are almost
there, as $j+N.k+m.d = j+N'.k'$. However, it is not guaranteed that
this combined quasi-run is actually a run as the value of the counter 
may turn negative in the segment $(q,j+N.k) \Moves{y^m} (q,j+N.k+m.d)$.
Let $-N''$ be the smallest value attained by the counter in this segment.
Then by replacing $N$ by $N + N''.k'$ and $N'$ by $N' + N''.k$ we can
manufacture a  triple which actually yields a run (since the counter values
are $\geq 0$), completing the proof.
\end{proof}

With this lemma in place we can now explain  how to relax the
usage of counters.

Let us focus on runs that are interesting, that is, those in which the counter
value exceeds $K$ at some point. Any such run may be broken into 3 stages: the
first stage where counter value starts at $0$ and remains strictly below $K+1$,
a second stage where it starts and ends  at $K+1$ and a last stage where the
value begins at $K$ and remains below $K$ and ends at $0$ (the 3 stages are
connected by two transitions, an increment and a decrement).  Suppose, we write
the given accepting run as $(p,0) \moves{w_1} (q,c) \moves{w_2} (r,0)$ where
$(q,c)$ is a configuration in the second stage. If $a \in \Sigma$ is a letter
that may be read in some transition on  some walk from $q$ to $q$. Then,
$w_1aw_2$ is in $\lang(\A)\da$. This is a direct consequence of the above
lemma. It means that in the configurations in the middle stage we may
freely read certain letters without bothering to update the counters.
This turns out to be a crucial step in our construction.  To turn this
relaxation idea into a construction, the following seems a natural.

We make an equivalent,
but expanded version of $\A$. This version has $3$ copies of
the state space: The first copy is used as long as the value of the counter 
stays below $K+1$ and on attaining this value the second copy is entered. 
The second copy simulates $\A$ exactly but nondeterministically chooses to
enter third copy whenever the counter value is moves from $K+1$ to $K$. 
The third copy simulates $\A$ but does not permit the counter value to exceed $K$. For every letter $a$ and state $q$ with a walk from $q$ to $q$ along
which $a$ is read on some transition, we add a self-loop transition to the state
corresponding to $q$ in the second copy that does not affect the counter and reads the letter $a$. This idea  has two deficiencies:
first, it is not clear how to define the transition from the second copy to 
the third copy, as that requires knowing that value of the counter is $K+1$, and
second, this is still an OCA (since the second copy simply faithfully simulates $\A$) and not an NFA.

Suppose we bound the value of the counter by some value $U$ in the second
stage.  Then we can overcome both of these defects and  construct a finite
automaton\onlymainpaper{.  }\onlyappendix{ as follows: The state space of the resulting NFA has stages of the
form $(q,i,j)$ where $j \in \{1,2,3\}$ denotes the stage to which this copy of 
$q$ belongs.  The value $i$ is the value of the counter as maintained
within the state of the NFA. The transitions interconnecting the
stages go from a state of the form $(q,K,1)$ to one of the form  $(q',K+1,2)$ (while simulating
a transition involving an increment) and from a stage of the form $(q,K+1,2)$ to
one of the form $(q',K,3)$ (while simulating a decrement). The value of $i$
is bounded by $K$ if $j \in \{1,3\}$ while it is bounded by $U$ if $j = 2$.
(States of the form $(q,i,2)$ also have self-loop transitions described above.)
}
By using a slight generalization of Lemma \ref{lem:FreeLoop},  which allows for the simultaneous insertion of a number of walks (or by applying the Lemma
iteratively), we can show that any word accepted by such a finite automaton
lies $\lang(\A)\da$.  However, there is no guarantee
that such an automaton will accept every word in $\lang(\A)\da$.
The second crucial point is that we are able to show that if 
$U \geq K^2 + K + 1$ then every word in $\lang(\A)$ is accepted by this
3 stage NFA. We show that for each accepting run $\rho$ in $\A$ there is 
an accepting run in the NFA reading the same word.  The proof is by a
double induction, first on the maximum value attained by the counter and
then on the number of times this value is attained along the run. Clearly, segments of the
run where the value of the counter does not exceed $K^2 + K + 1$ can be
simulated as is. We then show that whenever the counter value exceeds
this number, we can find suitable segments whose net effect on the counter
is $0$ and which can be simulated using the self-loop transitions added to
stage 2 (which do not modify the counters), reducing the maximum value of the counter along the run.

\onlymainpaper{Formalizing this gives:}

\onlyappendix{We now present the formal details. We begin by describing
the NFA $\Au{U}$ where $U \geq K + 1$.
$$\Au{U} = (Q_1\cup Q_2 \cup Q_3, \Sigma,\Delta,\iu{U} ,\Fu{U}\})$$
 where $Q_1 = Q \times \{0 \ldots K\} \times
\{1\}$, $Q_2 = Q \times \{0 \ldots U\} \times \{2\}$ and $Q_3 = Q \times \{0
\ldots K\} \times \{3\}$. We let $\iu{U}=(s,0,1)$ and $\Fu{U} = \{(f,0,1),(f,0,3)\mid f \in F\}$.
The transition relation is the union of the relations $\Delta_1$, $\Delta_2$ and
$\Delta_3$ defined as follows:

\paragraph{Transitions in $\Delta_1$:}
\begin{enumerate}
\setlength{\itemsep}{1pt}
\setlength{\parskip}{0pt}
\item $(q,n,1)  \moves{a} (q',n,1)$  for all $n \in \{0 \ldots K\}$ whenever
$(q,a,\nop,q') \in \delta$.  Simulate an internal move.
\item $(q,n,1) \moves{a} (q',n-1,1)$ for all $n \in \{1 \ldots K\}$ whenever
$(q,a,\dec,q') \in \delta$.  Simulate a decrement.
\item $(q,n,1) \moves{a} (q',n+1,1)$ for all $n \in \{0 \ldots K-1\}$ whenever $(q,a,\inc,q') \in \delta$. Simulate an increment.
\item $(q,K,1) \moves{a} (q',K+1,2)$ whenever $(q,a,\inc,q') \in \delta$. Simulate an increment and shift to second phase.
\end{enumerate}

\paragraph{Transitions in $\Delta_2$:}
\begin{enumerate}
\setlength{\itemsep}{1pt}
\setlength{\parskip}{0pt}
\item $(q,n,2)  \moves{a} (q',n,2)$  for all $n \in \{0 \ldots U\}$ when
$(q,a,\nop,q') \in \delta$.  Simulate an internal move.
\item $(q,n,2) \moves{a} (q',n-1,2)$ for all $n \in \{1 \ldots U\}$ whenever
$(q,a,\dec,q') \in \delta$.  Simulate a decrement.
\item $(q,K+1,2) \moves{a} (q',K,3)$ whenever
$(q,a,\dec,q') \in \delta$.  Simulate a decrement and shift to third phase.
\item $(q,n,2) \moves{a} (q',n+1,2)$ for all $n \in \{0 \ldots U-1\}$ whenever $(q,a,\inc,q') \in \delta$. Simulate an increment move.
\item $(q,n,2) \moves{a} (q,n,2)$ whenever there is a walk from 
$q$ to $q$ on some word $w$ and, $a \subword w$. 
Freely simulate loops.
\label{D3:free}
\end{enumerate}

\paragraph{Transitions in $\Delta_3$:}
\begin{enumerate}
\setlength{\itemsep}{1pt}
\setlength{\parskip}{0pt}
\item $(q,n,3)  \moves{a} (q',n,3)$  for all $n \in \{0 \ldots K\}$ whenever
$(q,a,\nop,q') \in \delta$.  Simulate an internal move.
\item $(q,n,3) \moves{a} (q',n-1,3)$ for all $n \in \{1 \ldots K\}$ whenever
$(q,a,\dec,q') \in \delta$.  Simulate a decrement.
\item $(q,n,3) \moves{a} (q',n+1,3)$ for all $n \in \{0 \ldots K-1\}$ whenever $(q,a,\inc,q') \in \delta$. Simulate an increment move.
\end{enumerate}

The following Lemma, which is easy to prove, states that the first and third phases simulate faithfully any run where the value of the counter is bounded by $K$. 

\begin{lemma}\label{lem:FiThSim} 
\begin{enumerate}
\setlength{\itemsep}{1pt}
\setlength{\parskip}{0pt}
\item If $(q,i,l) \moves{w} (q',j,l)$ in $\Au{U}$ then $(q,i) \moves{w} (q',j)$ in $\A$, for $l \in \{1,3\}$.
\item If $(q,i) \moves{w} (q',j)$ in $\A$ through a run where the value of the counter is
$\leq K$ in all the configurations along the run then $(q,i,l) \moves{w} (q',i,l)$ for $l \in \{1,3\}$.
\end{enumerate}
\end{lemma}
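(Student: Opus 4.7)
The plan is to prove both directions by induction on the length of the run, exploiting the one-way structure of the phases in $\Au{U}$. Specifically, by inspection of the three transition relations $\Delta_1,\Delta_2,\Delta_3$, no transition ever moves from $Q_2$ back into $Q_1$, and no transition ever moves from $Q_3$ back into $Q_2$ or $Q_1$. Consequently, any run $(q,i,l) \moves{w} (q',j,l)$ with $l \in \{1,3\}$ is confined entirely to $Q_l$: it uses only transitions of $\Delta_l$, and every intermediate state has third coordinate $l$.

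For the forward direction (Part 1), I would proceed by induction on $|w|$. The base case $w=\eps$ is trivial. For the inductive step, the first transition of the confined run lies in $\Delta_l$ and, by inspection of the four (resp.\ three) transition types in $\Delta_1$ (resp.\ $\Delta_3$), each one has an exact counterpart in $\tran$: an internal move, a decrement, or an increment, and the counter component $n$ in the state of $\Au{U}$ is modified in the same way as the OCA counter would be. Applying the induction hypothesis to the remaining run yields the desired run in $\A$.

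For the backward direction (Part 2), I would again induct on $|w|$. If $(q,i) \moves{w} (q',j)$ is a run of $\A$ with counter value $\le K$ throughout, then in particular $i,j \le K$, so $(q,i,l)$ and $(q',j,l)$ are legitimate states of $\Au{U}$ for $l\in\{1,3\}$. The first transition of the OCA run changes the counter from $i$ to some $i' \in \{0,\ldots,K\}$ (since the counter remains bounded by $K$), and the matching rule in $\Delta_l$ produces the transition $(q,i,l)\moves{a}(q'',i',l)$; apply the induction hypothesis to the remainder.

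There is no real obstacle here; the only thing to be careful about is the corrected statement in Part~2 (the target should be $(q',j,l)$, not $(q,i,l)$) and the observation that the boundedness assumption on the counter is exactly what ensures every intermediate configuration fits inside $Q_l$, so no transition that leaves phase $l$ is ever forced. The whole argument is a straightforward verification against the definitions of $\Delta_1$ and $\Delta_3$.
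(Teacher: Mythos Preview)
Your proposal is correct and is exactly the natural argument; the paper in fact omits the proof entirely, merely asserting that the lemma ``is easy to prove.'' Your observation about the phase structure (no transition returns to $Q_1$ once left, and $Q_3$ is absorbing) is the key point ensuring confinement, and your note that Part~2 contains a typo (the target should be $(q',j,l)$) is accurate.
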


The next Lemma extends this to runs involving the second phase as well. 
All moves other than those simulating unconstrained walks can be simulated by $\A$.
The second phase of $\Au{U}$ can also simulate any run where the counter
is bounded by $U$.  Again the easy proof is omitted.

\begin{lemma}\label{lem:SeSimpSim}
\begin{enumerate}
\item
If  $(q,i,l) \moves{w} (q',j,l')$ is a run of $\Au{U}$ in which
no transition from $\Delta_2$ of type 5 is used then 
$(q,i) \moves{w} (q',j)$ is a run of $\A$.
\item
If $\rho = (q_0,i_0) \moves{a_1} (q_1,i_1) \moves{a_2} \ldots \moves{a_m} (q_m,i_m)$ is a run in $\A$ in which the value of the counter never exceeds $U$ then $\rho' = (q_0,i_0,2) \moves{a_1} (q_1,i_1,2) \moves{a_2} \ldots \moves{a_m} (q_m,i_m,2)$ is a run in $\Au{U}$.
\end{enumerate}
\end{lemma}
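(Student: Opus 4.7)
The plan is to prove both parts by straightforward induction on the length of the run. Both parts assert a correspondence between runs of $\A$ and runs of $\Au{U}$, and the transitions in $\Delta_1$, $\Delta_2$ (types 1--4), and $\Delta_3$ were defined precisely to mirror the transitions of $\A$, with the numerical component of the triple tracking the $\A$-counter exactly. What needs checking is that every step matches a step of the other system and that the counter values remain consistent; no pumping-style argument is required.

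For part 1, I would proceed by induction on the length of $(q,i,l) \moves{w} (q',j,l')$. The base case is immediate. For the inductive step, I would do a case analysis on the last transition used. Each transition of types 1--3 in $\Delta_1$, types 1, 2, 4 in $\Delta_2$, and all transitions in $\Delta_3$ is directly derived from some $(p, a, s, p') \in \delta$, with the second coordinate of the triple playing exactly the role of the $\A$-counter. The two phase-switching transitions, namely $(q,K,1) \moves{a} (q',K+1,2)$ and $(q,K+1,2) \moves{a} (q',K,3)$, are likewise derived from legitimate increment/decrement transitions of $\A$. Since transitions of type 5 in $\Delta_2$ are excluded by hypothesis, every step in $\Au{U}$ lifts to a valid step of $\A$ preserving the counter invariant, so the composed run $(q,i) \moves{w} (q',j)$ exists in $\A$.

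For part 2, I would again induct on the length of $\rho$. The base case is trivial. For the inductive step, each transition $(q_{k-1}, i_{k-1}) \moves{a_k} (q_k, i_k)$ of $\A$ arises from some $(q_{k-1}, a_k, s, q_k) \in \delta$ with $s \in \{\inc, \dec, \nop\}$. Since $\rho$ is a run of $\A$, we have $i_{k-1}, i_k \ge 0$, and by hypothesis $i_{k-1}, i_k \le U$. Inspecting the three relevant clauses in the definition of $\Delta_2$ (types 1, 2, 4), in each case the corresponding move $(q_{k-1}, i_{k-1}, 2) \moves{a_k} (q_k, i_k, 2)$ is available in $\Au{U}$, which yields the desired run $\rho'$.

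The main obstacle, if any, is purely bookkeeping: making sure no case in the definitions of $\Delta_1$, $\Delta_2$, $\Delta_3$ is missed and that the phase-boundary transitions in part 1 are handled without error. There is no genuine combinatorial difficulty; the lemma merely verifies that the construction of $\Au{U}$ behaves as intended outside the ``free loop'' transitions of type 5, which are the only source of over-approximation.
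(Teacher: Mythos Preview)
Your proposal is correct and is exactly the routine induction-on-run-length argument one would write here; the paper itself omits the proof entirely, remarking only that ``the easy proof is omitted,'' so your approach matches the paper's intent.
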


Now, we are in a take the first step towards generalizing Lemma 
\ref{lem:FreeLoop} to prove that $\lang(\Au{U}) \subseteq \lang(\A)\da$.

\begin{lemma}\label{lem:SeSim}
Let $(q,i,2) \moves{w} (q',j,2)$ be a run in $\Au{U}$. Then, there is an $N \in \Nat$,  words $x_0, y_0, x_1, y_1, \ldots, x_N$, and integers $n_0, n_1, \ldots, n_{N-1}$ such that, in $\A$ we have:
\begin{enumerate}
\setlength{\itemsep}{1pt}
\setlength{\parskip}{0pt}
\item $w \subword x_0y_0x_1y_1 \ldots x_N$.
\item $(q,i) \Moves{x_0y_0 \ldots x_N} (q',j')$ where $j' = j + n_0 + n_1 \ldots + n_{N-1}$.
\item $(q,i) \Moves{x_0y_0^{m_0}x_1y_1^{m_1} \ldots x_N} (q',j'')$ where
$j'' = j + m_0.n_0 + m_1.n_1 \ldots + m_{N-1}.n_{N-1}$, for any $1 \leq m_0,m_1, \ldots m_{N-1}$.
\end{enumerate}
Note that 2 is just a special case of 3 when $m_r = 1$ for all $r$.
\end{lemma}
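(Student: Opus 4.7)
The plan is to split the given run in $\Au{U}$ at each maximal block of type-5 ``free-loop'' transitions from $\Delta_2$ (the only transitions of $\Au{U}$ that do not directly come from a transition of $\A$) and read the decomposition off.

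\emph{Partitioning.} I decompose the transition sequence of the given run $(q,i,2) \moves{w} (q',j,2)$ as $\pi_0 \, \tau_0 \, \pi_1 \, \tau_1 \cdots \tau_{N-1}\, \pi_N$, where each $\tau_r$ is a maximal non-empty block of consecutive type-5 transitions from $\Delta_2$ and each $\pi_r$ is a (possibly empty) block of transitions of types 1--4. Since a type-5 transition is a self-loop at a fixed NFA state $(p_r, c_r, 2)$, the whole block $\tau_r$ sits at one configuration, which keeps the counter bookkeeping clean.

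\emph{Defining $x_r$, $y_r$, $n_r$.} Let $x_r$ be the word read along $\pi_r$. For each transition $(p_r, c_r, 2) \moves{a_{r,s}} (p_r, c_r, 2)$ inside $\tau_r$, the defining condition of type-5 transitions furnishes a walk in $\A$ from $p_r$ to $p_r$ that reads a word $v_{r,s}$ with $a_{r,s} \subword v_{r,s}$. Let $y_r$ be the concatenation $v_{r,1} v_{r,2} \cdots v_{r,t_r}$, and let $n_r$ be the net counter effect of the concatenated walks (a walk from $p_r$ to $p_r$ in $\A$).

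\emph{Reading off the properties.} For item~1, the word read by $\tau_r$ is $a_{r,1} \cdots a_{r,t_r}$, a subword of $y_r$, and hence $w \subword x_0 y_0 x_1 y_1 \cdots x_N$. For item~3 (which subsumes item~2 at $m_r = 1$), I apply Lemma~\ref{lem:SeSimpSim}(1) to each $\pi_r$ to obtain a run of $\A$ reading $x_r$, iterate the walk associated with $\tau_r$ exactly $m_r$ times (giving a walk reading $y_r^{m_r}$ with counter effect $m_r n_r$), and concatenate in $\A$ starting from $(q,i)$. The terminal state is $q'$ and the terminal counter value is $j + \sum_{r=0}^{N-1} m_r n_r$.

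The only subtle point, and the reason for the quasi-run arrow $\Moves{}$ rather than $\moves{}$ in the conclusion, is that inserting the walks may push the counter below zero. This is unavoidable at this step: the lemma extracts the additive structure hidden in the free-loop transitions, while the upgrade to bona fide runs is the job of the subsequent pumping arguments (via Lemmas~\ref{lem:PumpUp}--\ref{lem:FreeLoop}). I do not foresee a serious obstacle: the entire proof is careful bookkeeping, enabled by the fact that type-5 transitions are self-loops at a single state.
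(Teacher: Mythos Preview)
Your proposal is correct and follows essentially the same approach as the paper: split the run at the type-5 transitions, replace each by its witnessing walk in $\A$, and translate the remaining segments via Lemma~\ref{lem:SeSimpSim}(1). The only cosmetic difference is that the paper treats each type-5 transition individually (so $N$ is the number of type-5 moves, each $y_r$ witnesses a single letter $a_r$), whereas you group consecutive type-5 transitions into maximal blocks; since the lemma merely asserts the existence of some $N$, both decompositions work.
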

\begin{proof}
The run $(q,i,2) \moves{w} (q',j,2)$ in $\Au{U}$ uses only transitions of the types $1,2,4$ and $5$ in $\Delta_2$. Let $N$ be the number of transitions of type $5$
used in the run. We then break up the run as follows:
\begin{align*}
(q,i,2) &\moves{x_0}(p_0,i_0,2)\moves{a_0}(p_0,i_0,2)\moves{x_1}(p_1,i_1,2)
\ldots \\
&\ldots  (p_{N-1},i_{N-1},2) \moves{a_{N-1}}(p_{N-1},i_{N-1},2)\moves{x_N}(q',j,2)
\end{align*}
where the transitions, indicated above, on $a_i$'s are the $N$ moves using  transitions of type $5$ in the run. Let $(p_r,i_r) \Moves{y_r} (p_r,i'_r)$ be a quasi-run with
$a_r \subword y_r$ and let $n_r = i'_r - i_r$.  Clearly $w \subword x_0y_0x_1y_1\ldots x_N$.

It is quite easy to show by induction on $r$, $0 \leq r < N$, by replacing
moves of types $1,2$ and $4$ by the corresponding moves in $\A$ and moves
of type $5$ by the iterations of the quasi-runs identified above that:

\begin{align*}
(q,i) & \moves{x_0} (p_0,i_0) \Moves{y_0^{m_0}}  (p_0,i_0 + m_0.n_0) \Moves{x_1} (p_1,i_1+m_0.n_0)\\
& \Moves{y_1^{m_1}}  (p_1,i_1 + m_0.n_0 + m_1.n_1) \\
& \ldots \\
& \moves{x_r} (p_r,i_r+m_0.n_0 \ldots + m_{r-1}.n_{r-1}) \\
& \Moves{y_r^{m_r}} (p_r,i_r+m_0.n_0 \ldots m_r.n_r)\\
& \moves{x_{r+1}} (p_{r+1},i_{r+1}+m_0n_0\ldots m_r.n_r)
\end{align*}

and with $r = N-1$ we have the desired result.
\end{proof}

Now, we use an argument that  generalizes  Lemma \ref{lem:FreeLoop} in order
to show that:

\begin{lemma}\label{lem:FAtoCA}
Let $w$ be any word accepted by the automaton $\Au{U}$. Then, there
is a word $w' \in \A$ such that $w \subword w'$.
\end{lemma}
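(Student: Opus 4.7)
The plan is to convert any accepting run $\pi$ of $\Au{U}$ on $w$ into an accepting run of $\A$ reading a superword. If $\pi$ stays entirely in phase~1, Lemma~\ref{lem:FiThSim}(1) immediately yields an accepting run of $\A$ on $w$ itself, so take $w' = w$. Otherwise $\pi$ traverses all three phases and decomposes as
\[
(s, 0, 1) \moves{w_1} (q, K, 1) \moves{a_1} (q', K{+}1, 2) \moves{w_2} (p, K{+}1, 2) \moves{a_2} (p', K, 3) \moves{w_3} (f, 0, 3).
\]

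I would handle the three segments separately. By Lemma~\ref{lem:FiThSim}(1) the phase-1 and phase-3 pieces lift to real runs $(s, 0) \moves{w_1 a_1} (q', K{+}1)$ and $(p, K{+}1) \moves{a_2 w_3} (f, 0)$ of $\A$, each with counter excursion exceeding $K$. Lemmas~\ref{lem:PumpUp} and~\ref{lem:PumpDown} then give pumping periods $k_1, k_2 > 0$ such that for every $M, M' \geq 0$ there are superword-reading pumped runs of the form $(s, 0) \moves{u_1^{(M)}} (q', K{+}1{+}Mk_1)$ and $(p, K{+}1{+}M'k_2) \moves{u_3^{(M')}} (f, 0)$. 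For the phase-2 middle, Lemma~\ref{lem:SeSim} provides a decomposition $w_2 \subword x_0 y_0 x_1 \cdots y_{N-1} x_N$, counter effects $n_r \in \Z$, and, for every $m_0, \ldots, m_{N-1} \geq 1$, a quasi-run
\[
(q', K{+}1) \Moves{x_0 y_0^{m_0} x_1 \cdots y_{N-1}^{m_{N-1}} x_N} (p, K{+}1{+}\textstyle\sum_r m_r n_r)
\]
of $\A$.

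The three pieces concatenate into a single accepting run of $\A$ reading a superword of $w_1 a_1 w_2 a_2 w_3 = w$ provided (i)~the boundary identity $M k_1 + \sum_r m_r n_r = M' k_2$ holds, and (ii)~after shifting the middle quasi-run upwards by $M k_1$, the counter stays non-negative throughout. Condition (ii) is easy: the shift is uniform, so it suffices that $M k_1$ dominate the magnitude of the minimum value of the unshifted quasi-run, which can always be arranged by enlarging $M$.

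The technical heart---and the main obstacle---is (i). I would mirror the parameter choice at the end of the proof of Lemma~\ref{lem:FreeLoop}, generalized to multiple loops. In the degenerate case where all $n_r = 0$ (in particular when $N = 0$), the middle quasi-run ends at $(p, K{+}1)$ on the nose and one just takes $M = k_2 c$, $M' = k_1 c$ for $c$ large. Otherwise at least one $n_r \neq 0$, and one uses iteration of those $y_r$ with nonzero effect, in increments of $k_2$, together with a sufficiently large multiple-of-$k_2$ choice of $M$, to arrange via a Bezout-style argument that $M k_1 + \sum_r m_r n_r$ is a non-negative multiple of $k_2$; then one sets $M'$ to be that multiple. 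A final padding of $M$ and $M'$ by a common multiple of $k_1 k_2$ restores (ii) without disturbing (i), completing the construction.
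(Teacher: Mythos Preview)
Your proposal is correct and follows essentially the same route as the paper: the three-phase split, Lemmas~\ref{lem:PumpUp}/\ref{lem:PumpDown} on the outer pieces, Lemma~\ref{lem:SeSim} on the middle, and then a divisibility argument (take $M$ and the relevant $m_r$ to be multiples of $k_2$, enlarge $M$ for non-negativity) to match the boundary---the paper does exactly this, with $k,k'$ in place of your $k_1,k_2$. One small slip in the last sentence: ``padding $M$ and $M'$ by a common multiple of $k_1k_2$'' does not preserve (i) as literally written (the two sides change by $ck_1^2k_2$ and $ck_1k_2^2$); what works, and what the paper does, is $M \mapsto M + ck_2$ and $M' \mapsto M' + ck_1$, i.e.\ adding $ck_1k_2$ to each \emph{side} of the identity.
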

\begin{proof}
If states in $Q_2$ are not visited in the accepting run 
of $\Au{U}$ on $w$ then we can use Lemma \ref{lem:FiThSim} to conclude that
$w \in \A$.
Otherwise, we break up the run of $\Au{U}$ on $w$ into three parts
as follows:
\begin{align*}
(s,0,1) &\moves{w_1} (p,K,1) \moves{a_1} (q,K+1,2)\moves{w_2} (r,K+1,2) \\
&\moves{a_2} (t,K,3) \moves{w_3} (f,0,3)
\end{align*}
Using Lemma \ref{lem:FiThSim} we have $(s,0) \moves{w_1} (p,K)$
and $(t,K) \moves{w_3} (f,0)$. We then apply Lemmas \ref{lem:PumpUp} and
\ref{lem:PumpDown} to these two segments respectively to identify $k$ and $k'$. Next we use Lemma \ref{lem:SeSim} to identify  the positive integer $N$, 
integers  $n_0, n_1, \ldots n_{N-1}$ and the quasi-run
$$
 (q,K+1) \Moves{x_0y_0 \ldots x_N} (r,K+1+n_0+n_1 \ldots +n_{N-1})
$$ 
with $w_2 \subword x_0y_0x_1y_1\ldots x_{N-1}y_{N-1}x_N$.
We identify numbers $m, m_0, m_1, \ldots, m_{N-1}$, all $\geq 1$,
 such that $(m-1).k + m_0.n_0 + \ldots m_{N-1}.n_{N-1} = k'.m'$ for some $m' \geq
0$.   
By taking $m-1$ and each $m_i$ to be some multiple  of $k'$ we get the sum
$(m-1).k + m_0.n_0 + \ldots m_{N-1}n_{N-1}$ to be a multiple of $k'$, however
this multiple may not be positive. Since $k > 0$, by choosing $m-1$ to
be a sufficiently large multiple of $k'$ we can ensure that $m' \geq 0$. 
Using these numbers we construct the quasi-run 
\begin{align*}
(q,&K+1 + (m-1).k) \Moves{x_0y_0^{m_0}x_1y_1^{m_1}\ldots x_N} \\
& (r,K+1+(m-1).k + m_0n_0+\ldots m_{N-1}n_N) 
\end{align*}
with $K+1+(m-1).k + m_0n_0+\ldots m_{N-1}n_N = K+1+k'.m'$.
Let $l$ be the lowest value attained in this quasi-run. If $l \geq 0$ then
$$(q,K+1 + (m-1).k) \moves{x_0y_0^{m_0}x_1y_1^{m_1}\ldots x_N} (r,K+1+k'.m')$$
 and using
Lemma \ref{lem:PumpUp} and \ref{lem:PumpDown} we get
\begin{align*}
(s,0) &\moves{w} (p,K+(m-1).k) \\
&\moves{a_1} (q,K+1+(m-1).k)\\
&\moves{x_0y_0^{m_0}x_1y_1^{m_1}\ldots x_N} (r,K+1+k'.m') \\
&\moves{a_2} (t,K+k'.m') \\
&\moves{z} (f,0,3)
\end{align*}
with $w_1 \subword w$, $w_2 \subword x_0y_0^{m_0}x_1y_1^{m_1}\ldots x_N$ and 
$w_3 \subword z$ as required.

Suppose $l < 0$. Then, we let $I$ be a positive integer such that $I.k + l > 0$ and $I = k'.m''$ (i.e. $I$ is divisible by $k'$)
 which must exist since $k > 0$.  Then
$$
(q,K+1 + (m-1).k + I.k) \Moves{x_0y_0^{m_0}x_1y_1^{m_1}\ldots x_N} (r,K+1+I.k+k'.m')
$$
is a quasi-run in which the counter values are always $\geq 0$ and is thus
a run.
Once again, we may use Lemmas \ref{lem:PumpUp}
and \ref{lem:PumpDown} (since $I.k$ is a multiple of $k'$)  to get
\begin{align*}
(s,0) & \moves{w} (p,K+(m-1).k+I.k) \\
& \moves{a_1} (q,K+1+(m-1).k+I.k)\\
& \moves{x_0y_0^{m_0}x_1y_1^{m_1}\ldots x_N} (r,K+1+k'.m')\\
& \moves{a_2} (t,K+k'.m' + I.k) \\
&\moves{z} (f,0,3)
\end{align*}
with $w_1 \subword w$, $w_2 \subword  x_0y_0^{m_0}x_1y_1^{m_1}\ldots x_N$ and 
$w_3 \subword z$. This completes the proof of the Lemma.

\end{proof}

Next, we show that if $U \geq K^2 + K + 1$ then $\lang(\Au) \subseteq \lang(\Au{U})$.

\begin{lemma}\label{lem:CAtoFA} Let $U \geq K^2 + K + 1$.
Let $w$ be any word in $\lang(\A)$. Then, $w$ is also accepted by $\Au{U}$.
\end{lemma}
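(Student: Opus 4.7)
The plan is to follow the double-induction pumping strategy of Lemma~\ref{lem:uc-bound}, applied to the ``high'' segment of an accepting run $\rho$ of $\A$ on $w$ (the part where the counter strictly exceeds $K$), using the type-5 free-loop transitions of $\Delta_2$ to absorb the letters that pumping removes. If the counter of $\rho$ never exceeds $K$, Lemma~\ref{lem:FiThSim}(2) simulates $\rho$ entirely in phase~1 of $\Au{U}$, ending at $(f, 0, 1) \in \Fu{U}$. Otherwise, split $\rho$ at the first transition from counter $K$ to $K+1$ and at the last transition from $K+1$ back to $K$. This yields sub-runs $\rho^{(1)}$ from $(s, 0)$ to some $(p, K)$, $\rho^{(2)}$ from $(q, K+1)$ to some $(r, K+1)$ with counter $\geq K+1$ throughout, and $\rho^{(3)}$ from $(t, K)$ to $(f, 0)$. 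Lemma~\ref{lem:FiThSim}(2) simulates $\rho^{(1)}$ in phase~1 and $\rho^{(3)}$ in phase~3, and the designated inter-phase transitions in $\Delta_1$ and $\Delta_2$ handle the two boundary moves. The real work is to simulate $\rho^{(2)}$ in phase~2.

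For this I prove the following sub-claim by strong induction on the maximum counter $M$ of a sub-run: every run $\sigma$ of $\A$ from $(s_1, v_1)$ to $(s_2, v_2)$ with counter $\geq K+1$ throughout admits a phase-2 run of $\Au{U}$ from $(s_1, v_1, 2)$ to $(s_2, v_2, 2)$ reading the same word. The base case $M \leq U$ is Lemma~\ref{lem:SeSimpSim}(2). For the inductive step $M > U$, I mirror the pumping in the proof of Lemma~\ref{lem:uc-bound}. Since $U \geq K^2 + K + 1$, the range $K+1 \leq v \leq M - 1$ contains more than $K^2 = |Q|^2$ values; define $(p_v, v)$ and $(q_v, v)$ as the configurations at the last visit to level $v$ in the ascending, respectively descending, portion of $\sigma$ relative to its first attainment of $M$. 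Pigeonhole on $(p_v, q_v) \in Q \times Q$ yields $K+1 \leq i < j \leq M - 1$ with $p_i = p_j = p$ and $q_i = q_j = q$. This gives a decomposition $\sigma = \sigma^{pre} \cdot \beta \cdot \sigma^{mid} \cdot \delta \cdot \sigma^{post}$ in which $\beta$ is a walk from $p$ to $p$ reading some $u_1'$ and $\delta$ is a walk from $q$ to $q$ reading some $u_2'$. With $d = j - i \geq 1$, the run $\sigma^{pre} \cdot \sigma^{mid}[-d] \cdot \sigma^{post}$ is a valid run of $\A$ whose three pieces each have maximum counter at most $M - 1$: $\sigma^{pre}$ and $\sigma^{post}$ because each lies strictly before, resp.\ after, the first attainment of $M$, and $\sigma^{mid}[-d]$ because it has been shifted down by $d$. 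Applying the inductive hypothesis to each piece yields phase-2 simulations that concatenate into a phase-2 run passing through $(p, i, 2)$ and $(q, i, 2)$. At these two configurations I insert type-5 self-loops from $\Delta_2$ reading the letters of $u_1'$ and $u_2'$ one by one; these insertions are legal because every letter of $u_1'$, respectively $u_2'$, appears in the walk $\beta$ at $p$, respectively $\delta$ at $q$. The resulting phase-2 run reads exactly the word of $\sigma$.

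The main obstacle will be confirming that the three pieces of the pumped-down run all have maximum counter strictly less than $M$, so that the strong induction is sound. This rests on the defining property of the decomposition in Lemma~\ref{lem:uc-bound}: $\sigma^{pre}$ is a prefix of the shortest prefix of $\sigma$ that first attains $M$ and so does not itself attain $M$, $\sigma^{post}$ is symmetric, and $\sigma^{mid}[-d]$ drops from peak $M$ to peak $M - d \leq M - 1$. The threshold $U \geq K^2 + K + 1$ is tuned exactly so that the pigeonhole fires whenever $M > U$, and the free-loop transitions in $\Delta_2$ are exactly the mechanism by which the removed factors $u_1'$ and $u_2'$ can be restored without changing the counter.
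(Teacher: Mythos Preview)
The overall strategy is right and close to the paper's, but the induction has a real gap. You argue that each of the three pieces has maximum counter $\le M-1$, justifying $\sigma^{post}$ by saying it ``lies strictly after the first attainment of $M$''. That implication is false: after the \emph{first} attainment of $M$ there may be further attainments, all of which land in $\sigma^{post}$ (the decomposition you borrow from Lemma~\ref{lem:uc-bound} is asymmetric, anchored at the first peak only). So with strong induction on $M$ alone you cannot apply the hypothesis to $\sigma^{post}$.

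The paper handles this with a \emph{double} induction on $M$ and on the number of attainments of $M$, and applies the hypothesis to the whole shortened run rather than to the three pieces. Because the insertion point in the simulation of the shortened run may no longer literally be at state $p$ (earlier inductive steps may have replaced it by a type-5 loop at some $r$), the paper carries a correspondingly stronger invariant: a transition-by-transition correspondence guaranteeing $r=p$ or $r\Moves{}p$, which is what re-enables the type-5 loops there.

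Your piecewise approach is in fact cleaner than the paper's --- the weak endpoint-preserving hypothesis suffices, and the type-5 loops go in at the genuine states $(p,i,2)$ and $(q,i,2)$ --- but you must repair the induction. Two easy fixes: (i) use the same double induction, noting that $\sigma^{pre}$ and $\sigma^{mid}[-d]$ have max $<M$ while $\sigma^{post}$ has max $\le M$ but strictly fewer attainments of $M$; or (ii) induct on the length of the run instead, since each piece is strictly shorter (because $\beta,\delta$ are nonempty) and each keeps the counter $\ge K+1$ --- in particular $\sigma^{mid}[-d]\ge j-d=i\ge K+1$ because $\sigma^{mid}$ stays $\ge j$ --- with the case $M\le U$ discharged by Lemma~\ref{lem:SeSimpSim}(2). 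One small correction along the way: in the Lemma~\ref{lem:uc-bound} decomposition, $(q_v,v)$ is the \emph{first} visit to level $v$ after the peak, not the last; that choice is exactly what guarantees that $\sigma^{mid}$ stays $\ge j$.
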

\begin{proof}
The proof is accomplished by examining runs of the from 
$(s,0) \moves{w} (f,0)$ and showing that such a run may be
\emph{simulated} by $\Au{U}$ transition by transition in a manner
to be described below.
Any run $\rho = (s,0) \smoves{w} (f,0)$ can be broken up into parts
as follow:
$$
(s,0) \moves{x} (h,j) \moves{y} (h',j') \moves{z} (f,0)
$$
where, $\rho_1 = (s,0) \moves{x} (h,j)$ is the longest prefix where the
counter value does not exceed $K$, $\rho_3 = (h',j') \moves{z} (f,0)$,
 is the longest suffix, of what is left after removing $\rho_1$,  in which 
the value of the counter does not exceed $K$,
and $\rho_2 = (h,j) \moves{y} (h',j')$ is what lies in between.
We note that using Lemma \ref{lem:FiThSim} we can conclude that there
are runs $(s,0,1) \moves{x} (h,j,1)$ and $(h',j',3) \moves{z} (f,0,3)$.
Further, observe that if value of the counter never exceeds $K$ then $\rho_2$ and
$\rho_3$ are empty, $x = w$, $h=f$ and $j = 0$. In this case, using
Lemma \ref{lem:FiThSim}, there is a (accepting)  run 
$(s,0,1) \moves{w} (f,0,1)$.  

If the value of the counter exceeds $K$ then $j = j' = K$ and  by Lemma  \ref{lem:FiThSim}, $(s,0,1) \moves{x} (h,K,1)$, $(h',K,3) \moves{z} (f,0,3)$ and $\rho_2$ 
is non-empty.
Further suppose that, $\rho_2$, when written out as a sequence of transitions is of the form
\begin{align*}
\rho_2 = &(h,K) \moves{a} (p,K+1) = (p_0,i_0) \moves{a_1} (p_1,i_1) \\
& \moves{a_2} (p_2,i_2) \ldots \moves{a_n} (p_n,i_n) = (q,K+1) \moves{b} (h',K)
\end{align*}

We will show by double induction on the maximum value of the counter value 
attained in the run $\rho_2$ and the number of times the maximum is attained
that there is a run
\begin{align*}
\rho'_2 = & (h,K,1) \moves{a} (p,K+1,2) = (p'_0,i'_0,2) \moves{a_1} (p'_1,i'_1,2) \\
 & \moves{a_2} (p'_2,i'_2,2) \ldots \moves{a_n} (p'_n,i'_n,2) = (q,K+1,2) \moves{b} (h',K,3)
\end{align*}
such that for all $i$, $0 \leq i < n$, 
\begin{enumerate}
\item either $p_i = p'_i$ and  $p_{i+1} = p'_{i+1}$ and the $i$th transition (on $a_{i+1}$)  is of type $1,2$ or $4$, 
\item  or  $p'_i = p'_{i+1}$, $i'_i = i'_{i+1}$,  $p'_i\Moves{} p_i$ and 
$p_{i+1} \Moves{} p'_i$ so that the $i$th transition (on $a_{i+1}$) 
is a transition of  type $5$.
\end{enumerate}

For the basis, notice that if the maximum value attained is $\leq K^2 + K + 1$ then, by Lemma \ref{lem:SeSimpSim}, there is a run of $\Au{U}$ that simulates
$\rho_2$ such that item 1 above is satisfied for all $i$. 

Now, suppose the maximum value attained along the run is $m > K^2 + K + 1$.
We proceed along the lines of the proof of Lemma \ref{lem:uc-bound}. 
We first break up the run $\rho_2$  as  
\begin{align*}
 (h,K)&\moves{a}  (p_0,K+1) = (q_{K+1},K+1) \moves{y_{K+2}} (q_{K+2},K+2) 
\\
&\moves{y_{K+3}} (q_{K+3},K+3) \ldots \moves{y_m} (q_m,m) \\
&\moves{y'_{m-1}} (q'_{m-1},m-1) \ldots \moves{y'_{K+1}} (q'_{K+1},K+1) \\
&\moves{z} (q,K+1) \moves{b} (h',K)
\end{align*}
where 
\begin{itemize}
\item The prefix upto $(q_m,m)$, henceforth referred to as $\sigma_m$, is the shortest prefix after which the counter value is $m$.
\item The prefix upto $(q_i,i)$, $K+1 \leq i < m$ is the longest prefix of 
$\sigma_m$ after which the value of the counter is $i$.
\item The prefix upto $(q'_i,i)$, $K+1 \leq i < m$ is the shortest prefix of
$\rho_2$ with $\sigma_m$ as a prefix after which the counter value is $i$.
\end{itemize}
By construction, the value of the counter in the segment of the run from
$(q_i,i) \moves{} \ldots \moves{} (q'_i,i)$ never falls below $i$. 
Further, by simple counting, there are $i,j$ with $K+1 \leq i < j \leq m$ such that
$q_i = q_j$ and $q'_i = q'_j$. Thus, by deleting the segment of the runs
from $(q_i,i)$ to $(q_i,j)$ and $(q'_j,j)$ to $(q'_j,i)$  we get a shorter 
run $\rho_d$ which  looks like
\begin{align*}
(h,K)& \moves{a}  (p_0,K+1) = (q_{K+1},K+1) \ldots \\
&\moves{y_i} (q_i,i) \moves{y_{j+1}} (q_{j+1},i+1) \ldots \\
&\moves{y_m} (q_m,m-j+i) \moves{y'_{m-1}} \ldots \\
&\moves{y'_j} (q'_j,i) \moves{y'_{i-1}} (q'_{i-1},i-1) \ldots (q'_{K+1},K+1)\\
& \moves{z} (q,K+1) \moves{b} (h',K)
\end{align*}
This run reaches the value $m$ at least one time fewer than $\rho_2$ and
thus we may apply the induction hypothesis to conclude the existence of
a run $\rho'_d$ of  $\Au{U}$ that simulates this run move for move satisfying the properties
indicated in the induction hypothesis.  Let this run be:
\begin{align*}
(h,K,1)&\moves{a}  (r_0,K+1,2)  \ldots \\
&\moves{y_i} (r_i,c_i,2)) \moves{y_{j+1}} (r_{j+1},c_{j+1},2) \ldots \\
&\moves{y_m} (r_m,c_m,2) \moves{y'_{m-1}} \ldots \\
&\moves{y'_j} (r'_j,c'_j,2) \moves{y'_{i-1}} (r'_{i-1},c'_{i-1},2) \ldots (r'_{K+1},c'_{K+1},2) \\
&\moves{z} (r',K+1,2) \moves{b} (h',K) 
\end{align*}

Now, if $(p_l,i_l) \moves{a_{l+1}} (p_{l+1},i_{l+1})$ was a transition 
in $\rho_2$ in the part of the run from $(q_i,i)$ to $(q_i,j)$ then, 
$q_i \Moves{} p_l$, $p_l \Moves{a_{l+1}} q_i$ and $p_{l+1} \Moves{} q_i$.
Now, either $r_i = q_i$ or $r_i\Moves{}q_i$,
and $q_{j+1}\Moves{}r_i$ and $(q_i,a_{j+1},op, q_{j+1})$ is a transition
for some $op$. In the both cases clearly $r_i \Moves{a_{l+1}} r_i$.
Thus every such deleted transition can be simulated by a transition of the form
$(r_i,c_i,2) \moves{a_{l+1}} (r_i,c_i,2)$.  

A similar argument shows that
every transition of the form $(p_l,i_l) \moves{a_{l+1}} (p_{l+1},i_{l+1})$
 deleted in the segment $(q'_j,j)$ to $(q'_j,i)$ can be simulated by
$(r'_j,c'_j,2) \moves{a_{l+1}} (r'_j,c'_j,2)$. Thus we can extend the
run $\rho'_d$ to a run $\rho'_2$ that simulates $\rho_2$ fulfilling the
requirements of the induction hypothesis. This completes the proof
of this lemma.
\end{proof}

Notice that the size of the state space of $\Au{U}$ is $K . ( K^2 + K + 1)$ when
$U = K^2 + K + 1$.  Since downward closures of NFAs can be constructed by just
adding additional ($\eps$) transitions, Lemmas \ref{lem:FAtoCA} and 
\ref{lem:CAtoFA} imply that:
}

\onlymainpaper{
\begin{restatable}{theorem}{DownwardClosure}
\label{thm:DownwardClosure}
There is a polynomial-time algorithm that takes as input a simple  OCA $\A
=(Q,\Sigma,\delta,s,F)$ and computes an NFA with $O(\sizeA^3)$ states accepting
$\lang(\A)\da$.
\end{restatable}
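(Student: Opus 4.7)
The plan is to construct, given a simple OCA $\A$ with $n = \sizeA$, an NFA $\Au{U}$ of size $O(n^3)$ that satisfies $\lang(\A) \sset \lang(\Au{U}) \sset \lang(\A)\da$; taking the downward closure of $\Au{U}$ (a size-preserving NFA construction: insert an $\eps$-transition bypassing each labelled transition) then yields an NFA of the required size recognizing $\lang(\A)\da$, since $\lang(\Au{U})\da = \lang(\A)\da$ follows from idempotence of $\da$.

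I would set $K = |Q|$ and $U = K^2 + K + 1$, and build $\Au{U}$ as a three-stage automaton with state set $Q \times \{0,\ldots,K\} \times \{1,3\} \;\cup\; Q \times \{0,\ldots,U\} \times \{2\}$. Stages $1$ and $3$ simulate $\A$ faithfully while the counter stays at most $K$, keeping its value in the state. Stage $1$ enters stage $2$ via an increment that would bring the counter from $K$ to $K+1$; stage $2$ exits into stage $3$ via a decrement from $K+1$ to $K$. The crucial feature is added in stage $2$: for every state $q$ and letter $a$ such that some walk of $\A$ from $q$ to $q$ reads a word that contains $a$ as a subword, I add a free self-loop $(q,c,2) \moves{a} (q,c,2)$ that updates neither the counter nor the state. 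The accepting state is reachable only through stage $1$ or stage $3$ at counter value $0$.

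The inclusion $\lang(\Au{U}) \sset \lang(\A)\da$ will follow from a generalization of Lemma~\ref{lem:FreeLoop}. An accepting run of $\Au{U}$ decomposes into a stage-$1$ prefix, a stage-$2$ middle that uses some number of free self-loops, and a stage-$3$ suffix. Lemma~\ref{lem:PumpUp} applied to the prefix yields, for any $N$, a run where the counter is inflated by a multiple $Nk$ of a fixed period $k$; Lemma~\ref{lem:PumpDown} applied to the suffix does the symmetric thing with a period $k'$. Between these, I insert, for each free self-loop on letter $a$ at state $q$, repeated copies of the corresponding walk from $q$ to $q$; choosing $N$ and the loop multiplicities as common multiples of $k$ and $k'$, and large enough to keep the counter non-negative throughout, yields a real run of $\A$ whose induced word is a superword of the one accepted by $\Au{U}$.

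The main obstacle is the reverse inclusion $\lang(\A) \sset \lang(\Au{U})$, which requires $U \geq K^2 + K + 1$. Given an accepting run $\rho$ of $\A$, I would decompose it into the longest prefix and longest suffix in which the counter stays $\le K$ (simulable directly in stages $1$ and $3$) and a middle segment $\rho_2$ starting and ending at counter value $K+1$ and above. By double induction on the maximum counter value $m$ of $\rho_2$ and on the number of times $m$ is attained, I would show $\rho_2$ is simulable in stage $2$. If $m \leq U$, direct simulation suffices. If $m > K^2 + K + 1$, I adapt the pigeonhole argument from Lemma~\ref{lem:uc-bound}: splitting the way up to the peak and the way down by first/last occurrences of each counter value gives, above height $K$, more than $K^2$ configuration pairs, so some state $q$ occurs twice on the way up at heights $i < j$ with the corresponding ``down'' state also repeating. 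Excising these two matched sub-runs leaves a valid shorter run with a strictly smaller peak or fewer occurrences of the peak, to which the inductive hypothesis applies; the excised portion corresponds to a walk from $q$ to $q$ whose letters can be absorbed one-by-one into the free self-loops at $(q,c,2)$. Finally, $|\Au{U}| = O(|Q| \cdot U) = O(n^3)$, so the downward closure construction yields the claimed NFA in polynomial time.
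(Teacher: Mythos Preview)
Your proposal is correct and follows essentially the same approach as the paper: the same three-stage NFA $\Au{U}$ with $K=|Q|$ and $U=K^2+K+1$, the same free self-loops in stage~2 for letters on $q$-to-$q$ walks, and the same two inclusion proofs (generalizing Lemma~\ref{lem:FreeLoop} via Lemmas~\ref{lem:PumpUp}/\ref{lem:PumpDown} for one direction, and the double induction with the $K^2$-pair pigeonhole for the other). One minor point you glossed over: after applying the inductive hypothesis to the shortened run, the stage-2 state at the excision point need not literally be $q$ but some $r$ reachable from and to $q$ by a walk, so the self-loop at $r$ still applies; the paper spells this out, but it does not affect the validity of your sketch.
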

}
\onlyappendix{\DownwardClosure* }

\onlyappendix{
A closer look reveals that the complexity remains the same even for general OCA. We only need one copy of $\Au{U}$.
}

\section{Parikh image: Fixed alphabet}
\label{s:parikh-bounded}

The result of this section is the following theorem.

\begin{theorem}
\label{conf:th:parikh-bounded}
For any fixed alphabet \AlB there is a polynomial-time algorithm that,
given as input a one-counter automaton over \AlB with $n$~states, computes
a Parikh-equivalent NFA.
\end{theorem}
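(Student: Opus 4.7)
The plan is to combine a refined unpumping lemma with Carath\'eodory's theorem, showing that every accepting run of $\A$ reduces, as far as its Parikh image is concerned, to a short backbone run of length $\poly(n)$ together with at most $|\AlB|$ iterated loops, and then to assemble an NFA~$\B$ that recognises exactly these decompositions.

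First I would invoke Lemma~\ref{conf:l:unpump-summary} to decompose every accepting run $\rho$ of $\A$ as $\rho \equiv \sigma + \sum_i k_i \ell_i$, where $\sigma$ is a backbone run of length at most $p(n)$ for some polynomial $p$, each $\ell_i$ is a loop walk of $\A$ that can be validly attached to $\sigma$ at some configuration, and each $k_i \in \N$ is an iteration count. The key property is that $\parikh{w}=\parikh{w_\sigma}+\sum_i k_i\,\parikh{w_{\ell_i}}$, where $w$, $w_\sigma$, $w_{\ell_i}$ are the words read along $\rho$, $\sigma$, $\ell_i$ respectively, so the decomposition fully accounts for the Parikh image.

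Next, viewing loop Parikh images as vectors in $\N^{|\AlB|}\subseteq\mathbb{R}^{|\AlB|}$, I would apply Carath\'eodory's theorem for conical hulls: any nonnegative combination of vectors in $\mathbb{R}^{|\AlB|}$ is already a nonnegative combination of at most $|\AlB|$ of them. Thus, although the set of loops of $\A$ can be exponentially large, for every target Parikh image it suffices to iterate at most $|\AlB|=O(1)$ loops. A key sub-step here is to argue that the loops realising the extreme rays of the relevant cone can themselves be chosen of polynomial size and from a polynomially sized candidate family, so that the subsequent enumeration inside $\B$ remains efficient.

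The NFA~$\B$ is then built in two layers. A \emph{backbone} layer nondeterministically simulates a run of $\A$ of length at most $p(n)$, encoding in its state the current state of $\A$, the counter value (which stays polynomially bounded along the backbone), and a position marker. Attached to each backbone state sits a \emph{pump gadget} that chooses at most $|\AlB|$ loops from the polynomially sized candidate family and traverses each an arbitrary number of times, emitting its letters. Since $|\AlB|$ is fixed, the number of such choices is $\sizeA^{\poly(|\AlB|)}$, matching the claimed bound. Parikh-equivalence follows by verifying soundness (every accepting run of $\B$ corresponds, up to Parikh image, to a run of $\A$) and completeness (every accepting run of $\A$ is represented by some backbone plus Carath\'eodory-restricted pump selection).

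The principal obstacle is the unpumping step: one must show that an accepting run of $\A$ can always be massaged into a short backbone together with loops that are \emph{independently iterable} without breaking counter non-negativity, even when the loops have nonzero and mixed-sign counter effects. Pairing positive and negative effect loops, and establishing that the Carath\'eodory-selected loops admit a joint realisation inside a genuine run of $\A$, is the delicate heart of the argument; the rest of the construction, including the Carath\'eodory reduction and the assembly of the two-layer NFA, is comparatively routine once the unpumping lemma is in hand.
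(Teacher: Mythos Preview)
Your proposal follows essentially the same route as the paper: safe unpumping (Lemma~\ref{conf:l:unpump-summary}) yields a semilinear representation of $\parikh{\langop\A}$ with Parikh images of short backbone runs as bases and Parikh images of available directions as periods (Lemma~\ref{conf:l:parikh}), and a Carath\'eodory-style reduction then cuts each period set down to at most $|\AlB|$ vectors.

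Two places where the paper is more careful than your sketch. First, the conical Carath\'eodory theorem you invoke gives \emph{rational} nonnegative coefficients; to get at most $|\AlB|$ periods with \emph{integer} coefficients one needs the integer variant (the paper cites Huynh and Kopczy\'nski--To), which may enlarge the base vector but keeps it polynomially bounded. Second, rather than building the NFA directly as a backbone layer with pump gadgets, the paper first enumerates all candidate tuples $(v, v_1,\ldots,v_r)$ with $r\le|\AlB|$, checks each for realisability via a dynamic-programming routine (Lemmas~\ref{conf:l:tuple} and~\ref{conf:l:pair}), and only then converts the resulting explicit semilinear set into an NFA. This sidesteps exactly the issue you flag at the end: a direction is a \emph{pair} $\dr{\alpha}{\beta}$ with two insertion points, so a gadget ``attached to each backbone state'' that pumps a single loop does not directly verify joint insertability; the paper's enumerate-and-check approach handles availability cleanly via Lemma~\ref{conf:l:tuple}.
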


Note that in Theorem~\ref{conf:th:parikh-bounded} the size of \AlB is fixed.
The theorem implies, in particular, that
any one-counter automaton over \AlB with $n$~states
has a Parikh-equivalent NFA of size $\poly_{\AlB}(n)$,
where $\poly_{\AlB}$ is a polynomial of
degree bounded by $f(\dimension)$ for some computable function $f$.

We now provide the intuition behind the proof
of Theorem~\ref{conf:th:parikh-bounded}.
Our key technical contribution
is capturing the structure of the Parikh image of the language $\langop\A$.

Recall that a set $A \sset \N^\dimension$ is called \df{linear}
if it is of the form
$
\Lin{b}{P} \eqdef \{ b + \lambda_1 p_1 + \ldots + \lambda_r p_r \mid
        \lambda_1, \ldots, \lambda_r \in \N,
        p_1, \ldots, p_r \in P \}
$
for some vector $b \in \N^\dimension$ and some finite set $P \sset \N^\dimension$;
this vector $b$ is called the \df{base} and vectors $p \in P$ \df{periods}.
A set $S \sset \N^d$ is called \df{semilinear} if it is a finite union of linear sets,
$S = \cup_{i \in I} \Lin{b_i}{P_i}$.
Semilinear sets were introduced in the 1960s
and have since received a lot of attention in formal language theory
and its applications to verification. They are precisely the sets
definable in Presburger arithmetic, the first-order theory of natural
numbers with addition.
Intuitively, semilinear sets are a multi-dimensional analogue
of ultimately periodic sets in \N.
For our purposes, of most importance is the following way
of stating the Parikh theorem~\cite{Parikh66}:
the Parikh image of any \emph{context-free} language is a semilinear set;
in particular,
so is the Parikh image of any one-counter language,~$\parikh{\langop\A}$.

Our proof of Theorem~\ref{conf:th:parikh-bounded}
captures the periodic structure of this set~$\parikh{\langop\A}$.
More precisely, we prove
polynomial upper bounds on the number of linear sets in the semilinear
representation of~$\parikh{\langop\A}$ and on the magnitude of periods and base vectors.
Since converting such a semilinear representation into a polynomial-size NFA is easy,
these bounds (subsection~\ref{s:parikh-bounded:repr}) entail
the existence of an appropriate NFA.
After this, we show how to compute, in time polynomial in \sizeA,
this semilinear representation from \A
(subsection~\ref{s:parikh-bounded:algo}).

\subsection{Semilinear representation of $\parikh{\langop\A}$}
\label{s:parikh-bounded:repr}

We now explain where the periodic structure of the set $\parikh{\langop\A}$ comes from.
Consider an individual accepting run $\pi$ and assume that one
can factorize it as $\pi = \rho \cdot \sigma \cdot \tau$ so that
for any $k \ge 0$ the run $\rho \cdot \sigma^k \cdot \tau$ is also
accepting. Values $k > 0$ correspond to pumping the run ``up'',
and the value $k = 0$ corresponds to ``unpumping'' the infix $\sigma$.
If we apply this ``unpumping'' to $\pi$ several times (each time taking
a new appropriate factorization of shorter and shorter runs),
then the remaining part eventually becomes small (short). Its Parikh image will
be a base vector, and the Parikh images of different infixes $\sigma$ will be
period vectors of a linear set in the semilinear representation.

However, this strategy faces several obstacles.
First, the overall reasoning should work
on the level of the whole automaton, as opposed to
individual runs; this means that we need to rely on a form
of a pumping lemma to factorize long runs appropriately.
The pumping lemma for one-counter languages involves, instead of individual
infixes~$\sigma$, their pairs $(\sigma_1, \sigma_2)$, so that the entire
run factorizes as
$\pi = \rho \cdot \sigma_1 \cdot \upsilon \cdot \sigma_2 \cdot \tau$, and
runs
$\pi = \rho \cdot \sigma_1^k \cdot \upsilon \cdot \sigma_2^k \cdot \tau$ are
accepting for all $k \ge 0$.
We incorporate this into our argument, talking about \emph{split runs}
(Definition~\ref{conf:def:bi-walk}).
Here and below, for any run $\zeta$,
$\effect{\zeta}$ denotes the \df{effect} of $\zeta$
on the counter, i.e., the difference between the final
and initial counter value along $\zeta$.

\begin{definition}[split run]
\label{conf:def:bi-walk}
A \emph{\biwalk} is a pair of runs
$\bi{\sigma_1}{\sigma_2}$ such that
$\effect{\sigma_1}\geq 0$ and
$\effect{\sigma_2} \le 0$.
\end{definition}

Second and most importantly, it is crucial for the periodic structure
of the set that individual ``pumpings'' and ``unpumpings'' can be performed
independently. That is, suppose we can insert a copy of a sub-run $\sigma$
into $\pi$, as above; also suppose we can remove from $\pi$ some other
sub-run $\sigma'$. What we need to ensure is that, after removing $\sigma'$
from $\pi$, the obtained run $\pi'$ will have the property that we can
still insert a $\sigma$ in it, as in the original run $\pi$.
In general, of course, this does not have to be the case: even
for finite-state machines, removal of loops can lead to removal of individual
control states, which can, in turn, prevent the insertion of other loops
(in our case the automaton also has a counter that must always stay non-negative).
To deal with this phenomenon, we introduce the concept of ``availability''
(Definition~\ref{conf:def:available}). Essentially, a ``pumpable'' part of the
run---i.e., a ``split walk''---%
defines a \emph{direction} (Definition~\ref{conf:def:direction});
we say that a direction is \emph{available}
at the run $\pi$ if it is possible to insert its copy into $\pi$.
Thus, when doing ``unpumping'', we need to make sure that the set of available
directions does not change: we call such unpumpings \emph{safe}
(Definition~\ref{conf:def:unpumping}).
We show that long accepting runs can always be safely unpumped
(Lemma~\ref{conf:l:unpump-summary}), which will lead us
(Lemma~\ref{conf:l:parikh}) to the semilinear
representation that we sketched at the beginning of this subsection.

We now describe the formalism behind our arguments.

\begin{definition}[direction]
\label{conf:def:direction}
A \emph{direction} is a pair of walks $\alpha$ and $\beta$,
denoted $\direction = \dr\alpha\beta$, such that:
\begin{itemize}
\item $\alpha$ begins and ends in the same control state,
\item $\beta$ begins and ends in the same control state,
\item $0 < |\alpha| + |\beta| < \polyvi$,
\item $0 \le \effect{\alpha} \le \polyiii$,
\item $\effect{\alpha} + \effect{\beta} = 0$,
\item if $\effect{\alpha} = 0$, then either 
      $|\alpha| = 0$ or $|\beta| = 0$.
\end{itemize}
The direction is \df{of the first kind} if $\effect{\alpha} = 0$,
and \df{of the second kind} otherwise.
\end{definition}

One can think of a direction as
a pair of short loops with zero total effect on the counter.
Pairs of words induced by these loops are sometimes known as iterative pairs.
Directions of the first kind are essentially just individual loops;
in a direction of the second kind,
the first loop increases and the second loop decreases the counter value
(this restriction, however, only concerns the total effects of $\alpha$ and $\beta$;
 i.e., proper prefixes of $\alpha$ can have negative effects
 and proper prefixes of $\beta$ positive effects).
The condition that $\effect\alpha \le \polyiii$ is a pure technicality
and is only put in to make some auxiliary statements in the proof more laconic;
in contrast, the upper bound $|\alpha| + |\beta| < \polyvi$ is crucial
(although the choice of larger polynomial is, of course, possible,
 at the expense of an increase in the obtained upper bound on the size of NFA).

\begin{definition}[availability of directions]
\label{conf:def:available}
Suppose $\pi$ is an accepting run.
A direction $\direction = \dr\alpha\beta$ is \df{available} at $\pi$
if there exists a factorization $\pi=\pi_1 \cdot \pi_2 \cdot \pi_3$ such that
$\pi'=\pi_1 \cdot \alpha \pi_2 \beta \cdot \pi_3$
is also an accepting run.
We write
$\pi+\direction$ to refer to $\pi'$.
\end{definition}

Note that for a particular run $\pi$ there can be more than one
factorization of $\pi$ into $\pi_1, \pi_2, \pi_3$
such that $\pi_1 \cdot \alpha  \pi_2  \beta  \cdot \pi_3$ is an accepting run.
In such cases the direction $\direction$ can be introduced at different points
inside $\pi$.
We only use the notation $\pi+\direction$ to refer to a single
run $\pi'$ obtained in this way,
without specifying a particular factorization of $\pi$.

Denote by $\avail{\pi}$ the set of all directions available at $\pi$.

\begin{lemma}[monotonicity of availability]
\label{conf:l:more}
If $\pi$ is an accepting run of an OCA
and $\direction$ is a direction available at $\pi$, then
$\avail{\pi} \subseteq \avail{\pi+\direction}$.
\end{lemma}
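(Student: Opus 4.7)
The plan is to show a stronger statement: if $d=\dr\alpha\beta$ and $d'=\dr{\alpha'}{\beta'}$ are both available at $\pi$, then we can insert \emph{both} of them into $\pi$ to obtain an accepting run, which can then be factorized to witness that $d'$ is available at $\pi+d$. So I fix factorizations $\pi=\pi_1\pi_2\pi_3$ and $\pi=\pi_1'\pi_2'\pi_3'$ witnessing the availabilities of $d$ and $d'$ respectively, and denote the four insertion positions by $p_1=|\pi_1|$, $p_2=|\pi_1\pi_2|$, $p_1'=|\pi_1'|$, $p_2'=|\pi_1'\pi_2'|$, with $p_1\le p_2$ and $p_1'\le p_2'$.

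The construction is a case analysis on the relative order of these four numbers. There are six qualitative cases (the two disjoint cases, the two nested cases, and the two interleaved cases). In each case, I form a new run by simultaneously inserting $\alpha,\beta,\alpha',\beta'$ into $\pi$ at the respective original positions of $\pi$, in the order dictated by sorting the four insertion points. Since control states at original positions of $\pi$ are unchanged regardless of what has been inserted elsewhere, and the walks $\alpha,\beta,\alpha',\beta'$ are closed loops at the appropriate states, the resulting sequence is a syntactically valid walk of $\A$ with the correct first and last configurations.

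The only thing to verify is that the counter stays non-negative. Here the key input from the definition of a direction is $\effect\alpha\ge 0$ and $\effect{\alpha'}\ge 0$ (together with $\effect\alpha+\effect\beta=0$ and $\effect{\alpha'}+\effect{\beta'}=0$). This implies a monotonicity property: at every original position $p$ of $\pi$, the counter value in the combined run is at least the counter value in $\pi+d$ and at least the counter value in $\pi+d'$. Consequently, the interior of $\alpha$ and $\beta$ enjoys non-negative counters because the insertion was legal in $\pi+d$, and the interior of $\alpha'$ and $\beta'$ enjoys non-negative counters because the insertion was legal in $\pi+d'$ (and starting from an even larger counter can only help). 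I would write the nested case $(p_1,p_1',p_2',p_2)$ out explicitly as a model, since it simultaneously exhibits both offsets, and then remark that the remaining cases follow by the same monotonicity bookkeeping.

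Having produced an accepting run of the form obtained by adding $\alpha'$ and $\beta'$ around a contiguous infix of $\pi+d$, I read off the corresponding factorization of $\pi+d$, which certifies $d'\in\avail{\pi+d}$. The main obstacle is purely combinatorial: writing the case analysis without becoming unreadable; once the non-negativity of $\effect\alpha$ and $\effect{\alpha'}$ is highlighted, all cases collapse to the same monotonicity argument.
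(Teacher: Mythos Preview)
Your approach is correct, but it differs from the paper's. The paper first proves a characterization of availability (Lemma~\ref{l:avalDueToStates} in the appendix): a direction $\dr{\alpha'}{\beta'}$ is available at a run $\tau$ if and only if $\tau$ contains configurations $(\initstate{\alpha'},c_1)$ and $(\initstate{\beta'},c_2)$ in this order with $c_1\ge\drop{\alpha'}$ and $c_2+\effect{\alpha'}\ge\drop{\beta'}$. With this in hand, the monotonicity lemma becomes almost immediate: write $\pi+d=\pi_1\alpha\pi_2\beta\pi_3$ and observe that every configuration of $\pi$ reappears in $\pi+d$ with counter value at least as large (those in $\pi_1$ and $\pi_3$ are unchanged, those in $\pi_2$ are shifted up by $\effect\alpha\ge 0$). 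Since the availability condition is monotone in $c_1,c_2$, the witnesses for $d'$ in $\pi$ directly yield witnesses in $\pi+d$. No case analysis is needed.

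Your direct construction of a combined run proves a slightly stronger statement (that one can realize $\pi+d+d'$ simultaneously) and is self-contained, not relying on the characterization lemma. The price is the six-way case split; the paper's route trades this for the one-time investment of isolating the characterization, which it then reuses elsewhere (e.g., in Claims~\ref{c:high:fingerprint} and~\ref{c:low:fingerprint}). Both arguments ultimately rest on the same monotonicity observation you identified, namely that $\effect\alpha\ge 0$ forces all counter offsets at original positions of $\pi$ to be non-negative.
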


\begin{definition}[unpumping]
\label{conf:def:unpumping}
A run $\pi'$ \df{can be unpumped} if
there exist a run $\pi$ and a direction $\direction$
such that $\pi' = \pi + \direction$.
If additionally $\avail{\pi'} = \avail\pi$, then we say
that $\pi'$ \df{can be safely unpumped}.
\end{definition}

Note that $\avail{\pi'}$
is always a superset of $\avail\pi$ by Lemma~\ref{conf:l:more}.
The key part of our argument is the proof that,
indeed, every long run can be
unpumped in a safe way:

\begin{lemma}[safe unpumping lemma]
\label{conf:l:unpump-summary}
Every accepting run $\pi'$ of $\A$ 
of length greater than $\polyMinUnpLen$
can be safely unpumped.
\end{lemma}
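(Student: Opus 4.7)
The plan is to combine a Latteux-style pumping argument (to extract \emph{some} unpumping from a long run) with a pigeonhole analysis of ``anchor points'' for available directions (to upgrade the unpumping to a safe one). Given an accepting run $\pi'$ with $|\pi'| > \polyMinUnpLen$, the goal is to produce a direction $\direction = \dr{\alpha}{\beta}$ and a factorization $\pi' = \pi_1 \cdot \alpha \cdot \pi_2 \cdot \beta \cdot \pi_3$ such that $\pi = \pi_1 \cdot \pi_2 \cdot \pi_3$ is accepting and moreover $\avail{\pi} = \avail{\pi'}$. By Lemma~\ref{conf:l:more} the containment $\avail{\pi} \subseteq \avail{\pi'}$ is automatic, so only the reverse inclusion has to be engineered.

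For the existence of \emph{some} unpumping I would invoke a standard pumping analysis for OCAs. Namely, $\pi'$ must either (i) visit some control state many times while the counter stays below the threshold $\polyiii$, yielding by pigeonhole a short loop and hence a direction of the first kind; or (ii) climb over a ``tall hill'' whose ascending and descending slopes exhibit repeated control states at matching counter heights, yielding an up--down pair with $0 < \effect{\alpha} \le \polyiii$ and hence a direction of the second kind. The numeric choice of $\polyv$ is calibrated so that the extracted $\alpha,\beta$ obey $|\alpha|+|\beta| < \polyvi$ and therefore satisfy the constraints of Definition~\ref{conf:def:direction}.

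The main obstacle is the safety clause. For each $d' \in \avail{\pi'}$ I need to exhibit, in the shortened run $\pi$, a place where $d'$ can still be spliced in; call such a location an \emph{anchor} for $d'$. For a direction of the first kind an anchor is a single occurrence of a designated control state in $\pi$; for a direction of the second kind it is an ordered pair of control-state occurrences with sufficiently high counter value along the intervening segment. Since the number of distinct directions is polynomially bounded (at most $(\polyv)^2$ many, by the size constraints of Definition~\ref{conf:def:direction}), the total number of anchor positions that must be protected is polynomial in $\sizeA$.

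The plan for safety is then a pigeonhole count: mark in $\pi'$ the polynomially many positions that serve as anchors of available directions, and search for a pumpable repetition inside a single ``gap'' between consecutive marked positions. Because $|\pi'| > n^2(\polyv)^3$ while only $O((\polyv)^2)$ positions are marked, some gap is long enough to contain a repetition of the shape required in step one, so the unpumping can be carried out entirely inside the gap and leaves every anchor intact. The hardest sub-case I expect is second-kind directions $d'$ whose two anchors straddle the chosen gap: here removing a small loop from in between might reduce the counter height at the anchor pair and destroy the second-kind insertion. I plan to handle this by restricting the search in step one to regions where the local counter value is already at least $\effect{\alpha'}$ above the surrounding minima (so the removal does not invalidate any straddling second-kind anchor), or alternatively, when no such region exists, using an exchange argument that shifts the second-kind anchor onto an equivalent state-pair inside one of the surviving pieces $\pi_1, \pi_2, \pi_3$.
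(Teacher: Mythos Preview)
Your overall strategy---mark a polynomial number of ``anchor'' positions that witness availability, then find a pumpable repetition in a gap between them---is exactly the paper's strategy. But the proposal contains a genuine error at the point where you bound the number of anchors: you write that ``the number of distinct directions is polynomially bounded (at most $(\polyv)^2$ many).'' This is false. A direction is a \emph{concrete pair of walks} of length at most $\polyvi$, and an OCA with a single state and two self-loops already has $2^L$ walks of length $L$; so the number of directions is in general exponential in $n$. Marking one anchor per available direction therefore does not give a polynomial set of marks, and the pigeonhole step that follows collapses.

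What makes the argument go through is the observation (the paper's Lemma characterizing availability) that whether $\dr{\alpha}{\beta}$ is available at a run depends only on the existence, in the run, of an ordered pair of configurations $(\initstate{\alpha},c_1),(\initstate{\beta},c_2)$ whose counter values exceed the thresholds $\drop{\alpha}$ and $\drop{\beta}-\effect{\alpha}$. The paper exploits this by marking witnesses not per direction but per \emph{fingerprint entry}: in the ``high'' case (counter exceeds $\polyv$) it suffices to preserve the set of ordered \emph{state pairs} occurring in a high segment, of which there are at most $n^2$, because every counter value there already exceeds all possible thresholds; in the ``low'' case (height $\le \polyv$) one preserves the set of ordered \emph{configuration pairs}, of which there are at most $n^2(\polyv)^2$. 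With these polynomial marks in place, your pigeonhole step goes through as you sketch, and the ``straddling'' sub-case you worry about is handled automatically: preserving the fingerprint means every witness pair survives, regardless of where the removed loop sits.
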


\begin{proof}[Proof (sketch)]
We consider two cases, depending on whether the \df{height} (largest counter value)
of $\pi'$ exceeds a certain polynomial in~$n$.
The strategy of the proof is the same for both cases
(although the details are somewhat different).
We first show that sufficiently large parts (runs or split runs) of~$\pi'$
can always be unpumped (as in standard pumping arguments).
We notice that for such an unpumping to be \emph{unsafe},
it is necessary that the part contain a configuration whose removal
shrinks the set of available directions---a reason for non-safety;
this \df{important} configuration cannot appear anywhere else in~$\pi'$.
We prove that the total number of important configurations is at most
$\poly(n)$. As a result, if we divide the run~$\pi'$ into
sufficiently many sufficiently large parts, at least one of the parts
will contain no important configurations and, therefore,
can be unpumped safely.
\end{proof}

Lemma~\ref{conf:l:unpump-summary} ensures
that we faithfully represent
the semilinear structure of the Parikh image of the entire language
when we take Parikh images of short runs as base vectors
and Parikh images of available directions as period vectors
in the semilinear representation:

\begin{lemma}
\label{conf:l:parikh}
For any OCA $\A$,
it holds that
\begin{equation}
\label{conf:eq:parikh}
\parikh{\langop\A} =
\bigcup_{\text{\textup{$\length{\pi}\leq s(n)$}}}
\Lin{\parikh\pi}{\parikh{\avail\pi}},
\end{equation}
where the union is taken over all accepting runs of $\A$
of length at most $s(n) = \polyMinUnpLen$.
\end{lemma}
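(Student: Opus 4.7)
The plan is to establish the two inclusions separately.

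For the inclusion $(\supseteq)$, I would fix an accepting run $\pi$ with $\length{\pi} \leq s(n)$ and an arbitrary element $\parikh\pi + \sum_{i=1}^{k} \lambda_i \parikh{\direction_i}$ of $\Lin{\parikh\pi}{\parikh{\avail\pi}}$, where each $\direction_i \in \avail\pi$. I would argue by induction on $\sum_i \lambda_i$ that there is an accepting run $\pi^\star$ with $\parikh{\pi^\star} = \parikh\pi + \sum_i \lambda_i \parikh{\direction_i}$, so that this Parikh vector belongs to $\parikh{\langop\A}$. The inductive step picks some $\direction_i$ with $\lambda_i > 0$ and forms $\pi + \direction_i$, which is accepting by the definition of availability; monotonicity (Lemma~\ref{conf:l:more}) guarantees that every $\direction_j$ remains available at the new run, so the induction goes through.

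For the inclusion $(\subseteq)$, I would take an arbitrary accepting run $\pi^\star$ and show that $\parikh{\pi^\star}$ lies in some set $\Lin{\parikh\pi}{\parikh{\avail\pi}}$ with $\length\pi \leq s(n)$. The argument is by induction on $\length{\pi^\star}$. If $\length{\pi^\star} \leq s(n)$, then $\pi = \pi^\star$ works with all multipliers equal to zero. Otherwise, Lemma~\ref{conf:l:unpump-summary} gives an accepting run $\pi'$ and a direction $\direction$ such that $\pi^\star = \pi' + \direction$ and $\avail{\pi'} = \avail{\pi^\star}$. Since directions have total length at least one, $\length{\pi'} < \length{\pi^\star}$, and the induction hypothesis yields a short accepting run $\pi$ and a decomposition
\[
\parikh{\pi'} = \parikh\pi + \sum_{i=1}^{k} \lambda_i \parikh{\direction_i}, \qquad \direction_i \in \avail\pi.
\]
Then $\parikh{\pi^\star} = \parikh\pi + \parikh\direction + \sum_i \lambda_i \parikh{\direction_i}$, and it remains only to check that $\direction$ itself belongs to $\avail\pi$; this is where the \emph{safety} of the unpumping matters. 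By safety, $\direction \in \avail{\pi'}$. To transfer this to $\avail\pi$, I would strengthen the induction hypothesis to record additionally that $\avail\pi \supseteq \avail{\pi'}$ (indeed, the intermediate safe unpumpings all preserve availability, and for the short base case the claim is vacuous). With this strengthening, $\direction \in \avail{\pi'} \subseteq \avail\pi$, completing the inductive step.

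The only real subtlety is keeping track of availability during the recursion, which is exactly what the notion of safe unpumping was designed for; once Lemma~\ref{conf:l:unpump-summary} is in hand, the bookkeeping above is routine. I do not expect any other obstacle, since both directions reduce to straightforward inductions on natural parameters (total multiplicity and run length, respectively).
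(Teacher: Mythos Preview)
Your proposal is correct and follows essentially the same approach as the paper: for $(\supseteq)$ both of you iterate monotonicity (Lemma~\ref{conf:l:more}) to insert directions one by one, and for $(\subseteq)$ both of you iterate the safe unpumping lemma until the run is short, using the chain of equalities $\avail{\pi_0}=\avail{\pi_1}=\cdots=\avail{\pi_k}$ to place every removed direction in $\avail{\pi_k}$. The paper phrases the latter as an explicit iteration rather than an induction with a strengthened hypothesis, but the content is identical.

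One small wording slip: the fact that $\direction\in\avail{\pi'}$ is not ``by safety'' but simply by the definition of $\pi^{\star}=\pi'+\direction$; safety is what you need (and correctly invoke) in the \emph{next} step, to carry $\avail{\pi'}$ down to $\avail{\pi}$ via the strengthened induction hypothesis.
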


Finally, to keep the representation of the Parikh image small,
we rely on a Carath\'eodory-style argument ensuring that
the number of the linear sets in the semilinear representation
needs to grow only polynomially in the size of the original automaton,
while the sets of period vectors is also kept small.
For this (and only this) part of the argument, we need the alphabet size,
$\dimension$, to be fixed.

\subsection{Computing the semilinear representation}
\label{s:parikh-bounded:algo}

Lemma~\ref{conf:l:parikh} suggests the following algorithm
for computing the semilinear representation of $\parikh{\langop\A}$.
Enumerate all potential Parikh images $v$ of small accepting runs $\pi$ of \A
and all potential Parikh images of directions.
For every $v$ and for every tuple of $r \le \dimension$~vectors $v_1, \ldots, v_r$
that could be Parikh images of directions in \A,
check if \A indeed has an accepting run $\pi$
and directions $d_1, \ldots, d_r$ available at $\pi$ such that
$\parikh\pi = v$ and $\parikh d_i = v_i$ for all~$i$
(Parikh images of runs and directions are defined
 as Parikh images of words induced by them).
Whenever the answer is yes, take a linear set $\Lin{v}{\{v_1, \ldots, v_r\}}$
into the semilinear representation of $\parikh{\langop\A}$.
Terminate when all tuples $(v, v_1, \ldots, v_r)$ have been considered
for all $r \le \dimension$.

We now explain why this algorithm works in polynomial time.
Recall that the size of the alphabet, $\dimension$, is fixed.
Note that by Definition~\ref{conf:def:direction}
the total length of runs $\alpha_i$ and $\beta_i$ in a direction
$d_i = \dr{\alpha_i}{\beta_i}$ is at most polynomial in~$n$;
similarly, equation~\eqref{conf:eq:parikh} in Lemma~\ref{conf:l:parikh}
only refers to accepting runs $\pi$ of polynomial length.
Therefore, all the components of
all potential Parikh images $v$ and $v_1, \ldots, v_r$
are upper-bounded by polynomials in~$n$ of fixed degree.
The number of such vectors in $\N^\dimension$ is polynomial,
and so is the number of appropriate
tuples $(v, v_1, \ldots, v_r)$, $r \le \dimension$.
It now remains to argue that each tuple can be processed
in polynomial time.

\begin{lemma}
\label{conf:l:tuple}
For every \AlB
there is a polynomial-time algorithm that,
given a simple OCA \A over \AlB
and vectors $v, v_1, \ldots$, $v_r \in \N^\AlB$, $0 \le r \le \dimension$,
with all numbers written in unary,
decides if \A has an accepting run $\pi$
and directions $d_1, \ldots, d_r \in \avail\pi$
with $\parikh\pi = v$ and $\parikh{d_i} = v_i$ for all~$i$.
\end{lemma}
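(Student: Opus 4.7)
The plan is to reduce the problem to a polynomial number of reachability questions in polynomial-size product graphs. Three facts make this work: the alphabet $\AlB$ is fixed so $r \le \dimension$ is a constant; the vectors $v$ and $v_i$ are written in unary, so $\|v\|_1, \|v_i\|_1$ are polynomial; and by Definition~\ref{conf:def:direction} each direction has total length at most $\polyvi$ and $\effect{\alpha_i} \le \polyiii$. I first verify that short witnesses suffice: the number of non-$\eps$ transitions in $\pi$ equals $\|v\|_1$, and long pure-$\eps$ stretches can be pruned by a standard pigeonhole argument (removing the fragment between two repeated $(q,c)$ configurations on a pure-$\eps$ sub-run does not affect the Parikh image, the endpoints, or the availability of any direction), so $|\pi|$ and the counter height of $\pi$ are polynomial.

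Since $r$ is constant, there are only $O(1)$ possible relative orderings of the $2r$ insertion points (positions of the $\alpha_i$'s and $\beta_i$'s inside $\pi$). For each ordering and each $i \in \{1,\ldots,r\}$, I enumerate a split $v_i = v_i^{\alpha}+v_i^{\beta}$ with $v_i^{\alpha}, v_i^{\beta} \in \N^\AlB$ and a nonnegative effect $e_i = \effect{\alpha_i} \le \polyiii$; the total number of such configurations is polynomial.

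For each configuration I build a product automaton whose states are tuples $(q,c,u,j)$ with $q$ a state of $\A$, $c$ a counter value (polynomially bounded by Step 1), $u \in \N^\AlB$ a Parikh progress vector with $u \le v$, and $j \in \{0,\ldots,2r\}$ the current segment index. Standard $\A$-transitions are simulated while advancing $u$ by the letter read. At the boundary between segments $j$ and $j+1$, I allow a macro transition from $(q,c,u,j)$ to $(q,c+e_i,u,j+1)$ (or to $(q,c-e_i,u,j+1)$ when the slot corresponds to $\beta_i$); it is enabled exactly when $\A$ has a walk from $q$ to itself whose Parikh image equals the prescribed $v_i^{\alpha}$ (resp.\ $v_i^{\beta}$), whose effect equals $e_i$ (resp.\ $-e_i$), and which keeps the counter non-negative when started from $c$ (resp.\ $c+e_i$). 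This enabling condition is itself a polynomial-time reachability query in a polynomial-size auxiliary product graph whose states track $(\A\text{-state}, \text{counter}, \text{progress towards } v_i^{\alpha}$ or $v_i^{\beta})$. A valid witness $(\pi, d_1, \ldots, d_r)$ then exists iff the main product automaton has a path from $(\qinit, 0, \mathbf{0}, 0)$ to $(\qfinal, 0, v, 2r)$, which is decidable in $\Ptime$.

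The main obstacle is to coordinate the global Parikh constraint on $\pi$, the loop-existence constraints for each $d_i$, and the counter feasibility of every insertion inside a \emph{single} polynomial-size product. The fixed-alphabet assumption is essential here: the progress-vector coordinate takes $\prod_{a\in\AlB}(v(a)+1)$ values, polynomial only when $\dimension$ is constant, and similarly the enumeration over splits of each $v_i$ blows up for super-constant $\dimension$. A secondary subtlety is that availability is existential---only one factorization of $\pi$ per direction must work---and this is handled cleanly because the product automaton guesses each insertion point on the fly rather than enforcing a global alignment.
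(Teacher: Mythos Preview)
Your overall strategy---enumerate a constant number of orderings, splits $v_i=v_i^\alpha+v_i^\beta$, and effects $e_i$, then reduce to reachability in a polynomial-size product that tracks $(q,c,u,j)$---is sound and close in spirit to the paper's proof. However, the way you wire the macro transitions introduces a genuine error.

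You let the $\alpha_i$-macro send $(q,c,u,j)$ to $(q,c+e_i,u,j+1)$ and the $\beta_i$-macro send $(q,c,u,j)$ to $(q,c-e_i,u,j+1)$. With these shifts, the counter coordinate of your product automaton no longer tracks the counter of $\pi$; it tracks the counter of the \emph{combined} sequence $\pi_1\alpha_1\pi_2\alpha_2\ldots$, i.e.\ of $\pi+d_1+\ldots+d_r$. Consequently, a successful path in your product only certifies that $\pi+d_1+\ldots+d_r$ is a valid accepting run whose $\pi$-part has Parikh image $v$; it does \emph{not} certify that $\pi$ itself is a valid run. These conditions are not equivalent. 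Concretely, take $\AlB=\{a,b\}$, states $q_0,q_1,q_2,q_3,q_f$, and transitions
\[
q_0\xrightarrow{b,+1}q_0,\quad
q_0\xrightarrow{a,0}q_1,\quad
q_1\xrightarrow{a,-1}q_2,\quad
q_2\xrightarrow{a,+1}q_3,\quad
q_3\xrightarrow{b,-1}q_3,\quad
q_3\xrightarrow{a,0}q_f.
\]
For $v=(4,0)$ the only candidate $\pi$ is $q_0\to q_1\to q_2\to q_3\to q_f$, whose counter sequence is $0,0,-1,0,0$, hence invalid; the correct answer is NO. Yet with $d_1=\dr{\alpha}{\beta}$ where $\alpha$ is the $b$-loop at $q_0$ and $\beta$ is the $b$-loop at $q_3$ (so $v_1=(0,2)$, $e_1=1$), your shifted product finds the path with counters $0,1,1,0,1,0,0$ and accepts. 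Hence your algorithm answers YES on an instance where no valid $\pi$ exists; plugged into the enclosing construction this would add $(4,0)$ to the Parikh image, which is wrong.

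The fix is small: the macro transitions must \emph{not} change $c$. Keep the product automaton simulating $\pi$ itself, and at an $\alpha_i$-slot check that some $\alpha_i$ with Parikh image $v_i^\alpha$ and effect $e_i$ is a valid run from counter $c$; at a $\beta_i$-slot check that some $\beta_i$ with Parikh image $v_i^\beta$ and effect $-e_i$ is valid from counter $c+e_i$. This is exactly the availability characterisation of Lemma~\ref{conf:l:more}'s companion (two witness configurations $(p,c_1),(q,c_2)$ with $c_1\ge\drop{\alpha}$ and $c_2+\effect{\alpha}\ge\drop{\beta}$), and with this change your product correctly decides the stated property. This is also how the paper proceeds: it enumerates the $2r$ witness configurations explicitly, checks (separately) that an accepting $\pi$ with $\parikh\pi=v$ passes through them, and checks (separately) that each $v_i$ is realised by a direction anchored at some pair among them; the two checks are never conflated.

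A secondary issue: your $\eps$-pruning argument is circular. You bound the length of a pruned pure-$\eps$ stretch by $n\cdot H$ where $H$ is the counter height, and then bound $H$ by the run length---but the run length was what you were trying to bound. Moreover, removing a segment between two repeated configurations can delete configurations that serve as the only witnesses for some direction's availability, so ``does not affect the availability of any direction'' is false as stated. The clean fix, which the paper adopts, is to preprocess $\eps$-transitions away by relabelling them with a fresh letter (increasing $\dimension$ by one, harmless since $\AlB$ is fixed); then $\length{\pi}=\norm{v}$ and all bounds follow immediately.
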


Lemma~\ref{conf:l:tuple} is based on the following
building block:

\begin{lemma}
\label{conf:l:pair}
For every \AlB
there is a polynomial-time algorithm that,
given a simple OCA \A over \AlB,
two configurations $(q_1, c_1)$ and $(q_2, c_2)$
and a vector $v \in \N^\AlB$ with all numbers written in unary,
decides if \A has a run $\pi = (q_1, c_1) \moves{} (q_2, c_2)$
with $\parikh\pi = v$.
\end{lemma}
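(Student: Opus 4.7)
The plan is to reduce the problem to reachability between two configurations in a simple OCA of polynomial size whose source and target counter values are given in unary, and then to solve that reachability question by a bounded-counter search in the style of Lemma~\ref{lem:uc-bound}.

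First I would build an augmented simple OCA $\A'$ whose state set is
$Q' \eqdef Q \times \prod_{a \in \AlB} \{0, 1, \dots, v(a)\}$;
the additional components track, on the fly, how many occurrences of each letter have been read so far. Each transition $(p, a, s, p') \in \tran$ of $\A$ is lifted to transitions $((p, x), a, s, (p', x'))$ in $\tran'$, where $x' = x$ if $a = \eps$ and $x' = x + e_a$ (the unit vector in coordinate $a$) otherwise; for $a \in \AlB$ we additionally require $x(a) < v(a)$. The counter behaves identically in $\A$ and $\A'$. Since $\dimension$ is a fixed constant and every $v(a)$ is given in unary, $|Q'| \le \sizeA \cdot \prod_{a \in \AlB}(v(a)+1)$ is polynomial in the input. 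By construction, $\A$ has a run $\pi = (q_1, c_1) \moves{} (q_2, c_2)$ with $\parikh\pi = v$ if and only if $\A'$ has a run from $((q_1, \vec{0}), c_1)$ to $((q_2, v), c_2)$.

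Next I would bound the counter values of a shortest witnessing run. Exactly as in Lemma~\ref{lem:uc-bound}, whenever the counter in $\A'$ exceeds a polynomial threshold in $|Q'| + \max(c_1, c_2)$, at the highest attained level two configurations on the ``up'' branch and two configurations on the ``down'' branch must share the same augmented state; the two corresponding loops can be deleted simultaneously. The key new observation is that, because the Parikh tracker is now part of the state, any such loop is automatically Parikh-neutral, so the surgery preserves the source, the target, and the Parikh image. Iterating, we obtain a witnessing run whose counter stays confined to $[0, C]$ for some polynomial $C$ in the input size.

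This reduces the problem to reachability in the finite directed graph $Q' \times \{0, 1, \dots, C\}$, of polynomial size, which is solved by a straightforward breadth-first search in polynomial time. The step that requires the most care is the bounded-counter argument: the classical pumping used to shorten runs of OCA in general changes the Parikh image, and embedding the Parikh tracker into the control state is precisely what makes the pumping step harmless here.
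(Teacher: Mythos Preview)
Your proposal is correct, but it follows a genuinely different route from the paper's own proof. The paper first disposes of $\varepsilon$-transitions globally (a separate remark observes that replacing every $\varepsilon$ by a fresh letter, computing the Parikh-equivalent NFA, and then erasing that letter afterwards is harmless for Parikh images). Under that assumption, any run with Parikh image $v$ has length exactly $\|v\|_1$, so the counter is \emph{trivially} bounded by $H_0 = \max(c_1,c_2) + \|v\|_1 + 1$; the paper then fills a dynamic-programming table indexed by pairs of bounded configurations and Parikh vectors $v' \le v$, of size $(\sizeA \cdot H_0)^2 \cdot (H_0+1)^{\dimension}$.

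Your approach instead keeps $\varepsilon$-transitions, forms the product $\A' = \A \times \prod_{a}\{0,\dots,v(a)\}$, and reduces to plain OCA reachability between two unary-encoded configurations of $\A'$. You then bound the counter by re-running the Lemma~\ref{lem:uc-bound} pumping argument on $\A'$; the key point, which you state correctly, is that any state-cycle in $\A'$ carries zero Parikh image by construction, so the simultaneous excision of an up-loop and a matching down-loop preserves source, target, \emph{and} Parikh image. This is more self-contained---no global $\varepsilon$-elimination step is needed---and it reduces the question to a cleanly named subproblem (OCA reachability with unary counters). The price is a somewhat heavier counter-bound argument (threshold $|Q'|^2 + \max(c_1,c_2)$ via pigeonhole on pairs of augmented states, as opposed to the one-line length bound the paper gets after killing $\varepsilon$'s). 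Both end with a polynomial-size search; the paper's version is shorter once the preprocessing has been done, yours avoids the preprocessing altogether.
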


The algorithm of Lemma~\ref{conf:l:pair} solves a version
of the Parikh membership problem for OCA.
It constructs a multi-dimensional table
by dynamic programming:
for \emph{all} pairs of configurations $(q'_1, c'_1)$, $(q'_2, c'_2)$
with bounded $c'_1, c'_2$ and all vectors $v' \in \N^\AlB$
of appropriate size, it keeps the information whether \A has
a run $(q'_1, c'_1) \moves{} (q'_2, c'_2)$ with Parikh image~$v'$.

This completes our description of how to
compute, from an OCA \A, a semilinear representation of $\parikh{\langop\A}$.
Transforming this representation into an NFA is a simple exercise.

\section{Parikh image: Unbounded alphabet }
\label{app:parikh}

In this section we describe an algorithm to construct an NFA Parikh-equivalent 
to an OCA \A without assumptions $|\Sigma|$. The NFA
has at most $O(|\Sigma|K^{O(\log K)})$ states where $K = \sizeA$,
a significant improvement over
$O(2^{\poly(K,|\Sigma|)})$ for PDA.

We establish this result in two steps. In the first step, we show that we can focus our attention on
computing Parikh-images of words recognized along \emph{reversal bounded} runs.
A reversal in a run occurs when the OCA switches to incrementing the counter
after a non-empty sequence of decrements (and internal moves) or when it
switches to decrementing the counter after a non-empty sequence of
increments (and internal moves).  For a number $R$, a run is $R$ reversal
bounded, if the number of reversals along the run is $\leq R$. Let
us use $\lang_R(\A)$ to denote the set of words accepted by $\A$ along
runs with at most $R$ reversals. 

We construct a new polynomial size simple OCA from $\A$ 
and show that we can restrict our attention to runs with at most $R$ reversals of this OCA, where $R$ is a polynomial in $K$.
In the second step, from any simple OCA $\A$ with $K$ states
  and any integer $R$ we construct an NFA of size $O(K^{O(log(R))})$ 
whose Parikh image is $\lang_R(\A)$.
 Combination of the
two steps gives a $O(K^{O(log K)})$ construction. 

\subsection{Reversal bounding}\onlymainpaper{\label{sec:reversal-bounding}}

We establish that, up to Parikh-image, it suffices to consider 
runs with $2K^2+K$ reversals. We use two constructions: one that eliminates \emph{large} reversals (think of a waveform)  and another that eliminates \emph{small} reversals (think of the noise on a noisy waveform).  
For the large reversals, the idea used is the following: we can reorder
the transitions used along a run, hence preserving Parikh-image, to turn it
into one with few large reversals (a noisy waveform with few reversals).
The key idea used is to
move each simple cycle at state $q$ with a positive (resp. negative) 
effect on the counter to the first (resp. last) occurrence of the state
along the run.  To eliminate the smaller reversals (noise), the idea is to maintain the changes to the counter in the state and transfer it only when necessary
to the counter to avoid unnecessary reversals.

\includegraphics[scale =0.3, angle = 270 ]{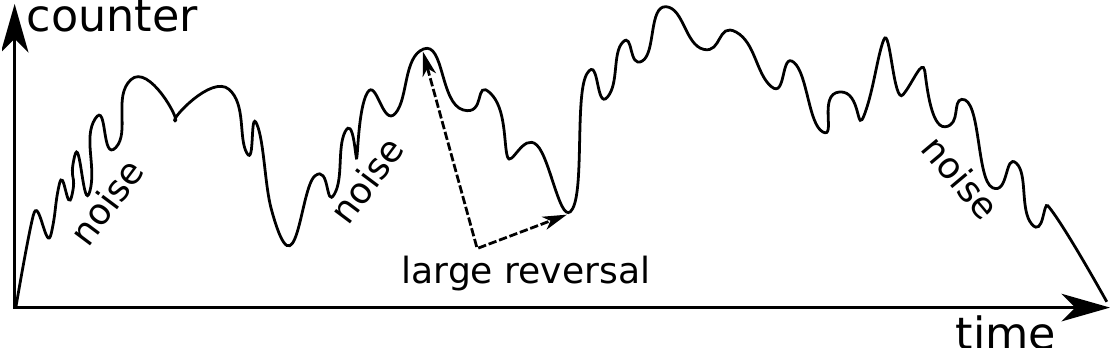}

Consider a run of $\A$ starting at a configuration $(p,c)$ and ending at
some configuration $(q,d)$ such that the value of the counter $e$ in any
intermediate configuration satisfies $c-D \leq e \leq c+D$ (where
$D$ is some positive integer). We refer to such a run as an \emph{$D$-band run}.
 
Reversals along such a run are not important and we get rid of them by 
maintaining the (bounded) changes to the counter within the state.

 We construct a simple OCA $\A[D]$ as follows: its states are $Q \cup Q_1 \cup Q_2$ 
where $Q_1 = Q \times [-D,D]$ and $Q_2 = [-D,D]\times Q$. All transitions
of $\A$ are transitions of $\A[D]$ as well and thus using $Q$ it can simulate
any run of $\A$ faithfully. 
From any state $q \in Q$ the automaton may move nondeterministically to $(q,0)$  in $Q_1$. 
The states in $Q_1$ are used to simulate $D$-band
runs of $\A$ without altering the counter and  by keeping track of the net change to the
counter in the second component of the state. 
\onlyappendix{For instance, 
consider the $D$-band run of $\A$ described above:  $\A[D]$ can move
from $(p,c)$ to $((p,0),c)$ then simulate the run of $\A$ to $(q,d)$ 
to reach $((q,d-c),c)$. At this point it needs to transfer the net effect
back to the counter (by altering it appropriately). The states $Q_2$ are
used to perform this role.}  From a state $(q,j)$ in $Q_1$, $\A[D]$ is
allowed to nondeterministically move to $(j,q)$ indicating that it will
now transfer the (positive or negative) value $j$ to the counter. 
After completing the transfer it reaches a state $(0,q)$ from where it can 
enter the state $q$ via an internal move to continue the simulation of $\A$.

\onlyappendix{The nice feature of this simulated run via $Q_1$ and $Q_2$ is }
\onlymainpaper{Observe} that 
there are no reversals in the simulation and it involves only 
increments (if $d > c$) or only decrements (if $d < c$).
\onlyappendix{
We now formalize the automaton $\A[D]$ and its properties.
The simple OCA $\A[D] = (Q_D,\Sigma,\delta_D,s,F)$ is defined
as follows: 
$$Q_D = Q \cup (Q \times \{-D,\ldots,D\}) \cup  (\{-D,\ldots,D\} \times Q) $$
and $\delta_D$ is defined as follows:
\begin{enumerate}
\setlength{\itemsep}{1pt}
\setlength{\parskip}{0pt}
\item $\delta \subseteq \delta_R$ . Simulate runs of $\A$.
\item $(q,\eps,\nop,(q,0)) \in \delta_D$. Begin a summary phase.
\item $((q,j),a,\nop,(q',j)) \in \delta_D$, if $(q,a,\nop,q') \in \delta$. Simulate an internal move.
\item $((q,j),a,\nop,(q',j+1)) \in \delta_D$, if $(q,a,\inc,q') \in \delta$. 
Simulate an increment.
\item $((q,j),a,\nop,(q',j-1)) \in \delta_D$, if  $(q,a,\dec,q') \in \delta$. Simulate a decrement. 
\item $((q,j),\eps,\nop,(j,q)) \in \delta_D$. Finish summary run. 
\item $((j,q),\eps,\inc,(j-1,q)) \in \delta_D$, if $j > 0$. Transfer a positive effect.
\item $((j,q),\eps,\dec,(j+1,q)) \in \delta_D$, if $j < 0$. Transfer a negative effect.
\item $((0,q),\eps,\nop,q) \in \delta_D$.  Transfer control back to copy of $\A$.
\end{enumerate}

The following Lemma is the first of a sequence that relate $\A$ and $\A[D]$. 
\begin{lemma}\label{lem:MR1}
\begin{enumerate}
\setlength{\itemsep}{1pt}
\setlength{\parskip}{0pt}
\item For any $p,q \in Q$ and any $c,d \in \Nat$, if $(p,c) \moves{w} (q,d)$ in $\A$ then  $(p,c) \moves{w} (q,d)$ in $\A[D]$.
\item For any $p,q \in Q$ and any $c,d \in \Nat$ if $(p,c) \moves{w} (q,d)$ in 
$\A[D]$ then $(p,c+D) \moves{w} (q,d+D)$ in $\A$.
In particular, if $(p,0) \moves{w} (q,0)$ in $\A[D]$ then  $(p,D) \moves{w} (q,D)$ in $\A$.
\end{enumerate}
\end{lemma}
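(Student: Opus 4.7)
Part~1 is essentially immediate from the construction: rule~1 in the definition of $\delta_D$ literally embeds $\delta$ into $\delta_D$. Any run of $\A$ uses only transitions in $\delta$; the same sequence of transitions is therefore a run in $\A[D]$ with identical configurations (since they all stay inside $Q$ and the counter operations are the same). So part~1 follows without further work.

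For part~2, the plan is to induct on the length of the run in $\A[D]$ after first decomposing it at its $Q$-states. Note that the only transition leaving $Q$ is rule~2 (entering $Q_1$), and the only transition reaching $Q$ from outside is rule~9 (from $(0,q) \in Q_2$ to $q$). Moreover, from $Q_1$ one can only reach $Q_2$ via rule~6, and once in $Q_2$ one must walk down (or up) to $(0,q)$ before exiting. Hence every maximal subrun of $(p,c) \moves{w} (q,d)$ between consecutive $Q$-configurations is of one of two forms: (a) a single transition of $\delta$ acting on the counter in the usual way, or (b) a \emph{summary excursion} $(r,c') \to_\eps ((r,0),c') \moves{u} ((r',j),c') \to_\eps ((j,r'),c') \moves{\eps} ((0,r'),c'+j) \to_\eps (r', c'+j)$, where $u$ is the word read during the $Q_1$ phase and $j \in [-D,D]$ is the net counter effect accumulated in the state component.

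Given this decomposition, simulating in $\A$ with the counter shifted by $D$ is straightforward. Case~(a): the same transition fires, and if it was valid from counter $c$ in $\A[D]$ (non-negativity) then it is valid from counter $c+D$ in $\A$, since decrements only fire when the $\A[D]$-counter is $\ge 1$. Case~(b): the $Q_1$ transitions of rules~3--5 exactly mirror a walk of $\A$ from $r$ to $r'$ reading $u$, but with the cumulative counter change stored in the second component of the state rather than applied to the counter; when we actually run this walk in $\A$ starting from counter $c'+D$, each intermediate counter value equals $c'+D+k$ for some $k$ taken by the tracked component, so $k \in [-D,D]$ and hence $c'+D+k \ge c' \ge 0$. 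The final counter after the $Q_1$ walk is $c'+D+j$, which matches the target $(c'+j)+D$ of the summary excursion; the $Q_2$ phase, being $\eps$-labelled, contributes nothing to $w$ and need not be explicitly simulated. Concatenating the simulations of the individual segments gives the desired run in $\A$.

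The only real subtlety, which will be the main obstacle to making the argument airtight, is the shift-by-$D$ bookkeeping inside case~(b): one must verify that the intermediate counter values in $\A$ never dip below zero, and this is exactly what the built-in constraint $j \in [-D,D]$ on the $Q_1$-states was introduced to guarantee. Once this observation is in place, the induction goes through cleanly, and the specialization $c = d = 0$ yields the ``in particular'' clause.
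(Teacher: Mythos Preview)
Your proposal is correct and follows essentially the same approach as the paper. The paper proves part~1 identically, and for part~2 it inducts on the number of type-2 transitions (summary-phase starts), isolating the first summary excursion and applying the induction hypothesis to the remaining pieces; the core of its argument is a Claim that a $Q_1$-only subrun $((g,0),e)\moves{x}((h,i),e)$ yields $(g,e)\moves{x}(h,e+i)$ in $\A$ for any $e\ge D$---which is exactly your case~(b) with the same non-negativity check $e+i\ge e-D\ge 0$.
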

\begin{proof}
The first statement simply follows from the fact that $\delta \subseteq \delta_D$.

Let $\rho = (p,c) \moves{w} (q,d)$ be a run in $\A[D]$.
The second statement is proved by induction on the number of transitions of type 2 taken along $\rho$ (i.e. the number of summary simulations used in $\rho$).
If this number is $0$ then all the transitions used are of type 1 thus
$\rho$ is a run in $\A$ and thus $\rho[D]$ satisfies the requirements of the Lemma. 

Otherwise, let $\rho$ must be of the form
\begin{align*}
 \rho = (p,c) &\moves{w_1} (p_1,c_1) \moves{\eps,\nop} ((p_1,0),c_1) \moves{w_2} ((0,q_1),d_1)\\
& \moves{\eps,\nop} (q_1,d_1) \moves{w_3} (q,d)
\end{align*}
where we have identified the first occurrence of the transition of type 2 and
as well as the first occurrence of a transition of type 9.
Now, by the induction hypothesis, we have runs $(p,c+D) \moves{w_1} (p_1,c_1+D)$ and $(q_1,d_1+D) \moves{w_3} (q,d+D)$  in $\A$.  

From the definition of $\delta_D$, run $((p_1,0),c_1) \moves{w_2} ((0,q_1),d_1)$ must be of the form 
$$ ((p_1,0),c_1) \moves{w_2} ((p_2,c_2),c_1) \moves{\eps} ((c_2,p_2),c_1) \moves{\eps} ((0,p_2),c_1+c_2)$$
with $p_2 = q_1$ and $d_1 = c_1 + c_2$ and where the run $((p_1,0),c_1) \moves{w_2} ((p_2,c_2),c_1)$  involves only transitions of the form 3, 4 or 5.

\paragraph{Claim:} Let $((g,0),e) \moves{x} ((h,i),e)$ be a run in $\A[D]$ using only
transitions of type 3, 4 or 5. Then  $(g,e) \moves{x} (h,e+i)$ in $\A$  for any $e \geq D$.
\paragraph{Proof of Claim:} By induction on the length of the run. The base
case is trivial. For the inductive case, suppose 
$((g,0),e) \moves{x'} ((h',i'),e) \moves{a} ((h,i),e)$ and by the induction
hypothesis $(g,e) \moves{x'} (h',e+i')$ for any $e \geq D$.  Now, if the last
transition  is an internal transition then, $(h',a,\nop,h) \in \delta$ and 
$i = i'$.  Thus  $(h',e+i) \moves{a} (h,e+i)$ in $\A$.
If the last transition is an increment then $(h',a,\inc,h) \in \delta$ and 
$i = i'+1$. Thus, once again we have $(h',e+i) \moves{a} (h,e+i)$ in $\A$.
Finally, if the last transition is a decrement transition then, 
$(h',a,\dec,h) \in \delta$. Then, $i = i'-1$ and $i \geq -D$. Thus, 
$e+i \geq 0$  and thus $(h',e+i') \moves{a} (h,e+i)$ in $\A$, completing
the proof of the claim.\hfill $\square$

Since, $c_1 + D \geq D$, we may apply the claim to conclude that 
$(p_1,c_1 + D) \moves{w_2} (p_2 = q_1,c_1 + D  + c_2 = d_1 + D)$ in $\A$. This
completes the proof of the Lemma.
\end{proof}

Next we show that $\A[D]$ can simulate any $D$-band run 
without reversals.

\begin{lemma}\label{lem:MR3}
Let $(p,c) \moves{w} (q,d)$ be an $D$-band run in $\A$. 
Then, there is a run $(p,c) \moves{w} (q,d)$ in $\A[D]$ in which the 
counter value is never decremented if $c \leq d$ and never 
incremented if $c \geq d$.
\end{lemma}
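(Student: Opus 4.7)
The plan is to give an explicit simulation in $\A[D]$ that uses summary mode for the entire duration of the band-run and then performs a single monotone counter transfer. The key invariant is that throughout the summary the actual counter stays pegged at $c$, with the signed offset from $c$ recorded in the second component of the state.

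Concretely, I would start the simulation by firing the $\eps$-transition of type 2 to move from $(p,c)$ to $((p,0),c)$. Then I would process the given run of $\A$ step by step: whenever $\A$ is in a configuration $(q',c+j)$ with $|j|\le D$, $\A[D]$ is in configuration $((q',j),c)$. An internal move of $\A$ is mimicked by a type 3 transition, an increment by a type 4 transition, and a decrement by a type 5 transition. These all read the same symbol as in $\A$, leave the counter at $c$, and update the stored offset $j$ correctly. The $D$-band hypothesis guarantees $-D \le j \le D$ throughout, so every transition used is indeed in $\delta_D$. After mirroring the whole run we arrive at $((q,d-c),c)$; in total the same word $w$ has been read.

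To close off, I fire the $\eps$-transition of type 6 from $((q,d-c),c)$ to $((d-c,q),c)$, entering the transfer phase. If $c\le d$, then $d-c\ge 0$ and I repeatedly apply type 7 transitions, each of which increments the counter and decrements the first component, producing the sequence $((d-c,q),c),((d-c-1,q),c+1),\dots,((0,q),d)$; finally a type 9 $\eps$-transition lands in $(q,d)$. Symmetrically, if $c\ge d$ then $d-c\le 0$ and I apply type 8 transitions, each of which decrements the counter and increments the first component, ending again in $(q,d)$ via a type 9 move.

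The word read along the constructed run is exactly $w$, since every non-simulation transition is an $\eps$-transition. For the monotonicity claim, note that the counter sits at $c$ through the entry move and the whole summary phase, and only changes during the transfer phase, where it goes monotonically from $c$ to $d$ using exclusively increments (when $c\le d$) or exclusively decrements (when $c\ge d$). Hence the run witnesses the required reversal-free, monotone simulation. There is no real obstacle here beyond verifying that the offset $j$ remains in $[-D,D]$ at every step, which is precisely what the $D$-band assumption gives.
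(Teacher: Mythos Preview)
Your proposal is correct and is essentially the paper's own proof: enter summary mode via the type~2 transition, replay the band run using types~3--5 while the counter stays fixed at $c$ and the offset tracks $e-c\in[-D,D]$, then switch to transfer mode via type~6 and discharge the offset monotonically with type~7 or type~8 moves before exiting with type~9. The paper's argument is slightly terser but identical in content.
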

\begin{proof}
The idea is to simply simulate the run as a summary run in $\A[D]$.
Let the given run be
$$(p,c) = (p_0,c_0) \moves{a_1} (p_1,c_1) \moves{a_2} (p_2,c_2) \ldots \moves{a_n}(p_n,c_n) = (q,d)$$
Then,  it is easy to check that the following is a run in $\A[D]$
\begin{align*}
(p_0,c_0) & \moves{\eps} ((p_0,0),c_0)\moves{a_1} ((p_1,c_1-c_0),c_0) \moves{a_2} \ldots \\
&\moves{a_n} ((p_n,c_n - c_0),c_0) \moves{\eps} ((c_n - c_0,p_n),c_0)
\end{align*}
It is also easy to verify that for any configuration with $((j,p),e)$ 
with $e+j \geq 0$, $((j,p),e) \moves{\eps} (p,e+j)$ is a run in $\A[D]$
consisting only of increments if $j > 0$ and consisting only of decrements
if $j < 0$.  Since $c_n \geq 0$, $(c_n - c_0) + c_0 \geq 0$ and the result follows.
\end{proof}

}
Actually this automaton $\A[D]$ does even better.  Concatenation
of $D$-band runs is often not an $D$-band run but the idea of
reversal free simulation extends to certain concatenations. 
We say that a run $(p_0,c_0) \moves{w} (p_n,c_n)$ is an increasing (resp. decreasing)
\emph{iterated $D$-band run} if it can be decomposed as 
$$(p_0,c_0)\moves{w_1}(p_1,c_1) \moves{w_2} \ldots (p_{n-1},c_{n-1})\moves{w_n} (p_n,c_n)$$ where each $(p_i,c_i)\moves{w_{i+1}} (p_{i+1},c_{i+1})$ is an $D$-band run and $c_i \leq c_{i+1}$ (resp. $c_i \geq c_{i+1})$. 
We say it is an iterated $D$-band run if it is an increasing or decreasing
iterated $D$-band run.

\onlymainpaper{
\begin{restatable}{lemma}{LBand}
\label{lem:LBand}
Let $(p,c) \moves{w} (q,d)$ be an increasing (resp. decreasing)
$D$-band run in $\A$. 
Then, there is a run $(p,c) \moves{w} (q,d)$ in $\A[D]$ along which the
counter value is never decremented (resp. incremented).
\end{restatable}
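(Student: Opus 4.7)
The plan is to derive Lemma~\ref{lem:LBand} from Lemma~\ref{lem:MR3} by piecewise simulation along the given decomposition. I would start with the witness decomposition $(p_0,c_0) \moves{w_1} (p_1,c_1) \moves{w_2} \cdots \moves{w_n} (p_n,c_n)$ showing that the run is an increasing iterated $D$-band run, with $c_i \le c_{i+1}$ for every $i$ and each segment itself a $D$-band run of $\A$. Applying Lemma~\ref{lem:MR3} segment by segment gives, for each $i$, a run $(p_i,c_i) \moves{w_{i+1}} (p_{i+1},c_{i+1})$ in $\A[D]$ whose counter never decrements (the bound required by Lemma~\ref{lem:MR3} is inherited from the $D$-band hypothesis, and the monotonicity direction $c_i \le c_{i+1}$ is exactly what selects the ``no decrement'' clause).

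Next I would stitch these per-segment simulations into a single run of $\A[D]$. The simulation supplied by Lemma~\ref{lem:MR3} starts from a configuration $(p_i,c_i)$ with $p_i \in Q$, visits auxiliary states in $Q_1 \cup Q_2$ only during its internal summary phase, and lands back at $(p_{i+1},c_{i+1})$ with $p_{i+1} \in Q$ via the $\varepsilon$-transition $((0,q),\varepsilon,\nop,q) \in \delta_D$. Hence the target configuration of one simulation matches exactly the source configuration of the next, and concatenation yields a single run $(p,c) \moves{w_1 w_2 \cdots w_n} (q,d)$ in $\A[D]$. Because no additional transitions are inserted between segments and each segment is decrement-free, the overall run never decrements.

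The decreasing case is handled by the symmetric instantiation of Lemma~\ref{lem:MR3}, which delivers per-segment simulations that never increment; the same stitching argument then yields an increment-free run. I do not anticipate a genuine obstacle: all the real work is in Lemma~\ref{lem:MR3}, and the only thing to verify here is that $\A[D]$'s summary mechanism returns control to $Q$ between segments so that arbitrary sequences of summaries can be chained. That is immediate from the construction of $\delta_D$, so the proof should be essentially as short as this sketch.
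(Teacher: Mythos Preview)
Your proposal is correct and matches the paper's proof, which is the one-line ``Simulate each $\rho_i$ by a run that only increments (resp.\ decrements) the counter using Lemma~\ref{lem:MR3}.'' Your added remarks about why the per-segment simulations concatenate cleanly (each begins and ends in a state of $Q$) are accurate and make explicit what the paper leaves implicit.
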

}
\onlyappendixpara{
\LBand*
\begin{proof}
Simulate each $\rho_i$ by a run that only increments (resp. decrements)
the counter using Lemma \ref{lem:MR3}.
\end{proof}
}

While\onlyappendixpara{, as a consequence of item 1 of Lemma \ref{lem:MR1},}
\onlymainpaperpara{clearly } $\lang(\A) \subseteq \lang(\A[D])$, the
converse is not in general true as along a run of $\A[D]$ the real value
of the counter, i.e. the current value of the counter plus the offset available 
in the state,  may be negative, leading to runs that are not simulations of
runs of $\A$.    The trick\onlyappendixpara{, as elaborated in item 2 of  Lemma \ref{lem:MR1},}  that helps us
get around this is to relate runs of $\A[D]$ to $\A$ 
with a shift in counter values.
\onlymainpaper{The following lemma summarizes this shifted relationship:
\begin{lemma}\label{lem:MR2-L}
Let $p,q \in Q$.  
 If $(p,0) \moves{w} (q,0)$ is a run in $\A[D]$ then $(p,D) \moves{w} (q,D)$
is a run in $\A$.
 \end{lemma}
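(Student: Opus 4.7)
The plan is to prove a strengthened statement by induction on the number of summary-phase openings (type-2 transitions) used in the $\A[D]$-run: for every $c, d \in \Nat$ and $p, q \in Q$, if $(p,c) \moves{w} (q,d)$ in $\A[D]$, then $(p, c+D) \moves{w} (q, d+D)$ in $\A$. Setting $c = d = 0$ yields the lemma. The generalization is needed because even when the endpoints have counter value $0$, the intermediate configurations entering summary phases may have arbitrary counter values, so the induction hypothesis must accommodate them.

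For the base case (no type-2 transitions), the run uses only transitions from $\delta$, hence is literally a run of $\A$ from $(p,c)$ to $(q,d)$. Adding $D$ to every intermediate counter value preserves non-negativity and yields a valid $\A$-run from $(p, c+D)$ to $(q, d+D)$. For the inductive step, I would locate the first type-2 transition together with its matching closing type-9 transition, decomposing the run as
\[ (p,c) \moves{w_1} (p_1,c_1) \moves{\eps} ((p_1,0),c_1) \moves{w_2} ((0,p_2),d_1) \moves{\eps} (p_2,d_1) \moves{w_3} (q,d). \]
The prefix uses no type-2 transition (the located one is the first), so it is already an $\A$-run, and the shift by $D$ gives $(p, c+D) \moves{w_1} (p_1, c_1+D)$ in $\A$. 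The suffix contains one fewer type-2 transition, so the induction hypothesis delivers $(p_2, d_1+D) \moves{w_3} (q, d+D)$ in $\A$.

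The crux is handling the middle summary segment. For this I would establish the following sub-claim by a straightforward induction on the length of $x$: if $((g,0),e) \moves{x} ((h,i),e)$ in $\A[D]$ uses only transitions of type 3, 4, or 5, then $(g, e) \moves{x} (h, e+i)$ in $\A$ whenever $e \geq D$. The key invariant is that by the very definition of $Q_1 = Q \times [-D, D]$, the offset component $i$ is always confined to $[-D, D]$, so the shifted counter value $e + i$ remains $\geq 0$ even through decrement steps. The closing transitions of types 6, 7, 8 read no input letter, so the whole label $w_2$ is read during the type-3,4,5 portion. Applying the sub-claim with $e := c_1 + D \geq D$ then gives $(p_1, c_1 + D) \moves{w_2} (p_2, d_1 + D)$ in $\A$, and concatenating the three pieces completes the inductive step.

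The main obstacle is proving the sub-claim cleanly. The subtlety is that, inside a summary phase, $\A[D]$ leaves the real counter untouched while the virtual offset can wander into negative territory; the shift by $D$ is precisely the safety buffer that turns these virtual moves into legal counter operations in $\A$. One must carefully use that the offset $j$ in a state $(q,j) \in Q_1$ being in $[-D, D]$ guarantees enabledness of the corresponding $\A$-transition at counter value $e + j$ whenever $e \geq D$, both for the inductive step and for staying within $Q_1$.
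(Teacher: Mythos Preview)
Your proposal is correct and follows essentially the same approach as the paper: the paper proves precisely your strengthened statement (it is item~2 of Lemma~\ref{lem:MR1} in the appendix) by induction on the number of type-2 transitions, with the same decomposition into prefix/summary/suffix and the same sub-claim about type-3,4,5 segments applied with $e = c_1 + D$. The only cosmetic difference is that the paper invokes the induction hypothesis on the prefix as well (which amounts to the base case), whereas you observe directly that a prefix with no type-2 transitions is already an $\A$-run.
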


With these two lemmas we have enough information about $\A[D]$ and its
relationship with $\A$. }
We need a bit more terminology to proceed.

We say that a run of is an $D_\leq$ run (resp. $D_\geq$ run)  if the value of 
the counter is bounded from above (resp. below) by $D$ in every configuration
encountered  along the run. 
We say that a run of $\A$ is an $D_>$ run  if it is 
of the form $(p,D) \moves{w} (q,D)$, it has at least 3 configurations and
the value of the counter at every configuration other than the first and
last is $> D$. Consider any run from a configuration $(p,0)$ to $(q,0)$ in 
$\A$. Once we identify the maximal $D_>$ sub-runs, what is left is a collection of $D_\leq$ subruns.
 
Let $\rho = (p,c) \moves{w} (q,d)$ be a run of $\A$ with $c,d \leq D$. If $\rho$ is a $D_\leq$
run then its \emph{$D$-decomposition} is $\rho$. Otherwise, its $D$-decomposition is given by a sequence of runs $\rho_0,\rho'_0,\rho_1,\rho'_1 \ldots \rho'_{n-1},\rho_n$ with $\rho = \rho_0.\rho'_0.\rho_1.\rho'_1 \ldots. \rho'_{n-1}.\rho_n$, where each $\rho_i$ is a $D_\leq$ run and each $\rho'_i$ is a $D_>$ run
for $0 \leq i \leq n$. Notice, that some of the $\rho_i$'s may be trivial.
Since the $D_>$ subruns are uniquely identified this definition is 
unambiguous.  We refer to the $\rho'_i$'s (resp. $\rho_i$s) as the $D_>$ (resp. $D_\leq$)  components of $\rho$.

Observe that the $D_\leq$ runs of $\A$ can be easily simulated by an NFA.
Thus we may focus on transforming the $D_>$ runs, preserving just the Parikh-image, into a suitable form. For $D,M \in \Nat$, we say that a $D_>$ run $\rho$ is
a \emph{$(D,M)$-good run} (think noisy waveform with few reversals) if there are runs $\sigma_1,\sigma_2 \ldots, \sigma_n,\sigma_{n+1}$ and iterated $D$-band runs $\rho_1,\rho_2, \ldots, \rho_n$ such that $\rho = \sigma_1\rho_1\sigma_2\rho_2 \ldots \sigma_n\rho_n\sigma_{n+1}$ and 
$|\sigma_1|+ \ldots + |\sigma_{n+1}| + 2.n \leq M$. 
Using Lemma \ref{lem:LBand} and that it is a $D_>$ run we show 
\onlymainpaper{
\begin{restatable}{lemma}{LMGood}
\label{lem:LMGood}
Let $(p,D)\moves{w}(q,D)$ be an $(D,M)$-good run of $\A$.  Then,  there is a
run $(p,0) \moves{w} (q,0)$ in $\A[D]$ with atmost $M$ reversals.
\end{restatable}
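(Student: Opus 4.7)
The plan is to build the desired $\A[D]$-run piecewise along the decomposition $\rho = \sigma_1 \rho_1 \sigma_2 \rho_2 \cdots \sigma_n \rho_n \sigma_{n+1}$. Each $\sigma_i$ is itself a run of $\A$ and, since every transition of $\A$ is already a transition of $\A[D]$, can be replayed transition-for-transition in $\A[D]$. Each iterated $D$-band segment $\rho_i$ is replaced by the monotonic simulation promised by Lemma~\ref{lem:LBand}: the summary states of $\A[D]$ read all letters of $\rho_i$ without touching the counter, and the net effects of the constituent $D$-band sub-runs are then transferred back by uniformly signed transitions (all increments if $\rho_i$ is increasing, all decrements if decreasing).

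To verify that the resulting composite walk is a valid $\A[D]$-run from $(p,0)$ to $(q,0)$, I would maintain throughout the invariant that the $\A[D]$-counter equals the $\A$-counter of $\rho$ minus $D$. The $D_>$ hypothesis on $\rho$ forces this shifted counter to be $\geq 0$ everywhere, with equality only at the two endpoints of $\rho$, so the direct simulation of any $\sigma_i$ never drives the counter below zero. In each $\rho_i$-block the counter is frozen during the summary phase, and a subsequent transfer moves the $\A[D]$-counter from $c-D$ to $c'-D$, where $c, c'$ are the $\A$-counter values at the endpoints of a $D$-band sub-run of $\rho_i$; since both sit above $D$ in $\A$, the transfer stays non-negative.

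For the reversal count, Lemma~\ref{lem:LBand} guarantees that the counter-touching transitions inside each $\rho_i$-block are of a single sign, so $\rho_i$ itself contributes no internal reversal. Reversals therefore arise only within the $\sigma_i$ pieces (at most $|\sigma_i|$ per piece, and thus at most $|\sigma_1|+\cdots+|\sigma_{n+1}|$ in total) or at the $2n$ interfaces of the form $\sigma_i|\rho_i$ or $\rho_i|\sigma_{i+1}$ (at most one reversal each). Adding the two bounds yields at most $|\sigma_1|+\cdots+|\sigma_{n+1}|+2n \leq M$ reversals, as required.

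The main subtlety, and the only place where the $D_>$ hypothesis plays an essential role, is the non-negativity bookkeeping: combining the $-D$ shift with the freeze/transfer cycle of $\A[D]$ must never underflow the counter. The $(D,M)$-good hypothesis, which keeps $\rho$ strictly above $D$ in its interior, is precisely what leaves enough counter headroom in $\A[D]$ to absorb every transfer, and is what allows us to replace the reversal-heavy middle of each iterated $D$-band run by a single monotonic transfer phase.
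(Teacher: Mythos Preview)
Your proposal is correct and follows essentially the same approach as the paper's proof: shift the $D_>$ run down by $D$ (your invariant that the $\A[D]$-counter equals the $\A$-counter minus $D$ is exactly the paper's $\rho[-D]$), replace each iterated $D$-band segment by the monotone simulation of Lemma~\ref{lem:LBand}, and bound the reversals by $|\sigma_1|+\cdots+|\sigma_{n+1}|+2n\le M$ since they can only occur inside the $\sigma_i$'s or at the $2n$ boundaries. Your write-up is somewhat more detailed than the paper's (in particular on the non-negativity bookkeeping), but there is no substantive difference in the argument.
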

}
\onlyappendix{
\LMGood*
\begin{proof}
Let the given run be $\rho$. We first shift down $\rho$ to $\rho[-D]$ to
obtain a run from $(p,0)$ to $(q,0)$, which is possible since $\rho$ is $D_>$ run.
We then transform each of the iterated $D$-band runs using Lemma \ref{lem:LBand} so that there are no reversals in the transformed runs. Thus all reversals
occur inside the $\sigma_i[-D]$'s or at the boundary and this gives us the
bound required by the lemma.
\end{proof}
}

So far we have not used the fact that we can ignore the ordering of the
letters read along a run (since we are only interested in the Parikh-image
of $\lang(\A)$). 
We show that for any run $\rho$ of $\A$ we may find another 
run $\rho'$ of $\A$, that is equivalent up to Parikh-image, 
such that every $D_>$ component in the $D$-decomposition of $\rho'$ is $(D,M)$-good, where $M$ and $D$ are polynomially related to $K$.

We fix $D = K$ in what follows. We take $M = 2K^2 + K$ for reasons that will
become clear soon.  We focus our attention on some $D_>$ component $\xi$ of $\rho$ which is not $(D,M)$-good.  Let $X \subseteq Q$ be the set of states of $Q$
that occur in at least two different configurations along $\xi$. For each of the
states in $X$ we identify the configuration along $\xi$ where it occurs 
for the very first time and the configuration where it occurs for the last
time.   There are at most $2|X| (\leq 2K)$ such configurations and these
decompose the run $\xi$ into a concatenation of $2|X| + 1 (\leq 2K+1)$ runs
 $\xi = \xi_1.\xi_2 \ldots \xi_m$ where $\xi_i, 1 < i < m$ is a segment
connecting two such configurations.  Now, suppose one of these $\xi_i$'s has
length $K$ or more. Then it must contain a sub-run $(p,c) \moves{} (p,d)$ 
with at most $K$ moves, for some $p \in X$ (so, this is necessarily a 
$K$-band run).  If $d-c \geq 0$ (resp. $d-c < 0$), then we \emph{transfer} this
subrun from its current position to the first occurrence (resp. last occurrence)
of $p$ in the run. This still leaves a valid run $\xi'$ since $\xi$  begins 
with a $K$ as counter value and $|\xi_i| \leq K$. 
Moreover $\xi$ and $\xi'$ are equivalent upto Parikh-image.

If this $\xi'$ continues to be a $K_>$ run then we again examine if it is
$(K,M)$-good and otherwise, repeat the operation described above. As we
proceed, we continue to accumulate a increasing iterated $K$-band run at the first occurrence of each state and decreasing iterated $K$-band run at the 
last occurrence of each state. We also  ensure that in each 
iteration we only pick a segment that does NOT appear in these  $2|X|$
iterated $K$-bands.  Thus, these iterations will stop when either the segments
outside the iterated $K$-bands are all of length $< K$ and we cannot find
any suitable segment to transfer, or when the resulting run is no longer a
$K_>$ run. In the first case, we must necessarily have a $(K,2K^2+K)$-good run.
In the latter case, the resulting run decomposes as usual in $K_\leq$ and
$K_>$  components, and we have that every $K_>$ component is strictly shorter
than $\xi$\onlymainpaperpara{, allowing us to use an inductive argument to prove the following:}\onlyappendix{. We formalize the ideas sketched above now.

We begin by proving a Lemma which says that any $K_>$ run $\rho$ can 
be transformed into a Parikh-equivalent run $\xi$ which is either a
$K_>$ run which is $(K,2K^2+K)$-good or has a $K$-decomposition each of whose $K_>$ components are strictly shorter than $\rho$.

\begin{lemma}\label{lem:Parikh-main}
Let $\rho = (p,K) \moves{w} (q,K)$ be a $K_>$ run in $\A$. Then, 
there is a run $\xi = (p,K) \moves{w'} (q,K)$ in $\A$, with $|\xi| =
|\rho|$, $\Parikh(w) = \Parikh(w')$  such that one of the following holds:
\begin{enumerate}
\setlength{\itemsep}{1pt}
\setlength{\parskip}{0pt}
\item $\xi$ is not a $K_>$ run. Thus, all $K_>$-components in the $K$-decomposition of $\xi$ are strictly shorter than $\xi$ (and hence $\rho$).
\item $\xi$ is a $K_>$ run and $\xi = \sigma_1\rho_1\ldots \sigma_n\rho_n$ where $n \leq 2K+1$, 
each $\rho_i$ is an iterated $K$-band run and $|\sigma_i| \leq K$ for
each $i$. Thus, $\xi$ is  $(K,2K^2+K)$-good.
\end{enumerate}
\end{lemma}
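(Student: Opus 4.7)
The plan is to prove the lemma by an iterative transformation procedure. Starting from $\rho$ we maintain a current run of the same length and Parikh image, together with a small collection of \emph{anchor} positions. At each step, we identify a short pumpable cycle inside a ``free'' (non-anchored) portion of the current run and transfer it to a canonical anchor, where iterated $K$-band runs accumulate. Each step strictly shrinks the total length of the free portion, which ensures termination.

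In detail, let $X \subseteq Q$ be the set of states occurring at least twice along the current run; then $|X| \le K$. For each $p \in X$, designate the first and last occurrences of $p$ as anchors; these at most $2K$ anchors partition the current run into at most $2K+1$ free segments $\sigma_1, \ldots, \sigma_n$. At each first-occurrence (respectively last-occurrence) anchor we maintain an increasing (respectively decreasing) iterated $K$-band run based at that state, initially empty. If every $|\sigma_i| < K$, the current run has the form $\sigma_1 \rho_1 \cdots \sigma_n \rho_n$ with $n \le 2K+1$ and $|\sigma_i| < K$, yielding $(K, 2K^2+K)$-goodness and case~2. Otherwise, pick $\sigma_i$ with $|\sigma_i| \ge K$; by pigeonhole over $|X| \le K$ it contains a sub-run $\tau = (p,c) \moves{} (p,d)$ of length at most $K$, which is automatically a $K$-band run. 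If $d \ge c$, detach $\tau$ and append it to the increasing iterated band at the first-occurrence anchor of $p$; if $d < c$, prepend it to the decreasing iterated band at the last-occurrence anchor. The transfer preserves the length and Parikh image of the run.

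Validity of the transformed sequence as a run in $\A$ hinges on the $K_>$ property of the current run and the $K$-band property of $\tau$. At the insertion site, the existing iterated band ends (or begins) at an anchor whose counter value is at least $K$; since $\tau$ is a $K$-band run, shifting its base to that height keeps its intermediate counter values non-negative. The sign choice for $d-c$ preserves monotonicity of the iterated band, so the result is again a valid iterated $K$-band run. At the removal site inside $\sigma_i$, the counter values of all configurations occurring after the removed $\tau$ shift by $c - d \in [-K, K]$; the $K_>$ property provides the headroom (interior counter values exceed $K$) to keep them non-negative. Since each transfer strictly decreases the non-negative integer ``total length of free segments'', the procedure terminates. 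At termination either the current run has stopped being a $K_>$ run---so its $K$-decomposition is non-trivial and every $K_>$-component is a proper sub-run, hence strictly shorter than the whole run (case~1)---or every free segment has length below $K$ (case~2).

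The main obstacle is bookkeeping: one must verify that the anchor structure persists across iterations and that the bound $n \le 2K+1$ on free segments is maintained. The key invariant is that a transferred cycle $\tau$ lies strictly in the interior of some free segment $\sigma_i$ and therefore cannot contain any anchor, so the first and last occurrences of every state in $X$ are preserved by the transfer. Coupled with the counter-value analysis above, this ensures that the invariants needed for the next iteration remain intact after each step, so the procedure is well-defined and the claimed case distinction at termination indeed yields the statement of the lemma.
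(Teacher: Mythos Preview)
Your proposal follows exactly the paper's strategy: fix anchors at the first and last occurrences (in $\rho$) of each repeating state, maintain increasing/decreasing iterated $K$-band runs there, and iteratively transfer short cycles from overlong free segments to the appropriate anchor until either every free segment is short (case~2) or the run stops being a $K_>$ run (case~1). One minor caution: your invariant that ``the first and last occurrences of every state in $X$ are preserved by the transfer'' is not literally true---moving a cycle that contains some state $s$ to an earlier anchor can create a new earlier occurrence of $s$; what the argument actually needs (and what the paper uses) is that the anchors are fixed once and for all based on $\rho$, and that a state $r$ repeating inside $\sigma^i_j$ already repeats inside the original $\sigma_j$, which forces the first-$r$ anchor to precede $\sigma^i_j$ and the last-$r$ anchor to follow it.
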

\begin{proof}
Let $\rho = (p_0,c_0) \moves{a_1} (p_1,c_1) \ldots \moves{a_m} (p_m,c_m)$.
Let $X \subseteq Q$ be the set of controls states that repeat in the run $\rho$.
We identify the first and last occurrences of each state $q \in X$ along the
run $\rho$, and there are $n = 2.|X| \leq 2K$ such positions. We then decompose the
run $\rho$ as follows
\begin{align*}
(p_0,c_0) = &(q_0,e_0)\sigma_1(q_1,e_1)\sigma_2(q_2,e_2)\ldots \\
 & \ldots (q_{n-1},e_{n-1})\sigma_n (q_n,e_n)\sigma_{n+1} (q_{n+1},e_{n+1}) = (q,d)
\end{align*}
where configurations $(q_1,e_1), (q_2,e_2) \ldots (q_n,e_n)$ 
correspond to the first or last occurrence of states from $X$.  
We introduce, for reasons that will become clear in the following, 
an  empty iterated $K$-band run $\rho_i$ following each $(q_i,e_i)$ to get 
\begin{align*}
(q_0,e_0)\sigma_1 & (q_1,e_1)\rho_1(q_1,e_1)\sigma_2(q_2,e_2)\rho_2(q_2,e_2)\ldots \\
& \ldots  (q_{n-1},e_{n-1})\sigma_n (q_n,e_n)\rho_n (q_n,e_n) \sigma_{n+1} (q_{n+1},e_{n+1})
\end{align*}

Let $\xi_0$ be $\rho$ with the decomposition as written above. We shall now
construct a sequence of runs $\xi_i$, $i \geq 0$, from $(p,K)$ to $(q,K)$,
 maintaining the length and  the Parikh image as an invariant, that is, $\Parikh(\xi_i) = \Parikh(\xi_{i+1})$ and $|\xi_i| = |\rho|$. In each step, starting with 
a $K_>$ run $\xi_i$, we shall reduce the length of one of the $\sigma_i$ by
some $1 \leq l \leq K$ and increase the length of  one iterated $K$-band 
runs $\rho_j$ by $l$ to obtain a run $\xi_{i+1}$,
maintaining the invariant.  If this resulting run is not a $K_>$ run then
it has a $K$-decomposition in which every $K_>$ component is shorter than
$\xi_i$ (and hence $\rho$), thus satisfying item 1 of the Lemma
completing the proof.  Otherwise, after sufficient number of iterations
of this step, we will be left satisfying item 2 of the Lemma. 
Let  the $K_>$ run $\xi_i$ be given by
\begin{align*}
(q_0,e_0)&\sigma^i_1(q_1,e^i_1)\rho^i_1(q_1,f^i_1)\sigma^i_2(q_2,e^i_2)\rho^i_2(q_2,f^i_2)\ldots\\
&\ldots (q_{n-1},e^i_{n-1})\sigma^i_n (q_n,e^i_n)\rho^i_n (q_n,f^i_n) \sigma^i_{n+1} (q_{n+1},e^i_{n+1})
\end{align*}
where each $\rho^i_j$ is an iterated $K$-band run. If the length of 
$|\sigma^i_j| \leq K$ for each $j \leq n+1$  then, we have already fulfilled
item  2 of the Lemma, completing the proof. Otherwise, there is some $j$
such that $|\sigma^i_j| \geq K$. Therefore, we may decompose $\sigma^i_j$ 
as $$ (q_{j-1},f^i_{j-1})\chi_1(r,g)\chi_2(r,g')\chi_3 (q_j,e^i_j)$$
where $(r,g)\chi_2(r,g')$ is a run of length $\leq K$  and $r \in X$. 
There are two cases to consider, depending on whether $g' - g \geq 0$ or
$g'-g < 0$.

Let $(q_B,e^i_B)$ and $(q_E,f^i_E)$ be the first and last occurrences of $r$ 
in $\xi_i$. We will remove the segment of the run given by $\chi_2$ and
add it to $\rho^i_B$ if $g' \geq g$ and add it to $\rho^i_E$ otherwise. 
First of all, since the first and last occurrences of $r$ are distinct,
the $\rho^i_B$ will remain a increasing iterated $K$-band run while $\rho^i_E$
remains a decreasing iterated $K$-band run.
Clearly, such a transformation preserves the Parikh image of the word
read along the run.
It is easy to check that, since $\xi_i$ is a $K_>$ run and the length of 
$\chi_2$ is bounded by $K$, the resulting sequence $\xi_{i+1}$ 
(after adjusting the counter values) will be a valid run, since the counter
stays $\geq 0$. However, it may no longer be a $K_>$ run.  (This may happen, if 
$e^i_B < g$ and there is a prefix of $\chi_2$ whose net effect is to 
reduce the counter by more than $e^i_B - K$.)   However, in this
case we may set $\xi_{i+1}$ is a run from $(p,K)$ to $(q,K)$, with the
same length as $\xi_i$ and thus every $K_>$ component in its $K$-decomposition
is necessarily shorter than $\xi_i$. Thus, it satisfies 
item 1 of the Lemma.  

If $\xi_{i+1}$ remains a $K_>$ run then we observe that 
$|\sigma^i_1\ldots \sigma^i_n| > |\sigma^{i+1}_1\ldots \sigma^{i+1}_n|$
and this guarantees the termination of this construction with a $\xi$
satisfying one of the requirements of the Lemma.
\end{proof}

Starting with any run, we plan to apply Lemma \ref{lem:Parikh-main}, 
to the $K_>$ components, preserving Parikh-image, till we reach one
in which every $K_>$ component satisfies item 2 of Lemma \ref{lem:Parikh-main}.
To establish the correctness of such an argument we need the following 
Lemma.

\begin{lemma}\label{lem:Kdecuniq}
Let $\rho = (p,0)\moves{w}(q,0)$ be a run. 
If $\rho = \rho_1 (r,K) \rho_2$ then every $K_>$ component in the decomposition of $\rho$ is a $K_>$ component of $\rho_1$ or $\rho_2$ and vice versa.
In particular, if $\rho = \rho_1 (r,K) \rho_2 (r',K)\rho_3$ 
then, $K_>$ components of the $K$-decomposition of $\rho$ are exactly the
$K_>$ components of the runs $\rho_1$, $\rho_2$ or $\rho_3$.
\end{lemma}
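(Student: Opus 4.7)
The plan is to leverage the definition of a $K_>$ sub-run: every configuration in its interior has counter value strictly greater than $K$. Since the configuration $(r,K)$ has counter value exactly $K$, it cannot appear in the interior of any $K_>$ component of $\rho$. Consequently, when we view $\rho$ as the concatenation $\rho_1 \cdot \rho_2$ sharing the configuration $(r,K)$ at the boundary, each $K_>$ component of $\rho$ lies entirely in $\rho_1$ or entirely in $\rho_2$, with $(r,K)$ at most serving as one of the endpoints of such a component.

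Next, I would argue that the $K_>$ components of $\rho$ that fall within $\rho_1$ coincide exactly with the $K_>$ components of the $K$-decomposition of $\rho_1$ (and symmetrically for $\rho_2$). Note that the $K$-decomposition of $\rho_1$ is well-defined since both of its endpoints $(p,0)$ and $(r,K)$ have counter value at most $K$. Any $K_>$ component of $\rho$ that sits inside $\rho_1$ is automatically a maximal sub-run of the required form within $\rho_1$. For the converse, take any $K_>$ component of $\rho_1$; its maximality within $\rho$ reduces to the observation that an attempted extension past its right endpoint into $\rho_2$ would place $(r,K)$ in the interior of the enlarged sub-run and thereby violate the strict inequality $> K$ on interior configurations, while extension past its left endpoint is already ruled out by maximality within $\rho_1$.

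The ``in particular'' clause for the three-way decomposition follows by a single iteration of the two-way version: first split $\rho$ at $(r,K)$ to separate $\rho_1$ from the suffix $\rho_2 \cdot (r',K) \cdot \rho_3$, then apply the same statement to the suffix to further separate $\rho_2$ from $\rho_3$. There is no real obstacle in this proof; it is a direct verification from the definition of the $K$-decomposition, the entire substance being the single observation that a configuration with counter value exactly $K$ cannot appear in the interior of a $K_>$ component.
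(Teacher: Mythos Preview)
Your proposal is correct and follows the same approach as the paper, which dispatches the lemma in a single line: ``By the definition of $K_>$ run and $K$ decompositions.'' You have simply spelled out the details behind that one-liner, and your key observation---that a configuration with counter value exactly $K$ cannot lie in the interior of a $K_>$ component---is precisely the content of ``by definition.'' One tiny imprecision: in the iteration step you write ``apply the same statement'' to the suffix $\rho_2(r',K)\rho_3$, but that suffix starts at counter value $K$, not $0$, so the lemma as literally stated does not apply; however, your \emph{argument} for the two-way version never used the starting counter value $0$, only that the split point has counter exactly $K$, so the same reasoning goes through.
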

\begin{proof}
By the definition of $K_>$ run and $K$ decompositions.
\end{proof}

We can now combine Lemmas \ref{lem:Kdecuniq} and  \ref{lem:Parikh-main}
to obtain:
}

\onlymainpaper{
\begin{restatable}{lemma}{boundedrevNew}
\label{lem:boundrevNew}
Let $\rho = (p,0) \moves{w} (q,0)$ be any run in $\A$. Then, there
is a run $\rho' = (p,0) \moves{w'} (q,0)$ of $\A$ with 
$\Parikh(w) = \Parikh(w')$ such that every $K_>$ component $\xi$ in the 
canonical decomposition of $\rho'$ is $(K,2K^2+K)$-good.
\end{restatable}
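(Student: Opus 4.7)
The plan is to iterate Lemma~\ref{lem:Parikh-main} over the $K_>$ components of $\rho$ until all of them are $(K, 2K^2+K)$-good, using Lemma~\ref{lem:Kdecuniq} to argue that local modifications inside one $K_>$ component do not disturb the others, and using the length of $K_>$ components as the well-founded measure for termination.

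Concretely, I would proceed by strong induction on the multiset of lengths of the $K_>$ components in the canonical $K$-decomposition of a run from $(p,0)$ to $(q,0)$, ordered by the multiset extension of the usual order on $\N$. If every $K_>$ component of $\rho$ is already $(K, 2K^2+K)$-good, take $\rho' = \rho$ and we are done. Otherwise, pick any $K_>$ component $\xi$ of $\rho$ that is not $(K, 2K^2+K)$-good. By definition of a $K_>$ component, $\xi$ is a run of the form $(r,K) \Moves{} (r',K)$, so Lemma~\ref{lem:Parikh-main} applies and yields a Parikh-equivalent run $\xi'$ of the same length from $(r,K)$ to $(r',K)$ falling into one of two cases.

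Replacing $\xi$ by $\xi'$ inside $\rho$ produces a new run $\bar\rho = (p,0) \moves{} (q,0)$ with the same Parikh image as $\rho$. Because the replacement happens between two configurations whose counter value equals $K$, Lemma~\ref{lem:Kdecuniq} guarantees that the canonical $K$-decomposition of $\bar\rho$ consists exactly of the $K_>$ components of $\rho$ other than $\xi$, together with the $K_>$ components of $\xi'$ (and the corresponding $K_\le$ pieces). In case 2 of Lemma~\ref{lem:Parikh-main} the single new $K_>$ component coming from $\xi'$ is already $(K, 2K^2+K)$-good; in case 1, all $K_>$ components coming from $\xi'$ are strictly shorter than $\xi$. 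In either case, the multiset of $K_>$ component lengths of $\bar\rho$ is strictly smaller than that of $\rho$ in the multiset order (either $\xi$ is replaced by a shorter multiset of components, or $\xi$ is replaced by a good component which is itself either shorter or of the same length but now marked as processed; to make the measure strictly decrease in the "good" case as well, I would use the lexicographic pair (\emph{number of not-yet-good $K_>$ components}, \emph{multiset of lengths of not-yet-good components})).

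The induction hypothesis then produces the desired Parikh-equivalent $\rho'$ from $\bar\rho$. Composing the Parikh-equivalences $\rho \sim \bar\rho \sim \rho'$ finishes the proof. The main obstacle I foresee is justifying that the transformation of one $K_>$ component leaves the global $K$-decomposition intact, which is exactly the point of Lemma~\ref{lem:Kdecuniq} and the fact that $\xi'$ starts and ends at counter value $K$ with the same length as $\xi$; once that is in place, the termination argument via the multiset measure is routine.
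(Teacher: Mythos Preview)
Your proposal is correct and matches the paper's argument: apply Lemma~\ref{lem:Parikh-main} to a not-yet-good $K_>$ component, use Lemma~\ref{lem:Kdecuniq} to localize the change to that component, and induct on a well-founded measure. One small correction: the lexicographic pair you propose does not decrease in case~1 of Lemma~\ref{lem:Parikh-main}, since the \emph{number} of not-good components can grow when $\xi$ splits into several shorter pieces; simply drop the first coordinate and use the multiset of lengths of the not-yet-good $K_>$ components alone, which strictly decreases in both cases (case~2 removes $|\xi|$, case~1 replaces $|\xi|$ by finitely many strictly smaller values). The paper's version uses instead a double induction on the maximal length of a not-good $K_>$ component and the number of components attaining that maximum, and accordingly always picks a component of maximum length; the two measures are interchangeable here.
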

}
\onlyappendix{
\boundedrevNew*
\begin{proof}
The proof is by double induction, on the length of the longest $K_>$ 
component in $\rho$ that is not $(K,2K^2+K)$-good and  the 
number of components of this size that violate it. 
For the basis case, observe that any $K_>$ component whose length 
is bounded by $2K^2 + K$ is necessarily $(K,2K^2+K)$-good.

For the inductive case, we pick a $K_>$ component $\xi$ in $\rho$ of 
maximum size apply Lemma \ref{lem:Parikh-main} and replace $\xi$ by 
$\xi'$ to get $\rho'$. If $\xi'$ is $(K,2K^2+K)$-good 
we have reduced the number of components of the maximum size that are
not $(K,2K^2+2)$-good in $\rho'$.
Otherwise, $\xi'$ satisfies item 2 of Lemma \ref{lem:Parikh-main} and thus
by Lemma \ref{lem:Kdecuniq} the number of $K_>$ components in  the decomposition of $\rho'$ of the size of $\xi$ that are not $(K,2K^2+K)$-good is one less
than that in $\rho$. This completes the inductive argument.
\end{proof}

\begin{remark} 
We note that the above proof can be formulated slightly
differently. The reason we work with $D_>$-runs (which is also incorporated
in the definition of $(D,M)$-good runs) is that such runs of $\A$ 
from, say $(p,D)$ to $(q,D)$, can be simulated faithfully by $\A[D]$ from 
$(p,0)$ to $(q,0)$ while runs of $\A[D]$ from $(p,0)$ to $(q,0)$ can
be simulated by $\A$ along runs from $(p,D)$ to $(q,D)$.  Since, segments
of any run of $\A$ where the counter value lies below $D$ can be easily
simulated by an NFA, $D$-decompositions and the above inductive argument
come naturally.  

A slightly different argument is the following: 
If we begin with $(2D)_\geq$ run of $\A$ then, we can carry out the above
inductive argument without bothering about whether it remains a $(2D)_>$
run at each step, for it is guaranteed to remain a $D_\geq$ run. 
Then, instead of the automaton $A[D]$ we use a slight variant $B[D]$
which does the following: it simulates $A[D]$ and at every point where
it reverses from decrements to increments it \emph{verifies} that
the counter is at least $D$ by decrementing the counter $D$ times and
then incrementing the counter $D$ times. Then, we can relate $\A$ and
$\B[D]$ without a level shift as follows: for any $C \geq 2D$,  
there is a run $(p,C) \moves{w} (q,C)$ in $\A$  if and only if there is a
run $(p,C) \moves{w} (q,C)$ in $\B[D]$. 

We have preferred the argument where the automaton is simpler to define
 and it does not track reversals. The two proofs are of similar difficulty.
\end{remark}
}

\onlymainpaper{
Let $\Asp^{pq}$, $p,q \in Q$, be NFA  Parikh-equivalent to $\lang_{2K^2+K}$
$(\A[K]^{p,q})$ where $A[K]^{pq}$ is $\A[K]$ with $p$ as the only initial
and $q$ as the only final state.
As a consequence of Lemmas \ref{lem:boundrevNew},\ref{lem:LMGood} and \ref{lem:MR2-L}, we can obtain an NFA $\Asp$ such that
$\Parikh(\lang(\Asp)) = \Parikh(\lang(\A))$.  }
\onlyappendix{
Let $\A^K$  be the NFA simulating the simple OCA $\A$ when the 
counter values lie in the range $[0,K]$, by maintaining the counter values
in its local state.  This automaton is of size $O(K^2)$.  
Now, suppose for each pair of states $p,q \in Q$ we have
an NFA $\Asp^{pq}$ which is Parikh-equivalent to $\lang_{2K^2+K}(\A[K]^{p,q})$,
where $\A[K]^{p,q}$ is the automaton $\A[K]$ with $p$ as the only 
initial state and $q$ as the only accepting state.
We combine these automata (there are $K^2$ of them)  with $\A^K$  by taking
their disjoint union and adding the following additional (internal) transitions.
We add transitions from the states of the from $(p,K)$ of $\A^K$, for $p \in Q$
 to the initial state of state of all the $\Asp^{pq}$, $q \in Q$. Similarly,
from the accepting states of $\Asp^{pq}$ we add internal transitions
to the state $(q,K)$ in $\A^K$. Finally we deem $(s,0)$ to be the only
initial state and $(f,0)$ to be the only final state of the combined automaton. We call this NFA $\Asp$.

} 
\onlyappendix{The next lemma confirms that $\Asp$ is the automaton we are after.

\begin{lemma}\label{lem:Correct}
$\Parikh(\lang(\Asp)) = \Parikh(\lang(\A))$
\end{lemma}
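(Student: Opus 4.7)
The plan is to prove the two inclusions of the claimed Parikh equality separately, using the pieces already assembled in the section.

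\textbf{Inclusion $\Parikh(\lang(\A)) \subseteq \Parikh(\lang(\Asp))$.} Given $w \in \lang(\A)$, fix an accepting run $\rho = (s,0) \moves{w} (f,0)$ in $\A$. First I would apply Lemma~\ref{lem:boundrevNew} to obtain a Parikh-equivalent run $\rho' = (s,0) \moves{w'} (f,0)$ whose $K$-decomposition has the form $\rho' = \rho_0 \rho'_0 \rho_1 \rho'_1 \cdots \rho'_{n-1} \rho_n$, where each $\rho_i$ is a $K_\leq$ run and each $\rho'_i$ is a $K_>$ run that is $(K, 2K^2+K)$-good. The $K_\leq$ components can be simulated directly by $\A^K$ (which keeps the counter in its state), travelling between the relevant $(q,j)$ states. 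For each $K_>$ component $\rho'_i$, running from some $(p_i,K)$ to some $(q_i,K)$, Lemma~\ref{lem:LMGood} yields a run $(p_i,0) \moves{v_i} (q_i,0)$ in $\A[K]$ with at most $2K^2+K$ reversals reading the same word $v_i$. Hence $v_i \in \lang_{2K^2+K}(\A[K]^{p_i,q_i})$, so by the Parikh-equivalence of $\Asp^{p_i q_i}$ there is $u_i \in \lang(\Asp^{p_i q_i})$ with $\Parikh(u_i) = \Parikh(v_i)$. Concatenating the simulations of the $\rho_i$'s in $\A^K$ with the $u_i$'s produced by the $\Asp^{p_i q_i}$'s, using the glue transitions from $(p_i,K)$ into the initial state of $\Asp^{p_i q_i}$ and from its accepting state back to $(q_i,K)$, yields an accepting run of $\Asp$ whose label has the same Parikh image as $w'$, and hence as $w$.

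\textbf{Inclusion $\Parikh(\lang(\Asp)) \subseteq \Parikh(\lang(\A))$.} Conversely, any accepting run of $\Asp$ alternates between segments inside $\A^K$ and segments inside some $\Asp^{p q}$, and crossings between the two happen only at the glue transitions attached to states $(p,K)$. So the word read factors as $w = u_0 v_0 u_1 v_1 \cdots u_n$, where each $u_i$ is read along an $\A^K$-segment between states $(p_{i-1},K)$ and $(p_i, K)$ (with $p_{-1} = s$, $p_n = f$ and the boundary copies at counter value $0$), and each $v_i$ is read in $\Asp^{p_i q_i}$. Since $\A^K$ faithfully simulates $K_\leq$ runs of $\A$, each $u_i$ is induced by a real run of $\A$ from $(p_{i-1}, K)$ to $(p_i, K)$ (or from $(s,0)$ / to $(f,0)$ at the endpoints). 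For the $v_i$'s, Parikh-equivalence of $\Asp^{p_i q_i}$ with $\lang_{2K^2+K}(\A[K]^{p_i,q_i})$ gives words $v'_i$ with $\Parikh(v'_i) = \Parikh(v_i)$ read along runs $(p_i,0) \moves{v'_i} (q_i,0)$ of $\A[K]$; by Lemma~\ref{lem:MR2-L} these lift to runs $(p_i,K) \moves{v'_i} (q_i,K)$ of $\A$. Concatenating all these runs (matching $q_i$ with $p_{i+1}$) yields an accepting run of $\A$ reading $w' = u_0 v'_0 u_1 v'_1 \cdots u_n$, and $\Parikh(w') = \Parikh(w)$.

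\textbf{Main obstacle.} The two directions are essentially bookkeeping once one has Lemmas~\ref{lem:boundrevNew}, \ref{lem:LMGood}, and~\ref{lem:MR2-L} in hand; the only real care is in showing that the factorization of an accepting run of $\Asp$ into $\A^K$-segments and $\Asp^{p q}$-segments is consistent with the glue transitions, i.e., that one enters $\Asp^{p q}$ only at $(p,K)$ with matching source $p$ and leaves it at the corresponding $(q,K)$. This matches the construction of $\Asp$ by design, so no additional work is needed beyond making the matching explicit.
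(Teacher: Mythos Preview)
Your proposal is correct and follows essentially the same two-inclusion argument as the paper: Lemma~\ref{lem:boundrevNew} plus Lemma~\ref{lem:LMGood} for $\Parikh(\lang(\A)) \subseteq \Parikh(\lang(\Asp))$, and the factorization of an accepting $\Asp$-run together with Lemma~\ref{lem:MR2-L} for the converse. The only slip is in your indexing of the $\A^K$-segments in the second direction (each $u_i$ should run from $(q_{i-1},K)$ to $(p_i,K)$, not from $(p_{i-1},K)$), but this is cosmetic and does not affect the argument.
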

\begin{proof}
Let $\rho$ be an accepting run of $\A$ on a word $w$. We first apply Lemma
\ref{lem:boundrevNew} to construct a run $\rho'$ on a $w'$, with $\Parikh(w) =
\Parikh(w')$, in whose $K$-decomposition, every $K_>$
component is $(K,2K^2+K)$-good. Let
$\chi = (p,K) \moves{x} (q,K)$ be such a component.  
Then, by Lemma \ref{lem:LMGood}, there is a run $\chi':(p,0) \moves{x} (q,0)$ 
in $\A[K]$ with at most $2K^2+K$ reversals.
Thus, there is a $x' \in \lang(\Asp^{pq})$ with $\Parikh(x) = \Parikh(x')$. 
If $(s,0) \moves{x} (p,K)$ is a
$K_\leq$ component of $\rho'$ then $(s,0) \moves{x} (p,K)$ in $\A^K$.  
If $(p,K) \moves{x} (q,K)$ is a $K_\leq$ component of $\rho'$ then 
$(p,K) \moves{x} (q,K)$ in $\A^K$ and finally if $(p,K) \moves{x} (f,0)$ 
is a $K_\leq$ component of $\rho'$ then $(p,K) \moves{x} (f,0)$ in $\A^K$. 
Putting these together we get a run from $(s,0)$ to $(f,0)$ in $\Asp$ 
on a word Parikh-equivalent to $w'$ and hence $w$.

For the converse, any word in $\lang(\Asp)$ is of the form 
$x.u_1.v_1.u_2.v_2 \ldots u_nv_n.y$ where $(s,0)\moves{x}(p_1,K)$ in $\A^K$,
$(q_n,K)$ $\moves{y}(f,0)$ in $\A^K$, $u_i \in \lang(\Asp^{p_iq_i})$ and $(q_i,K) \moves{v_i} (p_{i+1},K)$ in $\A^K$, for each $1 \leq i \leq n$. 
By construction, there is a run $(s,0)\moves{x}(p_1,K)$ in $\A$ and
$(q_n,K)\moves{y}(f,0)$ in $\A$. Further for each $i$, there is a run $(q_i,K) \moves{v_i} (p_{i+1},K)$ in $\A$ as well.  
Since $u_i \in \lang(\Asp^{p_iq_i})$, by construction of $\Asp^{p_iq_i}$, there is a 
run $(p_i,0) \moves{u'_i} (q_i,0)$ (with a bound on the number of reversals, but
that is not important here) in $\A[K]$ with $\Parikh(u_i) = \Parikh(u'_i)$. 
But then, by the second part of Lemma \ref{lem:MR1},  there is a run $(p_i,K) \moves{u'_i} (q_i,K)$ in $\A$. Thus we can put together these different segments now to obtain an accepting run in $\A$ on the word $x.u'_1.v_1.u'_2.v_2\ldots u'_nv_n$.  Thus, $\Parikh(\lang(\Asp)) \subseteq \Parikh(\lang(\A))$, completing the proof of the Lemma.
\end{proof} } 
The number of states in the automaton $\Asp$ is $\sum_{p,q \in Q}|\Asp^{pq}| + K^2$. 
What remains to be settled is the size of the automata $\Asp^{pq}$. 
\onlyappendix{ That is,
computing an upper bound on the size of an NFA which is Parikh-equivalent to
 the language of words accepted by an OCA (in this case $\A[K]$) along runs
with at most $R$ (in this case $K^2 + K$) reversals.} This problem is solved
in the next subsection and the solution (see Lemma \ref{lem:rbpdas}) implies
that that the size of $\Asp^{pq}$ is bounded by $O(|\Sigma|K^{O(log K)})$. Thus we have

\onlymainpaper{
\begin{restatable}{theorem}{parikhunbounded}
\label{thm:parikh-unbounded}
There is an algorithm, which given an OCA with $K$ states and alphabet $\Sigma$,
  constructs a Parikh-equivalent NFA with $O(|\Sigma|.K^{O(\log K)})$ states.
\end{restatable}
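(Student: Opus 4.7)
The plan is to assemble the NFA $\Asp$ exactly as outlined in Section~\ref{sec:reversal-bounding}, leveraging the reversal-bounding reduction (Lemma~\ref{lem:boundrevNew}) together with a separate construction that produces NFAs Parikh-equivalent to the language of a \emph{reversal-bounded} OCA. Concretely, the overall construction has two independent ingredients. The first ingredient, already in place, says that for the given simple OCA $\A$, every accepting run is Parikh-equivalent to a run whose $K$-decomposition consists of $K_{\leq}$ components (which can be simulated by the NFA $\A^K$ of size $O(K^2)$ that stores the counter in the state) and of $K_{>}$ components that are $(K, 2K^2+K)$-good; by Lemma~\ref{lem:LMGood} each such $K_{>}$ component corresponds to a run of $\A[K]$ starting and ending at counter zero with at most $2K^2+K$ reversals. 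The second ingredient, which is the genuinely new content required for the theorem, is a construction that, given any simple OCA $\B$ with $n$ states and a bound $R$ on reversals, produces an NFA Parikh-equivalent to $\lang_R(\B)$ of size $O(|\AlB|\cdot n^{O(\log R)})$; this is the content of Lemma~\ref{lem:rbpdas} alluded to in the introduction.

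Given these two ingredients, I would build $\Asp$ as follows. Take $\A^K$ and, for each pair $p,q \in Q$, splice in (via $\eps$-transitions) between the states $(p,K)$ and $(q,K)$ a copy of the NFA $\Asp^{pq}$ supplied by the second ingredient applied to $\A[K]^{p,q}$ with $R = 2K^2+K$. The two directions of Parikh equivalence use the already-established correspondence $\A \leftrightarrow \A[K]$: the shift lemma (Lemma~\ref{lem:MR1} / Lemma~\ref{lem:MR2-L}) translates a run of $\A[K]$ from $(p,0)$ to $(q,0)$ back into a run of $\A$ from $(p,K)$ to $(q,K)$, which is precisely the shape of a $K_{>}$ component; and conversely Lemma~\ref{lem:LMGood} together with Lemma~\ref{lem:boundrevNew} shows that every accepting run of $\A$ has a Parikh-equivalent run whose $K_{>}$ components fit into the $\Asp^{pq}$. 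The size of $\Asp$ is $O(K^2) + \sum_{p,q} \sizeop{\Asp^{pq}}$, which with $R = O(K^2)$ becomes $O(|\AlB|\cdot K^{O(\log K^2)}) = O(|\AlB|\cdot K^{O(\log K)})$, matching the claimed bound.

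The main obstacle is the second ingredient, Lemma~\ref{lem:rbpdas}: turning a reversal bound into a logarithmic stack bound. The intended proof is a balanced-decomposition argument in the spirit of Strahler numbers. Given a run with at most $R$ reversals, I would locate a single ``pivot'' reversal whose removal splits the run into two subruns each with at most $\lceil R/2 \rceil$ reversals, then recurse. Unrolling this recursion yields a tree of depth $O(\log R)$ in which each node is a simple control-flow fragment of $\B$; an NFA is obtained by ``flattening'' this tree, where at each level one needs to remember only a constant amount of interface data (an endpoint state pair and comparably bounded counter information), giving a multiplicative blow-up of a polynomial factor per level and an overall $n^{O(\log R)}$ bound. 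The tricky points will be: (i) showing that the pivot reversal can always be chosen so that both halves are closed under the OCA semantics (counter non-negativity) up to Parikh equivalence, and (ii) ensuring that the interface information retained at each level of the recursion is indeed of size $\poly(n)$ and not dependent on~$R$; both should follow from working with reversal-bounded pushdown runs and standard triple-style constructions combined with the Strahler-style balanced splitting of~\cite{EsparzaLS14}.

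Putting the two ingredients together gives the algorithm: given $\A$, construct $\A^K$ and $\A[K]^{p,q}$ for every pair $p,q$; apply the reversal-bounded construction with $R = 2K^2+K$ to obtain each $\Asp^{pq}$; assemble $\Asp$ as described. Correctness follows from the chain of Parikh-preserving transformations, and the size analysis above yields the $O(|\AlB|\cdot K^{O(\log K)})$ bound of Theorem~\ref{thm:parikh-unbounded}.
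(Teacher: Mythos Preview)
Your proposal is correct and follows the paper's approach exactly: the assembly of $\Asp$ from $\A^K$ and the $\Asp^{pq}$, the chain of reductions via Lemmas~\ref{lem:boundrevNew}, \ref{lem:LMGood}, and~\ref{lem:MR2-L}, and the size calculation all match the paper.

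The one place where your sketch diverges slightly is the mechanism for Lemma~\ref{lem:rbpdas}. You propose to locate a \emph{balanced} pivot reversal so that each half has at most $\lceil R/2\rceil$ reversals; as you anticipate in your point~(i), an arbitrary reversal is not a valid split point for a well-matched run, and it is not clear a balanced well-matched split always exists. The paper avoids this difficulty: it splits a well-matched run deterministically at the pop matching the \emph{first} push, which makes both halves well-matched automatically (so your concern~(i) disappears), but gives an unbalanced split $R_1+R_2+1=R$. Logarithmic depth is then obtained by a choice of \emph{order}: always process the half with fewer reversals first, pushing the other half as an ``obligation'' $(p,q)\in Q\times Q$ onto a stack. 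This is the same Strahler-style idea you cite, realised as an explicit PDA whose stack alphabet is $Q\times Q$ (plus a $\Sigma$-component for bookkeeping), and the $\log R$ bound on stack height then yields an NFA of size $O(|\Sigma|\cdot |Q|^{O(\log R)})$ directly---so your concern~(ii) is also a non-issue, the interface being just a pair of states.
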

}
\onlyappendix{\parikhunbounded*}

\subsection{Parikh image under reversal bounds}
\label{app:rbpdas}

Here we show that, for an OCA $\A$, with $K$ states and whose alphabet is $\Sigma$, and any $R \in \Nat$, an NFA Parikh-equivalent to
$\lang_R(\A)$ can be constructed with size $O(|\Sigma|.K^{O(log K)})$. As a matter of
fact, this construction works even for pushdown systems and not just OCAs.

Let $\A$ be a simple OCA. It will be beneficial to think
of the counter as a stack with a single letter alphabet, with pushes for
increments and pops for decrements. Then, in any run from $(p,0)$ to $(q,0)$, 
we may relate an increment move  uniquely with its  \emph{corresponding} decrement move, the pop that removes the value inserted by this push.

Now, consider a \emph{ one reversal run } $\rho$ of $\A$
from say $(p,0)$ to $(q,0)$ involving two phases, a first phase $\rho_1$ 
with no decrement moves and a second phase $\rho_2$ with no increment moves. 
Such a run can be simulated, up to equivalent Parikh
image (i.e. upto reordering of the letters read along the run) by an NFA
as follows: simultaneously simulate the first phase ($\rho_1$) from the 
source and the second phase, in reverse order ($\rho_2^{rev}$), from the target.
 (The simulation of $\rho_2^{rev}$ uses the transitions in the \emph{opposite}
direction, moving from the target of the transition to the source of the transition).
The simulation  matches increment moves of $\rho_1$  against 
decrement moves in $\rho_2^{rev}$  (more precisely, matching the $i$th
increment $\rho_1$ with the $i$th decrement in $\rho_2^{rev}$)
while carrying out moves that do not alter the counters 
independently in both directions.  The simulation terminates (or potentially
terminates)  when a common state, signifying the boundary between $\rho_1$ and
$\rho_2$  is reached from both ends. 

The state space of such an NFA will need pairs of states from $Q$, to maintain
the current state reached by the forward and backward simulations. Since, only
one letter of the input can be read in each move, we will also need two moves
to simulate a matched increment and decrement and will need states of the form
$Q \times Q \times \Sigma$ for the intermediate state that lies between the two
moves.

Unfortunately, such a naive simulation would not work if the run had
more \emph{reversals}.  For then the $i$th increment in the simulation from the 
left need not necessarily correspond to the $i$th decrement in the reverse
simulation from the right.  In this case, the run $\rho$  can be written
as follows:
\begin{align*}
(p,0) \rho_1 (p_1,c) & \moves{\tau_1} (p'_1,c+1)  \rho_3 (p'_2,c+1)\\
&\moves{\tau_2}(p_2,c)\rho_4(q_1,c)\rho_5(q,0)
\end{align*}
  where, the increment $\tau_1$
corresponds to the decrement $\tau_2$ and all the increments in $\rho_1$ are exactly 
matched by decrements in $\rho_5$.  Notice that the increments
in the run $\rho_3$ are exactly matched by the decrements in that run and
similarly for $\rho_4$.  Thus, to simulate such a well-matched run from $p$ to $q$, 
after simulating $\rho_1$ and $\rho_5^{rev}$ simultaneously matching corresponding 
increments and decrements, and reaching the state $p_1$ on the left and $q_1$ on the right, 
we can choose to now simulate matching runs from $p_1$ to $p_2$
and from $p_2$ to $q_1$ (for some $p_2$). Our idea is to choose one of these
pairs and simulate it first, storing the other in a stack. We call such
pairs \emph{obligations}.   The simulation of the chosen obligation may produce further such obligations which are also stored in the stack.  The simulation of an obligation  succeeds when  the state reached from
the left and right simulations  are identical, and at this point 
we  we may choose to close this simulation and pick up the next obligation from the stack or continue simulating the current pair further. The entire simulation terminates when no obligations are left. Thus, to go from a single
reversal case to the general case, we have introduced a stack into which 
states of the NFA used for the single reversal case are stored. 
This can be formalized to show that the resulting PDA is Parikh-equivalent 
to $\A$.

\onlyappendix{
We also add that the order in which the obligations are verified is not important, however, the use of a stack to do this simplifies the arguments.
Observe that in this construction each obligation inserted into the stack
corresponds to a reversal in the run being simulated, as a matter of fact, it
will correspond to a reversal from decrements to increments.  Thus it is quite
easy to see that the stack height of the simulating run can be bounded by the
number of reversals in the original run.  }

But a little more analysis shows that there is a  simulating
run where the height of the stack is bounded by $log(R)$ where $R$ is the
number of reversals in the original run.  
Thus, to simulate all runs of $\A$ 
with at most $R$ reversals, we may bound the stack height of the PDA by $log(R)$.

  We show that if the stack height is $h$ then we can 
choose to simulate only runs with at most $2^{log(R) - h}$  reversals for the obligation on hand. Once we
show this, notice that when $h = log(R)$ we only need to simulate runs with $1$ reversal which can be done without any further obligations being generated. 
Thus, the overall height of the stack is bounded by $log(R)$.  Now, we explain
why the claim made above holds.  Clearly it holds initially when $h=0$. 
Inductively, whenever we  split an obligation, we choose
the obligation with fewer reversals to simulate first, pushing the other
obligation onto the stack.  Notice that this obligation with fewer reversals
is guaranteed to contain at most half the number of reversals of the 
current obligation (which is being split).
Thus, whenever the stack height increases by $1$, the number of reversals
to be explored in the current obligation falls at least by half as required. On the
other hand, an obligation $(p,q)$ that lies in the stack at position $h$ from the bottom, was placed there while executing (earlier) an obligation $(p',q')$
 that only required $2^{k-h+1}$ reversals. Since the obligation $(p,q)$ 
contributes only a part of the obligation $(p',q')$, its number of
reversals is also bounded by $2^{k-h+1}$.  And when $(p,q)$ is removed from
the stack for simulation, the stack height is $h-1$. Thus, the invariant
is maintained. \onlymainpaperpara{%
Once we have this bound on the stack, for a given
$R$, we can simulate it by an exponentially large NFA.  This yields the following lemma: }

\onlyappendix{
We now describe the formal construction of the automaton establish its correctness now.  We establish the result directly for a pushdown system. A pushdown
system is a tuple $\A = (Q,\Sigma,\Gamma,\bot,\delta,s,F)$ where $\Gamma$ is the
stack alphabet and $\bot$ is a special bottom of stack symbol.  The transitions
in $\delta$ are of the form $(q,a,push(x),q')$ denoting a move where the 
letter $x \in \Gamma$ is pushed on the stack while reading $a \in \SigmaE$,
or $(q,a,pop(x),q')$ denoting a move where the letter $x \in \Gamma$ is popped
from the stack while reading $a \in \SigmaE$ or $(q,a,\nop,q')$ where the
stack is ignored while readin $a \in \SigmaE$. A configuration of such
a pushdown is a pair $(q,\gamma)$ with $q \in Q$ and $\gamma = \Gamma^*\bot$.
The notion of move $(q,\gamma) \moves{\tau} (q',\gamma')$ using some 
$\tau \in \delta$ and $(q,\gamma) \moves{a} (q',\gamma')$ where $a \in \SigmaE$
are defined as expected and we omit the details here. 

Observe first of all that if $\Gamma$ is a singleton we have exactly a
simple OCA.  The push moves correspond to increments, pop moves to 
decrements and there are no \emph{emptiness tests} here as there are no
zero tests in simple OCAs, and the correspondence between configurations
is obvious. We remark that as far as PDAs go, the lack
of an emptiness test is not a real restriction as we can push  a special
symbol right at the beginning of the run and subsequently  simulate an 
emptiness test by popping and pushing this symbol back on to the stack. 
Thus, we lose no generality either.  Having said this, we use emptiness
test in the PDA we construct as it simplifies the presentation (while
omitting it from the one given as input w.l.o.g.)

Given a PDA $\A = (Q,\Sigma,\Gamma,\delta,\bot,s,F)$ we construct a new
PDA $\A_P$ which simulates runs of $\A$, upto Parikh-images, and
does so using runs where the stack height is bounded by $log(R)$ where
$R$ is the number of reversals in the run of $\A$ being simulated. 
$\A_P = (\Gamma_P \cup \{s_P,t_P\},\Sigma,\Gamma_P,\delta_P,s_P,t_P)$ is defined as follows.
The set of $\Gamma_P$ is given by   $(Q \times Q)  \cup  (Q \times Q \times
\Sigma)$. States of the form $(p,q)$ are charged with 
simulating a well matched run from $(p,\bot)$ to $(q,\bot)$. 
While carrying out a matched push from the left and a pop from the right, 
as we are only allowed read one letter of $\Sigma$ in a single move, we are
forced to have an intermediary state to allow for the reading of the letters
corresponding to both the transitions being simulated. 
The states of the form $(p,q,a)$, $a \in \Sigma$, are used for this purpose. 
The transition relation $\delta_P$ is described below: 
\begin{enumerate}
\item $(s_P,\eps,\nop,(s,t)) \in \delta_P$. Initialize the start and target states.
\item $((p,q),a,\nop,(p',q)) \in \delta_P$ whenever $(p,a,\nop,p') \in \delta$. Simulate an internal move from the left.
\item $((p,q),a,\nop,(p,q')) \in \delta_P$ whenever $(q',a,\nop,q) \in \delta$. Simulate an internal move from the right.
\item $((p,q),a,\nop,(p',q',b))\in \delta_P$ whenever $$(p,a,\push(x),p'), (q',b,\pop(x),q) \in \delta_P$$ for some $x \in \Gamma$. 
Simulate  a pair of matched moves, a push from the source and the corresponding
pop from the target, first part.
\item $((p,q,b),b,\nop,(p,q))\in \delta_P$ whenever $b \in \Sigma$. 
Second part of the move described in previous item.
\item $((p,q),\eps,\push((q',q)),(p,q')) \in \delta_P$ for every state $q' \in Q$. Guess a intermediary state where a pop to push reversal occurs.  Simulate first half first and push the second as an obligation on the stack.
\item $((p,q),\eps,\push((p,q')),(q',q)) \in \delta_P$ for every state $q' \in Q$. Guess a intermediary state where a pop to push reversal occurs.  Simulate
second half first and push the first as an obligation on the stack.
\item $((p,p),\eps,\pop((p',q')),(p',q')) \in \delta_P$. Current 
obligation completed, load next one from stack.
\item $((p,p),\eps,\etest,t_P)\in \delta_P$. All segments completed successfully, so accept.
\end{enumerate}

The following Lemma shows that every run of $\A_P$ simulates some run
of $\A$ upto Parikh-image. In what follows, we say that a run $\rho$ is a
$\gamma$-run for some $\gamma \in \Gamma^*\bot$ if $\gamma$ is a suffix of the
stack contents in every configuration in $\rho$.  
\begin{lemma}\label{lem:BRpdsToPDS}
Let $\beta \in \Gamma_P^*\bot$.  Let $((p,q),\beta) \moves{w} ((r,r),\beta)$
be a $\beta$-run in $\A_P$, for some $p,q$ and $r$ in $Q$.  Then, for every
$\gamma \in \Gamma^*\bot$ there is a run $(p,\gamma) \moves{w'} (q,\gamma)$ in
$\A$ such that $\Parikh(w') = \Parikh(w)$. Thus, if $w \in \lang(\A_P)$ then 
there is a $w'$ in $\lang(\A)$ with $\Parikh(w) = \Parikh(w')$.
  \end{lemma}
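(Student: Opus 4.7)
The plan is to prove this by induction on the number of transitions in the $\A_P$-run $((p,q),\beta) \moves{w} ((r,r),\beta)$. The base case (zero transitions) forces $(p,q) = (r,r)$, so $p = q = r$ and the empty $\A$-run at $(p,\gamma)$ already gives $w' = \eps$ with matching Parikh image.

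For the inductive step, I would case-split on the first transition $\tau$ of the $\A_P$-run. A key preliminary observation is that $\tau$ cannot be of type~8 (a pop would strictly shorten the stack below $\beta$, violating the $\beta$-suffix invariant on the next configuration) nor of type~9 (which would terminate the run at $t_P$ and not at $(r,r)$). So $\tau$ must be of type~2, 3, 4, 6, or 7 (type~5 only appears immediately after type~4, since the intermediary state $(p',q',b)$ has a unique outgoing transition). In the internal cases (types~2 and 3), apply the inductive hypothesis to the suffix $\beta$-run and then prepend (resp.\ append) the corresponding internal transition of $\A$. In the matched push-pop case (type~4 followed immediately by type~5), the pair simulates an $\A$-push $(p,a,\push(x),p')$ together with an $\A$-pop $(q',b,\pop(x),q)$; apply the inductive hypothesis to the remaining $\beta$-run with stack parameter $x\gamma$ to obtain $(p',x\gamma) \moves{w'''} (q',x\gamma)$, and sandwich this between the push and the pop in $\A$, producing
\[
(p,\gamma)\moves{a}(p',x\gamma)\moves{w'''}(q',x\gamma)\moves{b}(q,\gamma).
\]

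For a type~6 transition $((p,q),\eps,\push((q',q)),(p,q'))$ (type~7 being symmetric), identify the first moment at which the $\A_P$-run returns to stack exactly $\beta$; such a moment exists because the run ends at $((r,r),\beta)$. The transition taken at this moment must be the unique pop (type~8) removing the newly pushed symbol $(q',q)$, so the $\A_P$-state immediately before the pop is some diagonal $(r_1,r_1)$ and immediately after is $(q',q)$. The run then decomposes as a $(q',q)\beta$-run from $((p,q'),(q',q)\beta)$ to $((r_1,r_1),(q',q)\beta)$, followed by the pop, followed by a $\beta$-run from $((q',q),\beta)$ to $((r,r),\beta)$. Both pieces are strictly shorter, so applying the inductive hypothesis (using stack $\gamma$ for the second and any $\gamma'$ for the first, say $\gamma$ as well) yields $\A$-runs $(p,\gamma)\moves{w_1'}(q',\gamma)$ and $(q',\gamma)\moves{w_2'}(q,\gamma)$; concatenation gives the required run.

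The main obstacle is the type~4 case, where the parametrisation of the lemma over an arbitrary $\gamma\in\Gamma^*\bot$ is essential: we must feed the inductive hypothesis with the \emph{augmented} stack $x\gamma$, not $\gamma$ itself, so that the intermediate $\A$-run sits at the correct stack level between the matched push and pop. The second part of the lemma statement (if $w\in\lang(\A_P)$ then some $w'\in\lang(\A)$ has $\Parikh(w)=\Parikh(w')$) follows by applying the main statement to the maximal $\bot$-run of $\A_P$ sandwiched between the initial $\eps$-transition from $s_P$ to $(s,t)$ and the terminating type~9 transition (which forces $r=s=t$ once we identify the accepting state $t$ with some $f\in F$).
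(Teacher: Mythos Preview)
Your approach is correct and somewhat cleaner than the paper's. The paper instead proves a strengthened auxiliary claim: for \emph{any} $\beta$-run $((p,q),\beta)\moves{v}((p',q'),\beta)$ (arbitrary, not necessarily diagonal, target), there exist two $\A$-runs $(p,\gamma\bot)\moves{v_1}(p',\alpha\gamma\bot)$ and $(q',\alpha\gamma\bot)\moves{v_2}(q,\gamma\bot)$ with $\Parikh(v)=\Parikh(v_1 v_2)$, and argues by a double induction on the maximal stack height and the run length, splitting the run at the first push and its matching pop. Your single induction on run length, with a direct case split on the first transition, avoids the need to strengthen the invariant: the universal quantification over $\gamma$ in the lemma itself is precisely what lets you handle the type~4/5 pair by feeding $x\gamma$ to the inductive hypothesis. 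The paper's route buys a more informative intermediate statement (the explicit two-piece decomposition with arbitrary endpoints), but for the lemma as stated your argument is shorter and just as rigorous.

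One small slip in your final paragraph: the type~9 transition does \emph{not} force $r=s=t$. The accepting run of $\A_P$ has the shape $(s_P,\bot)\moves{\eps}((s,t),\bot)\moves{w}((r,r),\bot)\moves{\eps}(t_P,\bot)$ for some $r\in Q$, and $r$ need not equal $s$ or $t$. The deduction of the second statement is simply: apply the main claim with $\beta=\bot$ and $\gamma=\bot$ to the middle $\bot$-run, obtaining an $\A$-run $(s,\bot)\moves{w'}(t,\bot)$, which is accepting. The value of $r$ plays no role.
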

\begin{proof}

For purpose of the proof, we will prove the following  claim.

\begin{claim}\label{no-stack}
If there is a run of the form $((p,q),\beta) \moves{v} ((p',q'),\\ \beta)$ in $\A_{P}$, then for every $\gamma \in \Gamma^*$,  there are runs of the form $(p,\gamma\bot) \moves{v_1} (p',\alpha\gamma\bot)$ and $(q',\alpha\gamma\bot) \moves{v_2} (q,\gamma\bot)$, such that $\Parikh(v) = \Parikh(v_1.v_2)$.
\end{claim}
\begin{proof}

We will now prove this by inducting on stack height reached and on length of the run. Suppose the stack was never used (always remained $\beta$), then the proof is easy to see.

Let us assume that stack was indeed used, then the run $((p,q),\beta) \moves{v} ((p',q'),\beta)$ can be split as 

\begin{multline*}
((p,q),\beta) \moves{v_1} ((p_1,q_1),\beta) \moves{} ((p_2,q_2),(t_1,t_2)\beta) \moves{v_2}\\
((r_1,r_1),(t_1,t_2) \beta) \moves{}  ((t_1,t_2),\beta) \moves{v_3} ((p',q'),\beta)
\end{multline*}

We have two cases to consider, either $q_1 = t_2$ or $p_1 = t_1$. We will consider the case where $q_1 = t_2$, the other case is analogous. In this case, clearly $p_2 = p_1$ and $t_1 = q_2$. Hence the run is of the form 

\begin{multline*}
((p,q),\beta) \moves{v_1} ((p_1,q_1),\beta) \moves{} ((p_1,q_2),(q_2,q_1)\beta) \moves{v_2}\\
((r_1,r_1),(q_2,q_1) \beta) \moves{} ((q_2,q_1),\beta) \moves{v_3} ((p',q'),\beta)
\end{multline*}

\noindent
\paragraph{}Now consider the sub-run of the form 
$$((p,q),\beta) \moves{v_1} ((p_1,q_1),\beta)$$

clearly such a run is shorter and hence by induction we have a corresponding runs of the form $(p,\gamma\bot) \moves{v'_1} (p_1,\alpha''\gamma\bot)$ and $(q_1,\alpha''\gamma\bot) \moves{v''_1} (q,\gamma\bot)$, for some $\alpha'' \in \Gamma^*$ and such that $\Parikh(v_1) = \Parikh(v'_1.v''_1)$.

\noindent
\paragraph{}Consider the sub-run of the form 
$$((p_1,q_2),(q_2,q_1)\beta) \moves{v_2}((r_1,r_1),(q_2,q_1) \beta)$$

clearly stack height of such a run is shorter by $1$. Hence by induction, we have a corresponding runs of the form, $(p_1,\alpha\gamma\bot) \moves{v'_2} (r_1,\alpha'\alpha\gamma\bot)$ and $(r_1,\alpha'\alpha\gamma\bot) \moves{v''_2} (q_1,\alpha\gamma\bot)$ for some $\alpha' \in \Gamma^*$, such that $\Parikh(v_2) = \Parikh(v'_2.v''_2)$.

\noindent
\paragraph{} consider the sub-run of the form 
$$((q_2,q_1),\beta) \moves{v_3} ((p',q'),\beta)$$

clearly such a run is shorter in length, hence by induction, we have corresponding runs $(q_2,\gamma\bot) \moves{v'_3} (p',\alpha\gamma\bot)$ and $(q',\alpha\gamma\bot) \moves{v''_3}(q_1,\gamma\bot)$, for some $\alpha \in \Gamma^*$ and such that $\Parikh(v_3) = \Parikh(v'_3.v''_3)$.

Now combining these sub-runs, we get the required run.
\end{proof}

It is easy to see that the proof of Lemma follows directly once this claim is in place.

\end{proof}

In the other direction, we show that every run of $\A$ is simulated
upto Parikh-image by $\A_P$ with a stack height that is logarithmic
in the number of reversals.  Let 
$$(p,\alpha) = (p_0,\alpha_0) \moves{\tau_1} (p_2,\alpha_1) \moves{\tau_2} \ldots  \moves{\tau_n} = (p_n,\alpha_n) = (q,\alpha)$$ 
be a run in $\A$.  A reversal in such a run is a sequence of the from 
\begin{align*}
(p_i,\alpha_i) & \moves{a_{i+1},pop(x_i)} (p_{i+1},\alpha_{i+1})~\moves{\tau_{i+2}\ldots\tau_{j-1}}~(p_{j-1},\alpha_{j-1}) \\
& \moves{a_j,push(x_j)} (p_j,\alpha_j) 
\end{align*}
where none of the transitions
$\tau_{i+2} \ldots \tau_{j-1}$ are push or pop moves.  The next lemma shows
how $\A_R$ simulates runs of $\A$ and provides bounds on stack size in terms of
the number of reversals of the run in $\A$. 

\begin{lemma}\label{lem:PDSToBRpds}
Let $(p,\alpha) \moves{w} (q,\alpha)$ be a $\alpha$-run of $\A$ with $R$ reversals with $\alpha \in \Gamma^*.\bot$.
Then, for any $\gamma \in \Gamma_P^*\bot$, there is a $\gamma$-run 
$((p,q),\gamma) \moves{w'} ((r,r),\gamma)$ with $\Parikh(w) = \Parikh(w')$.
Further for any configuration along this run the height of the stack is 
no more than $|\gamma| + log(R + 1)$.
\end{lemma}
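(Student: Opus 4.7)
The plan is to proceed by strong induction on the lexicographic pair $(R, |\rho|)$, where $\rho$ is the given $\alpha$-run, with a case analysis that mirrors the families of transitions of $\A_P$. The base case $|\rho|=0$ forces $p=q$, and the empty $\gamma$-run from $((p,p),\gamma)$ to itself witnesses the lemma with $r:=p$. For the inductive step the strategy is to prefer the cheapest (stack-preserving) transition of $\A_P$ whenever possible, and to resort to a split (type 6 or 7) only when forced.

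In the stack-preserving cases, $\A_P$'s stack stays at height $|\gamma|$: if the first move of $\rho$ is internal, consume it with a type 2 transition and apply induction to the residual $\alpha$-run (shorter, still with at most $R$ reversals); symmetrically, if the last move is internal, use type 3. If the first move is a push and the last move is its matching pop, so that $\rho = \tau \cdot W \cdot \tau'$ with $W$ an $(x\alpha)$-run having the same reversal count and length $|\rho|-2$, consume the outer pair with one type 4 move followed by one type 5 move and apply induction to $W$. These reductions handle every $\rho$ unless the first push and the last pop of $\rho$ are not matched, in which case we factorize
\begin{equation*}
\rho \;=\; U_1 \cdot J_1 \cdot U_2 \cdot J_2 \cdots J_{B-1} \cdot U_B, \qquad B \geq 2,
\end{equation*}
with each $U_i$ a non-trivial excursion strictly above level $|\alpha|$ (a push, a well-matched interior, the matching pop) and each $J_i$ an internal-only segment at level $|\alpha|$.

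In this unmatched case, every configuration at level $|\alpha|$ strictly between $U_i$ and $U_{i+1}$ is a candidate split point and produces a factorization $\rho = \rho_L \cdot \rho_R$ into two well-matched $\alpha$-runs. The key observation is that $U_i$ ends with a pop and $U_{i+1}$ starts with a push, with only internal moves between them, so every such boundary is a pop-to-push reversal of $\rho$; splitting there counts this reversal in neither $\rho_L$ nor $\rho_R$, yielding $R_L + R_R = R - 1$. By monotonicity of $R_L$ as the split index ranges over the $B-1$ boundaries, together with pigeonhole, one can choose the split so that $\min(R_L, R_R) \leq \lfloor (R-1)/2 \rfloor$. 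Assume without loss of generality that $R_L$ is the smaller (the other case is symmetric via type 7): use a type 6 transition to push $(q',q)$ and move to obligation $(p, q')$, apply the induction hypothesis to $\rho_L$ on the extended stack $(q',q)\gamma$, use a type 8 transition to pop after reaching $((r_L, r_L), (q',q)\gamma)$, and apply the induction hypothesis to $\rho_R$ on stack $\gamma$.

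Parikh equivalence is immediate because the split and pop transitions of types 6--8 carry $\eps$ and every other simulated move of $\A_P$ reads exactly the same letter as the corresponding move of $\A$; the only reordering is swapping $\rho_L$ and $\rho_R$ when the right half is processed first, which preserves the Parikh image. For the height bound, combining the two induction applications gives a maximum stack of $(|\gamma|+1) + \log(R_L+1) \leq |\gamma|+\log(R+1)$ during the smaller half (using $R_L \leq (R-1)/2$) and at most $|\gamma|+\log(R_R+1) \leq |\gamma|+\log R$ during the larger half (using $R_R \leq R-1$), both within the claimed $|\gamma|+\log(R+1)$. The main obstacle is the combinatorial step establishing that a balanced split with $\min(R_L, R_R) \leq (R-1)/2$ always exists; once this is granted, the loss of exactly one reversal at every split both powers the logarithmic stack bound and guarantees the strict decrease of $R$ needed for the outer induction to terminate.
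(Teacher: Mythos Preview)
Your proof is correct and follows essentially the same double induction on $(R,|\rho|)$ as the paper, with the same case split (strip internal moves via types~2/3; peel a matched outer push/pop via types~4/5; otherwise split and recurse via types~6/7/8). One small simplification: the ``main obstacle'' you flag---choosing a boundary with $\min(R_L,R_R)\le\lfloor(R-1)/2\rfloor$---is in fact trivial and needs no monotonicity or pigeonhole argument, because at \emph{every} return-to-level-$|\alpha|$ split you already have $R_L+R_R=R-1$, which immediately gives $\min(R_L,R_R)\le(R-1)/2$. The paper simply takes the first such return (the pop matching the initial push), which is why it never discusses choosing among boundaries.
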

\begin{proof}
The  proceeds by a double induction, first on the number of reversals and
then on the length of the run.

For the base case, suppose $R = 0$.  
If the length of the run is $0$ then the result follows trivially.
Otherwise, let the $\alpha$-run $\rho$, $\alpha \in \Gamma^*\bot$
 be of the form:
$$(p,\alpha) = (p_0,\alpha_0) \moves{\tau_1} (p_1,\alpha_1) \moves{\tau_2} \ldots  \moves{\tau_n}  (p_n,\alpha_n) = (q,\alpha)$$
If $\tau_1$ is an internal move $(p_0,a_1,\nop,p_1)$ then $((p_0,p_n),a_1,\nop$ $,(p_1,p_n))$
is a transition $\delta_P$ (of type 2). Thus 
$$((p_0,p_n),\gamma) \moves{a_1} ((p_1,p_n),\gamma)$$
 is a valid move in $\A_P$. Let $w = a_1w_1$. Then, 
 by induction hypothesis, there is a $\gamma$-run 
$$((p_1,p_n),\gamma) \moves{w'_1} ((r,r),\gamma))$$ with $\Parikh(w'_1) = \Parikh(w_1)$, whose
stack height is bounded by $|\gamma|$. Putting these two together we get a
$\gamma$-run  $$((p_0,p_n),\gamma) \moves{a_1.w'_1} ((r,r),\gamma)$$ with
$\Parikh(w) = \Parikh(a_1.w'_1)$ whose stack height is bounded by $|\gamma|$ as required.
 
If $\tau_n$ is an internal transition $(p_{n-1},a_n,\nop,p_n)$ then 
$$((p_0,p_n),a_n,\nop,(p_0,p_{n-1})) \in \delta_P$$ 
is a transition of of type 3.
Thus, $((p_0,p_n),\gamma) \moves{a_n} ((p_0,p_{n-1}),\gamma)$ is a move in
$\A_P$. Further, by the induction hypothesis, there is a word $w_2$ with 
$w = w_2.a_n$ and a $\gamma$-run $((p_0,p_{n-1}),\gamma) \moves{w'_2}
((r,r),\gamma)$ with $\Parikh(w_2) = \Parikh(w'_2)$. Then, since
$\Parikh(a_n.w'_2) = \Parikh(w_2.a_n)$, we can put these two together to get
the requisite run. Once again the stack height is bounded by $|\gamma|$.

Since the  given run is a $\alpha$-run, the only other case left to be considered is when $\tau_1$ is a push move and $\tau_n$ is
a pop move. Thus, let $\tau_1 = (p_0,a_1,push(x_1),p_1)$ and $\tau_n =
(p_{n-1},a_n,pop(x_n),p_n)$. We claim that $x_1 = x_n$ and as a matter fact
the value $x_1$ pushed by $\tau_1$ remains in the stack all the way till end
of this run and is popped by $\tau_n$.
If the $x_1$ was popped earlier in the run than the last step, then the stack height would have necessarily reached $|\alpha|$ at this pop, and therefore there will necessarily be a subsequent push of $x_n$. But this contradicts the 
fact that $R = 0$.  Thus, we have the following moves in $\A_P$.
\begin{align*}
((p_0,p_n),\gamma) &\moves{((p_0,p_n),a_1,\nop,(p_1,p_{n-1},a_n))} ((p_0,p_{n-1},a_n),\gamma) \\
&\moves{((p_1,p_{n-1},a_n),a_n,\nop,(p_1,p_{n-1})} ((p_1,p_{n-1}),\gamma)
\end{align*}
Let $w = a_1 w_3 a_n$. Then applying the induction hypothesis we get a
$\gamma$-run $((p_1,p_{n-1}),\gamma) \moves{w'_3} ((r,r),\gamma)$  where
the
stack height is never more than $|\gamma|$. Combining these two gives
us a $\gamma$-run $((p_0,p_n),\gamma)\moves{a_1a_nw'_3}$ $ ((r,r),\gamma)$ where
the stack height is never more than $|\gamma|$. Observing that $\Parikh(a_1a_nw'_3) = \Parikh(a_1w_3a_n)$ gives us the desired result.

Now we examine runs with $R \geq 1$. And once again we proceed by induction on
the length $l$ of runs with $R$ reversals. For $R \geq 1$ there are no runs
of length $l = 0$ and so the basis holds trivially. As usual, let 
$$(p,\alpha) = (p_0,\alpha_0) \moves{\tau_1} (p_1,\alpha_1) \moves{\tau_2} \ldots  \moves{\tau_n} = (p_n,\alpha_n) = (q,\alpha)$$
be an $\alpha$-run with $R$ reversals. If either $\tau_1$ or $\tau_n$ is an internal move then the proof can proceed by induction on $l$ exactly along the same lines
as above and the details are omitted. Otherwise,  since this is a $\alpha$-run, $\tau_1$ is a push move and $\tau_n$ is a pop
move.  Let $\tau_1 = (p_0,a_1,push(x_1),p_1)$ and $\tau_n = (p_{n-1},a_n,pop(x_n),p_n)$.  Now we have two possibilities. 

\paragraph{Case 1:} The value $x_1$ pushed in $\tau_1$ is popped only by $\tau_n$. This is again easy, as we can apply the same argument as in the case $R = 0$
to conclude that,  
\begin{align*}
((p_0,p_n),\gamma) &\moves{((p_0,p_n),a_1,\nop,(p_1,p_{n-1},a_n))} ((p_0,p_{n-1},a_n),\gamma) \\
&\moves{((p_1,p_{n-1},a_n),a_n,\nop,(p_1,p_{n-1})} ((p_1,p_{n-1}),\gamma)
\end{align*}
Again, with $w = a_1w_3a_2$, and applying the induction hypothesis to the shorter
run $(p_1,\alpha_1) \moves{w_3} (p_{n-1},\alpha_{n-1})$ with exactly $R$
reversals, we obtain a $\gamma$-run  $$((p_1,p_{n-1}),\gamma) \moves{w'_3}
((r,r),\gamma)$$ in which the height of the stack is bounded by 
$|\gamma| + log(R + 1)$. Combining these gives us the $\gamma$-run with 
stack height bounded by $|\gamma| + log(R+1)$, $((p_0,p_n),\gamma) \moves{a_1a_nw'_3} ((r,r),\gamma)$ as required. 

\paragraph{Case 2:} The value $x_1$ pushed in $\tau_1$ is popped by some $\tau_j$ with $j < n$. Then we break the run into two $\alpha$-runs, 
$\rho_1 = (p_0,\alpha_0) \moves{a_1\ldots a_j} (p_j,\alpha_j)$  and
$\rho_2 = (p_j,\alpha_j) \moves{a_{j+1}\ldots a_n} (p_n,\alpha_n)$. Note that
$\alpha = \alpha_0 = \alpha_j = \alpha_n$. Let $a_1 \ldots a_j = w_1$
and $a_{j+1} \ldots a_n = w_2$. Let the number of reversals
of $\rho_1$ and $\rho_2$ be $R_1$ and $R_2$ respectively. First of all, 
we observe that $R_1 + R_2 + 1 = R$. Thus $R_1,R_2 < R$ and  further
either $R_1 \leq R/2$ or $R_2 \leq R/2$. 

Suppose $R_1 \leq R/2$. Then,  by the induction hypothesis, there is an
$((p_j,p_n)\gamma)$-run $$\rho'_1~=~ (((p_0,p_j),(p_j,p_n).\gamma) \moves{w'_1} ((r',r'),(p_j,p_n).\gamma))$$ with $\Parikh(w_1) = \Parikh(w'_1)$ and   whose stack height is bounded by 
\begin{align*}
|(p_j,p_n).\gamma| + log(R_1 + 1) &~=~ |\gamma| + 1 + log(R_1 + 1) \\
&~\leq~ |\gamma| + 1 + log(R + 1) - 1 \\
&~=~ |\gamma| + log(R + 1) 
\end{align*}
Similarly, by the induction hypothesis, there is an $\gamma$-run  $\rho'_2 = 
((p_j,p_n),\gamma) \moves{w'_2} ((r,r),\gamma)$ whose number of reversals
is bounded by $|\gamma| + log(R_2 + 1) ~\leq~ |\gamma| + log(R + 1)$ and for which
$\Parikh(w'_2) = \Parikh(w_2)$.

We have everything in place now. We construct the desired run by first using
a transition of type 6, following by $\rho'_1$, followed by a transition of
type 8, followed by a simulation of $\rho'_2$ to obtain the following:
\begin{align*}
((p_0,p_n),\gamma) &\moves{((p_0,p_n),\eps,push((p_j,p_n)),(p_0,p_j))} ((p_0,p_j),(p_j,p_n).\gamma) \\
&\moves{w'_1} ((r',r'),(p_j,p_n)\gamma)\\
&\moves{((r',r'),\eps,pop((p_j,p_n)),(p_j,p_n))} ((p_j,p_n),\gamma)\\
&\moves{w'_2} ((r,r),\gamma)
\end{align*}
This runs satisfies all the desired properties. The case where $R_2 \leq R/2$ 
is handled similarly using moves of type 7 instead of type 6 and using the fact
the $\Parikh(w'_2.w'_1) = \Parikh(w'_1.w'_2)$. This completes the proof of
the Lemma. 
\end{proof}

As we did for OCAs we let $\lang_R(\A)$ refer to the language of words 
accepted by $\A$ along runs with atmost $R$ reversals. Now, for a given
$R$, we can simulate runs of $\A_P$ where stack height is bounded by $log(R)$, using an NFA by keeping the stack as part of the state.  The size of
such an NFA is $O(|Q_P| |\Gamma_P|^{O(log(R))}) = O(|\Sigma||Q|^{O(log(R))})$. 
Let $\A_R$ be such an NFA.  Then by Lemma \ref{lem:BRpdsToPDS}, we have  
$\Parikh(\lang(\A_R)) \subseteq \Parikh(\lang(\A))$ and by 
Lemma \ref{lem:PDSToBRpds}  we also have 
$\Parikh(\lang_R(\A)) \subseteq \Parikh(\lang(\A_R))$.
By keeping track of the reversal count in the state, we may
construct an $\A'$ with state space size $O(R.|Q|)$ such that 
that $\lang(\A') = \lang_R(\A') = \lang_R(\A)$.  Thus, we have
}
\onlymainpaper{
\begin{restatable}{lemma}{rbpdas}
\label{lem:rbpdas}
There is a procedure that takes a simple OCA $\A$ with $K$ states and
whose alphabet is $\Sigma$, and a number $R \in \Nat$ and returns 
an NFA Parikh-equivalent to $\lang_R(\A)$ of size $O(|\Sigma|.(RK)^{O(log(R))})$.
\end{restatable}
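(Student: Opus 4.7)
The plan is to follow the strategy sketched in the paragraphs preceding the lemma: build a pushdown automaton $\A_P$ that simulates runs of $\A$ (regarded as a PDA with unary stack alphabet) up to Parikh equivalence, and argue that $R$-reversal runs of $\A$ can be simulated by $\A_P$ along runs whose stack never exceeds $\lceil\log(R+1)\rceil$. Then an NFA is obtained by keeping the bounded stack of $\A_P$ as part of the state.

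First, I would define $\A_P$ formally. Its control states (and stack alphabet) consist mainly of pairs $(p,q) \in Q \times Q$, representing an \emph{obligation} to simulate a well-matched run of $\A$ from $p$ to $q$, together with auxiliary states $(p,q,a) \in Q\times Q\times \Sigma$ needed because one move of $\A_P$ can read only one letter. The transitions of $\A_P$ fall into four families: (i) internal moves of $\A$ applied on the left or right side of the current obligation $(p,q)$; (ii) a matched push/pop pair, simulated in two steps through an auxiliary state $(p',q',a)$; (iii) splitting $(p,q)$ at a ``guessed'' pop-to-push reversal point $q'$, which replaces $(p,q)$ by one obligation continued on top and another pushed onto $\A_P$'s own stack; (iv) discharging an obligation when the left and right states coincide, which pops the next pending obligation from the stack or accepts if the stack is empty. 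All letters of $\Sigma$ are read during the internal and matched transitions, hence Parikh images are preserved.

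Next I would establish the two directions of correctness. The ``soundness'' direction (every accepting run of $\A_P$ yields a Parikh-equivalent word of $\lang(\A)$) is proved by induction on stack height and run length, using a strengthened invariant: a $\beta$-run of $\A_P$ from $((p,q),\beta)$ to $((r,r),\beta)$ corresponds, for any $\gamma$ on $\A$'s stack, to a pair of runs $(p,\gamma)\moves{v_1}(p',\alpha\gamma)$ and $(q',\alpha\gamma)\moves{v_2}(q,\gamma)$ whose Parikh images concatenate to that of $\A_P$'s run; this composition recovers a full well-matched run of $\A$. The ``completeness'' direction shows that any run of $\A$ from $(p,\alpha)$ to $(q,\alpha)$ with $R$ reversals is simulated by a $\gamma$-run of $\A_P$ whose stack height is at most $|\gamma|+\lceil\log(R+1)\rceil$, again by double induction on $R$ and on run length. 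The only non-trivial case is when the first push of the $\A$-run is matched by some intermediate pop, splitting the run into subruns with $R_1$ and $R_2$ reversals with $R_1+R_2+1=R$; the key move is to use the splitting transition so as to \emph{simulate the side with fewer reversals first}, pushing the other obligation. Since $\min(R_1,R_2)\le (R-1)/2$, the added stack symbol is paid for by halving the reversal budget, which gives the $\log(R+1)$ bound inductively.

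Finally, bounding the stack of $\A_P$ by $h = \lceil\log(R+1)\rceil$ allows me to store the entire stack content inside the state of an equivalent NFA $\A_R$. The number of such states is $O(|Q_P|\cdot |\Gamma_P|^h) = O(|\Sigma|\cdot K^{O(\log R)})$ since $|Q_P|, |\Gamma_P| = O(|\Sigma| K^2)$. By tracking a reversal counter in the state (blowing up by a factor of $R$), I get an NFA whose language equals $\lang_R(\A)$ itself, of size $O(|\Sigma|\cdot(RK)^{O(\log R)})$, Parikh-equivalent to $\lang_R(\A)$ as required. The main obstacle is the careful ``simulate the smaller half first'' argument in the completeness proof, since a naive order of simulation would only bound the stack by $R$ and not by $\log R$; all other parts are bookkeeping around the PDA construction.
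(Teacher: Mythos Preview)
Your proposal is correct and follows essentially the same approach as the paper: the same PDA $\A_P$ with obligation pairs $(p,q)$ and auxiliary states $(p,q,a)$, the same soundness induction, the same completeness double induction with the key ``simulate the smaller half first'' choice yielding the $\log(R+1)$ stack bound, and the same final conversion to an NFA by storing the bounded stack in the state and tracking reversals. One small slip: in the last paragraph you write that the resulting NFA's language \emph{equals} $\lang_R(\A)$; it is only Parikh-equivalent (the two-sided simulation reorders letters), which is exactly what the lemma asks for and what you correctly state immediately after.
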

}
\onlyappendix{
\rbpdas*
}

\subsection{Completeness result}
\newcommand{\sCount}{a}
\newcommand{\sConn}{c}

\newcommand{\ocaedge}[2][ZZZ]{\ifthenelse{\equal{#1}{ZZZ}}{#2}{#2| #1}}
\begin{figure}
\scalebox{.5}{
\centering
\begin{tikzpicture}[every state/.style={minimum size=10pt}]
\clip (-2,-2.5) rectangle (15,4);
\node[state, initial, initial text=]	(q1) at (0,0) {$q_1$};
\node[state]				(q2) at (4,0) {$q_2$};
\node[state]				(q3) at (8,0) {$q_3$};
\node                                   (dots) at (11,0) {$\cdots$};
\node[state,accepting by arrow]		(qn) at (13,0) {$q_{n}$};
\path[->] (q1) edge [above]      node {$\ocaedge{\sConn_{1,2}}$} (q2)
          (q2) edge [above]      node {$\ocaedge{\sConn_{2,3}}$} (q3)
          (q1) edge [bend right, above]      node {$\ocaedge{\sConn_{1,3}}$} (q3)
          (q1) edge [bend right, above]      node {$\ocaedge{\sConn_{1,n}}$} (qn)
          (q2) edge [bend right, above]      node {$\ocaedge{\sConn_{2,n}}$} (qn)
          (q3) edge [bend right, above]      node {$\ocaedge{\sConn_{3,n}}$} (qn)
          (q1) edge [loop above] node {$\ocaedge[1]{\sCount_{1,1}}$} (q1)
          (q1) edge [loop above, out=130, in=50, looseness=15] node {$\ocaedge[2]{\sCount_{1,2}}$} (q1)
          (q1) edge [loop above, out=135, in=45, looseness=28] node {$\ocaedge[n]{\sCount_{1,n}}$} node[below]  {$\vdots$} (q1)
          (q2) edge [loop above] node {$\ocaedge[\!-\!1]{\sCount_{2,1}}$} (q2)
          (q2) edge [loop above, out=130, in=50, looseness=15] node {$\ocaedge[\!-\!2]{\sCount_{2,2}}$} (q2)
          (q2) edge [loop above, out=135, in=45, looseness=28] node {$\ocaedge[\!-\!n]{\sCount_{2,n}}$} node[below] {$\vdots$} (q2)
          (q3) edge [loop above] node {$\ocaedge[1]{\sCount_{3,1}}$} (q3)
          (q3) edge [loop above, out=130, in=50, looseness=15] node {$\ocaedge[2]{\sCount_{3,2}}$} (q3)
          (q3) edge [loop above, out=135, in=45, looseness=28] node {$\ocaedge[n]{\sCount_{3,n}}$} node[below] {$\vdots$} (q3)
          (qn) edge [loop above] node {$\ocaedge[(-1)^{n+1}]{\sCount_{n,1}}$} (qn)
          (qn) edge [loop above, out=145, in=35, looseness=15] node {$\ocaedge[(-1)^{n+1} 2]{\sCount_{n,2}}$} (qn)
          (qn) edge [loop above, out=145, in=35, looseness=28] node {$\ocaedge[(-1)^{n+1} n]{\sCount_{n,n}}$} node[below] {$\vdots$} (qn);
\end{tikzpicture}}
\caption{The one-counter automaton $\HardAutomaton{n}$}
\label{completeness:figure}
\end{figure}

In this subsection, we present a simple sequence of OCA that is complete with
respect to small Parikh-equivalent NFAs. This means, if the OCA in this
sequence have polynomial-size Parikh-equivalent NFAs, then all OCA have
polynomial-size Parikh-equivalent NFAs.

It will be convenient to slightly extend the definition of OCA.  An
\emph{extended OCA} is defined as an OCA, but in its transition $(p,a,s,q)$,
the entry $s$ can assume any integer (in addition to $\zerotest$). Of course
here, the number of states is not an appropriate measure of size. Therefore,
the \emph{size} of a transition $t=(p,a,s,q)$ of $\A$ is $|t|=\max(0, |s|-1)$
if $s\in\Z$ and $0$ if $s=\zerotest$. If $\A$ has $n$ states, then we define
its \emph{size} is $|\A| = n+\sum_{t\in \tran} |t|$. Given an extended OCA of
size $n$, one can clearly construct an equivalent OCA with $n$ states.
Furthermore, if one consideres an (ordinary) OCA as an extended OCA, then
its size is the number of states.

The complete sequence $(\HardAutomaton{n})_{n\ge 1}$ of automata consists of extended
OCA and is illustrated in Figure~\ref{completeness:figure}. The automaton $\HardAutomaton{n}$
has $n$ states, $q_1,\ldots,q_n$. On each $q_i$ and for each $k\in[1,n]$, there
is a loop reading $\sCount_{i,k}$ and adding $(-1)^{i+1}\cdot k$ to the counter.
Moreover, for $i,j\in[1,n]$ with $i<j$, there is a transition reading
$\sConn_{i,j}$ that does not use the counter. For each $k\in[1,n]$, $\HardAutomaton{n}$ has 
$n$ transitions of size $k-1$. Since it has $n$ states, this results in a size of
$n+\sum_{k=1}^n (k-1)=\frac{1}{2}n(n+1)$.

The result of this section is the following.
\begin{restatable}{theorem}{completenessResult}
\label{completeness:result}
There are polynomials $p$ and $q$ such that the following holds.  
\begin{enumerate} 
\item If for each $n$, there is a Parikh-equivalent NFA for $\HardAutomaton{n}$
with $h(n)$ states, then for every OCA of size $n$, there is a
Parikh-equivalent NFA with at most $q(h(p(n)))$ states. 
\item If there is an algorithm that computes a Parikh-equivalent NFA for
$\HardAutomaton{n}$ in time $O(h(n))$,  then one can compute a
Parikh-equivalent NFA for arbitrary OCA in time $O(q(h(p(n))))$.
\end{enumerate} 
\end{restatable}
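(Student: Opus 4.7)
The plan is to exhibit a polynomial-size reduction that, given an arbitrary extended OCA~$\A$ of size~$n$ and any Parikh-equivalent NFA~$\B_m$ for~$\HardAutomaton{m}$ with $m=p(n)$, produces a Parikh-equivalent NFA for~$\A$ of size~$q(|\B_m|)$. The reduction proceeds in three stages and is carried out in polynomial time, so that part~(2) of the theorem follows once part~(1) is established.

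\textbf{Normalization.}
First expand every $\pm k$-transition of $\A$ into a chain of unit transitions, giving an ordinary OCA of size $O(|\A|)$. Then apply the reversal-bounding machinery (Lemmas~\ref{lem:boundrevNew} and~\ref{lem:LMGood}) to produce a Parikh-equivalent simple OCA~$\A^{\star}$ of size $\poly(n)$ whose accepting runs decompose into at most $R=\poly(n)$ strictly alternating incrementing and decrementing phases. Encode the phase index in the control state so that in phase~$k$ all non-$\eps$ transitions of~$\A^{\star}$ have counter effect of sign $(-1)^{k+1}$, matching the parity pattern of loops in~$\HardAutomaton{m}$.

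\textbf{Embedding.}
Choose $m=p(n)$ polynomial in~$n$ so that~$\HardAutomaton{m}$ has at least $R+1$ states and enough loop letters at each state to accommodate the transitions of~$\A^{\star}$ in the corresponding phase. Over a fresh alphabet $\Sigma'=\{[t]:t\text{ a transition of }\A^{\star}\}$, build a polynomial-size NFA~$M$ accepting exactly the walks of~$\A^{\star}$ from $\qinit$ to $\qfinal$. Define a homomorphism $g:\Sigma'^{*}\to\Sigma_H^{*}$ that assigns each~$[t]$ in phase~$k$ to a distinct loop letter~$a_{k,j_t}$ of~$\HardAutomaton{m}$ (and phase-change transitions to the corresponding~$c_{k,k+1}$), with the bookkeeping chosen so that the induced counter effect of~$g([t])$ matches that of~$t$ exactly; and let $h_\Sigma:\Sigma'^{*}\to\Sigma^{*}$ map~$[t]$ to the $\A$-label of~$t$. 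One then verifies
\[
  \Parikh(\langop{\A})
  \;=\;
  h_\Sigma\!\bigl(\Parikh(\langop{M}\cap g^{-1}(\langop{\HardAutomaton{m}}))\bigr),
\]
which hinges on two facts: (i) counter-validity of~$\HardAutomaton{m}$ within a phase depends only on the Parikh image of the loop letters used (since all have the same sign), and (ii) the phase-change letters occur exactly once each and pin down the global phase order.

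\textbf{Transfer and main obstacle.}
Substituting~$\B_m$ for~$\HardAutomaton{m}$ in the construction yields a product NFA $M\times\B_m$ synchronized through~$g$, of size $O(|M|\cdot|\B_m|)$; projecting by~$h_\Sigma$ gives the required NFA of size $q(h(p(n)))$ for a polynomial~$q$. The main technical difficulty is proving
\[
  \Parikh(\langop{M\times\B_m})
  \;=\;
  \Parikh(\langop{M\times\HardAutomaton{m}}),
\]
since~$\B_m$ is only Parikh-equivalent to~$\HardAutomaton{m}$ and may accept a quite different set of specific words. The argument will show this is harmless because, by properties (i) and (ii) above, the counter-validity constraint imposed on~$\Sigma_H$ is in fact Parikh-invariant when restricted to words with the correct phase-order structure enforced by~$M$: any accepting run of $M\times\B_m$ can be re-paired within each phase to yield an accepting run of $M\times\HardAutomaton{m}$ with the same Parikh image, and conversely. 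This Parikh-invariance of the transfer, not the construction of $M$ and $g$ themselves, is the heart of the proof.
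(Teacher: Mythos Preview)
Your transfer step has a genuine gap in the backward direction, and this is not a detail that your ``re-pairing'' sketch can fix. Suppose $w\in\langop{M}$ with $g(w)\in\langop{\HardAutomaton{m}}$. From Parikh-equivalence you only get some $u\in\langop{\B_m}$ with $\Parikh(u)=\Parikh(g(w))$; to land in $\langop{M\times\B_m}$ you would need a $w'\in\langop{M}$ with $g(w')=u$ (or at least $g(w')\in\langop{\B_m}$). But $M$ encodes the \emph{state structure} of~$\A^{\star}$: within a phase a walk must respect source/target states of successive transitions, so $\langop{M}$ is not closed under within-phase permutations. An adversarial $\B_m$ may accept, for a given Parikh image, only orderings that correspond to no walk of~$\A^{\star}$, making $\langop{M\times\B_m}$ strictly smaller in Parikh image than $\langop{M\times\HardAutomaton{m}}$. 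Intersection (equivalently, inverse homomorphism followed by product) simply does not commute with Parikh equivalence, and nothing in your setup forces $\B_m$ to respect the order constraints of~$M$. (There is also a smaller inconsistency in your embedding: after expanding to unit transitions, all effects lie in $\{-1,0,+1\}$, so you cannot map distinct transitions in the same phase to \emph{distinct} letters $a_{k,j_t}$ while also matching counter effects; and effect-$0$ transitions have no loop letter at all.)

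The paper avoids this by going through a \emph{substitution} rather than a product. It first strengthens the normalization far beyond reversal-bounding: Lemma~\ref{completeness:rba} turns a reversal-bounded OCA into an \emph{acyclic} reversal-bounded automaton plus a regular substitution, and Lemma~\ref{completeness:loop-counting} further ensures that only self-loops touch the counter. Such a loop-counting RBA has exactly the shape of~$\HardAutomaton{m}$, so Lemma~\ref{completeness:hardaut} gives a regular substitution $\sigma$ with $\Parikh(\sigma(\langop{\HardAutomaton{m}}))=\Parikh(\langop{\A})$. Since substitutions commute with Parikh images, replacing $\HardAutomaton{m}$ by any Parikh-equivalent NFA $\B_m$ yields $\Parikh(\sigma(\langop{\B_m}))=\Parikh(\sigma(\langop{\HardAutomaton{m}}))$ for free; applying $\sigma$ to $\B_m$ gives the desired NFA (Lemma~\ref{completeness:substitution}). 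The extra normalization work (acyclicity, loop-counting) is precisely what makes the direction of the encoding go from $\HardAutomaton{m}$ outward via a substitution, rather than into $\HardAutomaton{m}$ via an inverse homomorphism as you attempt.
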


Explicitly, we only prove the first statement and keep the analogous statements
in terms of time complexity implicit. Our proof consists of three steps
(Lemmas~\ref{completeness:rba}, \ref{completeness:loop-counting}, and~%
\ref{completeness:hardaut}). Intuitively, each of them is one algorithmic
step one has to carry out when constructing a Parikh-equivalent NFA for a given
OCA. 

\newcommand{\rba}{RBA}
\newcommand{\rbas}{RBAs}
\newcommand{\anrba}{an}
\newcommand{\Anrba}{An}
\newcommand{\prerun}{walk}
\newcommand{\aprerun}{a}
\newcommand{\Aprerun}{A}
\newcommand{\preruns}{walks}
\newcommand{\posTran}{\tran_+}
\newcommand{\negTran}{\tran_-}

For the first step in our proof, we need some terminology.
Let $\A=(Q,\AlB,\tran,q_0,F)$ be an extended  OCA. Recall that a word
$(p_1,a_1,s_1,p'_1)\cdots (p_n,a_n,s_n,p'_n)$ over $\delta$ is called \aprerun\
\emph{\prerun} if $p'_i=p_{i+1}$ for every $i\in [1,n-1]$.  The \prerun\ $u$ is
called a \emph{$p_1$-cycle} (or just \emph{cycle}) if $p'_n=p_1$. If, in
addition, $i\ne j$ implies $p_i\ne p_j$, then $u$ is called \emph{simple}.  A
cycle as above is called \emph{proper} if there is some $i\in[1,n]$ with
$p_i\ne p_1$. We say that $\A$ is \emph{acyclic} if it has no proper cycles,
i.e.  if all cycles consist solely of loops. Equivalently, an OCA is acyclic if
there is a partial order $\le$ on the set of states such that if a transition
leads from a state $p$ to $q$, then $p\le q$.

A transition $(p,a,s,q)$ is called \emph{positive} (\emph{negative}) if $s>0$
($s<0$).  We say that \aprerun\ \prerun\ \emph{contains $k$ reversals} if it
has a scattered subword of length $k+1$ in which positive and negative
transitions alternate.  An (extended) OCA is called
\emph{($r$-)reversal-bounded} if none of its \preruns\ contains $r+1$
reversals.  Observe that an acyclic (extended) OCA is reversal-bounded if and
only if on each state, there are either no positive transitions or no negative
transitions. We call such automata \emph{\rba}
(\emph{reversal-bounded acyclic automata}).

Recall that we have seen in section~\ref{sec:reversal-bounding} that constructing
Parikh-equivalent NFAs essentially reduces to the case of reversal-bounded
simple OCA.  The first construction here takes a reversal-bounded automaton and
decomposes it into an RBA and a regular substitution. This means, if we can
find Parikh-equivalent OCA for RBAs, we can do so for arbitrary OCA: Given an
NFA for the RBA, we can replace every letter by the finite automaton specified
by the substitution.  Here, the \emph{size} of the substitution $\sigma$ is the
maximal number of states of an automaton specified for a language $\sigma(a)$,
$a\in\Sigma$.
\begin{restatable}{lemma}{completenessRBA}
\label{completeness:rba}
Given an $r$-reversal-bounded simple OCA $\A$ of size $n$, one can construct
\anrba\ \rba\ $\B$ of size $6n^5(r+1)$ and a regular substitution $\sigma$ of size
$n(n+1)$ such that $\parikh{\sigma(\langop{\B})}=\parikh{\langop{\A}}$.
\end{restatable}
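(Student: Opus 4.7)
The plan is to construct $\B$ in two conceptually separate stages: first, unfold the bounded-reversal structure of $\A$ into $r+1$ phase-pure copies of the state space, and second, absorb the remaining cyclic behavior within each phase into the regular substitution $\sigma$ so that $\B$ is left with only self-loops beyond its DAG backbone.

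For the first stage I would build an intermediate OCA on the state set $Q \times \{1, \ldots, r+1\}$. In odd-indexed phases, only transitions of $\A$ with non-negative counter effect are retained (together with all internal transitions); in even-indexed phases, only those with non-positive effect are retained. Phase changes are $\epsilon$-moves $(q, \phi) \to (q, \phi+1)$ that leave both the state and the counter unchanged. Because $\A$ is $r$-reversal-bounded, every accepting run of $\A$ corresponds to an accepting run of this intermediate OCA using at most $r+1$ phases, and conversely. Each state is now phase-pure, satisfying the second RBA condition; this costs a factor of $r+1$ in the state count.

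For the second stage I would rely on the following observation: within a single phase, any walk from $p$ to $q$ is, up to Parikh equivalence on the input alphabet, a simple path from $p$ to $q$ with cycles inserted at its intermediate states; moreover, in a monotone (positive or negative) phase, one may place all inserted cycle-iterations at the top of the phase without violating counter non-negativity. Exploiting this, I would introduce for each state $q$ of $\A$ a small set of fresh letters $\alpha_{q,1}, \ldots, \alpha_{q,t}$ indexing a polynomially-sized family of generator cycles at $q$, where a Carath\'eodory-style argument bounds $t$ by a polynomial in $n$. The substitution maps $\alpha_{q,i}$ to a singleton word in $\Sigma^*$ whose Parikh image matches that of the corresponding generator. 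The automaton $\B$ then consists of a simple-path skeleton through each phase, augmented at each node $(q, \phi)$ by $t$ self-loops whose integer counter effects equal those of the generator cycles; self-loops are permitted in an RBA, so $\B$ is acyclic in the required sense, and correctness of $\parikh{\sigma(\langop{\B})} = \parikh{\langop{\A}}$ follows from the walk-decomposition observation by verifying both inclusions.

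The main obstacle is realizing the simple-path skeleton of each phase without incurring an exponential blow-up in $\B$'s state count, since the number of simple paths between two fixed states can be exponential. The fix is to encode in each state of $\B$ only a limited amount of structural information—the current state of $\A$, a pair of landmark states delimiting the current SCC traversal, and the phase index—and to enforce that $\B$'s transitions respect a partial order on states determined by SCC order and position within an SCC. A careful accounting of these components, together with the integer-valued self-loops and constant-size control states for phase switching, yields the stated size $6n^5(r+1)$ for $\B$ and $n(n+1)$ for $\sigma$.
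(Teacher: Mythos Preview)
Your first stage (phase-unfolding into $r+1$ copies) matches the paper, but the second stage has two genuine gaps.

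First, the Carath\'eodory argument for bounding the number of generator cycles by a polynomial $t$ does not go through here. Carath\'eodory gives a polynomial bound only when the ambient dimension is fixed, but the Parikh images live in $\N^{|\Sigma|}$ and $|\Sigma|$ is part of the input (this lemma is used precisely in the unbounded-alphabet setting). Over an unbounded alphabet there can be exponentially many Parikh-distinct simple cycles at a single state, so mapping each $\alpha_{q,i}$ to a \emph{singleton} word cannot yield a polynomial-size substitution. The paper avoids this by indexing the fresh letters only by the \emph{counter effect} $z\in[-n,n]$ of a simple cycle, i.e.\ one letter $a_{p,z}$ per pair $(p,z)$, and then setting $\sigma(a_{p,z})$ to the \emph{regular language} of all inputs of reversal-free $p$-cycles with effect $z$; an NFA of size $n(n+1)$ suffices for each such language, which is where the bound $n(n+1)$ on $\sigma$ comes from.

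Second, your plan for the acyclic backbone---encoding simple paths via SCC landmarks---is not worked out, and it is unclear how any constant number of landmark states would prevent exponentially many simple paths inside a single SCC from forcing exponentially many backbone states. The paper's device is different and is really the heart of the proof: it shows (via a marking/deletion argument) that every phase $v$ is Parikh-equivalent to a phase obtained from some \emph{short} skeleton $u$ with $|u|\le B=2n^2+n$ by flatly inserting simple cycles at the states of $u$. The backbone of $\B$ then simply carries a position counter in $[0,B]$ alongside the state and the phase index, so its state set is $Q\times[0,B]\times[0,r]$; the self-loops with weights in $[-n,n]$ account for the remaining size, and the calculation gives $6n^5(r+1)$. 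This bounded-length-skeleton claim is the missing idea in your proposal.
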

We prove this lemma by showing that runs of reversal-bounded simple OCA
can be `flattened': Each run can be turned into one with
Parikh-equivalent input that consists of a skeleton of polynomial length in
which simple cycles are inserted flat, i.e. without nesting them. The RBA $\B$
simulates the skeleton and has self-loop which are replaced by $\sigma$ with a
regular language that simulates simple cycles.

In the next construction of our proof (Lemma~\ref{completeness:loop-counting}),
we employ a combinatorial fact.  A \emph{Dyck sequence} is a sequence
$x_1,\ldots,x_n\in\Z$ such that $\sum_{i=1}^k x_i\ge 0$ for every $k\in[1,n]$.
We call a subset $I\subseteq[1,n]$ \emph{removable} if removing all $x_i$,
$i\in I$, from the sequence yields again a Dyck sequence and $\sum_{i\in I}
x_i=\sum_{i=1}^n x_i$.  We call the sequence \emph{$r$-reversal-bounded }if
there are at most $r$ alternations between positive numbers and negative
numbers.
\begin{restatable}{lemma}{completenessDyck}
\label{completeness:dyck}
Let $N\ge 0$ and $x_1,\ldots,x_n$ be an $r$-reversal-bounded Dyck sequence with
$x_i\in[-N,N]$ for each $i\in[1,n]$ such that $\sum_{i=1} x_i\in[0,N]$.  Then
it has a removable subset $I\subseteq [1,n]$ with $|I|\le 2r(2N^2+N)$.
\end{restatable}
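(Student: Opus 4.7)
Plan. I will split the proof into two regimes based on $n$ relative to the target bound $L := 2r(2N^2+N)$. In the trivial regime $n \le L$, I take $I = [1,n]$: the complement is the empty sequence (a Dyck sequence summing to $0$), the removed sum equals $h$, and $|I| = n \le L$, so $I$ is removable within the bound. The substantive regime is $n > L$, where I construct a short non-trivial $I$ via pigeonhole on the block structure of the walk.

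In the substantive regime, the sequence decomposes into at most $r+1$ maximal sign-monotone blocks (by $r$-reversal-boundedness), so some block has more than $L/(r+1) \ge N(2N+1)$ elements (using $2r/(r+1) \ge 1$ for $r \ge 1$). All elements of this block share one sign and have absolute values in $\{1,\ldots,N\}$, so pigeonhole produces a single value $v \in \{1,\ldots,N\}$ appearing at least $2N+1$ times. These copies give an algebraic dial on the removed sum: removing $k$ of them contributes $\pm kv$ to the removed sum for any $k \in [0,2N+1]$. I choose $k = \lfloor h/v \rfloor \le N$, bringing the partial removed sum within $v-1 < N$ of the target $h$. To close the residual gap $h - kv \in [0,N-1]$ exactly, I invoke a subset-sum density argument: within any block, consecutive partial sums differ by at most $N$, so every integer target in the block's range is within $N$ of some partial sum; by adjusting $k$ by $\pm 1$ and drawing a short correction from an adjacent block on the other side of a reversal, I realize the exact gap with $O(N)$ additional elements.

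The total contribution to $|I|$ per reversal is $O(N^2)$, summing to at most $2r(2N^2+N)$ across all $r$ reversals. The main obstacle is preserving the Dyck condition after simultaneously removing elements from multiple blocks: each removal uniformly shifts later prefix sums downward by the amount removed in the block, and the cumulative downshifts must not exceed the minimum subsequent boundary height. I plan to address this with a right-to-left greedy construction: process the rightmost block first, ensure its boundary heights remain non-negative after removal, then propagate the constraint leftward. The total downshift budget is $h \le N$, distributed over at most $r$ boundaries each with non-negative original height (by the original Dyck property), and the generous $N(2N+1)$-element budget per reversal leaves ample slack to realize the necessary adjustments while matching subset-sum targets exactly. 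The degenerate case $h = 0$ is trivial via $I = \emptyset$, and the case $r = 0$ reduces to it since a strictly sign-monotone walk with $\sum x_i \le N$ has $n \le N$ steps, which can be absorbed by taking $I = [1,n]$ under the trivial-regime bound.
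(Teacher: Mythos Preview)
Your plan has a real gap at the ``close the residual gap'' step. After removing $k=\lfloor h/v\rfloor$ copies of the repeated value $v$ from the long block, you still need a further subset summing to exactly $h-kv\in[0,v-1]$, and you have not shown such a subset exists within reach. The sentence ``consecutive partial sums differ by at most $N$, so every integer target in the block's range is within $N$ of some partial sum'' conflates prefix sums with subset sums and proves nothing about realizability of the residual. Concretely, if the long block and the single adjacent block you allow yourself consist entirely of $\pm v$, every subset drawn from them sums to a multiple of $v$; an element of a different magnitude forcing $h\not\equiv 0\pmod v$ may sit several reversals away, outside your heuristic. The right-to-left greedy for Dyck preservation is likewise only a gesture: you have not established that the long block lies above a region whose minimum prefix height tolerates the downward shift, and the slogan ``total downshift budget is $h\le N$'' ignores that removing from a positive block and later correcting from a negative block can transiently lower intermediate prefix sums by more than $h$. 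Your proposal also drifts between two incompatible pictures (one long block with a local correction, versus ``$O(N^2)$ per reversal across all $r$ reversals''), and the $r=0$ remark is off: the bound there is $L=0$, so ``$n\le N$ fits under the trivial-regime bound'' would need $N=0$.

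The paper's proof avoids the subset-sum difficulty altogether. It argues by minimal counterexample: since $n>2r(2N^2+N)$, some prefix sum exceeds $2N^2+N$, and one can isolate a contiguous window in which all prefix sums are at least $N^2$. Pigeonhole inside that window yields a positive value $u$ occurring at least $N$ times on the way up and a negative value $w$ occurring at least $N$ times on the way down; deleting $|w|$ copies of $u$ together with $u$ copies of $w$ removes a set of sum \emph{zero}, hence preserves the total, and lowers prefix sums by at most $u\cdot|w|\le N^2$, hence preserves the Dyck property inside the high window. The strictly shorter sequence then has a small removable set by minimality, and that same index set works for the original sequence because the re-inserted elements sum to zero and live in the still-high window. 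The idea you are missing is precisely this zero-sum deletion in a provably high region, which decouples hitting the exact target $h$ from maintaining non-negativity.
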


We now want to make sure that the self-loops on our states are the only
transitions that use the counter.  Note that this is a feature of
$\HardAutomaton{n}$.  \Anrba\ \rba\ is said to be \emph{loop-counting} if its
loops are the only transitions that use the counter, i.e. all other transitions
$(p,a,s,q)$ have $s=0$.
\begin{restatable}{lemma}{completenessLoopCounting}
\label{completeness:loop-counting}
Given \anrba\ \rba\ $\A$, one can construct a loop-counting \rba\  $\B$
of polynomial size such that 
$\langop{\A}\subseteq\langop{\B}\subseteq\langop[K]{\A}$
for a polynomially bounded $K$.
\end{restatable}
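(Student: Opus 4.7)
The plan is to push every counter-affecting non-loop transition of $\A$ onto a fresh self-loop, restoring the loop-counting property while keeping the size polynomial and the language sandwich within a polynomial $K$-approximant of $\A$.

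\textbf{Construction of $\B$.} For each non-loop transition $t=(p,a,s,q)\in\tran$ of $\A$ with $s\ne 0$, I introduce a fresh state $r_t$ and replace $t$ by the three transitions $(p,a,0,r_t)$, $(r_t,\varepsilon,0,q)$, together with the self-loop $(r_t,\varepsilon,\mathrm{sgn}(s),r_t)$; all remaining transitions of $\A$ are kept verbatim. The resulting $\B$ has $|Q|+|\tran|$ states, every non-loop transition carries counter effect $0$, each new state $r_t$ carries a single self-loop of one fixed sign, and no proper new cycles are introduced, so $\B$ is an acyclic, single-sign-per-state OCA whose only counter-affecting transitions are self-loops. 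Hence $\B$ is a loop-counting RBA of polynomial size.

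\textbf{Lower bound $\langop{\A}\subseteq\langop{\B}$.} Given an accepting run of $\A$, I simulate it in $\B$ by replacing each use of $t=(p,a,s,q)$ with a traversal of the chain $p\to r_t\to q$, firing the self-loop at $r_t$ exactly $|s|$ times. Since the loop is $\varepsilon$-labelled the induced word is preserved, and because the loop effect has the same sign as $s$ the counter moves monotonically between the values at $p$ and $q$ without going negative.

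\textbf{Upper bound $\langop{\B}\subseteq\langop[K]{\A}$.} Let $K=\sum_{t\in\tran}|s_t|\le|\A|$, which is polynomial in $|\A|$. Given an accepting $\B$-run $\pi'$ producing a word $w$, each intermediate state $r_t$ is visited at most once (by acyclicity of $\B$), with its self-loop firing some $k_t\ge 0$ times. Collapse each such visit back to the original transition $t$ to obtain a candidate walk of $\A$ on $w$. The principal technical step is to show that this walk can be realised as a bona fide run of $\A$ from $(q_0,K)$ to $(q_f,K)$: I argue, in the spirit of Lemma~\ref{completeness:dyck}, that the counter profile of the candidate walk deviates from the profile of $\pi'$ only by a cumulative term bounded by $K=\sum_t|s_t|$, and that the $K$ units of offset absorb this deviation and keep the counter non-negative throughout. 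The main obstacle is the Dyck-sequence-style rearrangement of loop firings needed to control this deviation uniformly across all $\B$-runs; here the acyclicity and reversal-boundedness of $\A$ are essential, as they limit the number of positive-to-negative phase transitions along the walk where the offset must be paid back.
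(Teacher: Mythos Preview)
Your lower bound $\langop{\A}\subseteq\langop{\B}$ is fine, but the upper bound fails: your $\B$ accepts strictly more than $\langop[K]{\A}$ for every $K$, because nothing forces the self-loop at $r_t$ to be fired exactly $|s_t|$ times.

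Take the RBA $\A$ with states $q_0,q_1$, the non-loop transition $(q_0,a,+1,q_1)$ and the loop $(q_1,b,-1,q_1)$, with $q_1$ final. Then $\langop{\A}=\langop[K]{\A}=\{ab\}$ for every $K$. In your $\B$, the run that enters $r_t$, fires its $+1$-loop $k$ times, moves to $q_1$, and then fires the $b$-loop $k$ times is accepting for every $k\ge 0$; hence $\langop{\B}=ab^{*}$, and in particular $a\in\langop{\B}\setminus\langop[K]{\A}$.

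The faulty step is the claim that the counter profile of the collapsed $\A$-walk deviates from that of $\pi'$ by at most $\sum_t|s_t|$: the actual deviation accumulated after passing $r_t$ is $(|s_t|-k_t)\cdot\mathrm{sgn}(s_t)$, and $k_t$ is unbounded. Lemma~\ref{completeness:dyck} cannot rescue this, since it only rearranges a sequence whose total sum is already correct, whereas here the collapsed walk has total effect $\sum_t(|s_t|-k_t)\cdot\mathrm{sgn}(s_t)$, which need not be $0$, so the walk need not even end at counter $K$.

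The paper avoids this by a different construction: it stores the cumulative counter effect of non-loop transitions in an extra state component ranging over $[-K,K]$, and accepts only when this component is $0$. That constraint makes $\langop{\B}\subseteq\langop[K]{\A}$ immediate. The genuine difficulty then shifts to the \emph{other} inclusion $\langop{\A}\subseteq\langop{\B}$, and this is where Lemma~\ref{completeness:dyck} is actually used: it shows that in any $\A$-run a bounded number of loop executions can be diverted into the state component so as to bring it back to $0$ at the end.
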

Here, the idea is to add a counter that tracks the counter actions of non-loop
transitions.  However, in order to show that the resulting automaton can still
simulate all runs while respecting its own counter (i.e. it has to
reach zero in the end and cannot drop belo zero), we use
Lemma~\ref{completeness:dyck}. It allows us to `switch' parts of the
run so as not to use $\B$'s counter, but the internal one in the state.
Note that according to Lemma~\ref{lem:simple-approx}, it suffices to construct
Parikh-equivalent NFAs that fulfill the approximation relation in the lemma
here.

We are now ready to reduce to the NFA $\HardAutomaton{n}$.
\begin{restatable}{lemma}{completenessHardAut}
\label{completeness:hardaut}
Given a loop-counting \rba\ $\A$ of size $n$, one can construct a regular
substitution $\sigma$ of size at most $2$ such that
$\parikh{\sigma(\langop{\HardAutomaton{2n+2}})}=\parikh{\langop{\A}}$.
\end{restatable}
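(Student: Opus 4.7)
The plan is to embed $\A$ into $\HardAutomaton{2n+2}$ via an order-preserving map on states, and to use $\sigma$ to translate the generic loop and connection letters of $\HardAutomaton{2n+2}$ back into the letters actually used by $\A$. Since $\A$ is acyclic, I would first fix a topological order of its states $p_1,\ldots,p_n$ with $p_1=\qinit$ and $p_n=\qfinal$; every non-loop transition then goes from some $p_i$ to some $p_{i'}$ with $i<i'$. Because $\A$ is loop-counting and reversal-bounded, the only counter-modifying transitions are self-loops and at each state $p_i$ all its loops share one sign, so I declare $p_i$ \emph{positive} if its loops add positive values, \emph{negative} if they add negative values, and if $p_i$ has no loops assign either polarity.

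The key step is to build a strictly increasing map $i\mapsto j_i\in[1,2n+2]$ for which $j_i$ is odd exactly when $p_i$ is positive---matching the parity pattern of $\HardAutomaton{2n+2}$, whose loop at $q_j$ adds $(-1)^{j+1}k$. I would do this greedily: pick $j_1\in\{1,2\}$ with the correct parity, and for $i>1$ set $j_i=j_{i-1}+1$ if the sign changes, otherwise $j_i=j_{i-1}+2$. This guarantees $j_n\le 2n$, so $q_1$ and $q_{2n+2}$ are always available as padding endpoints when they are not themselves in the image.

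Next I would define the substitution letter by letter. For loop letters, put
\[
\sigma(\sCount_{j_i,k})=\{a\mid (p_i,a,(-1)^{j_i+1}k,p_i)\in\tran\}
\]
and $\sigma(\sCount_{j,k})=\emptyset$ whenever $j$ is not in the image of $i\mapsto j_i$; the parity of $j_i$ ensures that the counter effect of the $\HardAutomaton{2n+2}$-loop agrees precisely with that of the $\A$-loops being collected. For connection letters, put $\sigma(\sConn_{j_i,j_{i'}})=\{a\mid (p_i,a,0,p_{i'})\in\tran\}$ when $i<i'$, put $\sigma(\sConn_{1,j_1})=\{\eps\}$ when $j_1>1$ and $\sigma(\sConn_{j_n,2n+2})=\{\eps\}$ when $j_n<2n+2$ (silent padding that bridges $q_1$ and $q_{2n+2}$ to the image of $\A$), and put every remaining $\sigma(\sConn_{j,j'})=\emptyset$. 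Each resulting language is either empty, $\{\eps\}$, or a finite set of length-one words (possibly together with $\eps$), all accepted by a two-state NFA.

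The Parikh equivalence is then verified by straightforward simulation in both directions. An accepting run of $\A$ visits some subsequence $p_{i_1},\ldots,p_{i_t}$ of the topological order and loops at each visited state; mapped through $p_i\mapsto q_{j_i}$ with $\eps$-padding at the extremes, it lifts to an accepting run of $\HardAutomaton{2n+2}$ whose substitution realises a word with the same Parikh vector as the original. Conversely, any accepting run of $\HardAutomaton{2n+2}$ that admits a nonempty substitution can only fire loops at mapped states and take connections between mapped pairs (plus the two $\eps$-paddings), so it pulls back to an accepting run of $\A$ with the same Parikh image, counter effects matching step by step. The main source of care will be the bookkeeping that keeps $|\sigma(x)|\le 2$ while still wiring $q_1$ and $q_{2n+2}$ to the image of $\A$, and the verification that the greedy parity assignment fits within $[1,2n+2]$ for every sign pattern; neither is deep, but these are precisely what make the constant $2$ in the statement tight.
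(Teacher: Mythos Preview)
Your overall plan---embed the states of $\A$ into those of $\HardAutomaton{2n+2}$ by an order- and parity-preserving map, then let $\sigma$ recover $\A$'s transition labels---is exactly the paper's approach. But there is a genuine gap.

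Every loop of $\HardAutomaton{2n+2}$ has a \emph{nonzero} counter effect: the loop at $q_j$ reading $\sCount_{j,k}$ adds $(-1)^{j+1}k$ with $k\in[1,2n+2]$. Your substitution $\sigma(\sCount_{j_i,k})$ therefore only captures loops of $\A$ at $p_i$ whose effect is $\pm k\ne 0$, and your $\sigma(\sConn_{j_i,j_{i'}})$ only captures non-loop transitions. A loop $(p_i,a,0,p_i)$ in $\A$---which the definition of loop-counting \rba\ does \emph{not} exclude---is simulated nowhere, so any word of $\langop{\A}$ that uses such a loop is lost and the inclusion $\parikh{\langop{\A}}\subseteq\parikh{\sigma(\langop{\HardAutomaton{2n+2}})}$ fails. (Such zero-effect loops do arise in the pipeline: the \rba\ produced by Lemma~\ref{completeness:rba} has loops with $k=0$, and Lemma~\ref{completeness:loop-counting} propagates them.)

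The paper fixes this by absorbing zero-effect loops into the connection letters: it sets $\sigma(\sConn_{s,t})=(\Gamma_{s,0})^*\,\Omega_{s,t}\,(\Gamma_{t,0})^*$, where $\Gamma_{s,0}$ is the set of labels of zero-effect loops at the state mapped to $q_s$. This still needs only a $2$-state NFA. To make the trick cover \emph{all} runs, the paper first adds one state and an $\varepsilon$-transition so that every accepting run of $\A$ contains at least one non-loop transition; this is why the target is $\HardAutomaton{2n+2}$ rather than $\HardAutomaton{2n}$. Incidentally, the paper's parity map $\chi(p)=2\varphi(p)-\tau(p)$ sends $\qinit\mapsto q_1$ and $\qfinal\mapsto q_{2n+2}$ directly, so no separate $\varepsilon$-padding is required.
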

Here, roughly speaking, we embed the partial order on the set of states into
the one in $\HardAutomaton{2n+2}$. Then, the substitution $\sigma$ replaces
each symbol in $\HardAutomaton{2n+2}$ by the outputs of the correponding
transition in $\A$. Here, we have to deal with the fact that in $\A$, there
might be loops that do not read input, but those do not exist in
$\HardAutomaton{2n+2}$.  However, we can replace the symbols on non-loops by
regular languages so as to produce the output of their neighboring loops.

\bibliographystyle{abbrvnat}
\bibliography{picl}

\newpage
\appendix
\section{Equivalence between problems of characterizations of the three 
regular abstractions for OCA and simple OCA}

Due to the following Lemma~\ref{app:lem:simple-approx},
without loss of generality we can restrict our attention to
computing the regular abstractions of a subclass OCA, namely 
\emph{simple OCA}, defined analogously to OCA but different
in the following aspects:
(1) there are no zero tests,
(2) there is a unique final state, $F = \{\qfinal\}$,
(3) only runs that start from the configuration $(\qinit, 0)$ and end
    at the configuration $(\qfinal, 0)$ are considered \df{accepting}.
The language of a simple OCA \A, also denoted $\langop\A$,
is the set of words induced by accepting runs.

\begin{definition}\label{app:def:approximants}
Let $\A$ be a simple OCA. We define a sequence of $\langop[n]{\A}$ called approximants 
defined as language of words observed along runs of $\A$ from a configuration $(q_0,n)$ to a configuration $(\qfinal,n)$.
\end{definition}

Let OCA $\A=(Q, \AlB, q_0, \tran, F)$, for any $p,q \in Q$ 
we define $\A^{p,q} \eqdef (Q, \AlB, p, 
\tran^+, \{q\})$ where $\tran^+$ is the subset of all transitions in 
$\tran$ which are not test for zero transitions.

\begin{lemma}\label{app:lem:simple-approx}
Let $K\in\N$ and $\diamondsuit\in \{\upop{},\downop{}, \parikhnoarg\}$ be an abstraction.
Assume that there is a polynomial $g_{\diamondsuit}$ such that
for any OCA \A following holds: for every $\A^{p,q}$ there is an NFA
$\B_{p,q, \diamondsuit}$
such that $\diamondsuit{\langop{\A^{p,q}}}\subseteq\langop{\B_{p,q, \diamondsuit}}\subseteq\diamondsuit(\langop[K]{\A^{p,q}})$
and $\sizeop{\B_{p,q,\diamondsuit}} \le g_{\diamondsuit}(\sizeA)$.
Then there is a polynomial $f$ such that for any $\A$ there is an
NFA $\B^{\diamondsuit}$ of size at most $f(\sizeA, K)$ with
$\langop{\B^{\diamondsuit}} = \diamondsuit({\langop{\A}})$.
\end{lemma}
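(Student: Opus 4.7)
The plan is to exhibit $\B^{\diamondsuit}$ as a substitution of the assumed NFAs $\B_{p,q,\diamondsuit}$ into a polynomial-size finite automaton that simulates $\A$ whenever its counter stays in $[0, K]$. The motivating observation is that every accepting run $\rho$ of $\A$ factorizes as
\[
\rho = \rho_0 \cdot \sigma_1 \cdot \rho_1 \cdots \sigma_m \cdot \rho_m,
\]
where each \emph{low} segment $\rho_i$ keeps the counter in $[0, K]$ and each \emph{high} segment $\sigma_j$ has counter value exactly $K$ at both endpoints and stays strictly above $K$ in between. Since zero-tests fire only at counter $0 \leq K$, they all sit inside low segments; shifting the counter of $\sigma_j$ down by $K$ therefore turns $\sigma_j$ into an accepting run of the simple OCA $\A^{p,q}$ for the suitable $p, q$ read off its endpoints, so that the word of $\sigma_j$ lies in $\langop{\A^{p,q}}$. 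Conversely, every word in $\langop{\A^{p,q}}$ arises as the word of a high segment of $\A$ by the inverse shift, using that $\tran^+ \subseteq \tran$.

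I would then build $\B^{\diamondsuit}$ in two stages. First, define a finite automaton $M$ over the extended alphabet $\AlB \cup \{\chi_{p,q} : p, q \in Q\}$, whose states are $Q \times \{0, 1, \ldots, K\}$ and whose transitions simulate $\A$'s low-segment behaviour (updating the embedded counter as $\A$'s counter does, permitting zero-tests only from states $(q, 0)$) and, for each $p, q \in Q$, include a fresh transition from $(p, K)$ to $(q, K)$ reading the placeholder letter $\chi_{p,q}$. Then $\langop{\A}$ is recovered from $\langop{M}$ by the substitution $\chi_{p,q} \mapsto \langop{\A^{p,q}}$. Second, take an NFA for $\diamondsuit(\langop{M})$ over the extended alphabet --- constructed via the standard $\eps$-skip trick for $\diamondsuit = \downop{}$, self-loops for $\upop{}$, and trivially (identity) for $\parikhnoarg$ --- and substitute each placeholder letter $\chi_{p,q}$ by the NFA $\B_{p,q,\diamondsuit}$.

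Correctness relies on the fact that each of the three abstractions distributes over substitutions of the form $\chi \mapsto L_\chi$: for $\downop{}$ and $\upop{}$ this is the identity $\diamondsuit(\sigma(L)) = \sigma'(\diamondsuit L)$ with $\sigma'(\chi) = \diamondsuit(L_\chi)$, and for $\parikhnoarg$ it is additivity of Parikh images under concatenation. Inserting the hypothesis $\diamondsuit(\langop{\A^{p,q}}) \subseteq \langop{\B_{p,q,\diamondsuit}} \subseteq \diamondsuit(\langop[K]{\A^{p,q}})$ into the substituted automaton yields a sandwich that collapses to $\langop{\B^{\diamondsuit}} = \diamondsuit(\langop{\A})$, using that any accepting run of $\A^{p,q}$ from $(p, K)$ to $(q, K)$ lifts back to a legal run of $\A$ between those configurations (again via $\tran^+ \subseteq \tran$). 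The size bound is $|\B^{\diamondsuit}| \leq \sizeA(K+1) + \sizeA^2 \cdot g_{\diamondsuit}(\sizeA)$, polynomial in $\sizeA$ and $K$, giving the required $f$.

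The main subtlety I anticipate is keeping the argument uniform across the three abstractions: for $\diamondsuit = \parikhnoarg$ the sandwich in the hypothesis must be read as containment of Parikh images rather than of languages, and the substitution argument above must be phrased as additivity of Parikh images under concatenation, yielding a Parikh-equivalent rather than language-equivalent NFA. A secondary, small issue is handling general-OCA acceptance at counter values $c > K$, cleanly resolved by first adding silent $\eps$-decrement loops at each final state of $\A$ (which preserves $\langop\A$) so that all accepting runs end at counter $0 \leq K$, after which the construction above applies unchanged.
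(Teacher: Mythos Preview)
Your overall architecture---factor runs at counter level $K$, encode the low part as a finite automaton $M$ with placeholder letters $\chi_{p,q}$, then substitute---is exactly the paper's approach. The gap is in the second stage: you take an NFA for $\diamondsuit(\langop{M})$ over the \emph{extended} alphabet and only then substitute, justifying this by the identity $\diamondsuit(\sigma(L))=\sigma'(\diamondsuit L)$ with $\sigma'(\chi)=\diamondsuit(L_\chi)$. That identity is false when some $L_\chi=\emptyset$: take $L=\{\chi\}$ and $\sigma(\chi)=\emptyset$; then $\downop{\sigma(L)}=\emptyset$ but $\sigma'(\downop{L})\supseteq\sigma'(\eps)=\{\eps\}$. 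Concretely, your $\eps$-skip trick adds an $\eps$-edge from $(p,K)$ to $(q,K)$ for \emph{every} pair $p,q$, and when $\langop{\A^{p,q}}=\emptyset$ this edge survives the substitution. For example, let $\A$ have states $s,t$, a single internal transition $(s,a,0,t)$, and $K=0$; then $\langop{\A}=\{a\}$, but your NFA has the spurious $\eps$-edge $(t,0)\to(s,0)$ (the skip of $\chi_{t,s}$, with $\langop{\A^{t,s}}=\emptyset$) and hence accepts $aa\notin\downop{\langop{\A}}$. Your sandwich argument cannot rescue this, since already the lower substitution $\chi_{p,q}\mapsto\downop{\langop{\A^{p,q}}}$ applied to $\downop{\langop{M}}$ overshoots.

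The correct identity, which the paper uses, is $\diamondsuit(\sigma(L))=\sigma'(L)$: one substitutes directly into $\langop{M}$ (not into $\diamondsuit(\langop{M})$), replacing each $\AlB$-letter $a$ by a two-state NFA for $\diamondsuit(\{a\})$ and each $\chi_{p,q}$ by $\B_{p,q,\diamondsuit}$. Alternatively, your construction becomes correct if you first prune all $\chi_{p,q}$-edges with $\langop{\A^{p,q}}=\emptyset$ (decidable in polynomial time), after which every substitution image is nonempty and your identity does hold. Either fix is minor; the rest of your plan---including the $\eps$-decrement trick to force acceptance at counter $0$---is fine, and for $\upop{}$ and $\parikhnoarg$ your version already works as written.
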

\begin{remark}
Restricting ourself to simple OCA, mentioned in the beginning of the section, 
is a consequence of Lemma~\ref{app:lem:simple-approx} for $K=0$. 
Indeed, if we can build polynomially bounded NFA for an abstraction of any given NFA,
then we can do it for any simple OCA as well. On the other hand, if we can do it,
for any simple OCA than due to Lemma~\ref{app:lem:simple-approx} we can construct
a polynomially bounded NFA that is abstraction of any given OCA.
\end{remark}

In order to prove the lemma, consider the following definition.

\begin{definition}
Let $\lang,\lang'\subseteq \AlB^*$. By a \emph{concatenation of languages}
denoted $\lang\cdot \lang'$ we mean 
a set of all words that can be obtained by concatenation of a word from $\lang$
and a word $\lang'$ i.e.
$\{w\cdot w'|w\in \lang, w'\in \lang'\}$.
\end{definition}

Let $\Pi$ be a finite alphabet. 
A \emph{substitution} is a function $\phi\colon \Pi\to 2^{\AlB^*}$.
Suppose $\lang\subseteq\Pi^*$ is a language. For $w=w_1\cdots w_n$
with $w_1,\ldots,w_n\in\Pi$, the set $\phi(w)$ contains all words 
$x_1\cdots x_n$ where $x_i\in \phi(w_i)$. Lifting this notation to languages,
we get $\phi(\lang)=\bigcup_{w\in \lang} \phi(w)$.

\begin{lemma}\label{lem:constuctionOfSimpleAutomata}
Let $\A=(Q,\AlB, \delta, q_0, F)$ be an OCA and suppose we are given, for each $\A^{p,q}$, an OCA
$\B_{p,q}$ with $\langop{\A^{p,q}}\subseteq\langop{\B_{p,q}}\subseteq\langop[n]{\A^{p,q}}$. Then there is
an NFA $\B$ over an alphabet $\Pi$ 
and substitution $\phi:\Pi\lra 2^{\Sigma^{*}}$ such that
$\langop{\A}=\phi(\langop{\B})$ and that for every letter $a\in\Pi$ hold
$\phi(a) \in \AlB$ or $\phi(a)$ equals
$\langop{\B_{p,q}}$ for some $p,q\in Q$.
\end{lemma}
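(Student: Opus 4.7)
The plan is to build $\B$ by tracking the counter in its state as long as it stays in $[0,n]$, and by abstracting every ``excursion'' of the counter above $n$ by a single letter that stands for a word in some $\langop{\B_{p,q}}$. Concretely, I will set $\Pi = \AlB \cup \{c_{p,q} : p,q \in Q\}$ with $c_{p,q}$ fresh symbols, define $\phi(a) = a$ for $a \in \AlB$ and $\phi(c_{p,q}) = \langop{\B_{p,q}}$. The NFA $\B$ has state space $Q \times \{0,1,\ldots,n\}$, initial state $(q_0, 0)$, and final states $\{(f, 0) : f \in F\}$; this is justified after the standard preprocessing that adds a chain of $\varepsilon$-decrement transitions from every original final state so that acceptance of $\A$ can be assumed to occur at counter $0$. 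Its transitions are of two kinds: \emph{tracked} transitions $((p,c),a,(q,c'))$ for every transition $(p,a,s,q) \in \delta$ of $\A$ compatible with $c,c' \in [0,n]$ (including the zero tests, taken when $c=c'=0$), and \emph{abstraction} transitions $((p,n),c_{p,q},(q,n))$ for every pair $p,q \in Q$.

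For soundness $\phi(\langop{\B}) \subseteq \langop{\A}$, I will take any accepting run of $\B$ on a word $u$ and any $w \in \phi(u)$, and splice the pieces into an accepting run of $\A$ on $w$: every tracked transition is literally a transition of $\A$; every abstraction transition $((p,n),c_{p,q},(q,n))$ unfolds to some $w' \in \langop{\B_{p,q}} \subseteq \langop[n]{\A^{p,q}}$, i.e., to the label of a run of $\A^{p,q}$ from $(p,n)$ to $(q,n)$, which can be replayed verbatim in $\A$ since $\A^{p,q}$ only omits the zero tests.

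For completeness $\langop{\A} \subseteq \phi(\langop{\B})$, I will take an accepting run $\pi$ of $\A$ on $w$ and decompose it into maximal \emph{low} subruns (every configuration has counter $\le n$) alternating with \emph{high excursions}, where a high excursion is a subrun from $(p,n)$ to $(q,n)$ whose intermediate configurations all have counter $\ge n+1$. Low subruns translate step by step into tracked transitions of $\B$. For a high excursion, shifting the counter down by $n$ yields a run from $(p,0)$ to $(q,0)$ of $\A^{p,q}$ in which the counter never touches $0$ except at the endpoints, so no zero test of $\A$ is used inside it; hence the word it induces lies in $\langop{\A^{p,q}} \subseteq \langop{\B_{p,q}} = \phi(c_{p,q})$, and a single abstraction transition in $\B$ substitutes for the whole excursion. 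The main obstacle is organising the decomposition so that the boundary configurations are matched consistently between successive low subruns and excursions, together with handling the case $n=0$ in which every excursion is pinned at counter $0$; here the definition of excursion already forces the incident boundary transitions to be an increment and a decrement rather than zero tests, so the argument goes through unchanged.
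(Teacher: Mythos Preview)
Your proposal is correct and follows essentially the same construction as the paper: states $Q\times\{0,\ldots,n\}$, tracked transitions simulating $\A$ while the counter stays in $[0,n]$, and abstraction letters $c_{p,q}$ on the level-$n$ states with $\phi(c_{p,q})=\langop{\B_{p,q}}$, together with the low/high-excursion decomposition of accepting runs. If anything, you are more careful than the paper: you explicitly argue why no zero test occurs inside an excursion (including the $n=0$ case), and you note the need to normalise acceptance to counter value~$0$, which the paper's proof silently assumes by taking $F\times\{0\}$ as final states; just be aware that your preprocessing should be done with a fresh sink state so that the given hypothesis on the $\B_{p,q}$ for the original states is preserved.
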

\begin{proof}
First, for any $n\in \N$ holds that, an
accepting run $\pi$ of $\A$ can be decomposed 
into $\pi=\pi_1\pi_1'\pi_2\pi_2'\ldots \pi_{n-1}'\pi_n$ where
each $\pi_i$ is a run in OCA such that counter values along $\pi_i$ 
stay below $n$, and where 
each run $\pi_i'$ induces a word from a language 
$\langop[n]{\A^{\initstate{\pi_i'}, \finalstate{\pi_i'}}}$.

Thus we define $\B$ as follows:
$\Pi$ is a \AlB plus the set of triples $(n, p, q)$ where $p,q \in Q$, 
the set of states is equal $Q\times \{0\ldots n\}$, 
initial state is $(q_0, 0)$ and final states are
$F\times \{0\}$. The last thing is set of transitions;
for $p,q \in Q$ it contains transition  
$((p,n),(n , p , q), 0, (q,n))$ and for any $i,j\in \{0\ldots n\}$
the transition $((p,i) a, 0, (q, j))$ if and only if there is a move
$(p,i)\lra[a](q,j)$ in $\A$.

The $\phi$ is simply induced by its definition on elements of $\Pi$:
for every $a \in \Pi$ we have that
if $a \in \AlB$ then $\phi(a) = \{a\}$ else if  
$a=(n, p, q)$ and $\phi(a)=\langop[n]{\A^{p,q}}$.

It is easy check that $\langop{\A}=\phi(\langop{\B})$. 
\end{proof}

Next, we show how use $\B$ if we are interested in 
$F$-abstractions of $\langop{\A}$. We start from a following 
simple observations; for any language $\lang= \lang_1 \cdot \lang_2$ 
we have that:
\begin{itemize}
\item $\downop{\lang}=\downop{\lang_1}\cdot\downop{\lang_2}$
\item $\upop{\lang} = \upop{\lang_1} \cdot \upop{\lang_2}$
\item $\parikh{\lang}=\parikh{\lang_1} \cdot \parikh{\lang_2}$
\end{itemize}

A natural consequence is the following lemma.

\begin{lemma}\label{lem:homo-image}
Let $\B$ be an automaton like one defined in 
Lemma~\ref{lem:constuctionOfSimpleAutomata} for a given OCA $\A$.
Now let $\phi_{\>\downop {}}, \phi_{\>\upop{ }}, \phi_{\>\parikhnoarg}$
 are defined as follows: If $a\in \AlB$ then 
$\phi_{\>\downop{ }}(a) = \downop{\{a\}}$,
$\phi_{\>\upop{}}(a) = \upop{\{a\}}$, 
$\phi_{\>\parikhnoarg}(a) = \{a\}$,
on the other hand if $a$ is of the form $(n, p, q)$ i.e. 
$a\in\Pi\setminus \AlB$ then
$\phi_{\>\downop{ }}( (n, p, q) ) = \downop{\langop[n]{\A^{p,q}}}$,
$\phi_{\>\upop{}}( (n, p, q) ) = \upop{\langop[n]{\A^{p,q}}}$,
$\phi_{\>\parikhnoarg}( (n , p, q)) = \parikh{\langop[n]{\A^{p,q}}}$.

Then the following equalities hold
\begin{itemize}
\item $\downop{\langop{\A}}=\phi_{\>\downop{}}(\langop{\B})$,
\item $\upop{\langop{\A}}=\phi_{\>\upop{}}(\langop{\B})$,
\item $\parikh{\langop{\A}}=\phi_{\>\parikhnoarg}(\langop{\B})$.
\end{itemize}
\end{lemma}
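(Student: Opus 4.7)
The plan is to prove all three equalities uniformly by leveraging the fundamental observation stated just before the lemma: each of the three abstractions distributes over concatenation. Since the abstractions also commute trivially with arbitrary unions (the abstraction of a union equals the union of the abstractions), and since $\phi(\langop{\B}) = \bigcup_{w \in \langop{\B}} \phi(a_1) \cdots \phi(a_n)$ when $w = a_1\cdots a_n$ with $a_i \in \Pi$, we can push $\diamondsuit \in \{\downop{},\upop{},\parikhnoarg\}$ through $\phi$ one symbol at a time.

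First I would start from $\langop{\A} = \phi(\langop{\B})$, which is given by Lemma~\ref{lem:constuctionOfSimpleAutomata}, and apply each abstraction to both sides. For a fixed abstraction $\diamondsuit$, I use the appropriate bullet above the lemma (together with a straightforward induction on the number of concatenated factors) to conclude that for any $w = a_1\cdots a_n \in \Pi^*$,
\begin{equation*}
\diamondsuit\bigl(\phi(a_1)\cdots\phi(a_n)\bigr) \;=\; \diamondsuit(\phi(a_1))\cdots\diamondsuit(\phi(a_n)).
\end{equation*}
Taking the union over $w \in \langop{\B}$ then gives $\diamondsuit(\phi(\langop{\B})) = \bigcup_{w \in \langop{\B}} \diamondsuit(\phi(a_1))\cdots\diamondsuit(\phi(a_n))$.

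Next I would invoke the definitions of $\phi_{\>\downop{}}, \phi_{\>\upop{}}, \phi_{\>\parikhnoarg}$: on every letter $a \in \Pi$, the new substitution agrees with $\diamondsuit\circ\phi$. This is immediate in both cases of the case split, namely $a \in \AlB$ (where $\diamondsuit(\{a\})$ is exactly what the lemma sets $\phi_{\>\diamondsuit}(a)$ to) and $a = (n,p,q)$ (where $\phi(a) = \langop[n]{\A^{p,q}}$ and $\phi_{\>\diamondsuit}(a) = \diamondsuit(\langop[n]{\A^{p,q}})$ by definition). Substituting these equalities back into the expression above collapses it precisely to $\phi_{\>\diamondsuit}(\langop{\B})$, yielding the three claimed identities.

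There is essentially no hard step: the entire argument is a distributivity chase. The only mild subtlety is notational — in the Parikh case, the symbol $\cdot$ must be interpreted as Minkowski sum of subsets of $\N^{|\AlB|}$ (i.e., $P_1 \cdot P_2 = \{v_1 + v_2 \mid v_1 \in P_1,\, v_2 \in P_2\}$), under which $\parikh{\lang_1 \cdot \lang_2} = \parikh{\lang_1} \cdot \parikh{\lang_2}$ holds and the inductive extension to $n$ factors is immediate. With that convention fixed, the same three-line calculation works verbatim for each of the three abstractions.
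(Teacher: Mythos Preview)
Your proposal is correct and follows essentially the same approach as the paper's own proof: both arguments rest on the identity $\phi_{\diamondsuit}(u)=\diamondsuit(\phi(u))$ for each word $u\in\Pi^*$, derived from the distributivity of $\diamondsuit$ over concatenation (the three bullets preceding the lemma) together with $\phi_{\diamondsuit}(a)=\diamondsuit(\phi(a))$ on letters, and then lift this to $\langop{\B}$ via commutation with unions. The paper spells this out as two inclusions for the downward-closure case and declares the others analogous, whereas you package all three as a single equality chain; the content is the same.
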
 
\begin{proof}
Lemma contains three claims which have very similar proofs, thus we present 
only one of them, for downward closure.

The first inclusion. 
Let $w'\in \downop{\langop{\A}}$ thus there is $w\in \langop{\A}$ such that
$w'\subword w$. Due to Lemma~\ref{lem:constuctionOfSimpleAutomata} we know
that there is a word $u=u_1\ldots u_k$, where $u_i\in \Pi$, 
such that $w\in \phi(u)$ ($\phi$ 
is defined in the proof of Lemma~\ref{lem:constuctionOfSimpleAutomata}). 
We claim
that $w'\in \phi_{\>\downop{}}(u)$. Indeed, observe that 
$\phi_{\>\downop{}}(u) = \phi_{\>\downop{}}(u_1) \cdots \phi_{\>\downop{}}(u_k) = 
\downop{\phi(u_1)} \cdots \downop{\phi(u_k)}$; further recall that if 
$\lang=\lang_1\cdot \lang_2$
implies $\downop{\lang} = \downop{\lang_1} \cdot \downop{\lang_2}$ so
we conclude that $\phi_{\>\downop{}}(u) = 
\downop{(\phi(u_1) \cdots \phi(u_k))} = \downop{\phi(u)}\supseteq \downop{w} 
$. As $w'\in \downop{w}$, this gives us what we need, so 
$w'\in\phi_{\>\downop{}}(u)$.  

In the opposite direction. Let $w'\in \phi_{\>\downop{}}(\langop{\B})$.
Then $w' \in \phi_{\>\downop{}}(u)$ for some $u\in \langop{\B}$.
Using similar calculation like previously we can show that
$w' \in \downop{\phi(u)}$ but this mean that $w' \in \downop{w}$ for 
some $w\in \phi(u)$. Now as we know that $\phi(u)\in \langop{\A}$ we
have that $w'\in \downop{\langop{\A}}$ which ends the proof.

\end{proof}

Finally we can prove Lemma~\ref{app:lem:simple-approx}.
\begin{proof}
We start from Lemma~\ref{lem:homo-image}. We take automaton $B$ and 
choose suitable substitution $\phi_{\diamondsuit}$, where $\diamondsuit\in \{\upop{}, \downop{}, 
\parikhnoarg{}\}$, depending on the 
abstraction that we are 
interested in. The idea is to substitute every transition in $B$, say 
$\tau=(p, a , 0, q)$, by a suitable 
designed automaton $\B^{p,q, \diamondsuit}=(Q_{p,q}, \AlB, \qinit, \delta_{p,q}, \{\qfinal\})$
 that accepts the language $\phi_{\diamondsuit}(a)$.

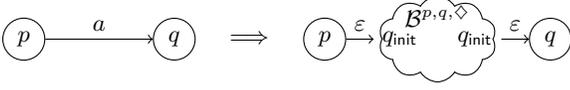
\begin{figure}
\begin{tikzpicture}[every state/.style={minimum size=10pt}]
\node[state]    			(p1) at (0,0) {$p$};
\node[state]                            (q1) at (2,0) {$q$};
\node 					(X) at (3,0) {$\implies$};
\node[state]                            (p2) at (4,0) {$p$};
\node[]				(r2) at (5,0) {$\qinit$};
\node [cloud, draw,cloud puffs=10,cloud puff arc=120, aspect=1.6, inner ysep=1em]  (B) at (5.5, 0) {};
\node []				(B1) at (5.5, 0.3) {$\B^{p,q, \diamondsuit}$};
\node[]				(r22) at (6,0) {$\qinit$};
\node[state]			        (q2) at (7,0) {$q$};
\path[->] (p1) edge [above]      node {$a$} (q1)
          (p2) edge [above]      node {$\eps$} (r2)
          (r22) edge [above]      node {$\eps$} (q2);
\end{tikzpicture} 
\caption{The substitution operation}
\end{figure}

If $a\in \Pi\setminus \AlB$ then automaton is $\B^{p,q}$, else if
$a \in \AlB$ then automaton that we glue is one of following: 

\begin{tikzpicture}[every state/.style={minimum size=14pt, inner sep=0pt}]
\node                           (p) at (0, 0) {for $\phi_{\>\upop{}}$};
\node[state]                            (p2) at (4,0) {\scriptsize{$\qinit$}};
\node[state]			        (q2) at (6,0) {\scriptsize{$\qfinal$}};
\path[->] (p2) edge [above]      node {$a$} (q2)
          (p2) edge [loop above] node {$\AlB$} (p2)
          (q2) edge [loop above] node {$\AlB$} (q2);
\end{tikzpicture},

\begin{tikzpicture}[every state/.style={minimum size=14pt, inner sep=0pt}]
\node                           (p) at (0, 0) {for $\phi_{\>\downop{}}$};
\node[state]                            (p2) at (4,0) {\scriptsize{$\qinit$}};
\node[state]			        (q2) at (6,0) {\scriptsize{$\qfinal$}};
\path[->] (p2) edge [above, bend left]      node {$a$} (q2)
          (p2) edge [above, bend right] node {$\eps$} (q2);
\end{tikzpicture},

\begin{tikzpicture}[every state/.style={minimum size=14pt, inner sep=0pt}]
\node                           (p) at (0, 0) {for $\phi_{\>\parikhnoarg}$};
\node[state]                            (p2) at (4,0) {\scriptsize{$\qinit$}};
\node[state]			        (q2) at (6,0) {\scriptsize{$\qfinal$}};
\path[->] (p2) edge [above]      node {$a$} (q2);
\end{tikzpicture}.

It is obvious that this construction provides a required NFA. The last 
question is about its size. According to 
Lemma~\ref{lem:constuctionOfSimpleAutomata} we have that $\sizeop{B}=
\sizeA\cdot K$. So to estimate the size of the final construction
we need to add at most 
$\sizeop{\AlB} \cdot 2\sizeop{\B} + 
(\sizeA)^2 \cdot g(\sizeA)$ states, which proves the polynomial
bound on the size of $\B^{\diamondsuit}.$
\end{proof}

\begin{remark}
 It is easy observation that above construction can be preformed in
 polynomial time in $K, \sizeA$ and time needed to construct all 
 automata $\B^{p,q, \diamondsuit}$ where 
 $\diamondsuit\in \{\upop{}, \downop{}, \parikhnoarg\}$.
\end{remark}

\renewcommand{\onlyappendix}[1]{#1}
\renewcommand{\onlyappendixpara}[1]{#1}
\renewcommand{\onlymainpaper}[1]{}
\renewcommand{\onlymainpaperpara}[1]{}

\section{Parikh image: Bounded alphabet}
\label{a:s:parikh-bounded}

The result of this section is the following theorem.

\begin{theorem}
\label{a:th:parikh-bounded}
For any fixed alphabet \AlB there is a polynomial-time algorithm that,
given as input a one-counter automaton over \AlB with $n$~states, computes
a Parikh-equivalent NFA.
\end{theorem}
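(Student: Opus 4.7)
The plan is to compute a semilinear representation of $\parikh{\langop\A}$ of polynomial size and then convert it to a Parikh-equivalent NFA, which is straightforward once the representation is available. By Lemma~\ref{lem:simple-approx} we may restrict attention to simple OCA, so throughout I focus on those.

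First I would set up the structural tools: split runs, directions $d = \dr\alpha\beta$, and the availability predicate $d \in \avail\pi$ of Definitions~\ref{conf:def:bi-walk}--\ref{conf:def:unpumping}. The core combinatorial claim is the safe unpumping lemma (Lemma~\ref{conf:l:unpump-summary}): every accepting run longer than some polynomial $s(n)$ can be written as $\pi' = \pi + d$ with $\avail{\pi'} = \avail\pi$. I would prove this in the style sketched after the lemma statement: split $\pi'$ into many parts of moderate size, show by a standard pumping-lemma argument that inside each sufficiently large part one can pick a split factor whose removal yields a shorter run, and then bound the number of \emph{important} configurations (those whose deletion would strictly shrink $\avail{\cdot}$) by a polynomial in $n$. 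A part containing no important configuration can be unpumped safely, and pigeonhole ensures such a part exists when $|\pi'| > \polyMinUnpLen$. The two cases split according to whether the height of $\pi'$ exceeds a polynomial threshold: the low-height case lets us treat the run as essentially finite-state, whereas the high-height case forces the use of genuine split runs so that pumping can be simultaneously matched on both sides of a high arch.

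Iterating safe unpumping, every accepting $\pi'$ reduces to some short accepting $\pi$ (of length $\le s(n)$) whose Parikh image differs from $\parikh{\pi'}$ by a nonnegative integer combination of Parikh images of directions in $\avail\pi$ (monotonicity, Lemma~\ref{conf:l:more}, is essential here so that the directions survive the reduction). This immediately yields the semilinear identity of Lemma~\ref{conf:l:parikh}:
\begin{equation*}
\parikh{\langop\A} = \bigcup_{\pi\colon |\pi|\le s(n)} \Lin{\parikh\pi}{\parikh{\avail\pi}}.
\end{equation*}
To make the union polynomially small and each linear set of polynomially many periods, I would apply a Carathéodory-style argument in $\N^\dimension$: any vector $v$ in a cone generated by some period set $\parikh{\avail\pi}$ lies in a subcone generated by at most $\dimension$ of those vectors, so it suffices to keep, for each short run $\pi$ and each $\dimension$-subset of periods, one linear set. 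Since $\dimension$ is fixed, both the number of base vectors and the number of tuples of periods are polynomially bounded (all coordinates are at most $\poly(n)$ by Definition~\ref{conf:def:direction} and the length bound on $\pi$).

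The final step is algorithmic: enumerate all candidate tuples $(v, v_1, \ldots, v_r)$ with $r \le \dimension$ and coordinates bounded by a fixed polynomial in~$n$, and for each decide whether some short accepting $\pi$ and directions $d_1, \ldots, d_r \in \avail\pi$ realize these Parikh images. This decision is Lemma~\ref{conf:l:tuple}, which I would reduce to the building block Lemma~\ref{conf:l:pair}: deciding for two configurations and a unary-encoded target vector $v$ whether there is a run between the configurations with Parikh image $v$. That in turn is solved by dynamic programming over tables indexed by pairs of configurations (with counter values bounded, since we are looking only at short runs and directions) and Parikh vectors of polynomially bounded magnitude. For each accepted tuple we emit the linear set $\Lin{v}{\{v_1,\ldots,v_r\}}$, and the union of all emitted linear sets is the semilinear representation; converting it into an NFA (a union of one "gadget" per linear set, each a base path followed by self-loops realizing the periods) gives the desired Parikh-equivalent automaton of size $|\A|^{\poly(\dimension)}$. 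The main obstacle is the safe unpumping lemma: controlling the set of important configurations and ensuring that for long runs there is always a large, important-configuration-free region to unpump.
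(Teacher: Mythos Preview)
Your proposal is correct and follows essentially the same approach as the paper: safe unpumping (with the high/low height case split) to obtain the semilinear identity of Lemma~\ref{conf:l:parikh}, a Carath\'eodory-style reduction to at most $\dimension$ periods, and the dynamic-programming decision procedure of Lemmas~\ref{conf:l:pair} and~\ref{conf:l:tuple} to enumerate the linear pieces and emit the NFA. One point to tighten: the integer Carath\'eodory step does not let you keep the \emph{same} base $\parikh\pi$ together with a $\dimension$-subset of periods---the base must be allowed to grow to the Parikh image of a somewhat longer (still polynomially bounded) run, as the paper makes explicit in Lemma~\ref{l:semilinear-reduce} and Lemma~\ref{l:parikh-small}; your algorithmic enumeration over all polynomially bounded $v$ already accommodates this, but your informal justification (``for each short run $\pi$ and each $\dimension$-subset of periods, one linear set'') does not.
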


Note that in Theorem~\ref{a:th:parikh-bounded} the size of the alphabet \AlB is fixed.
The theorem implies, in particular, that
any one-counter automaton over \AlB with $n$~states
has a Parikh-equivalent NFA of size $\poly_{\AlB}(n)$,
where $\poly_{\AlB}$ is a polynomial of
degree bounded by $f(\dimension)$ for some computable function $f$.

The numbering sequence from the conference version is mapped as follows:

\begin{itemize}
\item Theorem \ref{conf:th:parikh-bounded} is Theorem \ref{a:th:parikh-bounded},
\item Definition \ref{conf:def:bi-walk} is Definition \ref{def:bi-walk},
\item Definition \ref{conf:def:direction} is Definition \ref{def:direction},
\item Definition \ref{conf:def:available} is Definition \ref{def:available},
\item Lemma \ref{conf:l:more} is Lemma \ref{l:more},
\item Definition \ref{conf:def:unpumping} is Definition \ref{def:unpumping},
\item Lemma \ref{conf:l:unpump-summary} is Lemma \ref{l:unpump-summary},
\item Lemma \ref{conf:l:parikh} is Lemma \ref{l:parikh},
\item Lemma \ref{conf:l:tuple} is Lemma \ref{l:parikh-membership},
\item Lemma \ref{conf:l:pair} is Lemma \ref{l:parikh-membership:walk}.
\end{itemize}

\begin{remark}
We start with yet another simplifying assumption (in addition to
that of subsection~\ref{s:prelim:simple}). In one part of our proof
(subsection~\ref{a:s:parikh-bounded:algo} below) we will need to rely
on the fact that short input words can only be observed along short runs;
this is, of course, always true if all OCA in question have to $\eps$-transitions.
We note here that, for the purpose of computing the Parikh image,
we can indeed assume without loss of generality that this is the case.
To see this, replace all $\eps$ on transitions of an OCA \A
with a fresh letter $e \not\in \Sigma$;
i.e., increase the cardinality of $\Sigma$ by~$1$.
Now construct an NFA Parikh-equivalent to the new OCA over the extended alphabet;
it is easy to see that replacing all occurrences of $e$ in the NFA
by $\eps$ will give us an appropriate NFA.
In this NFA $\eps$-transitions can be eliminated in a standard way.
\end{remark}

\subsection{Basic definitions}

We start from a sequence of definitions which are necessary to describe
pumping schemas for one-counter automata, that are crucial to capture
the linear structure of the Parikh image of the language accepted by 
a given one-counter automaton.

\begin{definition}[attributes of runs and walks]
\label{def:attributes}
For a run 
$\pi=(p_0,c_0), t_1, (p_1,c_1), t_2, \ldots, t_m, (p_m, c_m)$
or a walk (if it has sense)\footnote{To define attributes for walk we take a quasi run such that its sequence of transitions is equal to the walk. All attributes except of $\initcounter{}, \finalcounter{}, \high{}, \low{}$
are well defined for walk, as their value is purely determined by a sequence of transitions.}
we define the following attributes:
\begin{gather*}
\begin{aligned}
\length\pi       &= m,   &\text{(\df{length})}\\
\initstate\pi    &= p_0, &\text{(\df{initial control state})}\\
\finalstate\pi   &= p_m, &\text{(\df{final control state})}\\
\initcounter\pi  &= c_0, &\text{(\df{initial counter value})}\\
\finalcounter\pi &= c_m, &\text{(\df{final counter value})}\\
\end{aligned}
\\
\begin{aligned}
\high\pi   &= \max \{c_i \mid 0 \le i \le m\}, \\
\low\pi    &= \min \{c_i \mid 0 \le i \le m\}, \\
\drop\pi   &= \initcounter\pi - \low\pi, \\
\height\pi &= \high\pi - \low\pi, \\
\effect\pi &= \finalcounter\pi - \initcounter\pi.
\end{aligned}
\end{gather*}
We also use the following terms for quasi runs and walks:
\begin{itemize}

\item
the \df{induced word} is $w =a_1a_2\ldots a_m \in \AlB^*$ where
$t_i=(p_{i-1},a_i, s, p_i) \in \tran$ with $a_i \in \AlB \cup \{\eps\}$,

\item
the \df{Parikh image}, denoted $\parikh{\pi}$, is
the Parikh image of the induced word.

\end{itemize}
\end{definition}

Note that a run of length~$0$ is a single configuration.

\begin{definition}[concatenation]
\label{def:concat}
The \df{concatenation} of a run $\pi_1$ and a quasi run $\pi_2$
\begin{align*}
\pi_1 &=
    (p_0,c_0), t_1, (p_1,c_1), t_2, \ldots,
    t_m, (p_m, c_m) \text{\ and} \\
\pi_2 &=
    (q_0,\bar c_0), \bar t_1, (q_1,\bar c_1), \bar t_2, \ldots,
    \bar t_k, (q_k, \bar c_k)
\end{align*}
where $p_m=q_0$ and 
$c_m \eqdef \finalcounter{\pi_1} \geq \drop{\pi_2}$ is
the sequence
\begin{multline*}
\pi_3 =
    (p_0,c_0), t_1, (p_1,c_1), t_2, \ldots,
    t_m, (p_m, c_m),\\
    \bar t_1, (q_1,\bar c_1'), \bar t_2, \ldots,
    \bar t_k, (q_k, \bar c_k')
\end{multline*}
where $\bar c_i'= \bar c_i - \bar c_0 + c_m$;
note that this sequence $\pi_3$ is a run.
The concatenation of $\pi_1$ and $\pi_2$
is denoted by $\pi_3=\pi_1 \cdot \pi_2$;
we also write $\pi_3 = \pi_1 \pi_2$ when we want
no additional emphasis on this operation.
\end{definition}

Note that in our definition of concatenation the value
$\finalcounter{\pi_1}$ can be different from $\initcounter{\pi_2}$,
in which case the counter values all configurations in $\pi_2$
are adjusted accordingly.
The condition $\finalcounter{\pi_1} \ge \drop{\pi_2}$ ensures
that all the adjusted values stay non-negative, i.e., that
the concatenation $\pi_1 \cdot \pi_2$ is indeed a run.

We extend  Definition~\ref{def:concat} of concatenation to a concatenation of a 
run $\pi_1$ and a walk $\alpha$. Let $\alpha$ be a sequence of transitions
in some quasi-run $\pi_2$. The concatenation $\pi_1$ and $\alpha$ is allowed 
only if $\pi_1\cdot \pi_2$ is well-defined, and the effect of
$\pi_1\cdot \alpha \eqdef \pi_1 \cdot \pi_2 $, so it is a run.
To sum up, we can concatenate runs, quasi-runs, and walks, using
the notation $\pi_1 \cdot \pi_2$ and sometimes dropping the dot.
If $\pi_2$ is a walk and $\pi_1$ is a run, then $\pi_1 \cdot \pi_2$
will also denote a run.
In this and other cases, we will often assume that the counter values in $\pi_2$
are picked or adjusted automatically to match the last configuration of $\pi_1$.
However, whenever we introduce parts of the run,
e.g., by writing ``suppose $\pi = \pi_1 \cdot \pi_2$'', we always
assume that $\pi_2$ is just a sub-run of $\pi$, that is,
no implicit shifting occurs in this particular concatenation.

We say that
a run $\pi_2$ is \emph{in} 
$\pi$ if $\pi=\pi_1\pi_2\pi_3$ for some runs 
$\pi_1,\pi_3$ and $\finalcounter{\pi_1}=\initcounter{\pi_2}$, $\finalcounter{\pi_3}=\initcounter{\pi_3}$.

We say that runs
$\pi_1,\pi_2,\ldots, \pi_k$ are \emph{disjoint} in $\pi$
if $\pi=\pi_1'\pi_1\pi_2'\pi_2\pi_3'\ldots$ $
\pi_k'\pi_k\pi_{k+1}'$
for some runs $\pi_1',\pi_2'\dots \pi_{k+1}'$, where $\finalcounter{\pi_i'} = \initcounter{\pi_i}$ and
$\finalcounter{\pi_i} = \initcounter{\pi_{i+1}'}$ for all $1\leq i\leq k.$

\subsection{Semilinear representation of $\parikh{\langop\A}$}
\label{a:s:parikh-bounded:algo}

\begin{definition}[split run]
\label{def:bi-walk}
A \emph{\biwalk} is a pair of runs
$\bi{\rho}{\sigma}$ such that
$\effect{\rho}\geq 0$ and
$\effect{\sigma} \le 0$.
\end{definition}

In fact, we can even drop these inequalities in the definition,
but we believe it's more visual this way.

\begin{definition}[direction]
\label{def:direction}
A \emph{direction} is a pair of walks $\alpha$ and $\beta$,
denoted $\direction = \dr\alpha\beta$, such that:
\begin{itemize}
\item $\initstate{\alpha} = \finalstate{\alpha},$
\item $\initstate{\beta} = \finalstate{\beta},$
\item $0 < |\alpha| + |\beta| < \polyvi$,
\item $0 \le \effect{\alpha} \le \polyiii$,
\item $\effect{\alpha} + \effect{\beta} = 0$,
\item if $\effect{\alpha} = 0$, then either 
      $|\alpha| = 0$ or $|\beta| = 0$.
\end{itemize}
\end{definition}

One can think of a direction as
a pair of short loops with zero total effect on the counter.
Pairs of words induced by these loops are sometimes known as iterative pairs.
Directions of the first kind are essentially just individual loops;
in a direction of the second kind,
the first loop increases and the second loop decreases the counter value
(even though the values $\drop\alpha$ and $\drop\beta$
 are allowed to be strictly positive).
The condition that $\effect\alpha \le \polyiii$ is a pure technicality
and is only exploited at a later stage of the proof;
in contrast, the upper bound $|\alpha| + |\beta| < \polyvi$ is crucial.

We also use the following terms:
\begin{itemize}

\item
the Parikh image of a \biwalk $\bi{\rho}{\sigma}$ is
$\parikh{(\rho,\sigma)}\eqdef \parikh{\rho}+\parikh{\sigma}$,

\item
similarly, the Parikh image of a direction $\dr{\alpha}{\beta}$ is
$\parikh{\dr{\alpha}{\beta}} \eqdef \parikh{\alpha}+\parikh{\beta}$,

\item
\biwalks $\bi{\rho_1}{\sigma_1}\ldots\bi{\rho_k}{\sigma_k}$ 
are \emph{disjoint in} a run $\pi$ iff
$\rho_1, \sigma_1, \ldots, \rho_k, \sigma_k$ form
disjoint subsequences of transitions in the sequence $\pi$;

\end{itemize}

Remark: the number of directions can be exponential,
but the number of their Parikh images is at most $(\polyvi)^\dimension$.
Since $\dimension$ is fixed, this is polynomial in $n$.

\begin{definition}[availability of directions]
\label{def:available}
Suppose $\pi$ is a run.
A direction $\direction = \dr\alpha\beta$ is \df{available} at $\pi$
if there exists a factorization $\pi=\pi_1 \cdot \pi_2 \cdot \pi_3$ such that
$\pi'=\pi_1 \cdot \alpha \pi_2 \beta \cdot \pi_3$
is also a run.
\end{definition}

In the context of the definition above we write
$\pi+\direction$ to refer to $\pi'$.
This plus operation is non-commutative and binds from the left.
Whenever we use this notation,
we implicitly assume that the direction $\direction$ is available at $\pi$.

Note that for a particular run $\pi$ there can be more than one
factorization of $\pi$ into $\pi_1, \pi_2, \pi_3$
such that $\pi_1 \alpha  \pi_2  \beta  \pi_3$ is a valid run.
In such cases the direction $\direction$ can be introduced at different points
inside $\pi$.
In what follows we only use the notation $\pi+\direction$ to refer to a single
run $\pi'$ obtained in this way,
without specifying a particular factorization of $\pi$.
We ensure that all statements that refer to $\pi+\direction$ hold regardless of
which factorization is chosen of $\pi$.

\begin{lemma}[characterization of availability]
\label{l:avalDueToStates}
A direction $\dr{\alpha}{\beta}$ is available at
a run $\pi$ if and only if $\pi$ has two configurations
$(\initstate\alpha, c_1)$ and $(\initstate\beta, c_2)$,
occurring in this particular order, such that
$c_1 \ge \drop{\alpha}$ and
$c_2 + \effect{\alpha} \ge \drop{\beta}$.
\end{lemma}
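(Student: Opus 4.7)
The plan is a direct verification in both directions, based on tracking the counter offset induced by inserting the loops $\alpha$ and $\beta$. Recall from Definition~\ref{def:direction} that $\alpha$ is a loop on $\initstate{\alpha}$ and $\beta$ is a loop on $\initstate{\beta}$, with $\effect{\alpha}\ge 0$ and $\effect{\alpha}+\effect{\beta}=0$.

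For the ``only if'' direction, suppose $\dr{\alpha}{\beta}$ is available at $\pi$, witnessed by a factorization $\pi=\pi_1\pi_2\pi_3$ such that $\pi'=\pi_1\cdot\alpha\cdot\pi_2\cdot\beta\cdot\pi_3$ is a run. Matching control states at the splicing points forces $\finalstate{\pi_1}=\initstate{\alpha}$ and $\finalstate{\pi_2}=\initstate{\beta}$. Setting $c_1=\finalcounter{\pi_1}$ and $c_2=\finalcounter{\pi_2}$, the configurations $(\initstate{\alpha},c_1)$ and $(\initstate{\beta},c_2)$ appear in $\pi$ in this order. Non-negativity of the counter along $\alpha$ when started at $c_1$ gives $c_1\ge\drop{\alpha}$; along $\beta$, which is started at the shifted value $c_2+\effect{\alpha}$, we obtain $c_2+\effect{\alpha}\ge\drop{\beta}$.

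For the ``if'' direction, assume the two configurations exist in the stated order and satisfy the inequalities. Factorize $\pi=\pi_1\pi_2\pi_3$ so that $\pi_1$ ends at $(\initstate{\alpha},c_1)$ and $\pi_2$ ends at $(\initstate{\beta},c_2)$, which is possible precisely because the two configurations occur in this order in $\pi$. I will then verify that $\pi_1\cdot\alpha\cdot\pi_2\cdot\beta\cdot\pi_3$ is a valid run by walking through the concatenation step by step: along $\alpha$ the counter stays non-negative by $c_1\ge\drop{\alpha}$; along the shifted copy of $\pi_2$ the counter is increased uniformly by $\effect{\alpha}\ge 0$ compared with the original $\pi_2$, and so remains non-negative; along $\beta$ we use $c_2+\effect{\alpha}\ge\drop{\beta}$; and finally, since $\effect{\alpha}+\effect{\beta}=0$, the counter returns to $c_2$ at the start of $\pi_3$, so $\pi_3$ can be replayed without any further shift.

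The only mild subtlety is bookkeeping of the counter shift on $\pi_2$: one must keep in mind that $\effect{\alpha}$ propagates through $\pi_2$ but is cancelled exactly by $\effect{\beta}$, so $\pi_3$ attaches unchanged; this is where the constraint $\effect{\alpha}+\effect{\beta}=0$ from Definition~\ref{def:direction} is used. No other properties of directions (length bounds, kind) play any role in this equivalence.
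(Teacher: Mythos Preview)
Your proposal is correct and follows essentially the same approach as the paper's proof: both directions are handled by fixing the factorization $\pi=\pi_1\pi_2\pi_3$ at the two designated configurations and then checking, step by step, that the concatenation $\pi_1\alpha\pi_2\beta\pi_3$ is a valid run precisely when the two inequalities hold. The paper spells out the chain of $\finalcounter{\cdot}\ge\drop{\cdot}$ checks a bit more explicitly, but the argument and the use of $\effect{\alpha}\ge 0$ and $\effect{\alpha}+\effect{\beta}=0$ are identical to yours.
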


\begin{proof}
Denote $p = \initstate\alpha$ and $q = \initstate\beta$.
It is immediate that for a direction $\dr{\alpha}{\beta}$ to be
available, it is necessary that $\pi$ have some configurations
of the form $(p, c_1)$ and $(q, c_2)$ that occur in this order.
Let $\pi=\pi_1 \pi_2 \pi_3$ where $\pi_1$
ends in $(p,c_1)$ and $\pi_2$ ends in $(q,c_2)$. 
We now show that the direction $\dr{\alpha}{\beta}$ is available
if and only if
$c_1 \ge \drop{\alpha}$ and
$c_2 + \effect{\alpha} \ge \drop{\beta}$.

We first suppose that these two inequalities hold.
Observe that
$\pi_1\alpha$ is then a run as $c_1 \geq \drop{\alpha}$;
furthermore,
$\finalcounter{\pi_1\alpha} = c_1 + \effect\alpha \geq c_1$:
by our definition of a direction, we have $\effect\alpha \ge 0$.
Hence,
$\finalcounter{\pi_1\alpha} \geq c_1 \ge \drop{\pi_2}$,
where the last inequality holds because $\pi_1 \pi_2$ is a run.
It follows that $\pi_1\alpha\pi_2$ is also a run;
we note that
$\finalcounter{\pi_1 \alpha \pi_2} =
 \finalcounter{\pi_1 \pi_2} + \effect\alpha =
 c_2 + \effect\alpha \ge \drop\beta$. This, in turn, implies
that $\pi_1\alpha\pi_2\beta$ is
a run. Moreover $\effect{\alpha}=-\effect{\beta}$ by our
definition of a direction, so
$\finalcounter{\pi_1\alpha\pi_2\beta}=\finalcounter{\pi_1\pi_2} = c_2$.
Since $\pi_1 \pi_2 \pi_3$ is a run,
we have $c_2 \ge \drop{\pi_3}$;
hence, $\pi_1\alpha\pi_2\beta\pi_3$ is also a run.

Conversely, suppose $\pi_1 \alpha \pi_2 \beta \pi_3$ is a run.
Recall that $\pi_1$ ends in $(p, c_1)$;
we conclude that $c_1 \ge \drop\alpha$.
Also recall that in the run $\pi_1 \pi_2 \pi_3$ the fragment $\pi_2$
ends in $(q, c_2)$; here $c_2 = \finalcounter{\pi_1 \pi_2}$.
But $\pi_1 \alpha \pi_2 \beta \pi_3$ is a run,
so $\finalcounter{\pi_1 \alpha \pi_2} \ge \drop\beta$;
since $\finalcounter{\pi_1 \alpha \pi_2} = \finalcounter{\pi_1 \pi_2} + \effect\alpha$,
we also conclude that $c_2 + \effect\alpha \ge \drop\beta$.
This completes the proof.
\end{proof}

By $\avail{\pi}$ we denote the set of all directions available at $\pi$.

\begin{lemma}[monotonicity of availability]
\label{l:more}
If $\pi$ is a run of a one-counter automaton
and $\direction$ is a direction available at $\pi$, then
$\avail{\pi} \subseteq \avail{\pi+\direction}$.
\end{lemma}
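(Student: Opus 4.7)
My plan is to prove Lemma~\ref{l:more} by appealing directly to the characterization of availability given in Lemma~\ref{l:avalDueToStates}. The strategy is to show that every configuration of $\pi$ survives in $\pi+\direction$, with its control state preserved and its counter value possibly increased, and that the relative order of configurations is preserved. This alone suffices to transfer any availability witness from $\pi$ to $\pi+\direction$.

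Concretely, I would fix $\direction = \dr{\alpha}{\beta}$ available at $\pi$ via a factorization $\pi = \pi_1 \cdot \pi_2 \cdot \pi_3$, so that $\pi' = \pi + \direction = \pi_1 \cdot \alpha \pi_2 \beta \cdot \pi_3$ is a run. I would then observe the following counter-value relationship between $\pi$ and $\pi'$: configurations of $\pi_1$ appear in $\pi'$ unchanged; configurations of $\pi_2$ appear in $\pi'$ with their counter value shifted up by $\effect{\alpha}$; configurations of $\pi_3$ appear in $\pi'$ unchanged, because $\effect{\alpha}+\effect{\beta}=0$. Since the definition of a direction forces $\effect{\alpha}\geq 0$, every configuration of $\pi$ is matched in $\pi'$ by a configuration with the same control state, at least the same counter value, and in the same linear order along the run.

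Now let $\direction' = \dr{\alpha'}{\beta'} \in \avail\pi$. By Lemma~\ref{l:avalDueToStates}, $\pi$ contains configurations $(\initstate{\alpha'}, c_1)$ and $(\initstate{\beta'}, c_2)$, occurring in this order, with $c_1 \geq \drop{\alpha'}$ and $c_2 + \effect{\alpha'} \geq \drop{\beta'}$. By the observation above, $\pi'$ contains corresponding configurations $(\initstate{\alpha'}, c_1')$ and $(\initstate{\beta'}, c_2')$ with $c_1' \geq c_1$, $c_2' \geq c_2$, and in the same order. Hence $c_1' \geq \drop{\alpha'}$ and $c_2' + \effect{\alpha'} \geq \drop{\beta'}$. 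Applying Lemma~\ref{l:avalDueToStates} in the other direction yields $\direction' \in \avail{\pi'}$, which completes the proof.

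The only subtlety — and what I would call the main obstacle, though it is a minor one — lies in carefully justifying the three-case decomposition of configurations and confirming that $\effect{\alpha}\geq 0$ is enough to avoid any drop in counter values anywhere. Once that bookkeeping is clean, the conclusion is immediate from Lemma~\ref{l:avalDueToStates}, and there is no need to reason about particular factorizations of $\pi'$ or about the location where $\direction'$ might be reinserted.
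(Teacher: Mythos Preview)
Your proposal is correct and follows essentially the same approach as the paper's proof: both reduce to Lemma~\ref{l:avalDueToStates}, decompose $\pi$ into $\pi_1\pi_2\pi_3$, and observe that in $\pi'=\pi_1\alpha\pi_2\beta\pi_3$ the configurations of $\pi_1$ and $\pi_3$ are unchanged while those of $\pi_2$ are shifted up by $\effect{\alpha}\geq 0$, so every ordered pair of configurations in $\pi$ has a corresponding pair in $\pi'$ with counter values at least as large. Your write-up is, if anything, slightly more explicit than the paper's in spelling out how the two inequalities in Lemma~\ref{l:avalDueToStates} transfer.
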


\begin{proof}
By Lemma~\ref{l:avalDueToStates} it suffices to show that for every pair of configurations
$(p,c_1), (q,c_2)$ that appear in $\pi$ in this particular order there is a pair of
configurations $(p,c_1'), (q,c_2')$ that appear in $\pi'=\pi+\direction$, in this particular
order, such that $c_1'\geq c_1$ and $c_2'\geq c_2$.
Now, this claim is not difficult to substantiate.
Suppose $\direction = \dr{\alpha}{\beta}$. Indeed, for any decomposition of $\pi=\pi_1\pi_2\pi_3$ 
such that there is a 
$\pi'=\pi_1\alpha\pi_2\beta\pi_3$, define 
$\pi_1'=\pi_1$, 
$\pi_2'=(\initstate{\pi_2},\initcounter{\pi_2}+\effect{\alpha})\pi_2$,
$\pi_3'=\pi_3$. Now $\pi_2'$ is simply $\pi_2$ shifted up,
as $\effect{\alpha}\geq0$.
Observe that now $\pi'=\pi_1'\alpha\pi_2'\beta\pi_3'$. 
Thus for any pair of configurations 
in $\pi$ there is a corresponding pair of configurations with the needed properties
in $\pi'$.
This completes the proof.
\end{proof}

\begin{definition}[unpumping]
\label{def:unpumping}
A run $\pi'$ \df{can be unpumped} if
there exist a run $\pi$ and a direction $\direction$
such that $\pi' = \pi + \direction$.

If additionally $\avail{\pi'} = \avail\pi$, then we say
that $\pi'$ \df{can be safely unpumped}. Note that $\avail{\pi'}$
is always a superset of $\avail\pi$ by Lemma~\ref{l:more}.
\end{definition}

\begin{lemma}[safe unpumping lemma]
\label{l:unpump-summary}
Every accepting run $\pi'$ of $\A$ 
of length greater than $\polyMinUnpLen$
can be safely unpumped.
\end{lemma}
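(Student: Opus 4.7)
The plan is a case split on the height $\height{\pi'}$, exhibiting in each case polynomially many pairwise disjoint candidate unpumping regions of $\pi'$, and then arguing that an unpumping can fail to be safe only if its removed fragment contains one of polynomially many \emph{important} configurations. A pigeonhole step between these two polynomial quantities will then select a candidate that is safe.

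For the small-height case, say $\height{\pi'} \le P_1(n)$ for a fixed polynomial $P_1$, the run $\pi'$ is a walk in the product graph on $Q \times \{0,\ldots,P_1(n)\}$, whose size is $\poly(n)$. Since $\length{\pi'} > \polyMinUnpLen$, a pigeonhole argument on the configurations of $\pi'$ produces many positions at which the same state--counter pair recurs, and each such recurrence delimits a loop of zero counter effect; this loop is a direction of the first kind, hence witnesses an unpumping. In the large-height case, $\height{\pi'} > P_1(n)$, I would use a Latteux-style pumping argument: the counter rises and falls through many intermediate levels, and pigeonholing states on matched ascending and descending segments yields a pair of short loops $\alpha,\beta$ of total length at most $\polyvi$ with $\effect{\alpha} = -\effect{\beta} > 0$, i.e., a direction of the second kind, giving an unpumping of the form $\pi' = \pi + \dr{\alpha}{\beta}$. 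In both cases, choosing $P_1$ carefully in relation to the exponents inside $\polyMinUnpLen$ ensures that $\poly(n)$ disjoint candidate unpumpings can be extracted along $\pi'$.

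The safety argument rests on Lemma \ref{l:avalDueToStates}: a direction $\direction = \dr{\alpha}{\beta}$ is available at a run iff the run contains, in order, a configuration with state $\initstate{\alpha}$ and counter at least $\drop{\alpha}$, followed by one with state $\initstate{\beta}$ and counter at least $\drop{\beta} - \effect{\alpha}$. Call a configuration $C$ of $\pi'$ \emph{important} if there is some $\direction \in \avail{\pi'}$ for which $C$ is the only configuration in $\pi'$ that can play the $\alpha$-witness or the $\beta$-witness role, where ``play the role'' is meant also after the counter shift by $\effect{\cdot}$ that our own candidate unpumping induces on its middle segment. I would bound the number of important configurations by noting that the number of distinct Parikh images of directions is at most $(\polyvi)^{\dimension}$, which is $\poly(n)$ since $\dimension$ is fixed; each direction contributes at most two role-witnesses, and the extra witnesses at risk from the counter shift lie in a counter window of polynomial width, contributing only another polynomial factor. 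Altogether the number of important configurations is $\poly(n)$, so by pigeonhole at least one of the $\poly(n)$ candidate unpumpings has a removed fragment disjoint from all important configurations, and its unpumping is safe. The main obstacle will be pinning down the definition of \emph{important} so that it simultaneously accounts for configurations deleted inside $\alpha\beta$ and for configurations in the middle segment $\pi_2$ that would drop below a required threshold after the shift; calibrating $P_1$ and the hidden polynomial constants so that the final bound matches $\polyMinUnpLen$ will be a careful but routine bookkeeping exercise.
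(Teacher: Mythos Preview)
Your overall strategy---case split on height, extract polynomially many disjoint candidate unpumpings, and pigeonhole against a polynomial set of ``important'' data to select a safe one---is exactly the paper's architecture. The low-height case via repeated configurations (directions of the first kind) and the high-height case via matched ascending/descending segments (directions of the second kind) correspond to the paper's two sublemmas.

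The gap is in your safety bound. You propose to bound the number of important configurations by the number of \emph{Parikh images} of directions, $(\polyvi)^{\dimension}$. This is the wrong invariant: by Lemma~\ref{l:avalDueToStates}, the availability of $\dr{\alpha}{\beta}$ depends on $(\initstate{\alpha},\drop{\alpha},\initstate{\beta},\drop{\beta}-\effect{\alpha})$, not on the Parikh image; two directions with the same Parikh image can require different witness configurations. Moreover, the safe unpumping lemma is stated and proved in the paper with no fixed-alphabet assumption at all---bringing $\dimension$ into this argument would weaken the lemma unnecessarily.

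The paper replaces your ``important configuration'' count with a \emph{fingerprint}. In the low case it uses the \emph{configuration fingerprint}: the set of ordered pairs of configurations occurring in $\pi'$. Since $\height{\pi'}\le\polyv$, this set has size at most $n^2(\polyv)^2$, and by Lemma~\ref{l:avalDueToStates} it determines $\avail{\pi'}$; marking two witnesses per pair and choosing a candidate disjoint from all marks suffices. In the high case the paper makes the observation you are missing: when the relevant portion of $\pi'$ has $\low{\cdot}\ge\polyiiii$, every counter threshold $\drop{\alpha}$ and $\drop{\beta}-\effect{\alpha}$ (each at most $\polyi$) is automatically met, so availability is determined by the \emph{state fingerprint}---the set of ordered pairs of \emph{states}---of size at most $n^2$. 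This single observation replaces your vague ``counter window of polynomial width'' bookkeeping and cleanly disposes of the shift issue you flagged as the main obstacle.
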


\noindent
Lemma~\ref{l:unpump-summary} is the key lemma
in the entire Appendix~\ref{a:s:parikh-bounded};
we prove it in subsection~\ref{a:s:proof-unpump}.

Recall that a set $A \sset \N^\dimension$ is called \df{linear}
if it is of the form
$
\Lin{b}{P} \eqdef \{ b + \lambda_1 p_1 + \ldots + \lambda_r p_r \mid
        \lambda_1, \ldots, \lambda_r \in \N,
        p_1, \ldots, p_r \in P \}
$
for some vector $b \in \N^\dimension$ and some finite set $P \sset \N^\dimension$;
this vector $b$ is called the \df{base} and vectors $p \in P$ \df{periods}.
A set $S \sset \N^d$ is called \df{semilinear} if it is a finite union of linear sets,
$S = \cup_{i \in I} \Lin{b_i}{P_i}$.
Semilinear sets were introduced by FIXME in 1960s
and have since received a lot of attention in formal language theory
and its applications to verification. They are precisely the sets
definable in Presburger arithmetic, the first-order theory of natural
numbers with addition.
Intuitively, semilinear sets are a multi-dimensional analogue
of ultimately periodic sets in \N.

The following lemma characterizes the semilinear set $\parikh{\langop{\A}}$
through sets of directions available at short accepting runs.

\begin{lemma}
\label{l:parikh}
For any one-counter automaton $\A$,
it holds that
\begin{equation*}
\parikh{\langop\A} =
\bigcup_{\text{\textup{$\length{\pi}\leq small_1$}}}
\Lin{\parikh\pi}{\parikh{\avail\pi}},
\end{equation*}
where the union is taken over all runs of $\A$
of length at most $small_1\eqdef \polyMinUnpLen$.
\end{lemma}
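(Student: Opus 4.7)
The plan is to prove the two inclusions separately. Both rely on machinery already in place.

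For the easy inclusion $(\supseteq)$, I will fix an accepting run $\pi$ with $\length\pi \le small_1$ and pick any $v = \parikh\pi + \sum_{i=1}^{r} \lambda_i \parikh{d_i}$ with $d_i \in \avail\pi$ and $\lambda_i \in \N$. The idea is to build an accepting run $\pi^{\star}$ with $\parikh{\pi^{\star}} = v$ by inserting the prescribed multiset of directions into $\pi$ one at a time. The only point to check is that each insertion remains legal: after inserting some of the directions we are at a run $\pi^{(j)}$, and by Lemma~\ref{l:more} the set $\avail{\pi^{(j)}}$ contains $\avail\pi$, so every remaining $d_i$ is still available at $\pi^{(j)}$. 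Accepting-ness is preserved along the way, since the two loops composing a direction start and end at the same control state and have cancelling counter effects, so the initial and final configurations of the whole run are unchanged.

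For the hard inclusion $(\subseteq)$, I will start from an accepting run $\pi'$ with $\parikh{\pi'} = v$ and apply Lemma~\ref{l:unpump-summary} iteratively. Setting $\pi_0 \eqdef \pi'$, as long as $\length{\pi_i} > small_1$ I invoke safe unpumping to obtain a run $\pi_{i+1}$ and a direction $d_{i+1}$ with $\pi_i = \pi_{i+1} + d_{i+1}$ and $\avail{\pi_i} = \avail{\pi_{i+1}}$. A direction has at least one transition, so lengths strictly decrease and the process stops after $k$ steps at a run $\pi_k$ with $\length{\pi_k} \le small_1$. Every $\pi_i$ remains accepting for the same reason as in the easy direction. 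The payoff of using the \emph{safe} version of unpumping is that $\avail{\pi_k} = \avail{\pi_{k-1}} = \cdots = \avail{\pi'}$; since each $d_{i+1}$ lies in $\avail{\pi_{i+1}}$, in fact all the removed $d_i$'s belong to $\avail{\pi_k}$. Telescoping the identities $\parikh{\pi_i} = \parikh{\pi_{i+1}} + \parikh{d_{i+1}}$ then yields $v = \parikh{\pi_k} + \sum_{i=1}^{k} \parikh{d_i} \in \Lin{\parikh{\pi_k}}{\parikh{\avail{\pi_k}}}$, which witnesses membership in the right-hand side of the lemma.

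The main obstacle for this lemma is not its own proof but its hypothesis: essentially all the work is packed into Lemma~\ref{l:unpump-summary}. Without the safety guarantee, one would only have $\avail{\pi_k} \supseteq \avail{\pi'}$ in general and, more importantly, no reason to expect each removed direction to remain available at the short witness run, so telescoping would fail. Once safe unpumping is granted, the present lemma reduces to bookkeeping: telescoping Parikh images along the chain of unpumpings and invoking monotonicity of availability for the forward direction.
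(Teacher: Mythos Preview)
Your proof is correct and follows essentially the same route as the paper's: monotonicity of availability (Lemma~\ref{l:more}) for $\supseteq$, and iterated safe unpumping (Lemma~\ref{l:unpump-summary}) with telescoping for $\subseteq$. One harmless slip in your closing commentary: without safety, monotonicity gives $\avail{\pi_k} \subseteq \avail{\pi'}$, not $\supseteq$; your subsequent point---that the removed directions need not lie in $\avail{\pi_k}$---is the right diagnosis regardless.
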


\begin{proof}
Start with the $\supseteq$ part.
It suffices to show that, for any run $\pi$ of \A
and any vector $v \in \Lin{\parikh\pi}{\parikh{\avail\pi}}$,
the set $\langop\A$ contains at least one word with Parikh image~$v$.
Indeed, take such a $v$ and suppose that $v = v_0 + \sum_{i = 1}^{m} \lambda_i v_i$,
where the vector $v_0$ is the Parikh image of the word induced by the run $\pi$,
vectors~$v_1, \ldots, v_m$ are Parikh images of the words induced by
some directions~$\direction_1, \ldots, \direction_m$ available at $\pi$,
and $\lambda_1, \ldots, \lambda_m$ are nonnegative integers.
Lemma~\ref{l:more} ensures that we can form a run
\begin{equation*}
\pi +
\underbrace{\direction_1 + \ldots + \direction_1}_{\text{$\lambda_1$ times}}
+ \ldots +
\underbrace{\direction_m + \ldots + \direction_m}_{\text{$\lambda_m$ times}}.
\end{equation*}
This run induces a word accepted by \A, and the Parikh image of this word is $v$,
as desired.

Now turn to the $\sset$ part and take some vector $v$ in $\parikh{\langop\A}$.
This vector $v$ is the Parikh image of a word in $\langop\A$, which is induced
by some accepting run $\pi_0$ of \A. If the length of $\pi_0$ does not exceed
$\polyMinUnpLen$, there is nothing to prove, so assume otherwise.
By the main lemma~\ref{l:unpump-summary}, $\pi_0$ can be safely unpumped. This implies that
$\pi_0 = \pi_1 + \direction_1$ for some direction $\direction_1$.
Note that the length of $\pi_1$ is strictly less than the length of $\pi_0$;
if it is greater than $\polyMinUnpLen$, then we apply the safe unpumping lemma again:
$\pi_1 = \pi_2 + \direction_2$. We repeat the process until the length
of the run drops to $\polyMinUnpLen$ or below:
\begin{equation}
\label{eq:run-unpump}
\pi_0 = \pi_k + \direction_k + \direction_{k - 1} + \ldots + \direction_1,
\end{equation}
where $\length{\pi_k} \le \polyMinUnpLen$.
Take Parikh images of the words that are induced by the runs
on both sides of~\eqref{eq:run-unpump}:
on the left-hand side, we obtain $v$;
we claim that on the right-hand side
we obtain a vector from $\Lin{\parikh{\pi_k}}{\parikh{\avail{\pi_k}}}$.
Indeed, recall that Lemma~\ref{l:unpump-summary} 
guarantees that all these unpumpings are safe,
i.e., $\avail{\pi_i} = \avail{\pi_{i - 1}}$ for $0 < i \le k$.
Since each direction $\direction_i$ is available at the run $\pi_i$,
it follows that
all the directions $\direction_1, \ldots, \direction_k$ are available
at $\pi_k$, so the Parikh image of the word induced by the run on the right-hand
side of~\eqref{eq:run-unpump}
indeed belongs to $\Lin{\parikh{\pi_k}}{\parikh{\avail{\pi_k}}}$.
But, by our choice above, the run $\pi_k$ has length 
at most $\polyMinUnpLen$.
This concludes the proof.
\end{proof}

Up to now we were focused on building some semilinear
representation of $\parikh{\langop{\A}}$. Our next goal is
to improve this representation, before we start we
introduce two useful notions. For a given vector $v$
we define
$\infini{v}\eqdef \max \{ v(a) \mid a\in \AlB \}$ and
$\norm{v}\eqdef \sum_{a \in \AlB} v(a)$. 
For a set of vectors $F$ by $\infini{F}$
we denote $\max\{\infini{v} \mid v \in F\}$.
We also denote the cardinality of a finite set $F$ by
$\card{F}$.

The following lemma uses results
from Huynh~\cite{Huynh85IC}
and in Kopczy\'nski and To~\cite{KT10},
which rely on the Carath\'eodory theorem for cones in
a multi-dimensional space. Essentially, the underlying idea
is that if a vector is a linear combination of more than
$\dimension$ vectors in a $\dimension$-dimensional space, then,
by using linear dependencies, one can reduce this number
to just $\dimension$.
For non-negative integer combinations, the situation
is slightly more complicated, but the same idea
can be carried through.

\begin{lemma}
\label{l:semilinear-reduce}
Let $S \sset \N^\dimension$ be a semilinear set
with representation
$
S = \bigcup_{i \in I} \Lin{c_i}{P_i}
$
and suppose $M \in \N$ is such that
$\infnorm{c_i}, \infnorm{P_i} \le M$ for all $i \in I$.
Then $S$ also has a representation
$
S = \bigcup_{j \in J} \Lin{b_j}{Q_j}
$,
where
$\card{J} \le (M+1)^{\poly(\dimension)}$ and
for each $j \in J$ there exists an $i \in I$ such that
the following conditions hold:
\begin{itemize}
\item $\Lin{b_j}{Q_j} \sset \Lin{c_i}{P_i}$,
\item $Q_j$ is a linearly independent subset of $P_i$, and
\item
$\infini{b_j}\leq (M+1)^{\poly(\dimension)}$.
\end{itemize}
\end{lemma}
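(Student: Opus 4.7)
The plan is to handle each linear set $\Lin{c_i}{P_i}$ separately and then take the union over $i\in I$; the bound on $\card{J}$ follows from the observation that, regardless of $|I|$, the total number of pairs $(b,Q)$ with $Q$ a linearly independent subset of $\N^\dimension$ consisting of vectors of $\infty$-norm at most $M$ and $b\in\N^\dimension$ of $\infty$-norm at most $(M+1)^{\poly(\dimension)}$ is itself at most $(M+1)^{\poly(\dimension)}$, because any such $Q$ has at most $\dimension$ elements. The core task therefore reduces to the following single-set claim: any linear set $L=\Lin{c}{P}$ with $\infnorm{c},\infnorm{P}\le M$ can be written as a union of linear sets $\Lin{b}{Q}$ with $Q\sset P$ linearly independent, $\Lin{b}{Q}\sset L$, and $\infnorm{b}\le (M+1)^{\poly(\dimension)}$.

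To establish this claim I would first apply the classical Carath\'eodory theorem for cones: for any $v\in L$, writing $v=c+\sum_{p\in P}\lambda_p\,p$ with $\lambda_p\in\N$, the difference $v-c$ lies in the rational cone generated by some linearly independent subset $Q\sset\{p\in P:\lambda_p>0\}$, so $v-c=\sum_{q\in Q}\alpha_q\,q$ for nonnegative rationals $\alpha_q$. By Cramer's rule, each $\alpha_q$ is a ratio in which the denominator is the determinant of an integer matrix whose columns come from $P$, so by Hadamard's inequality the common denominator is at most $(\dimension M^2)^{\dimension/2}$.

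The rational decomposition alone does not deliver integer coefficients, so the key step is an integer-Carath\'eodory argument (as in Huynh~\cite{Huynh85IC} and Kopczy\'nski--To~\cite{KT10}): one shows that for every $v\in L$ there exist a linearly independent $Q\sset P$, integers $\mu_p\in\N$ for $p\in P\setminus Q$ with $\mu_p\le (M+1)^{\poly(\dimension)}$, and integers $n_q\in\N$ for $q\in Q$ with
\[
v=c+\sum_{p\in P\setminus Q}\mu_p\,p+\sum_{q\in Q}n_q\,q.
\]
Intuitively, one starts from the rational decomposition and ``exchanges'' fractional multiples of periods in $Q$ against integer multiples of periods in $P\setminus Q$ until the fractional residue is absorbed; because each exchange is controlled by the determinantal denominator $D$ above, the resulting $\mu_p$ stay within $(M+1)^{\poly(\dimension)}$.

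Given this decomposition, $v$ is witnessed by the linear set $\Lin{b}{Q}$ with $b=c+\sum_{p\in P\setminus Q}\mu_p\,p$, and one checks $\infnorm{b}\le\infnorm{c}+|P|\cdot M\cdot(M+1)^{\poly(\dimension)}\le (M+1)^{\poly(\dimension)}$, using $|P|\le (M+1)^\dimension$. The containment $\Lin{b}{Q}\sset L$ is immediate, since $Q\sset P$ and $b\in L$. Enumerating all such $(b,Q)$ produces a family of at most $(M+1)^{\poly(\dimension)}$ linear sets covering $L$, as required. I expect the main technical obstacle to be the precise polynomial-in-$\dimension$ bound on the exchange coefficients $\mu_p$, which requires tracking how Hadamard-size subdeterminants of submatrices of $P$ propagate through the rounding step; this is exactly the content of the cited work of Huynh and Kopczy\'nski--To, and the present lemma is essentially a repackaging of their bound in our notation.
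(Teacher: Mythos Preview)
Your proposal is correct and follows essentially the same approach as the paper: both reduce to the single-linear-set case, invoke the Carath\'eodory-type bound of Huynh and Kopczy\'nski--To for that case, extend to arbitrary $I$ by taking unions, and then bound $\card{J}$ by counting all possible pairs $(b,Q)$ with $Q$ a set of at most $\dimension$ vectors of bounded $\infty$-norm and $b$ of bounded $\infty$-norm. The only difference is cosmetic: the paper cites the single-set result as a black box (quoting the explicit bound from \cite{KT10}), whereas you sketch the mechanics of the integer-Carath\'eodory argument before noting that it is precisely the content of those references.
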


\begin{proof}
The case $\card{I} = 1$ appears in
Huynh~\cite[Lemma~2.8]{Huynh85IC}
and in Kopczy\'nski and To~\cite[Theorem~6]{KT10}.
In these results, upper bounds on $\card{J}$ are only stated implicitly and only for $I = 1$.
We will rely on results of Kopczy\'nski and To,
as it seems simpler to extract an explicit upper bound from their arguments.
Their results can be stated as follows:
If $S= \Lin{c_1, P_1}$, then
$S$ has also a representation
$
S = \bigcup_{j \in J_1} \Lin{b_j}{Q_j}
$ such that
for each $j \in J_1$ 
the following conditions hold:
\begin {itemize}
 \item $\Lin{b_j}{Q_j} \sset \Lin{c_1}{P_1}$,
\item $Q_j$ is a linearly independent subset of $P_1$, and
\item $\infini{b_j}\leq (M+1)^{\poly(\dimension)}$.
\end {itemize}
In fact, the precise bound we get from~\cite[Theorem~6]{KT10} is
as follows:
\begin{equation*}
    \infini{b_j}\leq (2M+1)^{\dimension}\cdot (M^\dimension\cdot \dimension^{\frac{\dimension}{2}})^2. 
\end{equation*}
Observe that if $S= \bigcup_{i \in I} \Lin{c_i, P_i}$, then we can transfer the above result 
directly:
$S= \bigcup_{i \in I} \bigcup_{j\in J_i}\Lin{b_j}{Q_j} = \bigcup_{j \in J} \Lin{b_j}{Q_j}$
and the above bounds are transfered as well.

So the only remaining part is to prove that $\card{J }\le (M+1)^{\poly(\dimension)}$.
To do this, we start from the observation that every element $j\in J$ is uniquely
determined by the pair $(b_j, Q_j)$, so $\card{J}$ is bounded by
the cardinality of a set of all possible pairs $(b, Q)$ where
\begin{itemize}
 \item $b\in \N^\dimension$ and $\infini{b}\leq (M+1)^{\poly(\dimension)}$
 \item $Q$ is an $r$-tuple of vectors $(v_1,\ldots v_r)$
 where $r \le \dimension$ and each $v_i\in \N^\dimension$ and $\infini{v_i}\leq M$.
\end{itemize}
Due to the above characterization, we can bound the number of
possible pairs $(b, Q)$ by 
\begin{equation*}
    ((M+1)^{\poly(\dimension)})^{\dimension} \cdot ((M+1)^{\dimension})^{\dimension},
\end{equation*}
which is again a polynomial in $M$, because $\dimension$ is fixed.
Thus, $\card{J}\leq (M+1)^{(\poly(\dimension)+\dimension)\cdot \dimension}$.
This completes the proof.
\end{proof}

\begin{lemma}
\label{l:parikh-small}
For any one-counter automaton $\A$,
it holds that
\begin{equation*}
\parikh{\langop\A} =
\bigcup_{\text{\textup{$\length{\pi}\leq small_2$}}}
\Lin{\parikh\pi}{\parikh{D_\pi}},
\end{equation*}
where
the union is taken over all runs of $\A$
of length at most $small_2\eqdef poly(\polyMinUnpLen)$
and, for each $\pi$,
$D_\pi$ is a subset of $\avail\pi$ of cardinality
at most $\dimension$.
\end{lemma}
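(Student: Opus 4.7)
The plan is to combine Lemma~\ref{l:parikh} with the Carath\'eodory-style reduction of Lemma~\ref{l:semilinear-reduce} and then convert the resulting bases back into short runs of $\A$. Starting from Lemma~\ref{l:parikh}, I have
\begin{equation*}
\parikh{\langop\A}=\bigcup_{\length{\pi}\leq\polyMinUnpLen}\Lin{\parikh\pi}{\parikh{\avail\pi}},
\end{equation*}
and I observe that every base and every period appearing on the right-hand side is bounded in $\ell_\infty$ by a polynomial $M$ in $n$: the base $\parikh\pi$ because $\length\pi\le\polyMinUnpLen$ (and under the $\eps$-free assumption stated in the opening remark, $\norm{\parikh\pi}=\length\pi$), and the periods $\parikh{\dr\alpha\beta}$ because directions satisfy $|\alpha|+|\beta|<\polyvi$.

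Next I would apply Lemma~\ref{l:semilinear-reduce} to this semilinear representation to obtain a new representation
\begin{equation*}
\parikh{\langop\A}=\bigcup_{j\in J}\Lin{b_j}{Q_j},
\end{equation*}
where each $Q_j$ is a linearly independent subset of $\parikh{\avail{\pi_j}}$ for some $\pi_j$ of length at most $\polyMinUnpLen$ (hence $\card{Q_j}\le\dimension$), and $\infini{b_j}\le(M+1)^{\poly(\dimension)}$. The key step is then to realize each $(b_j,Q_j)$ by a short accepting run $\pi_j'$ of $\A$. Since $b_j\in\Lin{\parikh{\pi_j}}{\parikh{\avail{\pi_j}}}$, I can write $b_j=\parikh{\pi_j}+\sum_{k=1}^r\lambda_k\parikh{d_k}$ for non-negative integers $\lambda_k$ and directions $d_k\in\avail{\pi_j}$. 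Define
\begin{equation*}
\pi_j'\eqdef\pi_j+\underbrace{d_1+\cdots+d_1}_{\lambda_1\text{ times}}+\cdots+\underbrace{d_r+\cdots+d_r}_{\lambda_r\text{ times}}.
\end{equation*}
By repeated application of Lemma~\ref{l:more}, each $d_k$ belongs to $\avail{\pi_j'}$, and more generally $\parikh{\avail{\pi_j}}\sset\parikh{\avail{\pi_j'}}$; in particular every element of $Q_j$ is the Parikh image of some direction available at $\pi_j'$, from which I pick one representative per element to form a set $D_{\pi_j'}\sset\avail{\pi_j'}$ of cardinality at most $\dimension$ with $\parikh{D_{\pi_j'}}\supseteq Q_j$. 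By construction $\parikh{\pi_j'}=b_j$, so $\Lin{\parikh{\pi_j'}}{\parikh{D_{\pi_j'}}}\supseteq\Lin{b_j}{Q_j}$, and the reverse inclusion $\Lin{\parikh{\pi_j'}}{\parikh{D_{\pi_j'}}}\sset\parikh{\langop\A}$ is immediate from the $\supseteq$ direction of Lemma~\ref{l:parikh} (which only uses monotonicity of availability).

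The step that needs care is bounding the length of $\pi_j'$. Under the $\eps$-free assumption, $\length{\pi_j'}=\norm{\parikh{\pi_j'}}=\norm{b_j}\le\dimension\cdot\infini{b_j}\le\dimension\cdot(M+1)^{\poly(\dimension)}$, which is a polynomial in $\polyMinUnpLen$ because $\dimension$ is fixed; this fixes the value of $small_2=\poly(\polyMinUnpLen)$ in the statement. The potential difficulty I anticipate is purely a bookkeeping issue: making sure that the decomposition of $b_j$ that one chooses really does correspond to a well-defined run (i.e. that the successive additions $\pi_j+d_1+\cdots$ are all legal), which is handled by invoking Lemma~\ref{l:more} inductively along the way rather than only at the end.
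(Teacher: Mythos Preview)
Your proposal is correct and follows essentially the same route as the paper: apply Lemma~\ref{l:semilinear-reduce} to the representation from Lemma~\ref{l:parikh}, then realize each new base $b_j$ as an actual accepting run $\pi_j'$ by pumping directions of $\avail{\pi_j}$ into $\pi_j$, and invoke monotonicity (Lemma~\ref{l:more}) to conclude $Q_j\subseteq\parikh{\avail{\pi_j'}}$. Your write-up is in fact more explicit than the paper's about the length bound on $\pi_j'$ via the $\eps$-free assumption and $\norm{b_j}$, which the paper leaves implicit.
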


\begin{proof}
From Lemma~\ref{l:parikh} we know that \begin{equation*}
\parikh{\langop\A} =
\bigcup_{\text{\textup{$\length{\pi'}\leq small_1$}}}
\Lin{\parikh{\pi'}}{\parikh{\avail{\pi'}}}. 
\end{equation*} 
If we apply to it Lemma~\ref{l:semilinear-reduce} 
we get the desired statement.
It is worth emphasizing
that $D_\pi\subseteq \avail\pi$ due to the following
reasoning:

\noindent First due to Lemma~\ref{l:semilinear-reduce}, 
$\parikh{\pi}\in \Lin{\parikh{\pi'}}{\parikh{\avail{\pi'}}}$ for some 
$\pi'$ and $\pi$ can be obtained from $\pi'$ by pumping some of 
directions in the set $\avail{\pi'}.$ Thus according to 
Lemma~\ref{l:semilinear-reduce} the set $\avail{\pi'}\subseteq\avail{\pi}$ and
consequently (according to Lemma~\ref{l:semilinear-reduce}) as 
$D_{\pi}\subseteq \avail{\pi'}$ we get $D_{\pi}\subseteq \avail{\pi}$.
\end{proof}

\subsection{Computing the semilinear representation}

Below we state sub-procedures used in the algorithm.

\begin{lemma}
\label{l:parikh-membership:walk}
For every fixed \AlB
there is a polynomial-time algorithm for the following task:
given a one-counter automaton \A over \AlB,
two configurations $(q_1, c_1)$ and $(q_2, c_2)$
and a vector $v \in \N^\AlB$ with all numbers written in unary,
decide if \A has a run $\pi = (q_1, c_1) \moves{} (q_2, c_2)$
with $\parikh\pi = v$.
\end{lemma}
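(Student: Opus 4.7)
The plan is to solve this by dynamic programming, exploiting the fact that since $v$ is given in unary and $\dimension$ is fixed, both the run length and the number of sub-vectors $v' \le v$ are polynomially bounded in the input size.

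First I would observe that, by the remark at the start of this subsection, I may assume $\A$ has no $\eps$-transitions: otherwise replace every $\eps$ by a fresh letter, compute, and project back. Under this assumption, every run with Parikh image $v$ has length exactly $N \eqdef \|v\|_1 = \sum_{a \in \AlB} v(a)$. Since the components of $v$ are presented in unary, $N$ is polynomial in the input size, and consequently every counter value along such a run lies in the range $\{0, \ldots, c_1 + N\}$, which is polynomially bounded (as $c_1$ is also encoded in unary).

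Next I would build a Boolean DP table $T[q', c', q'', c'', v']$ indexed by two states $q', q'' \in Q$, two counter values $c', c'' \in \{0, \ldots, c_1 + N\}$, and a sub-vector $v' \in \N^\AlB$ with $v' \le v$; the entry records whether $\A$ admits a run from $(q', c')$ to $(q'', c'')$ with Parikh image exactly $v'$. The number of sub-vectors is $\prod_{a \in \AlB}(v(a)+1) \le (N+1)^{\dimension}$, which is polynomial since $\dimension$ is fixed; hence the table has polynomially many entries. I would fill it in order of increasing $\|v'\|_1$. The base case $v' = \zerovec$ is true iff $q' = q''$ and $c' = c''$. For the inductive step I consider the first transition of a hypothetical run: $T[q', c', q'', c'', v']$ is true iff there exist $a \in \AlB$ with $v'(a) \ge 1$ and a transition $(q', a, s, r) \in \tran$ with $c' + s \ge 0$ such that $T[r, c'+s, q'', c'', v' - \one{a}]$ is true. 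Each entry is computed in polynomial time from entries with strictly smaller $\|v'\|_1$, so the whole table is filled in polynomial time; the desired answer is $T[q_1, c_1, q_2, c_2, v]$.

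The main obstacle is ensuring that every dimension of the table remains polynomially bounded, and this rests crucially on two ingredients: (i)~the unary encoding of $v$ (and of $c_1, c_2$), which keeps $N$ and the counter range linear in the input size, and (ii)~the fixed alphabet $\dimension$, which keeps the number of candidate sub-vectors polynomial rather than exponential. If either assumption were dropped, the same table would blow up; but with both in force, the construction and a straightforward correctness argument by induction on $\|v'\|_1$ complete the proof.
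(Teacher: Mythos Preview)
Your proposal is correct and follows essentially the same approach as the paper: both bound the counter values along a candidate run by $c_1 + \|v\|_1$ (the paper writes $H_0 = \max(c_1,c_2) + \|v\|_1 + 1$), then fill a dynamic-programming table indexed by pairs of configurations and Parikh vectors, relying on unary input and fixed $\dimension$ to keep the table polynomial. The only cosmetic differences are that you peel off the first transition while the paper peels off the last, and you restrict to sub-vectors $v' \le v$ whereas the paper ranges over all vectors in $\{0,\ldots,H_0\}^{\dimension}$; neither affects the argument.
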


\begin{proof}
Our algorithm leverages the unary representation
of $c_1$, $c_2$, and components of $v$.
Define $H_0 = \max(c_1, c_2) + \norm{v} + 1$. Observe that
for any run $\pi = (q_1,c_1) \moves{} (q_2, c_2)$ such that $\parikh{\pi}=v$
the counter value stays below $H_0$, as in one move the counter can not
be changed by more than $1$ and the number of moves is bounded by $\norm{v}$.

The algorithm constructs a multi-dimensional table that
for \emph{all} pairs of configurations $(q'_1, c'_1)$, $(q'_2, c'_2)$
with $c'_1, c'_2 < H_0$ and all vectors $v' \in \{0, \ldots, H_0\}^\AlB$
keeps the information whether there exists a $w'\in \AlB^*$ such that \A has
a run $(q'_1, c'_1) \moves{w'} (q'_2, c'_2)$ where
the counter value stays below $H_0$ and $\parikh{w'}=v'.$
The size of the table is at most $(\sizeA\cdot H_0)^2 \cdot (H_0 + 1)^{\dimension}$,
which is polynomial in the size of the input for a fixed \AlB.

The algorithm fills the entries of the table
using dynamic programming. %
To begin with, runs whose Parikh image is the zero vector only connect
pairs where $(q'_1, c'_1) = (q'_2, c'_2)$.
Now take a non-zero vector $v' \in \{0, \ldots, H_0\}^\AlB$;
a run $(q'_1, c'_1) \moves{w'} (q'_2, c'_2)$
with $\parikh{w'} = v'$ exists if and only if there exists
an intermediate configuration $(\bar q, \bar c)$ such that
$(q'_1, c_1)\moves{\bar w} (\bar q, \bar c)\moves{a} (q_2', c_2')$, where
$\bar w \in \AlB^*$, $a\in \AlB$, and $\parikh{\bar w} + \parikh{a}= v'$;
note that the vector $\parikh{a}$ has $1$ in exactly one component and $0$ in other components,
so $\parikh{\bar w} \in \{0, \ldots, H_0\}^\AlB$ and
$\norm{\parikh{\bar w}} = \norm{v'} - 1$.
This completes the description of the algorithm.
\end{proof}

\begin{corollary}
\label{cor:parikh-bounded:parikh-membership}
For a fixed alphabet \AlB the Parikh membership problem for (simple) one-counter
automata is in \Ptime:
there exists a polynomial-time algorithm that 
takes as an input a (simple) one-counter automaton \A over \AlB with $n$ states and
a vector $v \in \N^\dimension$ with components written in unary,
and outputs some accepting run $\pi$ of $\A$ with $\parikh\pi = v$
if such a run exists or ``none'' otherwise.
\end{corollary}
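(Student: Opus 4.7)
The plan is to deduce the corollary directly from Lemma~\ref{l:parikh-membership:walk} by specializing the source/target pair and then augmenting the decision procedure with witness reconstruction. For a simple OCA, an accepting run is by definition a run from $(\qinit,0)$ to $(\qfinal,0)$, so invoking the algorithm of Lemma~\ref{l:parikh-membership:walk} with $(q_1,c_1)=(\qinit,0)$, $(q_2,c_2)=(\qfinal,0)$, and the given vector $v$ already answers the membership question in time polynomial in $\sizeA$ and $\norm{v}$ (the latter being polynomial in the unary input). The general (non-simple) case reduces to the simple case exactly as in Lemma~\ref{lem:simple-approx} with $K=0$, so nothing extra is required there.

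To turn this decision procedure into a procedure that outputs an actual accepting run, I would augment the dynamic-programming table built in the proof of Lemma~\ref{l:parikh-membership:walk} with predecessor information. Recall that an entry for a tuple $((q_1',c_1'),(q_2',c_2'),v')$ is set to \emph{true} precisely when there exist an intermediate configuration $(\bar q,\bar c)$ and a letter $a\in\AlB$ such that the entries for $((q_1',c_1'),(\bar q,\bar c),v'-\parikh{a})$ and for the single transition $(\bar q,\bar c)\moves{a}(q_2',c_2')$ are both true. When setting the entry true, I would store a pointer to the witnessing triple $(\bar q,\bar c,a)$. Since the values of $\bar c$ and all components of $v'$ are bounded by the polynomially-sized quantity $H_0$ introduced in the lemma's proof, this only blows up the table by a polynomial factor.

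A standard backward traversal starting from the entry for $((\qinit,0),(\qfinal,0),v)$ then reconstructs an explicit accepting run $\pi$ of $\A$ with $\parikh{\pi}=v$ one transition at a time; the number of reconstruction steps is bounded by $\norm{v}$ plus the total number of $\eps$-transitions used, which by the remark at the beginning of subsection~\ref{a:s:parikh-bounded} can be assumed to be zero without loss of generality for the Parikh-image computation. If the top-level entry is marked \emph{false}, the algorithm outputs ``none''.

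There is no substantive obstacle: the lemma has already done the hard combinatorial work of bounding the relevant counter range by $H_0=\max(c_1,c_2)+\norm{v}+1$ and of filling the table by induction on $\norm{v'}$. All that remains for the corollary is the routine bookkeeping of predecessor pointers and the standard observation that the size of the table, the time per entry, and the length of the recovered run are all polynomial in $\sizeA$ and the unary size of $v$ (for fixed $\dimension$), establishing membership in \Ptime.
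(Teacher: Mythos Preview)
Your proposal is correct and matches the paper's intent: the corollary is stated without proof, immediately after Lemma~\ref{l:parikh-membership:walk}, as a direct specialization to the pair $((\qinit,0),(\qfinal,0))$. Your addition of predecessor pointers to recover the witnessing run is the standard augmentation and is exactly the kind of routine detail the paper leaves implicit.

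One small remark: your appeal to Lemma~\ref{lem:simple-approx} for the non-simple case is not quite the right citation---that lemma is about building NFAs for abstractions, not about membership. The cleaner way to handle general OCA is simply to observe that zero-test transitions $(q,0)\moves{}(q',0)$ can be added directly as extra edges in the same dynamic-programming table (they only fire at counter value~$0$, which is already in range), so no separate reduction is needed.
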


\begin{lemma}
\label{l:parikh-membership:Cwalk}
For every fixed \AlB
there is a polynomial-time algorithm for the following task:
given a one-counter automaton \A over \AlB,
a sequence of $2\dimension$ configurations $C\eqdef (q_1, c_1),(q_2, c_2)$ $\ldots 
(q_{2\dimension},c_{2\dimension})$
and a vector $v \in \N^\AlB$ with all numbers written in unary,
decide if \A has an accepting run $\pi$ such that
$\parikh\pi = v$ and $\pi$ contains $C$ as a subsequence.
\end{lemma}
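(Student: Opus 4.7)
The plan is to reduce this to $2\dimension + 1$ applications of Lemma~\ref{l:parikh-membership:walk}. Any accepting run $\pi$ of $\A$ that contains the sequence $C = (q_1, c_1), (q_2, c_2), \ldots, (q_{2\dimension}, c_{2\dimension})$ as a subsequence of its configurations decomposes uniquely (once we fix where the witness configurations appear) as a concatenation $\pi = \pi_0 \cdot \pi_1 \cdots \pi_{2\dimension}$, where $\pi_0$ goes from $(\qinit, 0)$ to $(q_1, c_1)$, each $\pi_i$ for $1 \le i < 2\dimension$ goes from $(q_i, c_i)$ to $(q_{i+1}, c_{i+1})$, and $\pi_{2\dimension}$ goes from $(q_{2\dimension}, c_{2\dimension})$ to $(\qfinal, 0)$. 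The constraint $\parikh{\pi} = v$ becomes $\sum_{i=0}^{2\dimension} \parikh{\pi_i} = v$.

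The algorithm enumerates all tuples $(v_0, v_1, \ldots, v_{2\dimension}) \in (\N^\AlB)^{2\dimension + 1}$ with $\sum_i v_i = v$; for each such tuple, it invokes Lemma~\ref{l:parikh-membership:walk} to test whether each segment $\pi_i$ exists with $\parikh{\pi_i} = v_i$. The algorithm answers ``yes'' iff some tuple passes all $2\dimension + 1$ tests. Correctness is immediate from the decomposition above.

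For the running time, observe that every component of every $v_i$ is bounded by $\norm{v}$, which is polynomial in the size of the input (since $v$ is written in unary). Hence the number of candidate vectors $v_i$ is at most $(\norm{v} + 1)^\dimension$, and the number of tuples is at most $(\norm{v} + 1)^{\dimension(2\dimension+1)}$; since $\dimension = |\AlB|$ is fixed, this count is polynomial. Each test takes polynomial time by Lemma~\ref{l:parikh-membership:walk} (the input to that lemma remains polynomially sized, as all counter values involved are already at most polynomial in the input). Multiplying, the overall algorithm runs in polynomial time.

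The only subtlety I foresee is a minor bookkeeping point: the configurations in $C$ must be visited in the prescribed order along the run, but they need not be distinct as control-state/counter pairs, so the decomposition of $\pi$ at the witness configurations is not unique. This is harmless, however, because we are only asking for the existence of \emph{some} suitable decomposition, and the enumeration described above ranges over all possible Parikh images of the segments irrespective of how one chooses the witnesses. Thus the described procedure is both sound and complete, and the statement of the lemma follows.
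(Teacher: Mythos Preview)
Your proof is correct and follows essentially the same approach as the paper: decompose the run at the $2\dimension$ prescribed configurations into $2\dimension+1$ segments, enumerate all ways to split $v$ as a sum of $2\dimension+1$ vectors (polynomially many, since $\dimension$ is fixed), and invoke Lemma~\ref{l:parikh-membership:walk} on each segment. Your remark on the non-uniqueness of the decomposition is a nice touch but, as you note, immaterial to correctness.
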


\begin{proof}
Observe, that if there is such a run
$\pi$ then configurations in $C$ are cutting $\pi$ into 
$2\dimension+1$ fragments. Each of these fragments
have its own Parikh~image $u_1,u_2\dots u_{2\dimension+1}$ 
that sum up to $v$. Thus the procedure
iterates through all possible partitions of $v$ into 
$2\dimension+1$ vectors, and for each such partition
it checks if there exists a required set of runs: 
between consecutive elements of $C$, a run from the initial configuration
$(q_0,0)$ to the first element of $C$ and run form the last element of $C$ 
to some accepting configuration.
Each of those small test can be done in polynomial time 
using algorithm form Lemma~\ref{l:parikh-membership:walk}.
The crucial fact is that the number of possible 
partitions of $v$ into $2\dimension+1$ vectors
is polynomial in $\norm{v}$ as the dimension is fixed i.e. number of 
partitions is bounded by $(\norm{v}^\dimension)^{2\dimension+1}$.
This implies that the presented procedure works in polynomial time. 
\end{proof}

\begin{lemma}
\label{l:parikh-membership:whitnes}
For every fixed \AlB
there is a polynomial-time algorithm for the following task:
given a one-counter automaton \A over \AlB,
a sequence of $2\dimension$ configurations $C\eqdef  (q_1, c_1),(q_2, c_2)\ldots 
(q_{2\dimension},c_{2\dimension})$
and a vector $v$ with all numbers written in unary,
decide if  in \A there exist a direction $\dr{\alpha}{\beta}$ and 
two configurations $(p_i,c_i),(p_j,c_j)\in C$ where $i\leq j$
such that:
\begin{itemize}
	\item $\parikh{\dr{\alpha}{\beta}}=v$,
	\item $\initstate{\alpha}=p_i,\ \low{\alpha} \leq c_i$ and
	\item $\initstate{\beta}=p_j,\ \low{\beta} \leq c_j+\effect{\alpha}$ 
\end{itemize}
\end{lemma}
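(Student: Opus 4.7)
The plan is to enumerate the combinatorial ``shape'' of a candidate direction $\dr{\alpha}{\beta}$ and, for each shape, decide the existence of matching walks by one call to Lemma~\ref{l:parikh-membership:walk} per side. Concretely, I would iterate over: (i)~the pair of indices $i \le j$ identifying the anchors $(p_i, c_i)$ and $(p_j, c_j)$ in~$C$; (ii)~a decomposition $v = v_1 + v_2$ of the target Parikh image into the contributions of $\alpha$ and $\beta$; and (iii)~an effect $e \in [0, \polyiii]$ for $\alpha$ (which forces $\effect\beta = -e$). The side conditions of Definition~\ref{def:direction} are checked upfront on the guessed tuple: using the $\eps$-freeness remark at the start of Section~\ref{a:s:parikh-bounded}, the length bound $|\alpha|+|\beta|<\polyvi$ becomes the numerical bound $\norm{v} < \polyvi$; non-triviality becomes $v \ne 0$; and the first-kind clause becomes $e=0 \Rightarrow (v_1=0 \text{ or } v_2=0)$.

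For each surviving tuple, the existence of a suitable $\alpha$ reduces to the question: does $\A$ have a run from $(p_i, c_i)$ to $(p_i, c_i + e)$ with Parikh image~$v_1$? This single query captures all the requirements on $\alpha$: cycling at $p_i$, effect $e$, Parikh image $v_1$, and $\low\alpha \le c_i$, which I read as ``when $\alpha$ is traversed starting from counter value $c_i$, the counter stays non-negative'', in line with the availability condition of Lemma~\ref{l:avalDueToStates}. Symmetrically, the existence of $\beta$ reduces to the existence of a run from $(p_j, c_j + e)$ to $(p_j, c_j)$ with Parikh image $v_2$; the shift by $e$ on the starting counter value is exactly how the constraint $\low\beta \le c_j + \effect\alpha$ gets encoded. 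Both queries are instances of Lemma~\ref{l:parikh-membership:walk}, and the target counter values $c_i + e$, $c_j + e$ remain polynomially bounded in the input size (since the $c_k$ are in unary and $e \le \polyiii$).

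For the runtime, there are $O(\dimension^2)$ index pairs; the number of decompositions $v = v_1 + v_2 \in \N^\AlB \times \N^\AlB$ is $O((\norm{v}+1)^\dimension)$, which is polynomial because $\dimension$ is fixed and $v$ is given in unary; and $e$ ranges over a polynomial set of size $\polyiii + 1$. Since each Parikh-membership query runs in polynomial time by Lemma~\ref{l:parikh-membership:walk}, the whole procedure is polynomial. The main conceptual obstacle is pinning down the precise semantics of $\low\alpha$ for a walk (which has no intrinsic $\initcounter$) so that the reduction to runs is provably equivalent in both directions; once this interpretation is fixed as above, soundness of each test and completeness of the enumeration follow directly from Definitions~\ref{def:direction} and~\ref{def:available}.
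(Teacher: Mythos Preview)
Your proposal is correct and follows essentially the same approach as the paper: enumerate index pairs $(i,j)$, the effect $e=\effect{\alpha}$, and partitions $v=v_1+v_2$, then reduce each side to one invocation of Lemma~\ref{l:parikh-membership:walk} by asking for runs $(p_i,c_i)\moves{}(p_i,c_i+e)$ and $(p_j,c_j+e)\moves{}(p_j,c_j)$ with the prescribed Parikh images. Minor differences: the paper bounds the effect by $\norm{v}$ rather than $\polyiii$ (both polynomial), and glosses over the explicit direction side-conditions and the $\low$/$\drop$ ambiguity that you correctly flagged and resolved in line with Lemma~\ref{l:avalDueToStates}.
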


\begin{proof}
First observe that there are $\binom{2+(2\dimension)-1}{2}$ possibilities 
for a pair $(p_i,c_i),(p_j,c_j)$ of configurations, thus it suffice 
to provide a polynomial time procedure to check if there exist 
a direction $\dr{\alpha}{\beta}$ for a given pair of configurations. 
Indeed we can iterate 
through all possible pairs and for each of them use the procedure.

Second if there is a direction $\dr{\alpha}{\beta}$ then 
$\effect{\alpha}\leq \norm{v}$; thus in order to check 
existence of $\dr{\alpha}{\beta}$ we proceed as follows:
for every $x\in\{0\ldots \norm{v}\}$ check if there are 
runs $\alpha$ from $(p_i,c_i)$ to $(p_i,c_i+x)$ and $\beta$
from $(p_j,c_j+x)$ to $(p_j,c_j)$ such that $\parikh{\alpha}+\parikh{\beta}=v$.

This question is very close to the question considered in 
Lemma~\ref{l:parikh-membership:walk}, the problem is that we don't
know the partition of $v$ into $\parikh{\alpha}$ and $\parikh{\beta}$.

However, observe that the number of possible partitions is polynomial
(bounded by $\norm{v}^{\dimension}\cdot \norm{v}^{\dimension}$).
Thus basically for every possible partition of $v$ into $v_1$ and $v_2$
we check if there are runs $\alpha$ from $(p_i,c_i)$ to $(p_i,c_i+x)$ and $\beta$
from $(p_j,c_j+x)$ to $(p_j,c_j)$ such that $\parikh{\alpha}=v_1$
and $\parikh{\beta}=v_2$. This can be done in polynomial time 
according to Lemma~\ref{l:parikh-membership:walk}.

In conclusion we iterate though all pairs of configurations in $C$
through all possible effects of $\alpha$ and all possible splittings
of $v$ into $v_1$ and $v_2$; for each such choice we check
if there are runs $\alpha$ and $\beta$ using algorithm from 
Lemma~\ref{l:parikh-membership:walk}. The number of possible choices is
polynomial and for each choice we execute a polynomial time procedure 
so the algorithm works in polynomial time.
\end{proof}

\begin{lemma}
\label{l:parikh-membership}
For every fixed \AlB
there is a polynomial-time algorithm for the following task:
given a one-counter automaton \A over \AlB
and vectors $v, v_1, \ldots$, $v_r \in \N^\AlB$, $0 \le r \le \dimension$,
with all numbers written in unary,
decide if \A has an accepting run $\pi$
and directions $d_1, \ldots, d_r$ available at $\pi$
such that $\parikh\pi = v$ and $\parikh{d_i} = v_i$ for all~$i$.
\end{lemma}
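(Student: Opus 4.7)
The plan is to combine the three earlier sub-procedures (Lemmas~\ref{l:parikh-membership:walk}, \ref{l:parikh-membership:Cwalk}, and~\ref{l:parikh-membership:whitnes}) through a common sequence of ``witness configurations''. The key observation is Lemma~\ref{l:avalDueToStates}: availability of a direction $d_i = \dr{\alpha_i}{\beta_i}$ at a run $\pi$ is witnessed entirely by the existence of a pair of configurations $(\initstate{\alpha_i}, c_i)$ and $(\initstate{\beta_i}, c'_i)$ occurring in this order along $\pi$, with appropriate inequalities on the counter values. Hence availability of all $r \le \dimension$ directions at $\pi$ is witnessed by a sequence $C$ of at most $2r \le 2\dimension$ configurations appearing along $\pi$ in order.

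First, I would enumerate all candidate sequences $C$ of exactly $2\dimension$ configurations (padding with repetitions of the initial or final configuration when $r<\dimension$). Since any accepting run $\pi$ with $\parikh\pi = v$ has length at most $\norm{v}$ and starts with counter value $0$, every configuration on $\pi$ has state from $Q$ and counter value in $\{0,\ldots,\norm{v}\}$. Therefore the number of candidate sequences is at most $(\sizeA\cdot(\norm{v}+1))^{2\dimension}$, which is polynomial in the input size because $\dimension$ is fixed and $v$ is written in unary. For each candidate $C$ the algorithm performs two kinds of test: (i)~invoke Lemma~\ref{l:parikh-membership:Cwalk} to decide whether \A admits an accepting run $\pi$ with $\parikh\pi = v$ that contains $C$ as a subsequence of configurations; (ii)~for each $i\in[1,r]$, invoke Lemma~\ref{l:parikh-membership:whitnes} with input $v_i$ and $C$ to decide whether there exists a direction $\dr{\alpha_i}{\beta_i}$ with $\parikh{\dr{\alpha_i}{\beta_i}} = v_i$ whose required pair of witness configurations can be chosen among $C$. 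The algorithm answers yes if and only if some $C$ passes all $r+1$ tests.

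For correctness in the ``yes'' direction, suppose some $C$ passes all checks. Test~(i) gives an accepting run $\pi$ containing $C$, and test~(ii) together with Lemma~\ref{l:avalDueToStates} yields directions $d_1,\ldots,d_r$ whose availability at $\pi$ is witnessed by pairs within $C$---since these same configurations occur in $\pi$ in the same order, each $d_i$ is available at $\pi$ itself. Conversely, if an accepting $\pi$ and directions $d_1,\ldots,d_r$ exist with the prescribed Parikh images, Lemma~\ref{l:avalDueToStates} supplies, for each $d_i$, a witnessing pair of configurations of $\pi$; collecting and ordering these $2r$ configurations along $\pi$ (and padding to $2\dimension$ with, say, the initial configuration) yields a sequence $C$ that the enumeration will encounter and on which all tests succeed.

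The main conceptual point---and the only place where the argument goes beyond routine bookkeeping---is to bound the magnitude of the counter values in $C$. This follows from the observation that along any accepting run with Parikh image $v$ the counter stays in $\{0,\ldots,\norm{v}\}$, so counter values in the witness configurations are polynomially bounded when $v$ is given in unary. Combined with fixed $\dimension$, this keeps the enumeration polynomial, and since each test (i) and (ii) runs in polynomial time by the cited lemmas, the overall algorithm is polynomial.
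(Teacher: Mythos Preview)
Your proposal is correct and follows essentially the same approach as the paper's proof: enumerate all length-$2\dimension$ sequences $C$ of configurations with counter values bounded by $\norm{v}$, then for each $C$ invoke Lemma~\ref{l:parikh-membership:Cwalk} for the run and Lemma~\ref{l:parikh-membership:whitnes} for each $v_i$, relying on Lemma~\ref{l:avalDueToStates} to justify that availability is witnessed by such a sequence. If anything, your write-up is slightly more careful than the paper's (you make the padding for $r<\dimension$ and the two directions of correctness explicit).
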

\begin{proof}
First observe that if there is such a run $\pi$ then according 
to Lemma~\ref{l:avalDueToStates} for every 
$v_i\in \{v_1\ldots v_{\dimension}\}$ there is a pair of configurations
such that it makes available a direction $d_i$ where $\parikh{d_i}=v_i$.
Thus for a set of vectors $\{v_1\ldots v_{\dimension}\}$ there is
a sequence $C$ of $2\dimension$ configurations such that it is 
a subsequence of $\pi$ and for every vector $v_i$ there is a pair
of configurations in $C$ which makes some direction $d_i$
available in sense of Lemma~\ref{l:avalDueToStates}.

Second observation is that the counter value of any 
configuration $C$ can not exceed $\norm{v}$; thus the sequence 
$C$ has to be an element of a family of 
$2\dimension$-sequences of configurations
bounded by $\norm{v}$. The size of this family is at most
$(\sizeA\cdot \norm{v})^{2\dimension}.$ 

From above we derive a following procedure.
For every possible choice of the sequence $C$ test:
\begin{itemize}
	\item if there is an accepting run $\pi$ that contains $C$ as a subsequence
	and where $\parikh{\pi}=v$,
	\item if for every $v_i\in \{v_1\ldots v_{\dimension}\}$ there
	is a pair of configurations in $C$ which makes available
	(in sense of Lemma~\ref{l:avalDueToStates})
	some direction $d_i$ such that $\parikh{d_i}=v_i$.
\end{itemize}	
	The first item is handle by algorithm from 
	Lemma~\ref{l:parikh-membership:Cwalk}, the second by 
	Lemma~\ref{l:parikh-membership:whitnes}.

The proposed algorithm is polynomial time as the number of possible 
$C$ is polynomial in the size of the input and for each 
possible $C$ we execute small number of times polynomial
time algorithms from Lemmas~\ref{l:parikh-membership:Cwalk}~and~\ref{l:parikh-membership:whitnes}.
\end{proof}

\begin{lemma}
\label{l:parikh-bounded:construct-semilinear}
For a fixed alphabet \AlB there exists a polynomial-time algorithm that
takes as an input a simple one-counter automaton over \AlB with $n$ states and
outputs (in unary)
an integer $k \ge 0$, vectors $b_i \in \N^\dimension$, and sets of vectors
$P_i \sset \N^\dimension$, $|P_i| \le \dimension$ for $1 \le i \le k$, such that
\begin{equation*}
\parikh{\langop{\A}} = \bigcup_{1 \le i \le k} \Lin{b_i}{P_i}
\end{equation*}
and the following property is satisfied:
for each $i$ there exists an accepting run $\pi$ of \A and directions
$d_1, \ldots, d_r$ available at $\pi$ such that
$\parikh\pi = b_i$ and $\parikh{\{d_1, \ldots, d_r\}} = P_i$.
\end{lemma}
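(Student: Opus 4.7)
The plan is to turn the existential characterization of Lemma~\ref{l:parikh-small} into an enumeration algorithm, using Lemma~\ref{l:parikh-membership} as the oracle that certifies each candidate linear set.

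First, I would bound the entries of all candidate vectors. By Lemma~\ref{l:parikh-small}, it suffices to consider base vectors $b = \parikh\pi$ for accepting runs $\pi$ of length at most $small_2 = \poly(\polyMinUnpLen)$, and period vectors $v_i = \parikh{d_i}$ where $d_i \in \avail\pi$ is a direction. By Definition~\ref{def:direction}, any direction $\dr\alpha\beta$ satisfies $|\alpha|+|\beta| < \polyvi$, so $\norm{v_i} < \polyvi$ for each period vector; similarly $\norm{b} \le small_2$. Hence every component of every candidate vector is bounded by a fixed polynomial $N(n)$ in~$n$, and the total number of vectors $v \in \N^\AlB$ with $\infnorm{v} \le N(n)$ is $O(N(n)^\dimension) = \poly(n)$ for fixed $\dimension$.

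Next, I would enumerate all tuples $(v, v_1, \ldots, v_r)$ with $0 \le r \le \dimension$ and all components bounded by $N(n)$. The number of such tuples is polynomial in~$n$ for fixed~$\dimension$. For each tuple, I invoke the algorithm of Lemma~\ref{l:parikh-membership} with inputs $\A$ and $v, v_1, \ldots, v_r$ (written in unary, which is fine since their magnitudes are polynomially bounded) to decide whether there exists an accepting run $\pi$ of~$\A$ together with directions $d_1, \ldots, d_r \in \avail\pi$ such that $\parikh\pi = v$ and $\parikh{d_i} = v_i$. If the answer is yes, I output the linear set $\Lin{v}{\{v_1, \ldots, v_r\}}$; otherwise I discard the tuple. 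The resulting $k$ is polynomial, and each $P_i$ has cardinality at most $\dimension$ by construction.

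Correctness follows in two directions. The $\supseteq$ inclusion is immediate: for every tuple that survives the test, the certifying run $\pi$ and available directions $d_i$ produce, by Lemma~\ref{l:more} (monotonicity of availability), arbitrary pumpings $\pi + \lambda_1 d_1 + \cdots + \lambda_r d_r$ that are accepting runs whose Parikh images fill $\Lin{v}{\{v_1,\ldots,v_r\}} \sset \parikh{\langop\A}$. The $\sset$ inclusion uses Lemma~\ref{l:parikh-small}: any element of $\parikh{\langop\A}$ lies in $\Lin{\parikh\pi}{\parikh{D_\pi}}$ for some run $\pi$ of length at most $small_2$ and some $D_\pi \sset \avail\pi$ of size at most $\dimension$; setting $v = \parikh\pi$ and $\{v_1,\ldots,v_r\} = \parikh{D_\pi}$ yields a tuple within our enumeration range, which is therefore accepted by the Lemma~\ref{l:parikh-membership} test and contributes the required linear set to the output.

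The main obstacle is really pushed into Lemma~\ref{l:parikh-membership}; the construction here is just a polynomial-time wrapper. The only subtlety is to verify carefully that the bounds from Definition~\ref{def:direction} and Lemma~\ref{l:parikh-small} are genuinely polynomial for fixed~$\dimension$ (so that both the enumeration and the unary encoding passed to Lemma~\ref{l:parikh-membership} stay within polynomial size), and that the last bullet of the conclusion---each base $b_i$ and period set $P_i$ arises from a single common run $\pi$ and its available directions---is delivered directly by the semantics of Lemma~\ref{l:parikh-membership}, rather than needing an additional argument.
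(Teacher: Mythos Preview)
Your proposal is correct and follows essentially the same approach as the paper: bound the candidate base and period vectors via Lemma~\ref{l:parikh-small} and Definition~\ref{def:direction}, enumerate the polynomially many tuples $(v,v_1,\ldots,v_r)$ with $r\le\dimension$, and filter them using the oracle of Lemma~\ref{l:parikh-membership}. If anything, your write-up is slightly more explicit about the two inclusions for correctness (invoking Lemma~\ref{l:more} for $\supseteq$) than the paper's own proof, which takes these as evident.
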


\newcommand{\consth}{poly(\polyMinUnpLen)}
\begin{proof}
Proof bases on two Lemmas~\ref{l:parikh-small}~and~\ref{l:parikh-membership}. 
From the first one we conclude that it suffices to
characterize polynomially many linear sets $l_i$ such that 
$\parikh{\langop{\A}}= \bigcup_{i} l_i$.
According to Lemma~\ref{l:parikh-small} for each linear set 
$l_i=\Lin{b_i}{Q_i}$ holds:
\begin{itemize}
\item there exist an accepting run $\pi_i$ such that $\parikh{\pi_i}=b_i$ and $\norm{b_i}\leq \consth$,
\item the number of elements of $Q_i$ is bounded by $\dimension$ 
\item for every $v_j\in Q_i$ there exist a direction $d_j\in \avail{\pi}$, such that $\parikh{d_j}=v_j$.
\end{itemize}
The last bullet point combined with Definition~\ref{def:direction} of direction gives 
upper-bound on the $v_j\in Q_i$ for any $i$, precisely $\norm{v_j}\leq \polyvi$.
   
Thus in order to compute the linear sets that characterize $\parikh{\langop{\A}}$ we 
iterate through all possible vectors for $b$, where $\norm{b}\leq \consth$,
and all possible sets $Q$,
where $|Q|\leq \dimension$, and for every
$v\in Q$ hold $\norm{v}\leq \polyvi$.
For each combination we check independently if $\Lin{b}{Q}$ satisfies three bullet points.
Second bullet point is satisfied by the definition. To check 
first and third we use the algorithm proposed in  
Lemma~\ref{l:parikh-membership}.

To show that above procedure terminates in polynomial time we observe that
\begin{itemize}
\item number of possible choices for $b$ and $Q$ is bounded by $\consth^\dimension \cdot ((\polyvi)^{\dimension})^{\dimension}$,
\item length of description of $b$ and $Q$ in unary encoding is polynomial in $\A$ (for example can be bounded by
number of possible choices). 
\item Algorithm from 
Lemma~\ref{l:parikh-membership} terminates in time polynomial in the 
input and as input is polynomial in $\sizeA$ then it terminates in time polynomial in $\sizeA$. 
\qedhere
\end{itemize}
\end{proof}

\begin{lemma}
\label{l:parikh-bounded:construct-nfa}
For a fixed alphabet \AlB there exist a polynomial time algorithm that 
takes as an input a one-counter automaton over \AlB with $n$ states and
returns a Parikh-equivalent NFA. 
\end{lemma}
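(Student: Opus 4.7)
The plan is to combine Lemma~\ref{l:parikh-bounded:construct-semilinear} with a standard construction converting a semilinear set into an NFA. First I would invoke Lemma~\ref{l:parikh-bounded:construct-semilinear} on the input automaton \A to obtain, in time polynomial in $\sizeA$, a representation
\begin{equation*}
\parikh{\langop{\A}} = \bigcup_{1 \le i \le k} \Lin{b_i}{P_i},
\end{equation*}
where $k$, each $b_i \in \N^\dimension$, and each $P_i \sset \N^\dimension$ with $|P_i| \le \dimension$ are produced with all numbers written in unary. In particular, both $k$ and $\norm{b_i}$, $\norm{v}$ for $v \in P_i$ are bounded by a polynomial in $\sizeA$ (with degree depending on $\dimension$, which is fixed).

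Next I would construct, for each index $i$, a small NFA $\B_i$ whose language has Parikh image exactly $\Lin{b_i}{P_i}$. A convenient choice is to first read, along a simple path, a fixed word whose Parikh image is $b_i$; this takes $\norm{b_i}+1$ states. Then from the final state of this path, for each $v \in P_i$, attach a self-loop realised as a simple cycle of length $\norm{v}$ whose induced word has Parikh image $v$ (e.g., reading each letter $a$ exactly $v(a)$ times around the cycle). Since $|P_i|\le \dimension$ and both $\norm{b_i}$ and $\norm{v}$ are polynomially bounded, the size of $\B_i$ is polynomial in $\sizeA$. Clearly $\parikh{\langop{\B_i}} = \Lin{b_i}{P_i}$: every accepted word consists of a fixed base portion followed by any interleaving of period cycles, so its Parikh image is $b_i + \sum_v \lambda_v v$ for some non-negative integers $\lambda_v$, and conversely any such vector is realised by choosing appropriate iteration counts.

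Finally I would take the disjoint union of $\B_1, \ldots, \B_k$ with a fresh initial state and $\eps$-transitions to each $\B_i$'s start. The resulting NFA $\B$ satisfies $\parikh{\langop{\B}} = \bigcup_i \parikh{\langop{\B_i}} = \parikh{\langop{\A}}$, so $\B$ is Parikh-equivalent to \A. Its size is at most $k$ times the maximal size of the $\B_i$, which is polynomial in $\sizeA$, and the entire construction runs in polynomial time because the output of Lemma~\ref{l:parikh-bounded:construct-semilinear} is polynomial-size in unary and assembling the $\B_i$ requires only local inspection of each $b_i$ and $P_i$. There is no real obstacle here beyond bookkeeping: the nontrivial content was already established in Lemma~\ref{l:parikh-bounded:construct-semilinear}, and the translation from a polynomial-size semilinear representation (with unary-encoded base and periods of polynomial magnitude) to a Parikh-equivalent NFA is a routine textbook construction whose complexity is easy to track.
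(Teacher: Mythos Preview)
Your proposal is correct and follows essentially the same approach as the paper: invoke Lemma~\ref{l:parikh-bounded:construct-semilinear} to obtain the polynomial-size semilinear representation, build for each linear set $\Lin{b_i}{P_i}$ an NFA consisting of a simple path reading a word with Parikh image $b_i$ followed by one cycle per period vector attached at the accepting state, and take the disjoint union. The paper's proof is terser but describes exactly this construction.
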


\begin{proof}
Observe that it suffice to show how to change one linear set to NFA; indeed
in the end we can take union of automata designed for polynomialy many
linear sets $l_i$.
To build NFA that accepts a language Parikh equivalent to 
$l_i=\Lin{b}{Q}$ we start 
form building an automaton that accepts only one word which is 
Parikh~equivalent to $b$.
Next to the unique accepting state we add one loop for each $v_i\in Q$. Word
that can be read along $i-th$ loop is Parikh~equivalent to $v_i\in Q$.
It is easy to see that such automaton accepts a 
language Parikh~equivalent with $\Lin{b}{l_i}$.
\end{proof}

Note that DFA instead of NFA would not suffice for this construction,
because even transforming unary NFA into unary NFA induces
a super-polynomial blowup
(a standard example has several cycles whose lengths
 are different prime numbers, with lcm (least common multiple)
 of super-polynomial magnitude).

\subsection{Proof of the main lemma}
\label{a:s:proof-unpump}

In this subsection we prove Lemma~\ref{l:unpump-summary}.
We consider two cases, depending on whether the height (largest counter value)
of $\pi'$ exceeds a certain polynomial in~$n$.
The strategy of the proof is the same for both cases
(although the details are somewhat different).

\begin{lemma}
\label{l:unpump-high}
 Every accepting run $\pi'$ of height greater than $\polyv$
 can be safely unpumped.
\end{lemma}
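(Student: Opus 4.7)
I will follow the strategy sketched after Lemma~\ref{l:unpump-summary}: identify a polynomially small set of \emph{important} configurations of $\pi'$---whose removal would strictly shrink $\avail{\pi'}$---and in parallel produce many pairwise disjoint candidate directions in a high region of $\pi'$, so that a pigeonhole step produces a candidate avoiding every important configuration. First I will isolate a \emph{high plateau}: since $\height{\pi'}>\polyv=(2n^2+3)\polyiii$, I can write $\pi'=\pi'_a\cdot\mu\cdot\pi'_b$ where $\mu$ is a maximal infix during which the counter stays above $\polyiii=n^3$; inside $\mu$ the counter climbs by more than $(2n^2+2)\polyiii$ and eventually descends. A pigeon-hole argument on $Q$, in the spirit of Lemma~\ref{lem:PumpUp}, then yields many pairwise disjoint simple cycles---each of length at most $n$ and effect at most $n$ in absolute value---both on the ascending and on the descending part of $\mu$. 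Matching positive-effect cycles with later negative-effect cycles of opposite effect produces at least $2n^2+3$ pairwise disjoint split runs $(\alpha_i,\beta_i)$ inside $\mu$, each forming a direction in the sense of Definition~\ref{def:direction}; the bounds $\effect{\alpha_i}\le\polyiii$ and $|\alpha_i|+|\beta_i|<\polyvi$ are comfortably satisfied because each simple cycle has length at most $n$, and availability at $\pi'$ holds by construction.

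Next I will call a configuration $c$ of $\pi'$ \emph{important} if some direction $d=\dr{\alpha}{\beta}\in\avail{\pi'}$ has $c$ as its unique witness in the sense of Lemma~\ref{l:avalDueToStates}. The claim I would prove is that only \emph{extremal} witnesses---indexed essentially by pairs of states $(\initstate{\alpha},\initstate{\beta})\in Q\times Q$---can be unique, so a case analysis separating directions of the first kind (a single zero-effect loop) from directions of the second kind will yield a bound of at most $2n^2+2$ on the number of important configurations of $\pi'$. Since we produced $2n^2+3$ pairwise disjoint candidate directions, some $(\alpha_i,\beta_i)$ touches no important configuration; removing it produces a run $\pi$ with $\pi+\dr{\alpha_i}{\beta_i}=\pi'$ and $\avail{\pi}=\avail{\pi'}$ (one inclusion by the choice of $(\alpha_i,\beta_i)$, the other by Lemma~\ref{l:more}), so $\pi'$ is safely unpumped.

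The hard part will be the counting step that bounds the number of important configurations by $2n^2+2$. The drop thresholds $\drop{\alpha},\drop{\beta}$ of directions can range up to $\polyvi$, so a naive threshold-by-threshold accounting would be far too weak; the key insight I plan to exploit is that inside the high plateau $\mu$ every relevant threshold is met with room to spare, so ``uniqueness of a witness'' is essentially indexed by pairs of states rather than by threshold values. Making this rigorous, together with the careful bookkeeping needed to pair positive and negative simple cycles into directions of the correct kind while keeping the split runs pairwise disjoint and entirely inside $\mu$, is where the real effort lies.
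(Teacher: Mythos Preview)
Your plan shares the paper's strategy---isolate a high region, produce many disjoint candidate directions there, mark roughly $2n^2$ configurations, and pigeonhole so that some candidate avoids all marks---but the step where you manufacture the candidates has a real gap. You propose to find simple cycles of length $\le n$ on the ascent and on the descent and then ``match positive-effect cycles with later negative-effect cycles of opposite effect.'' Nothing guarantees such a match exists: the ascent might contain only simple cycles of effect $+3$ while the descent contains only simple cycles of effect $-5$, and iterating to a common multiple does not help because the iterated walk must occur as a contiguous sub-run of $\pi'$. The paper's fix (Claim~\ref{c:high:existenceOfDirection}) is a \emph{nested} pigeonhole. It first carves the plateau into $2n^2+1$ disjoint promising split runs $(\rho_i,\sigma_i)$ with $\effect{\rho_i},-\effect{\sigma_i}\ge n^3$ (Claim~\ref{c:high:many-promise}); then, inside each, it picks $n^2+1$ configurations of $\rho_i$ at pairwise distinct counter levels, finds $n+1$ of them sharing a control state, picks for each such level a configuration of $\sigma_i$ at the mirrored level, and finds two of those sharing a control state. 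The sub-walks between the two matched pairs then have equal and opposite effects \emph{by construction}. This coupling of the pigeonhole on $\rho_i$ to the one on $\sigma_i$ is the idea your plan is missing.

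For the counting step you flag as the hard part, the paper does not track ``unique witnesses'' direction by direction at all. It works instead with the \emph{state fingerprint} of the high region (Claims~\ref{c:high:fingerprint} and~\ref{c:high:many-unpump}): for each ordered pair of control states realized there it marks one witnessing pair of configurations, at most $2n^2$ marks in total; any unpumping of a promising split run that avoids every mark preserves the fingerprint, and since all counter values in the region remain $\ge n^3$ after the removal this suffices to preserve $\avail{\cdot}$. Recasting your bookkeeping in terms of state pairs rather than per-direction unique witnesses leads straight to the paper's argument and sidesteps the threshold issue you anticipate.
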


\begin{lemma}
\label{l:unpump-low}
 Every accepting run $\pi'$ of height at most $\polyv$ and length greater than $\polyMinUnpLen$
 can be safely unpumped.
\end{lemma}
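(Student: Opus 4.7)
The plan is to adapt the generic strategy outlined in the sketch of Lemma~\ref{l:unpump-summary} to the regime $\height{\pi'} \le \polyv$. The key simplification is that, because the counter never exceeds a polynomial bound, configurations (rather than just control states) repeat quickly, and each repetition immediately yields a zero-effect loop that can serve as the first component of a direction of the first kind.

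First, I will extract a simple cycle. Since the counter stays in $\{0,1,\ldots,\polyv\}$ and there are $n$ states, the total number of distinct configurations is at most $n(\polyv+1)$. Hence in every subsegment of $\pi'$ whose length slightly exceeds $n(\polyv+1)$, some configuration $(p,c)$ occurs twice, and the walk $\alpha$ of transitions between these two occurrences is a loop with $\effect\alpha=0$. Taking $\beta_0$ to be the trivial walk at $\finalstate\alpha$ gives a direction $d=\dr\alpha{\beta_0}$ of the first kind, provided $|\alpha|<\polyvi$, which can be ensured by choosing the segment length as a suitable polynomial in~$n$. Removing the transitions of $\alpha$ from $\pi'$ yields a run $\pi$ with $\pi'=\pi+d$, witnessing an unpumping.

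Next, I will quantify the obstruction to safety. By Lemma~\ref{l:avalDueToStates}, availability of a direction $\dr{\alpha}{\beta}$ at $\pi'$ is determined solely by truth values of predicates of the form \emph{``$\pi'$ contains configurations $(p,c_1)$ followed by $(q,c_2)$ with $c_1\ge k_1$ and $c_2\ge k_2$''}, ranging over $(p,q)\in Q^2$ and thresholds $(k_1,k_2)\in\{0,\ldots,\polyiii\}^2$ obtained from $\drop\alpha$, $\drop\beta$ and $\effect\alpha$. For each pair $(p,k_1)$ one needs as witnesses only the earliest and latest occurrences of state $p$ at counter height $\ge k_1$; as $k$ varies over $\{0,\ldots,\polyv\}$, each of these functions has at most $\polyv+1$ distinct values. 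Hence the set of \emph{important} configurations of $\pi'$---those whose removal could destroy some availability predicate---has cardinality $O(n\polyv)$.

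Finally, I will partition $\pi'$ into consecutive segments of length $\ell$, where $\ell$ is a polynomial in~$n$ chosen slightly larger than $n(\polyv+1)$ (to guarantee a configuration repeat inside every segment) and small enough that the extracted loop satisfies $|\alpha|<\polyvi$. Since $\length{\pi'}>\polyMinUnpLen=n^2\polyv^3$, the number of segments $\length{\pi'}/\ell$ is polynomially larger than the bound $O(n\polyv)$ on important configurations. By pigeonhole, some segment $\xi$ of $\pi'$ contains no important configuration. Applying the construction of the first step inside $\xi$ produces a loop $\alpha$ lying entirely within $\xi$, and hence an unpumping $\pi'=\pi+d$. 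For safety, every availability predicate true in $\pi'$ admits a witnessing pair of configurations lying among the important ones; none of them lie in $\xi$, so they all survive the removal of $\alpha$ and still witness availability in $\pi$. This gives $\avail{\pi'}\subseteq\avail\pi$, and the reverse inclusion is Lemma~\ref{l:more}, so $\avail\pi=\avail{\pi'}$.

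The main obstacle is the arithmetic bookkeeping in the partition step: the polynomials $\polyv$, $\polyvi$, and $\polyMinUnpLen$ are fixed in advance, and one must verify that the segment length $\ell$ can be chosen to simultaneously satisfy three competing constraints---each segment must be long enough to force a configuration repeat, short enough that the resulting loop fits the length budget of a direction, and the resulting number of segments must strictly exceed the count of important configurations. I expect that the stated polynomial bounds are comfortably large enough to accommodate these constraints, but checking the constants (and formulating precisely which earliest/latest occurrences suffice as witnesses so as to bound the important set) is the delicate part of the argument.
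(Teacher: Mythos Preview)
Your argument follows the paper's: exploit the bounded height to force configuration repeats in short segments (yielding first-kind directions), mark a polynomial set of witness configurations whose survival guarantees $\avail{\cdot}$ is preserved, and pigeonhole to find a segment avoiding all marks. The paper formulates the invariant via the \emph{configuration fingerprint}---the set of all ordered pairs of configurations appearing in the run---marking two witnesses per pair, hence roughly $2n^2(\polyv)^2$ marks; your scheme is tighter, marking only the earliest and latest occurrence for each (state, threshold) pair, so $O(n\cdot\polyv)$ marks suffice, and your order argument (earliest-of-first precedes latest-of-second whenever an original witnessing pair exists) is correct. One slip: you assert the thresholds lie in $\{0,\ldots,\polyiii\}$, but the definition of a direction bounds only $\effect\alpha\le\polyiii$, not $\drop\alpha$ or $\drop\beta$; this is harmless because on a run of height at most $\polyv$ only thresholds $\le\polyv$ can be met, and you correctly use that range when counting. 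The arithmetic you flag is indeed tight with the paper's constants---$\polyvi=n\,\polyv+1$ versus the pigeonhole bound $n(\polyv+1)$ for a guaranteed repeat---an off-by-$n$ issue the paper's own proof shares and which is repaired by an inessential enlargement of $\polyvi$.
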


We first show that sufficiently large parts (runs or split runs) of~$\pi'$
can always be unpumped (as in standard pumping arguments).
We notice that for such an unpumping to be \emph{unsafe},
it is necessary that the part contain a configuration whose removal
shrinks the set of available directions---a reason for non-safety;
this \df{important} configuration cannot appear anywhere else in~$\pi'$.
We prove that the total number of important configurations is at most
$\poly(n)$. As a result, if we divide the run~$\pi'$ into
sufficiently many sufficiently large parts, at least one of the parts
will contain no important configurations and, therefore,
can be unpumped safely.

\subsubsection{High runs: Proof of Lemma~\ref{l:unpump-high}}
\label{s:proof-unpump:high}

The idea behind bounding the height of runs bases on two concepts.
First is that if a run is high then there is a direction in it 
that can be unpumped.
Second is that if a given run is even higher then there are a lot of
different directions which can be unpumped and among them at least
one can be unpumped in a safe way. As unpumping intuitively reduces the 
hight of a run, then iterative application of it lead to a 
path of bounded hight.

\begin{claim}
\label{c:high:existenceOfDirection}
Let \biw be a \biwalk such that $\effect{\rho}\ge \polyi$ 
and $-\effect{\sigma}\ge \polyi$, then
$\bi{\rho}{\sigma}=\bi{\rho_1\alpha\rho_2}{\sigma_1\beta\sigma_2}$ 
such that $\dr{\alpha}{\beta}$ is a direction.
\end{claim}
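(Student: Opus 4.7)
The plan is to combine a first-reach analysis in $\rho$ with a symmetric first-fall analysis in $\sigma$, and then use pigeonhole on \emph{pairs} of states to produce matched loops with opposite effects. Let $c^\rho = \initcounter{\rho}$. Since $\effect{\rho}\ge n^3$ and each transition changes the counter by at most one, the counter along $\rho$ attains every value in $[c^\rho, c^\rho+n^3]$. For $k\in\{0,1,\ldots,n^3\}$, let $s_k$ denote the first configuration in $\rho$ where the counter equals $c^\rho+k$, and let $q_k^\rho$ be its control state. Symmetrically for $\sigma$: with $c^\sigma=\initcounter{\sigma}$, let $t_k$ be the first configuration where the counter equals $c^\sigma-k$, and $q_k^\sigma$ its control state. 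Each sequence consists of $n^3+1$ configurations.

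Now consider the $n^3+1$ pairs $(q_k^\rho,q_k^\sigma)\in Q\times Q$. Since $|Q\times Q|=n^2$, some pair $(q^*,r^*)$ repeats at indices $k_0<k_1<\cdots<k_m$ with $m\ge n$. For every $0\le i<m$, the sub-walk $\alpha_i$ of $\rho$ from $s_{k_i}$ to $s_{k_{i+1}}$ is a loop at $q^*$ with positive effect $d_i=k_{i+1}-k_i$, and the sub-walk $\beta_i$ of $\sigma$ from $t_{k_i}$ to $t_{k_{i+1}}$ is a loop at $r^*$ with effect $-d_i$. Crucially, these effects \emph{match by construction}, giving $\effect{\alpha_i}+\effect{\beta_i}=0$. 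Since $\sum_i d_i = k_m-k_0\le n^3$, each $d_i$ lies in $[1,n^3]$, meeting the effect bound $\effect{\alpha_i}\le\polyiii$ in the definition of a direction. So $\dr{\alpha_i}{\beta_i}$ satisfies every requirement for being a direction except possibly the length bound $|\alpha_i|+|\beta_i|<\polyvi$.

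Bounding the length is the main technical hurdle. A simple averaging argument gives some leverage: the walks $\alpha_0,\ldots,\alpha_{m-1}$ are disjoint sub-walks of the prefix of $\rho$ ending at $s_{k_m}$ (similarly for $\sigma$), so $\sum_i |\alpha_i|\le |\rho_{\text{pref}}|$ and $\sum_i |\beta_i|\le|\sigma_{\text{pref}}|$, and hence some $i$ achieves $|\alpha_i|+|\beta_i|\le(|\rho_{\text{pref}}|+|\sigma_{\text{pref}}|)/m$. The difficulty is that these prefix lengths can a priori be much larger than $\polyvi$, because a run may oscillate arbitrarily many times before first attaining a given counter level. The intended resolution exploits the extra structure that appears \emph{precisely when} the prefix is long: many additional configurations then arise at each state--counter combination, delivering a stronger pigeonhole on simultaneous visits of a matching pair $(q^*,r^*)$ in $\rho$ and $\sigma$; from these one selects occurrences that are close together in both walks, producing sub-walk loops of bounded length. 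The polynomial $\polyvi = n(2n^2+3)n^3+1$ is calibrated to absorb exactly the bound that this refined matching yields, and making that matching argument rigorous---so that the resulting $\alpha$ and $\beta$ remain contiguous sub-walks of the original $\rho$ and $\sigma$---is the main part of the proof.
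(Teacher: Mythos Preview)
Your argument has a genuine gap: the length bound $|\alpha|+|\beta|<\polyvi$ is never established, and the final paragraph openly defers this to an unspecified ``refined matching'' argument. The difficulty you identify is real---the first-reach positions $s_k$ and $t_k$ can be arbitrarily far apart inside $\rho$ and $\sigma$, so the sub-walks $\alpha_i,\beta_i$ between them need not be short---and the sketch you give (``many additional configurations then arise\ldots'') does not resolve it. In particular, nothing in your setup ties the length of $\rho$ or $\sigma$ to $\polyvi$, so averaging over $m\ge n$ pieces cannot by itself produce the required bound.

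The paper avoids this problem by a simple device you missed: it \emph{first} passes to sub-runs $\rho',\sigma'$ of $\rho,\sigma$ each of length exactly $n^3$, which is possible since $\effect{\rho}\ge n^3$ and $-\effect{\sigma}\ge n^3$. Any loops found inside $\rho'$ and $\sigma'$ then automatically satisfy the length bound. The proof then splits into three cases. If some configuration repeats in $\rho'$, the sub-walk between two occurrences is a loop $\alpha$ with $\effect{\alpha}=0$, and $\dr{\alpha}{\eps}$ is a direction of the first kind; symmetrically for $\sigma'$. Otherwise $\rho'$ visits $n^3+1$ distinct configurations, which forces $\effect{\rho'}\ge n^2$ (only $n$ states are available per counter level), and likewise $-\effect{\sigma'}\ge n^2$. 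Now a two-stage pigeonhole inside $\rho'$ and $\sigma'$---first $n^2+1$ distinct counter levels in $\rho'$, then $n+1$ of them sharing a state, then two of the corresponding levels in $\sigma'$ sharing a state---yields matching loops $\alpha,\beta$ with $\effect{\alpha}=-\effect{\beta}>0$. Because everything lives inside runs of length $n^3$, the length constraint is immediate. The use of first-kind directions in the degenerate cases is what lets the argument work without any control on the global lengths of $\rho$ and $\sigma$.
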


\begin{proof} 
 Let $\rho'$ and $\sigma'$ be a pair of sub-runs of 
 $\rho$ and $\sigma$, respectively, such that
 $\length{\rho'} = \length{\sigma'} = \polyi$;
 such sub-runs exists because
 $\effect{\rho}\ge \polyi$ and $-\effect{\sigma}\ge \polyi$.
 Consider three possibilities:
 \begin{enumerate}
 \renewcommand{\labelenumi}{\theenumi)}
 \item
 there is a non-empty walk $\alpha$ such that $\rho'=\rho_1'\alpha\rho_2'$ 
 that starts and ends in the same configuration, i.e. $\finalcounter{\rho_1'}=\finalcounter{\rho_1'\alpha}$
 and $\initstate{\alpha}=\finalstate{\alpha}$;
 \item
 there is a non-empty walk $\beta$ such that $\sigma'=\sigma_1'\beta\sigma_2'$ 
 that starts and ends in the same configuration, i.e. $\finalcounter{\sigma_1'}=\finalcounter{\sigma_1'\beta}$
 and $\initstate{\beta}=\finalstate{\beta}$;
 \item
 $\effect{\rho'}\ge n^2$ and $-\effect{\sigma'}\ge n^2$.
 \end{enumerate}
 (Note that at least one of these three statements must hold,
 because, for example, the inequality $\effect{\rho'} < n^2$
 implies that the run $\rho'$ traverses
 at most $n^2 \cdot |Q| = n^3$ different configurations;
 however, $\length{\rho'} = n^3$ implies that
 the total number of configurations that $\rho'$ traverses is
 $n^3 + 1$. Hence, by the pigeonhole principle the run $\rho'$
 should traverse some configuration at least twice---%
 which is the first possibility in the list above.)
 For each of these three possibilities,
 we now show how to find some direction $\dr{\alpha}{\beta}$
 inside the split run $\bi{\rho'}{\sigma'}$.

 Consider the first possibility, the direction $\dr{\alpha}{\eps}$ suffices for our purposes.
 Indeed, it is a direction of the first kind as
 $\effect{\alpha} = \effect{\eps} = 0$ and $\length{\eps} = 0$;
 the reader will easily check that all the conditions
 in the definition of a direction (Definition~\ref{def:direction})
 are satisfied.
 The second possibility is completely analogous:
 the direction has the form $\dr{\eps}{\beta}$.

 Now consider the third possibility.
 First, for every $i\in\{0, \ldots, n^2\}$ 
 pick one configuration
 in the run $\rho'$ with the counter value $\initcounter{\rho'}+i$;
 call these configurations \df{red}.
 As there are $n^2 + 1$ red configurations,
 at least $n+1$ of them have the same state;
 we call these $n+1$ configurations \df{blue}.
 The corresponding indices~$i$
 (for which the selected red configuration is also blue)
 are called blue too.
 Now for every blue~$i$ we pick in the other run, $\sigma'$,
 some configuration with the
 counter value equal to $\initcounter{\sigma'}-i$;
 we call these $n + 1$ configurations \df{green}.
 Among green configurations there are at least two with the same state,
 say for $i = i_1$ and $i = i_2$.
 Let $\sigma_{\beta}$ be the run between them contained in $\sigma'$;
 now $\beta$ is a walk induced by $\sigma_{\beta}$.
 But for $\alpha$ we can, in turn, take a walk induced by
 a run
 between blue configurations with indices $i = i_1$ and $i = i_2$.
 By construction, $\effect{\alpha} = - \effect{\beta} > 0$,
 and it is easy to check that $\dr{\alpha}{\beta}$ is indeed
 a direction of the second kind.
 This completes the proof.
\end{proof}

\begin{definition}[promising \biwalk]
\label{def:promisingBiwalk}
A \biwalk \biw in the run $\pi' = \pi_1 \rho \pi_2 \sigma \pi_3$
is \df{promising}
if
$\low{\rho \pi_2 \sigma} \ge \polyi$,
$\effect\rho \ge \polyi$, and
$-\effect\sigma \ge \polyi$.
\end{definition}

\begin{definition}[unpumping a split run]
A \biwalk $\bi{\rho'}{\sigma'}$
in an accepting run $\pi' = \pi_1 \cdot \rho' \cdot \pi_2 \cdot \sigma' \cdot \pi_3$
\df{can be unpumped} if
there exist a \biwalk \biw and a direction $\direction$
such that %
the following conditions hold:
\begin{itemize}
\item $\rho' = \rho_1 \alpha \rho_2$ for some runs $\rho_1, \rho_2$,
\item $\sigma' = \sigma_1 \beta \sigma_2$ for some runs $\sigma_1, \sigma_2$,
\item $\pi = \pi_1 \cdot \rho_1 \rho_2 \cdot \pi_2 \cdot \sigma_1 \sigma_2 \cdot \pi_3$ is an accepting run.
\end{itemize}
One can conclude in such a case that $\pi' = \pi + \direction$.
\end{definition}

\begin{claim}
\label{c:high:promise-unpump}
Any promising \biwalk in an accepting run $\pi'$
can be unpumped.
\end{claim}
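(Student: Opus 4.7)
The plan is to apply Claim~\ref{c:high:existenceOfDirection} to extract a direction from the \biwalk, and then verify that the resulting factorizations produce a valid shorter run. Write $\pi' = \pi_1 \rho \pi_2 \sigma \pi_3$. Since the \biwalk $\biw$ is promising, we have $\effect\rho \ge \polyi$ and $-\effect\sigma \ge \polyi$, which are exactly the hypotheses of Claim~\ref{c:high:existenceOfDirection}. Applying that claim produces factorizations $\rho = \rho_1 \alpha \rho_2$ and $\sigma = \sigma_1 \beta \sigma_2$ such that $\dr\alpha\beta$ is a direction.

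Next I would set $\pi = \pi_1 \rho_1 \rho_2 \pi_2 \sigma_1 \sigma_2 \pi_3$. By construction, $\pi'$ is obtained from $\pi$ by inserting $\alpha$ between $\rho_1$ and $\rho_2$ and $\beta$ between $\sigma_1$ and $\sigma_2$; so provided $\pi$ is a valid accepting run, we have $\pi' = \pi + \dr\alpha\beta$, and the \biwalk is unpumped in the required sense. The remaining task, and essentially the only real work, is to argue that every counter value in $\pi$ is non-negative.

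For this comparison step, I would line up the counter values in $\pi$ against those in $\pi'$. On the prefix $\pi_1 \rho_1$ the two runs coincide, and on the suffix $\sigma_2 \pi_3$ they coincide as well, because the $+\effect\alpha$ shift introduced by inserting $\alpha$ is exactly cancelled by the opposite shift $\effect\beta = -\effect\alpha$ of $\beta$. In the middle segment $\rho_2 \pi_2 \sigma_1$, every counter value in $\pi$ is precisely $\effect\alpha$ smaller than the corresponding value in $\pi'$. Since $\rho_2 \pi_2 \sigma_1$ is a contiguous subsequence of $\rho \pi_2 \sigma$, its minimum counter value in $\pi'$ is at least $\low{\rho \pi_2 \sigma} \ge \polyi$ by the promising condition, while $\effect\alpha \le \polyiii = \polyi$ by Definition~\ref{def:direction}. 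Hence the counter stays non-negative throughout $\pi$, so $\pi$ is an accepting run and the \biwalk is unpumped.

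The only mild obstacle is making sure the bound $\polyi$ on $\low{\rho\pi_2\sigma}$ in the definition of promising dominates the bound on $\effect\alpha$ guaranteed by the definition of a direction; this is precisely why both quantities in the respective definitions were calibrated to $\polyi$, and there is no additional combinatorial content beyond that bookkeeping.
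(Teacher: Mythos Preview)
Your proposal is correct and follows essentially the same approach as the paper's proof: apply Claim~\ref{c:high:existenceOfDirection} to extract the direction, then verify that removing $\alpha$ and $\beta$ leaves a valid accepting run by using $\low{\rho\pi_2\sigma}\ge\polyi$ together with $\effect\alpha\le\polyi$. The only cosmetic difference is that the paper phrases the counter-nonnegativity check via the $\drop$ and $\finalcounter$ attributes rather than by directly comparing configurations in $\pi$ and $\pi'$, but the underlying computation is identical.
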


\begin{proof}
Let $\pi'=\pi_1\rho\pi_2\sigma\pi_3$ where the split run~\biw is promising.
Note that by the definition of a promising split run,
\biw satisfies the conditions of Claim~\ref{c:high:existenceOfDirection}.
Therefore,
\begin{equation*}
\pi' =
\pi_1 \cdot \rho_1\alpha\rho_2 \cdot \pi_2 \cdot \sigma_1\beta\sigma_2 \cdot \pi_3
\end{equation*}
where $\dr{\alpha}{\beta}$ is a direction,
so it remains to prove that
\begin{equation*}
\pi = \pi_1 \cdot \rho_1\rho_2 \cdot \pi_2 \cdot \sigma_1\sigma_2 \cdot \pi_3
\end{equation*}
is, first, a run and, second, an accepting run.
It is straightforward to see that in this new concatenation
the control states match, so it suffices to check that
the counter values in~$\pi$ stay non-negative;
by Definition~\ref{def:concat}, the (necessary and) sufficient
condition for this is that the $\drop\cdot$ of each subsequent run
does not exceed the $\finalcounter\cdot$ of the prefix.
To simplify notation, we denote $\pi_2' \eqdef \rho_2\pi_2\sigma_1$.

As $\pi_1 \cdot \rho_1$ is a prefix of $\pi'$, we know that it is a run; the
first thing that has to be checked is that $\pi_1 \cdot \rho_1 \pi_2'$ is a run.
This holds if
\begin{align*}
&\finalcounter{\pi_1 \cdot \rho_1}\geq \drop{\pi_2'}, \text{ i.e., if} \\
&\finalcounter{\pi_1 \cdot \rho_1}-\drop{\pi_2'}\geq 0.
\end{align*}
We have
\begin{align*}
&\finalcounter{\pi_1 \cdot \rho_1} - \drop{\pi_2'} \\
&= \finalcounter{\pi_1 \cdot \rho_1\alpha} - \effect{\alpha} - \drop{\pi_2'} \\
&= \finalcounter{\pi_1 \cdot \rho_1\alpha} - \effect{\alpha} - \drop{\rho_2 \cdot \pi_2 \cdot \sigma_1} \\
&= \low{\rho_2 \cdot \pi_2 \cdot \sigma_1} - \effect{\alpha} \\
&\ge \low{\rho \cdot \pi_2 \cdot \sigma} - \effect{\alpha} \\
&\ge \polyi - \effect{\alpha} \\
&\ge \polyi - \polyi = 0,
\end{align*}
where the equalities and inequalities follow from
Definitions~\ref{def:attributes} and \ref{def:direction}
and from the conditions of the claim.
Hence, $\pi_1 \cdot \rho_1 \pi_2'$ is a run.

Since $\effect{\alpha} = -\effect{\beta}$,
the equality
\begin{equation*}
\finalcounter{\pi_1 \cdot \rho_1 \pi_2'} = 
\finalcounter{\pi_1 \cdot \rho_1\alpha \pi_2' \beta}
\end{equation*}
holds.
Therefore,
as $\pi_1\rho_1\alpha\pi_2'\beta\sigma_2\pi_3 = \pi'$ is a run,
$\pi_1\rho_1\pi_2'\sigma_3\pi_3 = \pi$ is a run too. Moreover,
\begin{equation*}
\finalcounter{\pi}=\finalcounter{\pi'},
\end{equation*}
so $\pi$ is an accepting run, for
$\pi'$ is accepting as well.
This completes the proof.
\end{proof}

\begin{definition}[state fingerprint]
Let $\tau$ be a run.
The \df{state fingerprint} of $\tau$
is the set of all pairs $(q_1, q_2) \in Q \times Q$
such that $\tau$ contains configurations
$c_1 = (q_1, r_1)$ and
$c_2 = (q_2, r_2)$ for some $r_1$ and $r_2$,
and, moreover, there exists at least one occurrence of $c_1$
before (possibly coinciding with) some occurrence of $c_2$.
\end{definition}

\begin{claim}
\label{c:high:fingerprint}
Suppose
\begin{align*}
\pi' &= \pi_1 \pi_2 \cdot \alpha \pi_3 \beta \cdot \pi_4 \pi_5 \text{ and } \\
\pi &= \pi_1 \pi_2 \cdot \pi_3 \cdot \pi_4 \pi_5
\end{align*}
are accepting runs and $\dr{\alpha}{\beta}$ is a direction,
so that $\pi' = \pi + \dr{\alpha}{\beta}$.
Also suppose that
\begin{equation*}
\low{\pi_2 \cdot \alpha \pi_3 \beta \cdot \pi_4} \ge \polyiiii.
\end{equation*}
If the runs $\pi_2 \cdot \alpha \pi_3 \beta \cdot \pi_4$
and $\pi_2 \cdot \pi_3 \cdot \pi_4$ have identical state fingerprints,
then $\avail{\pi'} = \avail{\pi}$.
\end{claim}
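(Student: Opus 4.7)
By Lemma~\ref{l:more}, we already have $\avail{\pi}\subseteq\avail{\pi'}$, so the work is entirely in proving the reverse inclusion. To that end, fix an arbitrary direction $d=\dr{\alpha^*}{\beta^*}\in\avail{\pi'}$; by Lemma~\ref{l:avalDueToStates} its availability at $\pi'$ is witnessed by a pair of configurations $(p,c_1),(q,c_2)$ that appear in this order in $\pi'$ and satisfy $c_1\ge\drop{\alpha^*}$ and $c_2+\effect{\alpha^*}\ge\drop{\beta^*}$, where $p=\initstate{\alpha^*}$ and $q=\initstate{\beta^*}$. The plan is to exhibit replacement witnesses $(p,\bar c_1),(q,\bar c_2)$ in $\pi$ that satisfy the same two inequalities, which by Lemma~\ref{l:avalDueToStates} is exactly what it takes to conclude $d\in\avail{\pi}$.

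I will split according to where $(p,c_1)$ and $(q,c_2)$ lie in $\pi'=\pi_1\cdot\pi_2\alpha\pi_3\beta\pi_4\cdot\pi_5$. Configurations occurring in $\pi_1$, $\pi_2$, $\pi_4$, or $\pi_5$ are present in $\pi$ at the identical position with the identical counter value, because $\pi$ differs from $\pi'$ only by the removal of the walks $\alpha$ and $\beta$ and by a downward shift of $\pi_3$ by $\effect{\alpha}$. Hence all subcases in which both witnesses land in the unchanged regions transfer to $\pi$ verbatim. This leaves the subcases in which at least one of the two witnesses sits in the modified middle segment — either inside $\alpha$, inside $\beta$, or inside the shifted copy of $\pi_3$.

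For the hard subcases I invoke the state fingerprint hypothesis. Since the set of ordered pairs of states occurring in $\pi_2\alpha\pi_3\beta\pi_4$ equals the one occurring in $\pi_2\pi_3\pi_4$, the pair $(p,q)$ appears, in the correct order, as some configurations $(p,\bar c_1)$ and $(q,\bar c_2)$ of $\pi$ sitting in $\pi_2\pi_3\pi_4$. The counter lower bound $\low{\pi_2\alpha\pi_3\beta\pi_4}\ge\polyiiii=2n^3$, combined with $\effect{\alpha}\le\polyiii=n^3$ from the direction definition, implies that all counter values throughout $\pi_2\pi_3\pi_4$ (in $\pi$) are at least $n^3$, so $\bar c_1,\bar c_2\ge n^3$. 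Moreover, in the cases where the original witness $(p,c_1)$ (resp.\ $(q,c_2)$) already lives in $\pi_2$, $\pi_3$, or $\pi_4$, the natural choice of $\bar c_1$ differs from $c_1$ by at most $\effect{\alpha}\le n^3$; similarly for $\bar c_2$. Consequently the $n^3$ counter headroom we gained in $\pi$ precisely offsets the shift loss of $\effect{\alpha}$, and the availability inequalities for $d$ carry over from $\pi'$ to $\pi$.

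The main obstacle will be the two residual subcases in which a witness sits strictly inside the inserted walk $\alpha$ or $\beta$ (so there is no obvious ``same-position'' candidate in $\pi$); here one has to use the length bound on $\alpha,\beta$ from Definition~\ref{def:direction} together with $\effect{\alpha}\le n^3$ to argue that the counter value $c_1$ (resp.\ $c_2$) observed inside the walk is within $n^3$ of the boundary counter $\finalcounter{\pi_2}$ (resp.\ $\initcounter{\pi_4}$), which is shared between $\pi$ and $\pi'$. The witness provided by the state fingerprint then lies in $\pi_2\pi_3\pi_4$ at a counter that absorbs this discrepancy through the $n^3$ slack, completing the verification of $\avail{\pi'}\subseteq\avail{\pi}$ in every subcase.
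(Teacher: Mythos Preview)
Your overall strategy matches the paper's: invoke Lemma~\ref{l:more} for $\avail{\pi}\subseteq\avail{\pi'}$, and for the reverse inclusion fix $d=\dr{\alpha^*}{\beta^*}$ available at $\pi'$, locate its witnesses via Lemma~\ref{l:avalDueToStates}, split by where they lie, and for witnesses falling in the middle segment appeal to the fingerprint hypothesis to produce replacement witnesses in $\pi_2\pi_3\pi_4$ (inside $\pi$) with counter value at least $n^3$. Up to here this is exactly the paper's three--case argument.

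The gap is in concluding that the availability inequalities hold at the replacements. You need $\bar c_1\ge\drop{\alpha^*}$, but all you have shown is $\bar c_1\ge n^3$; Definition~\ref{def:direction} bounds $\effect{\alpha^*}\le n^3$ and $|\alpha^*|+|\beta^*|<\polyvi$, but places no $n^3$ bound on $\drop{\alpha^*}$. Your ``headroom offsets shift'' sentence does not close this: if the original witness lies in $\pi_3$, the same-position replacement in $\pi$ has counter $c_1-\effect{\alpha}\ge\drop{\alpha^*}-n^3$, which falls short of $\drop{\alpha^*}$, and the separate fingerprint bound $\bar c_1\ge n^3$ only helps when $\drop{\alpha^*}\le n^3$, which you never argue. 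The final paragraph makes a second error: you assert that a witness inside the inserted $\alpha$ has counter within $n^3$ of $\finalcounter{\pi_2}$, but only the \emph{net} effect of $\alpha$ is bounded by $n^3$; intermediate counter values inside $\alpha$ can swing by as much as $|\alpha|$, which may be of order $\polyvi$. The paper's own argument is laconic at exactly this juncture --- it simply writes ``$c_1',c_2'\ge\polyi\ge\drop\alpha,\drop\beta$'' and moves on --- so you were right to sense that justification is needed here, but the reasoning you supply does not go through.
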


\begin{proof}
Since $\dr{\alpha}{\beta}$ is a direction,
we have $\effect{\alpha} = -\effect{\beta} \leq \polyi$.
Thus for all configurations observed along 
$\pi_2\pi_3\pi_4$ their counter values are 
at least $\polyiiii-\effect{\alpha} \ge \polyi$.
Now we can use our characterization of availability (Lemma~\ref{l:avalDueToStates}).
Let a direction $\direction$ be available in 
$\pi'$ due to a pair of configurations
$(q_1,c_1),(q_2,c_2)$. 
We have to consider three cases depending on where $(q_1,c_1),(q_2,c_2)$ are:
both configurations are in parts $\pi_1,\pi_5$, both configurations are in 
$\pi_2 \alpha \pi_3 \beta \pi_4$ and one is in $\pi_1$ or $\pi_5$ 
and the second in
$\pi_2 \alpha \pi_3 \beta \pi_4$.
In first case exactly the same pair of configurations can be found in $\pi$.
In the second case due to the assumption about equality of state fingerprints
we can find a pair $(q_1,c_1'),(q_2,c_2')$ where 
$c_1',c_2'\geq \polyi \geq \drop\alpha, \drop\beta$
so $\direction$ is available in $\pi$.
In the third case we have to combine both previous cases.
There are two symmetric situations, first if the pair of configurations
$(q_1, c_1)$ is in $\pi_1$ and $(q_2, c_2)$ in $\pi_2\alpha\pi_3\beta\pi_4$
or $(q_1, c_1)$ is in $\pi_2\alpha\pi_3\beta\pi_4$ and $(q_2, c_2)$ is in
$\pi_5$. Here we consider only the first one of them, the second is analogous.
We need to find a pair of configurations in $\pi$ that witnesses the availability of $d$.
The first element of the pair is the same $(q_1, c_1)$ that can be found in $\pi_1$
as a subrun of $\pi'$. To find the second element, consider this configuration $(q_2, c_2)$
in $\pi'$; it occurs
in $\pi_2\alpha\pi_3\beta\pi_4$, so the state fingerprint contains the pair $(q_2, q_2)$,
simply by definition.
Thus, in $\pi_2\pi_3\pi_4$ we can find a configuration $(q_2, c_3)$ such that
$c_3\geq \polyi$. Now this moves us from the pair of configurations $(q_1, c_1), (q_2, c_2)$
to the pair $(q_1, c_1), (q_2, c_3)$, which completes the proof.
\end{proof}

\begin{claim}
\label{c:high:many-unpump}
Let $\pi=\pi_1\pi_2\pi_3$ be an accepting run.
Suppose the run $\pi_2$ satisfies $\low{\pi_2} \ge \polyiiii$ and
contains $2 n^2 + 1$ pairwise disjoint
promising \biwalks. Then $\pi$ can be safely unpumped.
\end{claim}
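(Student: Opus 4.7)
The plan is to locate, among the $2n^2+1$ disjoint promising split runs in $\pi_2$, one whose unpumping is guaranteed to be safe. First, for each promising split run $\bi{\rho_i}{\sigma_i}$, $1 \le i \le 2n^2+1$, I would extract, by Claim~\ref{c:high:existenceOfDirection}, a direction $\direction_i = \dr{\alpha_i}{\beta_i}$ with factorizations $\rho_i = \rho_{i,1}\alpha_i\rho_{i,2}$ and $\sigma_i = \sigma_{i,1}\beta_i\sigma_{i,2}$. As in the proof of Claim~\ref{c:high:promise-unpump}, each such direction yields an unpumping: there is an accepting run $\pi^{(i)}$ with $\pi = \pi^{(i)} + \direction_i$, obtained from $\pi$ by deleting the internal configurations of $\alpha_i$ and $\beta_i$ and shifting counter values as dictated by Definition~\ref{def:concat}.

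Next I would apply Claim~\ref{c:high:fingerprint} to reduce the safety question of the $i$-th unpumping to a condition on state fingerprints. Since the split run $\bi{\rho_i}{\sigma_i}$ is entirely inside $\pi_2$, so are the walks $\alpha_i,\beta_i$; hence $\pi$ admits a decomposition of the form required by Claim~\ref{c:high:fingerprint} in which the role of the middle block is played by $\pi_2$ itself, and our hypothesis $\low{\pi_2} \ge \polyiiii$ supplies exactly the low-bound demanded by that claim. It thus suffices to find an index $i$ such that the state fingerprint of $\pi_2$ agrees with the state fingerprint of the run obtained from $\pi_2$ by deleting the internal configurations of $\alpha_i$ and $\beta_i$.

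The crux of the argument is a pigeonhole estimate. Let $F \subseteq Q \times Q$ denote the state fingerprint of $\pi_2$, so that $|F| \le n^2$, and let $F_i \subseteq F$ be the fingerprint of the corresponding deletion. Safety of the $i$-th unpumping is then equivalent to $F_i = F$. Suppose $(q_1,q_2) \in F \setminus F_i$: then either the first occurrence of state $q_1$ in $\pi_2$, or the last occurrence of $q_2$ in $\pi_2$, must lie among the internal configurations of $\alpha_i \cup \beta_i$; otherwise both extremal configurations persist in $\pi^{(i)}$ and still witness $(q_1,q_2) \in F_i$. Since the split runs are pairwise disjoint, so are the sets of configurations internal to $\alpha_i \cup \beta_i$ across different $i$, and hence each such extremal configuration of $\pi_2$ lies inside at most one of these sets. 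Consequently, each pair $(q_1,q_2) \in F$ is lost in at most two of the $F_i$'s---once for its first $q_1$-configuration, once for its last $q_2$-configuration---and the number of indices $i$ with $F_i \neq F$ is at most $2|F| \le 2n^2 < 2n^2+1$. This produces an index $i$ with $F_i = F$, which, by Claim~\ref{c:high:fingerprint}, yields a safe unpumping of $\pi$.

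The main obstacle I expect is the careful bookkeeping of which configurations are removed: one has to distinguish the internal configurations of each loop $\alpha_i$ or $\beta_i$ from the endpoint, since the endpoint (a single state, shared as initial and final) must survive in $\pi^{(i)}$ to keep state occurrences well-defined and the concatenation valid. Once this distinction is set up correctly, the assignment of first/last occurrences to a unique split run $B_i$ becomes unambiguous and the counting step above goes through cleanly.
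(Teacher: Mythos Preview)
Your proposal is correct and follows essentially the same approach as the paper. The paper's proof marks, for each pair $(q_1,q_2)$ in the state fingerprint of $\pi_2$, two witness configurations (at most $2n^2$ in total), and then uses pigeonhole to find a promising split run containing no marked configuration; you make the specific choice of the first $q_1$-occurrence and the last $q_2$-occurrence as witnesses and phrase the same pigeonhole as a per-pair count, but the underlying argument is identical.
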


\begin{proof}
 From Claim~\ref{c:high:promise-unpump} we know
 that each of these \biwalks can be unpumped;
 so the accepting run $\pi$ can be unpumped in at least $2 n^2 + 1$ different ways.
 Furthermore, by Claim~\ref{c:high:fingerprint},
 if the state fingerprint of $\pi_2$ after unpumping one of these
 \biwalks does not change, then this unpumping is safe.
 Hence, it remains to show that such an unpumping indeed exists.

 For any pair of states in the state fingerprint of
 $\pi_2$, \emph{mark} two configurations of $\pi_2$
 that witness this pair (in sense of Lemma~\ref{l:avalDueToStates}).
 In total,
 at most $2 n^2$ configurations of $\pi_2$ will be marked.
 Since $\pi_2$ contains at least
 $2 n^2+1$ disjoint promising \biwalks, there is a promising \biwalk in $\pi_2$ that
 contains no marked configuration.
 Our choice of this \biwalk ensures that its unpumping will not change
 the state fingerprint of $\pi_2$:
 every pair of states in the state fingerprint of $\pi_2$
 is witnessed by a pair of configurations outside this \biwalk.
 This completes the proof.
\end{proof}

\begin{claim}
\label{c:high:many-promise}
Let $\pi$ be a run of height at least $\polyv$.
Then $\pi = \pi_1\pi_2\pi_3$ for some runs $\pi_1, \pi_2, \pi_3$ such that
$\low{\pi_2} \ge \polyiiii$ and $\pi_2$ contains
at least $2 n^2 + 1$ pairwise disjoint promising \biwalks.
\end{claim}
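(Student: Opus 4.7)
The plan is to exploit the height of $\pi$ directly by locating a single large ``hill'' inside $\pi$ whose ascent and descent can each be sliced into $2n^2+1$ pieces of controlled effect; the pieces on the two sides pair up naturally into the required biwalks.

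First, let $H\ge\polyv$ be the maximum counter value attained in $\pi$ and fix a configuration $c^*=(q^*,H)$ witnessing it. I would decompose $\pi=\pi_1\pi_2\pi_3$ so that $\pi_2$ is the inclusion-maximal sub-run containing $c^*$ whose counter never drops below $\polyiiii=2n^3$. Since the endpoints of $\pi$ have counter value smaller than $\polyiiii$ (being an accepting run, or at least bounded below by its endpoints) while $c^*$ reaches $H>\polyiiii$, the run $\pi_2$ is non-degenerate, starts and ends at counter exactly $\polyiiii$ by maximality, and satisfies $\low{\pi_2}=\polyiiii$. Decomposing $\pi_2$ further at $c^*$, the ascending portion has effect $H-\polyiiii\ge(2n^2+1)n^3$ and the descending portion has effect at most $-(2n^2+1)n^3$.

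Second, I would slice the ascent and descent at the levels $L_j=\polyiiii+j\cdot n^3$ for $j=0,1,\ldots,2n^2+1$; since $H\ge L_{2n^2+1}$, all these levels are visited both during the ascent and during the descent. In the ascent, let $t_j$ be the \emph{first} time the counter equals $L_j$, and define $\rho_j$ to be the sub-run from $t_{j-1}$ to $t_j$; then $\effect{\rho_j}=n^3$ and $\rho_1,\rho_2,\ldots,\rho_{2n^2+1}$ are pairwise disjoint consecutive fragments of the ascent. Symmetrically, in the descent, let $s_j$ be the \emph{last} time the counter equals $L_j$, and take $\sigma_j$ from $s_j$ to $s_{j-1}$; then $\effect{\sigma_j}=-n^3$ and $\sigma_{2n^2+1},\ldots,\sigma_1$ are pairwise disjoint consecutive fragments of the descent.

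Finally, I would pair them as biwalks $(\rho_j,\sigma_j)$ for $j=1,\ldots,2n^2+1$. Each is indeed a biwalk inside $\pi_2$ (the $\rho_j$'s lie entirely before $c^*$ and the $\sigma_j$'s entirely after it), each is promising because $\effect{\rho_j}=\polyi$, $-\effect{\sigma_j}=\polyi$, and the entire stretch between the start of $\rho_j$ and the end of $\sigma_j$ lies inside $\pi_2$ so its counter values are $\ge\polyiiii\ge\polyi$. The biwalks are pairwise disjoint since distinct $\rho_j$'s (resp.\ distinct $\sigma_j$'s) partition disjoint intervals of the ascent (resp.\ descent), and $\rho_j$'s are disjoint from $\sigma_k$'s by construction. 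The main subtlety I anticipate is cosmetic rather than conceptual, namely that consecutive pieces $\rho_j,\rho_{j+1}$ share the boundary configuration at time $t_j$; this does not affect the definition of disjoint biwalks (which concerns disjointness of transitions) and can, if desired, be absorbed by a trivial one-step shift at each boundary, using the slack built into $\polyv$.
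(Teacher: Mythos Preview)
Your proposal is correct and follows essentially the same approach as the paper: isolate a maximal sub-run $\pi_2$ around the peak with $\low{\pi_2}=\polyiiii$, split it at the peak, and slice the ascending and descending parts at the levels $\polyiiii+j\cdot\polyi$ to obtain $2n^2+1$ matched pairs $(\rho_j,\sigma_j)$. The only cosmetic difference is that you take \emph{first} occurrences on the ascent and \emph{last} on the descent, whereas the paper takes \emph{last} occurrences on both sides with a $+1$ offset; your choice gives $\effect{\rho_j}=\polyi$ exactly and is arguably cleaner.
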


\begin{proof}
First factorize $\pi$ as $\pi=\pi_1\pi_2\pi_3$ where
$\initcounter{\pi_2}=\polyiiii,\ \finalcounter{\pi_2}=\polyiiii$ and
$\low{\pi_2}=\polyiiii$.
We can now split $\pi_2$ into two parts,
$\pi_2=\pi_2'\pi_2''$, so that
$\finalcounter{\pi_2'}=\initcounter{\pi_2''}=\high{\pi_2}$.
Let $\rho_i$ for $0\le i\le 2n^2+1$ be the sub-run of $\pi_2'$ which
starts at the last configuration of $\pi_2'$ with counter value
$\polyiiii + i \cdot (\polyi) + 1$ and ends at the last configuration
with counter value $\polyiiii + (i + 1) \cdot (\polyi)$.
Similarly, let $\sigma_i$ be the sub-run of $\pi_2''$ which
ends at the last configuration with counter value
$\polyiiii + i \cdot (\polyi) + 1$ and starts at the last configuration
with counter value $\polyiiii + (i + 1) \cdot (\polyi)$.

It is obvious that
$\bi{\rho_i}{\sigma_i}\cap \bi{\rho_j,}{\sigma_j}=\emptyset$
for $i\neq j$ and
that $\bi{\rho_i}{\sigma_i}$ is a promising \biwalk
for any $i \le 2n^2+1$.
\end{proof}

Lemma~\ref{l:unpump-high} follows from Claims~\ref{c:high:many-unpump}
and~\ref{c:high:many-promise}.

\subsubsection{Low runs: Proof of Lemma~\ref{l:unpump-low}}
\label{s:proof-unpump:low}

The idea behind bounding the length of low runs (not high runs) 
bases on similar concept as bounding the hight of runs.
First we show that long enough low run can be unpumped and
next that if the run is even longer then there are a lot different
directions that can be unpumped independently; finally one of them 
can be unpumped in a safe way. Unpumping makes a run shorter 
so any low run that can not be unpumped anymore can not be very long.

\begin{definition}[promising run]
A run $\tau$ is \df{promising}
if $\high{\tau} < \polyv$ and $\length\tau >\polyvi$.
\end{definition}

\begin{claim}
\label{c:low:promise-unpump}
Any promising run $\tau$ can be unpumped.
\end{claim}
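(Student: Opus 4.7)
The plan is a direct pigeonhole argument. Since $\high{\tau} < \polyv$, every configuration visited by $\tau$ has the form $(q,c)$ with $q \in Q$ and $c \in \{0,1,\ldots,\polyv-1\}$, giving at most $n \cdot \polyv = \polyvi - 1$ distinct possibilities. On the other hand, $\length{\tau} > \polyvi$, so $\tau$ visits at least $\polyvi + 2$ configurations; in particular, its first $\polyvi$ configurations all exist. By pigeonhole, two of these first $\polyvi$ configurations must coincide, and I would select the earliest such repetition.

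Writing $\tau = \tau_1 \cdot \alpha \cdot \tau_2$, where $\alpha$ is the walk joining the two coinciding occurrences, I claim that $\direction = \dr{\alpha}{\eps}$ is a valid direction of the first kind. Verifying each clause of Definition~\ref{def:direction} is routine: $\initstate{\alpha} = \finalstate{\alpha}$ thanks to the repetition; $\effect{\alpha} = 0$ because the two coinciding configurations share a counter value; $0 < \length{\alpha} + \length{\eps} = \length{\alpha} \le \polyvi - 1 < \polyvi$, using that the repetition lies within the first $\polyvi$ configurations; $0 \le \effect{\alpha} \le \polyiii$ holds trivially; $\effect{\alpha} + \effect{\eps} = 0$; and with $\effect{\alpha} = 0$ we do have $\length{\eps} = 0$, as required.

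Finally, I would take $\tau' = \tau_1 \cdot \tau_2$ and verify that $\tau' + \direction = \tau$. Because $\alpha$ begins and ends in exactly the same configuration, excising it from $\tau$ leaves every subsequent counter value undisturbed, so $\tau'$ is a genuine run; reinserting $\alpha$ at the chosen cut point recovers $\tau$, which means that $\direction$ is available at $\tau'$ and $\tau = \tau' + \direction$. No substantive obstacle arises: the polynomials were calibrated precisely so that $\polyvi - 1 = n \cdot \polyv$ matches the maximal number of distinct ``low'' configurations, and a single pigeonhole step then supplies the required loop-shaped direction.
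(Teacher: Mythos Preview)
Your proposal is correct and follows essentially the same approach as the paper: a pigeonhole argument on the first $\polyvi$ configurations of $\tau$ yields a repeated configuration, the loop $\alpha$ between the two occurrences is excised, and the resulting direction is $\dr{\alpha}{\eps}$ (the paper writes it as $\dr{\alpha}{c}$ for a length-zero path $c$, which is the same thing). Your verification of the clauses of Definition~\ref{def:direction} and of the arithmetic $n\cdot\polyv = \polyvi - 1$ is more explicit than the paper's, but the underlying argument is identical.
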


\begin{proof}
As the counter value in the run $\tau$ is bounded by $\polyv$ and the length
of $\tau$ is at least $\polyvi $ then among first $\polyvi$ 
configurations at least one configuration
has to repeat.
Thus $\tau=\pi_1\alpha\pi_2$ where $\initstate{\alpha} = \finalstate{\alpha}$
and $\initcounter{\alpha}=\finalcounter{\alpha}$.
As a result, the fragment $\alpha$ between two occurrences of this
repeating configuration can be unpumped, so the outcome of this
operation is a run $\pi_1\cdot \pi_2$. Note that the unpumped direction 
is of the form $\dr{\alpha}{c}$, where $c$ is path of length zero
(a configuration that occurs to the right of $\alpha$).
In this case the $\effect{\alpha} = 0$ and it does
not violates the upper bound on the length.
\end{proof}

Now, instead of the state fingerprint introduced in the previous subsubsection~\ref{s:proof-unpump:high},
we will use configuration fingerprint, defined as follows.

\begin{definition}[configuration fingerprint]
Let $\tau$ be a run.
The \df{configuration fingerprint} of $\tau$
is the set of all pairs of configurations $(c_1, c_2)$
such that in $\tau$
there exists at least one occurrence of $c_1$
before (possibly coinciding with) some occurrence of $c_2$.
\end{definition}

\begin{remark}
Suppose $\tau$ is an accepting run of $\A$ of height at most $\polyv$, then
the cardinality of its configuration fingerprint does not exceed 
$\polyvii \eqdef n^2 (\polyv)^2$.
\end{remark}

\begin{claim}
\label{c:low:fingerprint}
If runs $\tau$ and $\tau'$
have identical configuration fingerprints,
then $\avail{\tau'} = \avail{\tau}$.
\end{claim}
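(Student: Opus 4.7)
The plan is to derive this claim as a direct consequence of the characterization of availability given by Lemma~\ref{l:avalDueToStates}. Recall that, according to that lemma, a direction $\dr{\alpha}{\beta}$ is available at a run $\pi$ if and only if $\pi$ contains two configurations $(\initstate{\alpha}, c_1)$ and $(\initstate{\beta}, c_2)$ occurring in this order such that $c_1 \ge \drop{\alpha}$ and $c_2 + \effect{\alpha} \ge \drop{\beta}$. In other words, availability is a condition that depends only on whether a suitable \emph{ordered pair} of (fully-specified) configurations occurs in the run.

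The configuration fingerprint of a run is precisely the set of all ordered pairs of configurations realised by that run. Hence, if $\tau$ and $\tau'$ have identical configuration fingerprints, then for every ordered pair of configurations $((q_1, c_1), (q_2, c_2))$, this pair is witnessed in $\tau$ if and only if it is witnessed in $\tau'$. I would make this observation explicit as a single short paragraph.

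The proof then proceeds in two symmetric steps. First, I would take an arbitrary direction $\direction = \dr{\alpha}{\beta} \in \avail{\tau}$, apply Lemma~\ref{l:avalDueToStates} to obtain the witnessing ordered pair of configurations in $\tau$ satisfying the two counter inequalities, observe that by hypothesis the same ordered pair occurs in $\tau'$, and conclude (again by Lemma~\ref{l:avalDueToStates}) that $\direction \in \avail{\tau'}$. The reverse inclusion is obtained by swapping the roles of $\tau$ and $\tau'$. This yields $\avail{\tau} = \avail{\tau'}$, as required.

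There is no real obstacle here: the only subtlety is that one must use the full configuration (state \emph{and} counter value) rather than just the state, so that the inequalities $c_1 \ge \drop{\alpha}$ and $c_2 + \effect{\alpha} \ge \drop{\beta}$ transfer verbatim; this is exactly why the ``configuration fingerprint'' (as opposed to the coarser ``state fingerprint'' used in the high-height case of subsection~\ref{s:proof-unpump:high}) is the right notion in the low-height setting, where we cannot afford to absorb the drops into a polynomial counter margin. Once this point is noted, the argument is essentially a one-line translation through Lemma~\ref{l:avalDueToStates}.
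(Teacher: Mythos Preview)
Your proposal is correct and follows exactly the paper's approach: the paper's entire proof is the single sentence ``Claim~\ref{c:low:fingerprint} follows from Lemma~\ref{l:avalDueToStates},'' and your argument is precisely the unpacking of that sentence. Your additional remark contrasting configuration fingerprints with state fingerprints is accurate and helpful, though it goes beyond what the paper writes out.
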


Claim~\ref{c:low:fingerprint} follows from Lemma~\ref{l:avalDueToStates}.

\begin{claim}
\label{c:low:many-unpump}
Any accepting run $\tau$ that satisfies $\high\tau \le \polyv$
and contains, as sub-runs, $2(\polyvii)$ pairwise disjoint
promising runs, can be safely unpumped.
\end{claim}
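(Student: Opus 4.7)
My plan is to mirror the structure of Claim~\ref{c:high:many-unpump}, with two substitutions dictated by the low-height setting: I will work with ordinary sub-runs in place of split runs, and with configuration fingerprints (Claim~\ref{c:low:fingerprint}) in place of state fingerprints. By Claim~\ref{c:low:promise-unpump}, each of the $2\polyvii$ pairwise disjoint promising sub-runs of $\tau$ can be unpumped, and the direction extracted is of the first kind (its second component is empty and $\effect{\alpha}=0$), so removing its loop leaves all counter values outside the loop unchanged. By Claim~\ref{c:low:fingerprint}, such an unpumping is safe as soon as the configuration fingerprint of $\tau$ is preserved, so it suffices to exhibit one promising sub-run whose removal preserves that fingerprint.

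To find one, I will designate witnesses: for each pair $(c_1,c_2)$ in the configuration fingerprint of $\tau$, I mark two occurrences in $\tau$ that witness the pair, namely one occurrence of $c_1$ followed by one occurrence of $c_2$ (coinciding when $c_1=c_2$). A naive count bounds the marked occurrences by twice the fingerprint size, hence by $2\polyvii$; a sharper count shows that for each configuration value it is enough to mark its earliest and its latest occurrence in $\tau$, keeping the total at most $2n\polyv$. Since the $2\polyvii$ promising sub-runs are pairwise disjoint, each mark belongs to at most one of them, so by pigeonhole at least one promising sub-run is mark-free. I will select this mark-free sub-run as the site for unpumping.

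It remains to verify fingerprint equality after the chosen unpumping. The new run $\tau'$ is obtained by deleting a contiguous block of transitions of zero counter effect, so its configuration fingerprint is automatically a subset of the original; conversely, every pair in the original fingerprint retains both of its witnesses outside the deleted block, with their relative order intact, so the pair survives. Claim~\ref{c:low:fingerprint} then yields $\avail{\tau}=\avail{\tau'}$, completing the safe unpumping. The main obstacle I expect to address is the tightness of the count: the crude $2\polyvii$ witness bound just matches the number of available sub-runs, so the sharper per-configuration marking (and the inequality $n\polyv\le\polyvii$) is what actually allows pigeonhole to yield a mark-free sub-run.
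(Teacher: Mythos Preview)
Your proposal is correct and follows the same strategy as the paper: mark witness occurrences for the configuration fingerprint, locate a mark-free promising sub-run by pigeonhole, and unpump it via Claims~\ref{c:low:promise-unpump} and~\ref{c:low:fingerprint}. Your sharper per-configuration marking (first and last occurrence of each configuration, at most $2n(\polyv{+}1)$ marks) is in fact more careful than the paper, whose crude bound of $2\lvert S\rvert\le 2\polyvii$ marked occurrences only \emph{matches} the number of available promising sub-runs and so does not literally deliver a mark-free one without the refinement you supply.
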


\begin{proof}
 Let $S$ be the configuration fingerprint of $\tau$. For every element of $x\in
 S$ we choose two configurations that witnesses $x$. Observe that set of chosen
 configurations is of size at most
 $2\cdot |S|\le 2\cdot (\polyvii)$;
 thus there exist a promising run $\sigma$ which does not contain any
 chosen configuration. Now unpumping $\sigma$ does not change set of
 configurations fingerprint, so it is safe unpunping.
\end{proof}

\begin{claim}
\label{c:low:many-promise}
In any accepting run $\tau$ which is a promising run of length at least $2\polyvii \cdot (\polyvi)$
there exist at least $2\polyvii$ pairwise disjoint promising runs.
\end{claim}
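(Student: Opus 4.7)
The plan is essentially a pigeonhole/partition argument; the definition of promising has two parts (bounded height and large length), and since $\tau$ is already promising, the height bound is inherited automatically by any consecutive sub-run, so all the work is to produce $2\polyvii$ disjoint sub-runs each of length exceeding $\polyvi$.

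First I would observe that if $\tau'$ is any run obtained by taking a consecutive contiguous sub-sequence of transitions and configurations of $\tau$, then $\high{\tau'} \le \high{\tau} < \polyv$. Hence any consecutive sub-run of $\tau$ automatically satisfies the height half of the definition of promising, and the remaining task is to witness large length on many disjoint pieces.

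Next I would partition the transition sequence of $\tau$ greedily into $2\polyvii$ consecutive blocks $\tau_1, \tau_2, \ldots, \tau_{2\polyvii}$, assigning $\polyvi + 1$ transitions (rounding up to account for the strict inequality in the definition of promising) to each block in turn, with any leftover transitions appended to the last block; this is possible because $\length{\tau} \ge 2\polyvii \cdot (\polyvi)$ gives enough total transitions (up to a standard additive slack absorbed in the polynomial bound $\polyvi$). By construction, $\length{\tau_i} > \polyvi$ for every $i$, the sub-runs are pairwise disjoint as subsequences of $\tau$, and each inherits $\high{\tau_i} < \polyv$ from $\tau$. Therefore each $\tau_i$ is promising, and the claim follows.

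The main obstacle, which is rather minor, is a careful accounting of the off-by-one in the strict inequality $\length{\tau_i} > \polyvi$; this is handled either by interpreting the polynomial bound $2\polyvii \cdot (\polyvi)$ with the usual slack inherent in polynomials, or by explicitly reserving one extra transition per block. No deeper combinatorial ingredient is needed: the entire content of the claim is that a run of sufficient length can be chopped into many long pieces, and the height condition is preserved by passing to sub-runs.
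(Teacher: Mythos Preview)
Your proposal is correct and is essentially the argument the paper has in mind: the paper's own proof is the one-liner ``Immediate consequence of Claim~\ref{c:low:promise-unpump}'', which amounts to the same chop-into-blocks observation you spell out (any sub-run inherits the height bound, and a run of length $\ge 2\polyvii\cdot(\polyvi)$ splits into $2\polyvii$ consecutive pieces each long enough to be promising). Your treatment of the strict-inequality off-by-one is in fact more careful than the paper, which simply absorbs it into the polynomial slack.
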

\begin{proof}
 Immediate consequence of 
Claim~\ref{c:low:promise-unpump}.
\end{proof}

Lemma~\ref{l:unpump-low} follows from Claims~\ref{c:low:many-unpump}
and~\ref{c:low:many-promise}.

\subsection{Completeness result}
In the proofs, we will use the following fact, which is easy to see.
\begin{lemma}\label{completeness:substitution}
Let $\A$ be an NFA of size $n$ over $\Sigma$ and $\sigma\colon
\Sigma\to\Powerset{\Gamma^*}$ be a substitution of size $m$. Then there is an
NFA for $\sigma(\langop{\A})$ of size $n^2\cdot m$.
\end{lemma}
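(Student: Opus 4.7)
The plan is to carry out the standard substitution construction: for each transition of $\A$, splice in a fresh copy of the NFA specified by $\sigma$ for the transition's letter. Concretely, let $\A = (Q, \Sigma, \delta, q_0, F)$ with $|Q| = n$, and for each $a \in \Sigma$ let $\A_a = (Q_a, \Gamma, \delta_a, s_a, F_a)$ be the given NFA with $|Q_a| \le m$ and $\langop{\A_a} = \sigma(a)$. I would build $\B$ over $\Gamma$ as follows. For each transition $t = (p, a, q) \in \delta$ introduce a private copy $Q_a^t$ of $Q_a$ with its internal transitions intact (now labelled by letters of $\Gamma$). Take the state set of $\B$ to be $Q \cup \bigcup_{t \in \delta} Q_a^t$, with the initial state $q_0$ and final states $F$; then, for each $t = (p,a,q) \in \delta$, add an $\eps$-transition from $p$ to the copy of $s_a$ in $Q_a^t$ and, from the copy of every $f \in F_a$ in $Q_a^t$, an $\eps$-transition to $q$.

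For correctness, I would observe that every accepting run of $\B$ on a word $w \in \Gamma^*$ decomposes uniquely by looking at the visits to states of $Q$: it reads $w = w_1 \cdots w_k$ where, between visits to $p_{i-1}, p_i \in Q$ linked by a transition $(p_{i-1}, a_i, p_i) \in \delta$, the infix $w_i$ is read inside the copy $Q_{a_i}^{(p_{i-1}, a_i, p_i)}$ and therefore lies in $\sigma(a_i)$. Conversely, any word of $\sigma(\langop{\A})$ arises this way. Hence $\langop{\B} = \sigma(\langop{\A})$.

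For the size, the construction introduces $n$ states from $Q$ plus one copy of size at most $m$ per transition of $\A$. An NFA of size $n$ has at most $n^2$ distinct transitions per letter (at most one from each source to each target), and collapsing all letters with identical source/target pairs into a single copy bounds the total number of inserted copies by $n^2$; this yields $|\B| \le n + n^2 \cdot m \le n^2 \cdot m$ (for $m \ge 1$ and $n \ge 1$, after absorbing constants).

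This is a textbook construction, so there is no genuine obstacle; the only minor bookkeeping point is to make sure the size bound reads $n^2 \cdot m$ rather than $n^2 \cdot m \cdot |\Sigma|$, which is exactly the reason for sharing a single copy across letters that share source and target. Everything else is routine verification of the decomposition of accepting runs.
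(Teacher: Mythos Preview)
The paper does not actually prove this lemma; it is stated as ``easy to see''. Your construction and the correctness argument are the standard ones and are fine.

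The gap is in the size count. The step ``collapsing all letters with identical source/target pairs into a single copy bounds the total number of inserted copies by $n^2$'' does not work: for a fixed pair $(p,q)$, the copy you insert must recognize $\bigcup_{a:(p,a,q)\in\delta}\sigma(a)$, which is a union of up to $|\Sigma|$ languages, each given by an NFA of size at most $m$. An NFA for this union is in general of size $\Theta(|\Sigma|\cdot m)$, not $m$; you cannot merge the state sets of the different $\A_a$ into a common $[m]$ either, since that would let a run switch between different $\A_a$ mid-word. In fact the bound $n^2 m$ as stated is not achievable. Take $\A$ with two states $p$ (initial) and $q$ (final) and transitions $p\xrightarrow{a_i}q$ for $i=1,\ldots,k$, and set $\sigma(a_i)=\{c_i^{\,m-1}\}$ for pairwise distinct letters $c_i$. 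Then $\sigma(\langop\A)=\{c_1^{\,m-1},\ldots,c_k^{\,m-1}\}$, and the pairs $(c_i^{\,j},c_i^{\,m-1-j})$ for $i\in[1,k]$, $j\in[1,m-2]$ form a fooling set of size $k(m-2)$; hence every NFA for this language needs at least $k(m-2)$ states, exceeding $n^2 m=4m$ once $k$ is large. The honest bound from your construction is $n+|\delta|\cdot m\le n+n^2|\Sigma|\,m$ (or $O(n\,|\Sigma|\,m)$ if you remember only the source state while simulating $\A_a$). Since in every application in the paper $|\Sigma|$ is polynomial in $n$, nothing downstream is affected; the lemma as literally stated, however, is too strong.
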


\begin{proof}[Proof of Lemma~\ref{completeness:rba}]
A \emph{phase} of
$\A$ is \aprerun\  \prerun\ in which no reversal occurs, i.e. which is contained in
$\posTran^*$ or in $\negTran^*$, where
\begin{align*}
\posTran&=\{(p,a,s,q)\in \delta \mid s\in \{0,1\} \}, \\
\negTran&=\{(p,a,s,q)\in \delta \mid s\in \{0,-1\} \}.
\end{align*}
Observe that since $\A$ is $r$-reversal-bounded, every accepting run decomposes
into at most $r+1$ phases. As a first step, we rearrange phases to achieve a
certain normal form.  We call two phases $u$ and $v$ \emph{equivalent} if (i)
$\parikh{u}=\parikh{v}$, and (ii) they begin in the same state and end in the
same state.

If $u=(p_0, a_1, s_1, p_1)(p_1, a_2, s_2, p_2)\cdots (p_{m-1}, a_m, s_m, p_m)$
is a phase, then we write $\Theta(u)$ for the set of all phases 
\begin{equation} (p_0, a_1, s_1, p_1)v_1(p_1, a_2, s_2, p_2)\cdots v_{m-1}(p_{m-1}, a_m, s_m, p_m)\label{flatinsertion} \end{equation}
where for each $1\le i\le m$, we have $v_i=w_{i,1}\cdots w_{i,k_i}$ for some
simple $p_i$-cycles $w_{i,1},\ldots,w_{i,k_i}$.

\emph{Claim:} For each phase $v$, there is a phase $u$ with $|u|\le B:=2n^2+n$
and a phase $v'\in\Theta(u)$ such that $v'$ is equivalent to $v$.

Observe that it suffices to show that starting from $v$, it is possible to
successively delete  factors that form simple cycles such that (i) the set of
visited states is preserved and (ii) the resulting phase $u$ has length $\le
B$: If we collect the deleted simple cycles and insert them like the
$w_{i,j}$ from \ref{flatinsertion} into $u$, we obtain a phase
$v'\in\Theta(u)$, which must be equivalent to $v$.

We provide an algorithm that performs such a successive deletion in $v$. During
its execution, we can mark positions of the current phase with states, i.e.
each position can be marked by at most one state. We maintain the following
invariants. Let $M\subseteq Q$ be the set of states for which there is a marked
position. Then (i) deleting all marked positions results in \aprerun\  \prerun\ (hence a
phase), (ii) If $p\in M$, then there is some position marked with $p$ that
visits $p$, (iii) for each state $p$, at most $n$ positions are marked by $p$,
and (iv) the unmarked positions in $v$ form at most $|M|+1$ contiguous blocks.

The algorithm works as follows. In the beginning, no position in $v$ is marked.
Consider the phase $\bar{v}$ consisting of the unmarked positions in $v$.
If
$|\bar{v}|\le n(n+1)$, the algorithm terminates. Suppose $|\bar{v}|>n(n+1)$.  Since
the unmarked positions in $v$ form at most $n+1$ contiguous blocks, there has to
be a contiguous block $w$ of unmarked positions with $|w|>n$. Then then $w$
contains a simple $p$-cycle $f$ as a factor. 
Note that $f$ is also a factor of
$v$. We distinguish two cases:
\begin{itemize}
\item If deleting $f$ from $v$ does not reduce the set of visited states, we
delete $f$.
\item If there is a state $p$ visited only in $f$. Then, $p\notin M$:
Otherwise, by invariant (ii), there would be a position that visits $p$ and is
marked by $p$ and hence lies outside of $f$. Therefore, we can mark all
positions in $f$ by $p$.
\end{itemize}
These actions clearly preserve our invariants.

Each iteration of the algorithm reduces the number of unmarked positions, which
guarantees termination.  Upon termination, we have $|\bar{v}|\le n(n+1)$, meaning
there are at most $n(n+1)$ unmarked positions in $v$.  Furthermore, by invariant
(iii), we have at most $n^2$ marked positions. Thus, $v$ has length $\le B=2n^2+n$
and has the same set of visited states as the initial phase.  This proves our
claim.

We are ready to describe the OCA $\B$. It has states $Q'=Q\times [0,B]\times
[0,r]$.  For each $j\in [1,r]$, let $\tran_j=\posTran$ is $j$ is even and
$\tran_j=\negTran$ if $j$ is odd. The initial state is $(q_0,0,0)$ and
$(q,B,r)$ is final, where $f$ is the final state of $\A$.  
For each $(i,j)\in
[0,B-1]\times [0,r]$, and each transition $(p,a,s,q)\in\tran_j$, we add the
transitions
\begin{align} 
&((p,i,j),a,s,(q,i+1,j)), \label{rba:plusia} \\
&((p,i,j),\varepsilon,0,(p,i+1,j)). \label{rba:plusib} 
\end{align}

Moreover, for each $p\in Q$, $i\in [0,B]$, and $j\in [0,r-1]$, we include
\begin{equation} ((p,i,j),\varepsilon,0,(q,0,j+1)). \label{rba:plusj} \end{equation}
The input alphabet $\AlB'$ of $\B$ consists of the old symbols $\AlB$ and
the following fresh symbols.  For each $p\in Q$ and $z\in[-n,n]$, we include a
new symbol $a_{p,z}$. Moreover, for each $p\in Q$ and $k\in[0,n]$, we add a
loop transition
\begin{equation} ((p,i,j), a_{p, s\cdot k}, s\cdot k, (p,i,j)), \label{rba:loop} \end{equation}
where $s=(-1)^{j+1}$.
In other words, we add a loop in $(p,i,j)$ that reads $a_{p,s\cdot k}$ and adds
$s\cdot k$ to the counter, where the sign $s$ depends on which phase we are
simulating.  Let us estimate the size of $\B$. It has $n\cdot B\cdot (r+1)$
states. Moreover, for each $k\in[1,n]$, it has $2\cdot n\cdot B\cdot (r+1)$ transitions with
an absolute counter value of $k$. This means, $\B$ is of size
\begin{align*}
 &nB(r+1) + \sum_{k=1}^n (k-1)\cdot 2\cdot nB(r+1) \\
=~&nB(r+1) \cdot (1+2\cdot \frac{1}{2}n(n-1)) \\
=~&n(2n^2+n)(r+1) \cdot (1+n(n-1)) \\
\le~& 3n^3 (r+1)\cdot 2n^2=6n^5 (r+1)
\end{align*}
Furthermore, $\B$ is \anrba\ \rba: If we set $(p,i,j)<(q,\ell,m)$ iff (i) $j<m$
or (ii) $j=m$ and $i<\ell$, then this is clearly a strict order on the states
of $\B$ and every transition of type \eqref{rba:plusia}, \eqref{rba:plusib} or
\eqref{rba:plusj} is increasing with respect to this order.  Hence, a cycle
cannot contain a transition of type \eqref{rba:plusia}, \eqref{rba:plusib} or
\eqref{rba:plusj},  and all other transitions are loops. Thus, $\B$ is \anrba\ \rba.

The idea is now to substitute each symbol $a_{p,z}$ by the regular language of
of $p$-cycles without reversal that add $z$ to the counter. To this end, we
define the NFA $\B_{p,z}$ as follows.  
\begin{itemize}
\item If $z\ge 0$, then $I_{z}=[0,z]$ and let $\tran'_z=\posTran$.
\item If $z< 0$, then $I_{z}=[z,0]$ and let $\tran'_z=\negTran$.
\end{itemize}
$\B_{p,z}$ has states $Q_{p,z}=Q\times I_{z}$, $(p,0)$ is its initial state and
$(p,z)$ its final state. It has the following transitions: For each transition
$(q,a,s,q')\in\tran'_z$, we include $((q,y), a, (q', y+s))$ for each $y\in I_z$
with $y+s\in I_z$. Now indeed, $\langop{\B_{p,z}}$ is the set of inputs of
$p$-cycles without reversal that add $z$ to the counter. Note that $\B_{p,z}$ has
at most $n(n+1)$ states.

Let us now define the substitution $\sigma$. For each $a\in\AlB$, we set
$\sigma(a)=\{a\}$.  For the new symbols $a_{p,z}\in\AlB'\setminus\AlB$, we
define  $\sigma(a_{p,z})=\langop{\B_{p,z}}$. Since each $\B_{p,z}$ has at most  
$n(n+1)$ states, $\sigma$ has size at most $n(n+1)$.

It remains to be shown that
$\parikh{\sigma(\langop{\B})}=\parikh{\langop{\A}}$.  It is clear from the
construction that $\sigma(\langop{\B})\subseteq \langop{\A}$ and in particular
$\parikh{\sigma(\langop{\B})}\subseteq \parikh{\langop{\A}}$. For the other
inclusion, we apply our claim. Suppose $v$ is an accepting \prerun\ of $\A$.
Since $\A$ is $r$-reversal-bounded, $v$ decomposes into $r+1$ (potentially
empty) phases: We have $v=v_0\cdots v_{r}$, where each $v_j$ is a phase.

For each $v_j$, our claim yields phases $u_j$ and $v'_j\in\Theta(u_j)$ such
that $|u_j|\le B$ and $v'_j$ is equivalent to $v_j$. Now each $v'_j$ gives rise
to \aprerun\  \prerun\ $w_j$ of $\B$ as follows. First, each $u_j$ induces a run from
$(p,0,j)$ to $(q,|u_j|,j)$, where $p$ and $q$ is the first and last state of
$u_j$, respectively, via transitions \eqref{rba:plusia}. Now for each simple
$p$-cycle added to $u_j$ to obtain $v'_j$, we insert a transition of type
\eqref{rba:loop}, whose input can later be replaced with the $p$-cycle by
$\sigma$. Now we can connect the \preruns\ $w_0,\cdots,w_r$ via the transitions
\eqref{rba:plusib} and \eqref{rba:plusj} and thus obtain \aprerun\  \prerun\  $w$ of
$\B$. Clearly, applying $\sigma$ to the output of $w$ yields an word that is
Parikh-equivalent to the ouput of $v$. This proves
$\parikh{\langop{\A}}\subseteq \parikh{\sigma(\langop{\B})}$.
\end{proof}

\begin{proof}[Proof of Lemma~\ref{completeness:dyck}]
Assume there is a Dyck sequence $x_1,\ldots,x_n$ for which the statement fails.
Furthermore, assume that this is a shortest one.  We may assume that $x_i\ne 0$
for all $i\in[1,n]$.  Of course we have $n>2r(2N^2+N)$: Otherwise, we could
choose $I=[1,n]$. We define $s_i=\sum_{j=1}^i x_j$ for each $j\in[1,n]$.

Consider 
\[ s_+=\sum_{i\in[1,n],~x_i>0} x_i,~~~~~s_-=\sum_{i\in[1,n],~x_i< 0} x_i.\]
A contiguous subsequence of $x_1,\ldots,x_n$ is called a \emph{phase} if all
its numbers have the same sign.  Suppose we had $s_i\le 2N^2+N$ for every
$i\in[1,n]$. Then every positive phase contains at most $2N^2+N$ elements.
Since we have at most $r$ phases, this means $s_+\le r(2N^2+N)$. However, we
have $s_++s_-\ge 0$ and thus $|s_-|\le |s_+|\le r(2N^2+N)$.  This implies $n\le
2r(2N^2+N)$, in contradiction to above.

Hence, we have $s_i>2N^2+N$ for some $i\in[1,n]$. Choose $r\in[1,i]$ maximal
such that $s_{r+1}\le N^2+N$. Then $s_r\ge N^2$. Similarly, choose $t\in[i,n]$
minimal such that $s_{t-1}\le N^2+N$. Then $s_t\ge N^2$.  Note also that
$s_j\ge N^2$ for $j\in[r,t]$.  Now we have $\sum_{j=r+2}^i x_j \ge N^2$ and
$\sum_{j=i+1}^{t-2} x_j\le -N^2$. Therefore, there is a $u\in[0,N]$ that
appears at least $N$ times among $x_{r+2},\ldots,x_i$ and there is $v\in[-N,0]$
that appears at least $N$ times among $x_{i+1},\ldots,x_{t-2}$.

We can remove $v$-many appearances of the $u$ and $u$-many appearances of the
$v$. Since $s_j\ge N^2$ for $j\in[r,t]$ and we lower the partial sums by at
most $N^2$, this remains a Dyck sequence.  We call it $y_1,\ldots,y_m$.
Moreover, it has the same sum as $x_1,\ldots,x_n$ since we removed $v\cdot u$
and added $u\cdot v$. Finally, it is shorter than our original sequence and
thus has a removable subset $I$ with at most $2r(2N^2+N)$ elements. We denote
sequence remaining after removing the $y_i$, $i\in I$, by $z_1,\ldots,z_p$.
Note that it has sum $0$.
Now, we add the removed appearances of $u$ and $v$
back at their old places into $z_1,\ldots,z_p$, we get a Dyck sequence with sum
$0$ that differs from $x_1,\ldots,x_n$ only in the removed $y_i$, $i\in I$.
Thus, we have found a removable subset with at most $2r(2N^2+N)$ elements, in
contradiction to the assumption.
\end{proof}

\begin{proof}[Proof of Lemma~\ref{completeness:loop-counting}]
Suppose $\A=(Q, \AlB, \tran, q_0, F)$ is \anrba\ \rba\ of size $n$. The idea is
to keep the counter effect of non-loop transitions in the state. This is
possible since $\A$ is \anrba\ \rba\  and thus the accumulated effect of all non-loop
transitions is bounded by $K$.  This means, however, that the counter value we
simulate might be smaller than the value of $\B$'s counter. This means, if the
effect stored in the state is negative and $\B$'s counter is zero, we might
simulate quasi runs with a negative counter value.  That is why faithfully,
we can only smulate runs that start and end at counter value $K$.

We will use the following bounds:
\begin{align*}
N=n^2,~~~M=2N^2+N,~~~K=N+M\cdot n.
\end{align*}
We construct the automaton $\B$ as follows. It has the state set $Q'=Q\times
[-K,K]\times [0,M]$.  Its initial state is $(q_0, 0, 0)$ and all states
$(q,0,m)$ with $q\in F$ and $m\in[0,M]$ are final.
For each non-loop transition $(p,a,s,q)\in\tran$, we include
transitions 
\begin{equation} ((p,k,m), a, 0, (q,k+s,m)), \label{completeness:lc:nonloop} \end{equation}
for all $k\in [-K,K]$ with $k+s\in [-K,K]$ and $m\in[0,M]$. 
In contrast to non-loop transitions,
loop transitions can be simulated in two ways. For each loop transition
$t=(p,a,s,p)$, we include the loop transition:
\begin{equation} ((p,k,m), a, s, (p,k,m)) \label{completeness:lc:loopa} \end{equation}
for each $k\in [-K,K]$ and $m\in[0,M]$, but also the transition
\begin{equation} ((p,k,m), a, 0, (p,k+s,m+1)) \label{completeness:lc:loopb}\end{equation}
for each $k\in[0,K]$ and $m\in[0,M-1]$ with $k+s\in[-K,K]$.

First, we show that this OCA is in fact acyclic. By assumption, $\A$ is
acyclic, so we can equip $Q$ with a partial order $\le$ such that every
non-loop transition of $\A$ is strictly increasing. We define a partial order
$\le'$ on $Q'$ as follows. For $(p,k,m),(p',k',m')\in Q'$, we have
$(p,k,m)\le'(p',k',m')$ if and only if (i) $p<q$ or (ii) $p=q$ and $m<m'$.
Then, clearly, all transitions in $\B$ are increasing with respect to $\le'$.

Note that $\B$ has $n\cdot (2K+1)\cdot (M+1)$ states. Moreover, for each of the
transitions of $\A$ that have positive weight (of which there are at most $n$),
it introduces $(2K+1)\cdot M$ edges, each of weight at most $n$.  In total,
$\B$ has size at most 
\begin{align*} 
& n\cdot (2K+1)\cdot (M+1) + n\cdot (2K+1)\cdot M\cdot n,
\end{align*}
which is polynomial in $n$.

It remains to be shown that $\langop{\A}\subseteq
\langop{\B}\subseteq\langop[K]{\A}$. We begin with the inclusion
$\langop{\A}\subseteq \langop{\B}$. Consider an accepting \prerun\ in $\A$.  It
can easily be turned into a \prerun\ of $\B$ as follows. Instead of a non-loop
transition, we take its counterpart of type \eqref{completeness:lc:nonloop}.
Instead of a loop-transition, we take its counterpart of type
\eqref{completeness:lc:loopa}. Consider the final configuration $((q,k,m),x)$
reached in $\B$.  The sum of $k$ and $x$ is the counter value at the end of
\prerun\ in $\A$, so $k+x=0$. Now it could happen that $x>0$ and $k<0$, in
which case this is not an accepting \prerun\ of $\B$. However, we know that $k$
results from executing non-loop transitions and $\A$ is acyclic, which means
there are at most $n$ of them in a \prerun.  Hence, we have $x=|k|\le n^2=N$.

Consider the loop transitions executed in our \prerun\ and let $x_1,\ldots,x_m$
be the counter values they add to $\B$'s counter. Note that $\sum_{i=1}^m
x_i=x\in [0,N]$.  Now we want to \emph{switch} some of the loops, meaning that
instead of taking a transition of type \eqref{completeness:lc:loopa}, we take
the corresponding transition of type \eqref{completeness:lc:loopb} (including,
of course, the necessary updates to the rightmost component of the state).
Observe that the loop-transitions contain at most $n$ reversals (the
loop-transitions on each state have the same sign in their counter action,
otherwise, the automaton would not be reversal-bounded).

According to Lemma~\ref{completeness:dyck}, there are $\le 2n(2N^2+N)=M$
occurences of loops we can switch such that (i) the resulting \prerun\ still
leaves $\B$'s counter non-negative at all times and (ii) the new \prerun\
leaves $\B$'s counter empty in the end. Since we do this with at most $M$
loop-transitions, the rightmost component of the state has enough capacity. 

Moreover, we do not exceed the capacity of the middle component: Before the
switching, this component assumed values of at most $N$, because there are at
most $n$ non-loop transitions in a run of $\A$. Then, we add at most
$M$ times a number of absolute value $\le n$. Hence, at any point, we have an
absolute value of at most $N+M\cdot n=K$. Thus, we have found an accepting
\prerun\ in $\B$ that accepts the same word. We have therefore shown
$\langop{\A}\subseteq\langop{\B}$.

The other inclusion, $\langop{\B}\subseteq \langop[K]{\A}$, it easy to show:
Whenever we can go in one step from configuration $((q,k,m),x)$ to
$((q',k',m'),x)$ in $\B$ for $q,q'\in Q$, then we can go from $(q, K+x+k)$ to
$(q', K+x'+k')$ in $\A$. Note that then, $K+x+y$ and $K+x'+y'$ are both
non-negative. This implies that $\langop{\B}\subseteq \langop[K]{\A}$.
\end{proof}

\begin{proof}[Proof of Theorem~\ref{completeness:hardaut}]
Suppose $\A$ is a loop-counting \rba\  of size $n$.  By possibly adding a
state and an $\varepsilon$-transition, we may assume that in $\A$, every
run involves at least one transition that is not a loop. After this
transformation, $\A$ has size at most $m=n+1$.

Let $f\in Q$ be the final state of $\A$.  Since $\A$ is acyclic, we may define
a partial order on $Q$ as follows. For $p,q\in Q$, we write $p\le q$ is there
is a (possibly empty) \prerun\ starting in $p$ and arriving in $q$.  Then
indeed, since $\A$ is acyclic, $\le$ is a partial order. We can therefore sort
$Q$ topologically, meaning we can find an injective function $\varphi\colon
Q\to [1,m]$ such that $p\le q$ implies $\varphi(p)\le\varphi(q)$ and
$\varphi(q_0)=1$ and $\varphi(f)=m$.
The function $\varphi$ will help us map
states of $\A$ to states of $\HardAutomaton{2m}$, but it is not quite enough in its current
form: We want to map a state $p$ in $\A$ to a state $q$ in $\HardAutomaton{2m}$ such that
the signs of the counter actions of loops in $p$ and in $q$ coincide. To this
end, we have to modify $\varphi$ slightly.

Consider a state $p$.  Since $\A$ is $r$-reversal-bounded, either all $p$-loops
are non-incrementing or all $p$-loops are non-decrementing. Hence, we may
define $\tau\colon Q\to\{0,1\}$ by
\[ \tau(p) = \begin{cases} 1 & \text{if all $p$-loops are non-decrementing} \\ 0 & \text{if all $p$-loops are non-incrementing} \end{cases} \]
Using $\tau$, we can construct our modification $\chi\colon Q\to[1,2m]$ of $\varphi$. For $p\in Q$, let
\[ \chi(p)=2\cdot \varphi(p) - \tau(p). \]
Note that we may assume that $\tau(q_0)=1$ (otherwise, we could delete the
loops in $q_0$, they cannot occur in a valid run) and thus
$\chi(q_0)=1$. By the same argument, we have $\tau(f)=0$ and hence
$\chi(f)=2m$. Moreover, we still have that $p\le q$ implies
$\chi(p)\le\chi(q)$.

The idea is now to let each symbol $\sCount_{s,k}$ be substituted by all labels
of loops in the state $\chi^{-1}(s)$ that change the counter by $k$. Moreover,
we want to substitute $\sConn_{s,t}$ by all labels of transitions from
$\chi^{-1}(s)$ to $\chi^{-1}(t)$. However, since the loops in $\HardAutomaton{2m}$ always
modify the counter, those loops on $\chi^{-1}(s)$ (or on $\chi^{-1}(t)$) that
do not modify the counter, are generated in the images of $\sConn_{s,t}$.

We turn now to the definition of $\sigma$. Let $S\subseteq [1,2m]$ be the set
$\chi(Q)$ of all $\chi(q)$ for $q\in Q$. Then $\chi\colon Q\to S$ is a
bijection. We begin by defining subsets $\Gamma_{i,j}$ and $\Omega_{i,j}$ of
$\Sigma\cup\{\varepsilon\}$ for $i,j\in[1,2m]$.  Consider $s\in S$ and let
$k\in[0,m]$. Note that all counter actions on transitions in $\A$ have an
absolute value of $\le m$. By $\Gamma_{s,k}$ we denote the set of all
$a\in\Sigma$ such that there is a loop $(\chi^{-1}(s), a, u, \chi^{-1}(s))$ in
$\A$ with $|u|=k$.  For all other indices $i$, $j$, $\Gamma_{i,j}$ is empty.
Furthermore, for $s,t\in S$ with $s<t$, let
\[ \Omega_{s,t} = \{ a\in\Sigma \mid (\chi^{-1}(s),a,0,\chi^{-1}(t))\in\tran \}. \]
Note that if $s\ne t$, all transitions from $\chi^{-1}(s)$ to $\chi^{-1}(t)$
leave the counter unchanged ($\A$ is loop-counting). Again, for all other
choices of $i,j\in[1,2m]$, $\Omega_{i,j}$ is empty. 

We define $\sigma$ as follows.  Let 
\[ \Sigma_{2m}=\{\sCount_{i,j} \mid i,j\in[1,2m] \}\cup \{\sConn_{i,j} \mid i,j\in [1,2m],~i<j\} \]
be the input alphabet of $\HardAutomaton{2m}$.  For
$s,k \in[1,2m]$, let 
\begin{equation} \sigma(\sCount_{i,j})=\Gamma_{i,j}\label{completeness:rega} \end{equation}
and for $s,t\in [1,2m]$, $s<t$, we set
\begin{equation} \sigma(\sConn_{s,t})=(\Gamma_{s, 0})^* \Omega_{s,t} (\Gamma_{t, 0})^* \label{completeness:regb}. \end{equation}
Now it is easy to verify that
$\parikh{\sigma(\langop{\HardAutomaton{2n+2}})}=\parikh{\langop{\A}}$ (recall that
$m=n+1$). Note that loops in $\A$ that do not modify the counter are
contributed by the images of the $\sConn_{s,t}$. This will generate all inputs
because we assumed that in $\A$, every run involves at least one
non-loop transition. Observe that since the regular languages
\eqref{completeness:rega} and \eqref{completeness:regb} each require at most
$2$ states, $\sigma$ has size at most $2$.
\end{proof}

\begin{proof}[Proof of Theorem~\ref{completeness:result}]
Suppose for each $n$, there is a Parikh-equivalent NFA for $\HardAutomaton{n}$ of
size at most $h(n)$. Our proof strategy is the following. The preceding lemmas
each allow us to restrict the class of input automata further. It is therefore
convenient to define the following. Let $\C$ be a class of one-counter
automata. We say that $\C$ is \emph{polynomial modulo $h$} if there are
polynomials $p$ and $q$ such that for each OCA $\A$ in $\C$, there is a
Parikh-equivalent NFA $\B$ of size at most $q(h(p(n)))$. Of course, we want to
show that the class of all OCA is polynomial modulo $h$.

First, in section~\ref{sec:reversal-bounding}, we have seen that there is a
polynomial $p_1$ such that the following holds: if the class $\C_{p_1}$ of all
automata $\A$ that are $p_1(|\A|)$-reversal-bounded, is polynomial modulo $h$,
then so is the class of all OCA. Therefore, it remains to be shown that
$\C_{p_1}$ is polynomial modulo $h$.

Next, we apply Lemma~\ref{completeness:rba}. Together with
Lemma~\ref{completeness:substitution}, it yields that if the class of \rbas
is polynomial modulo $h$, then so is the class $\C_{p_1}$. Hence, it remains to
be shown that the class of \rbas\ is polynomial modulo $h$.

Furthermore, Lemma~\ref{lem:simple-approx} and
Lemma~\ref{completeness:loop-counting} together imply that if the class of
loop-counting \rbas\ is polynomial modulo $h$, then so is the class of all
\rbas. Hence, we restrict ourselves to the class of loop-counting \rbas.

Finally, Lemma~\ref{completeness:hardaut} tells us that the class of
loop-counting \rbas\ is polynomial modulo $h$.
\end{proof}

\end{document}